\documentclass[11pt]{article}
\usepackage{odonnell-csp-ref}

\title{How to refute a random CSP}
\author{Sarah R. Allen\thanks{Department of Computer Science, Carnegie Mellon. \texttt{\{srallen,odonnell,dwitmer\}@cs.cmu.edu}. Supported by NSF grants CCF-0747250 and CCF-1116594.  Some of this work performed while the second-named author was at the Bo\u{g}azi\c{c}i University Computer Engineering Department, supported by Marie Curie International Incoming Fellowship project number 626373.  The first and third named authors were partially supported by the National Science Foundation Graduate Research Fellowship Program under Grant No. DGE-1252522.} \and Ryan O'Donnell${}^*$ \and David Witmer${}^*$}

\begin{document}
\maketitle

\begin{abstract}
    Let $P$ be a nontrivial 
    $k$-ary predicate over a finite alphabet. Consider a random CSP$(P)$ instance~$\calI$ over $n$ variables with $m$ constraints, each being $P$ applied to $k$ random literals. 
    When $m \gg n$ the instance $\calI$ will be unsatisfiable with high probability, and the natural associated algorithmic task is to find a \emph{refutation} of~$\calI$ --- i.e., a certificate of unsatisfiability.  When $P$ is the $3$-ary Boolean OR predicate, this is the well studied problem of refuting random $3$-SAT formulas; in this case, an efficient algorithm is known only when $m \gg n^{3/2}$. Understanding the density required for average-case refutation of other predicates is of importance for various areas of complexity, including cryptography, proof complexity, and learning theory. The main previously-known result is that for a general Boolean $k$-ary predicate~$P$, having $m \gg n^{\lceil k/2 \rceil}$ random constraints suffices for efficient refutation.
    
    In this work we give a general criterion for arbitrary $k$-ary predicates, one that often yields efficient refutation algorithms at much lower densities.  Specifically, if~$P$ fails to support a $t$-wise independent (uniform) probability distribution ($2 \leq t \leq k$), then there is an efficient algorithm that refutes random CSP$(P)$ instances $\calI$ with high probability, provided $m \gg n^{t/2}$.  Indeed, our algorithm will ``somewhat strongly'' refute~$\calI$, certifying $\Opt(\calI) \leq 1 - \Omega_k(1)$; if $t = k$ then we furthermore get the strongest possible refutation, certifying $\Opt(\calI) \leq \E[P] + o(1)$.  This last result is new even in the context of random $k$-SAT.
    
    Regarding the optimality of our $m \gg n^{t/2}$ density requirement, prior work on SDP hierarchies has given some evidence that efficient refutation of random CSP$(P)$ may be impossible when $m \ll n^{t/2}$. Thus there is an indication our algorithm's dependence on~$m$ is optimal for every~$P$, at least in the context of SDP hierarchies. Along these lines, we show that our refutation algorithm can be carried out by the $O(1)$-round SOS SDP hierarchy.
    
    Finally, as an application of our result, we falsify the ``SRCSP assumptions'' used to show various hardness-of-learning results in the recent (STOC 2014) work of Daniely, Linial, and Shalev--Shwartz.
\end{abstract}

\thispagestyle{empty}
\setcounter{page}{0}
\newpage


\newcommand{\dens}{\alpha}

\section{On refutation of random CSPs}

Constraint satisfaction problems (CSPs) play a major role in computer science.  There is a vast theory~\cite{BJK05} of how algebraic properties of a CSP predicate affect its worst-case satisfiability complexity; there is a similarly vast theory~\cite{Rag09} of worst-case approximability of CSPs.  Finally, there is a rich range of research --- from the fields of computer science, mathematics, and physics --- on the \emph{average-case} complexity of random CSPs; see~\cite{Ach09} for a survey just of random $k$-SAT.  This paper is concerned with random CSPs, and in particular the problem of efficiently \emph{refuting} satisfiability for random instances.  This is a well-studied algorithmic task with connections to, e.g., proof complexity~\cite{BSB02}, inapproximability~\cite{Fei02}, SAT-solvers~\cite{SAT14}, cryptography~\cite{ABW10}, learning theory~\cite{daniely-hardness-of-learning}, statistical physics~\cite{CLP02}, and complexity theory~\cite{BKS13}.

Historically, random CSPs are probably best studied in the case of $k$-SAT, $k \geq 3$. The model here involves choosing a CNF formula~$\calI$ over $n$ variables by drawing $m$ clauses (ORs of $k$ literals) independently and uniformly at random. (The precise details of the random model are inessential; see Section~\ref{sec:csp-prelims} for more information.)
This is one of the best known efficient ways of generating hard-seeming instances of $\NP$-complete and $\coNP$-complete problems.  The computational hardness depends crucially on the \emph{density}, $\dens = m/n$. For each~$k$ there is (conjecturally) a constant \emph{critical density} $\alpha_k$ such that $\calI$ is satisfiable with high probability when $\dens < \dens_k$, and $\calI$ is unsatisfiable with high probability when $\dens > \dens_k$. (Here and throughout, ``with high probability (whp)'' means with probability $1-o(1)$ as $n \to \infty$.)
This phenomenon occurs for all nontrivial random CSPs; in the case of $k$-SAT it's been rigorously proven~\cite{DSS15} for sufficiently large $k$.

There is a natural algorithmic task associated with each of the two regimes. When $\dens < \dens_k$ one wants to find a satisfying assignment for~$\calI$. When $\dens > \dens_k$ one wants to \emph{refute}~$\calI$; i.e., find a \emph{certificate} of unsatisfiability. Most heuristic SAT-solvers use DPLL-based algorithms; on unsatisfiable instances, they produce certificates that can be viewed as refutations within the Resolution proof system. More generally, a \emph{refutation algorithm} for density~$\dens$  is any algorithm that: a)~outputs ``unsatisfiable'' or ``fail''; b)~never incorrectly outputs ``unsatisfiable''; c)~outputs ``fail'' with low probability (i.e., probability~$o(1)$).\footnote{We caution the reader that in this paper we do not consider the related, but distinct, scenario of distinguishing \emph{planted} random instances from truly random ones.}
Empirical work suggests that as $\dens$ increases towards $\dens_k$, finding satisfying assignments becomes more difficult; and conversely, as $\dens$ increases beyond $\dens_k$, finding certificates of unsatisfiability gradually becomes easier.

A seminal paper of Chv{\'a}tal and Szemer{\'e}di~\cite{CS88} showed that for any sufficiently large integer~$c$ (depending on~$k$), a random $k$-SAT instance with $m = cn$ requires Resolution refutations of size $2^{\Omega(n)}$ (whp).  On the other hand, Fu~\cite{Fu96} showed that polynomial-size Resolution refutations exist (whp) once $m \geq O(n^{k-1})$;  Beame et~al.~\cite{BKPS98} subsequently showed that such proofs could be found efficiently.\footnote{In this paper we use the following not-fully-standard terminology: A statement of the form ``If $f(n) \geq O(g(n))$ then~$X$'' means that there exists a certain function~$h(n)$, with $h(n)$ being $O(g(n))$, such that the statement ``If $f(n) \geq h(n)$ then~$X$'' is true. We also use $\wt{O}(f(n))$ to denote $O(f(n) \cdot \polylog(f(n))$, and $O_k(f(n))$ to denote that the hidden constant has a dependence on~$k$ (most often of the form $2^{O(k)}$).}
A breakthrough came in 2001, when Goerdt and Krivelevich~\cite{GK01} abandoned combinatorial refutations for spectral ones, showing that random $k$-SAT instances can be efficiently refuted when $m \geq \wt{O}(n^{\lceil k/2 \rceil})$.  Soon thereafter, Friedman and Goerdt~\cite{FG01} (see also~\cite{FGK05}) showed that for $3$-SAT, efficient spectral refutations exist once $m \geq n^{3/2 + \eps}$ (for any $\eps > 0$).  These densities for $k$-SAT --- around~$n^{3/2}$ for $3$-SAT and $n^{\lceil k/2 \rceil}$ in general --- have not been fundamentally improved upon in the last~14 years.\footnote{Actually, it is claimed in~\cite{GJ02} that one can obtain $n^{k/2+\eps}$ for odd~$k$ ``along the lines of~\cite{FG01}''.  On one hand, this is true, as we'll see in this paper.  On the other hand, no proof was provided in~\cite{GJ02}, and we have not found the claim repeated in any paper subsequent to 2003.}  (See Table~\ref{table:refutations} for a more detailed history of results in this area.)  Improving the~$n^{3/2}$ bound for $3$-SAT is widely regarded as a major open problem~\cite{ABW10}, with conjectures regarding its possibility going both ways~\cite{BSB02,NIPS2013_4905}.  See also the intriguing work of Feige, Kim, and Ofek~\cite{FKO06} showing that polynomial-size $3$-SAT refutations \emph{exist} whp once $m \geq O(n^{1.4})$.

\paragraph{Strong refutation.}  In a notable paper from 2002, Feige~\cite{Fei02} made a fruitful connection between the hardness of refuting random $3$-SAT instances and the inapproximability of certain optimization problems that are challenging to analyze by other means.
This refers to certifying not only that a random instance $\calI$ is unsatisfiable, but furthermore that $\Opt(\calI) \leq 1-\delta$ for some constant $\delta > 0$. Here $\Opt(\calI)$ denotes the maximum \emph{fraction} of simultaneously satisfiable constraints in~$\calI$.
Feige specifically introduced the following ``R3SAT Hypothesis'': \emph{For all small $\delta > 0$ and large $c \in \N$, there is no polynomial-time $\delta$-refutation algorithm for random $3$-SAT with $m = cn$.}
To stress-test Feige's R3SAT Hypothesis, one may ask if the aforementioned refutation algorithms for $k$-SAT can be improved to $\delta$-refutation algorithms.  Coja-Oghlan et~al.~\cite{COGL04} showed that they can be in the case of $k = 3,4$.  Indeed, they gave algorithms for what is called \emph{strong refutation} in these cases.  Here strongly refuting $k$-SAT refers to certifying that $\Opt(\calI) \leq 1 - 2^{-k} + o(1)$ (note that $\Opt(\calI) \approx 1 - 2^{-k}$ whp assuming $m \geq O(n)$).

\paragraph{Beyond $k$-SAT.}  As in the algebraic and approximation theories of CSP, it's of significant interest to consider random instances of the CSP$(P)$ problem for general predicates $P \co \Z_q^k \to \{0,1\}$, besides just Boolean~OR.  (Though Boolean predicates are more familiar, larger domains are of interest, e.g., for $q$-colorability of $k$-uniform hypergraphs.) Specifically, we are interested in the question of how properties of~$P$ affect the number of constraints needed for efficient refutation of random CSP$(P)$ instances.
This precise question is very relevant for work in cryptography based on the candidate OWFs and PRGs of Goldreich~\cite{Gol00}; see also~\cite{ABW10} and the survey of Applebaum~\cite{App13}.  It has also proven essential for the recent exciting approach to hardness of learning due to Daniely, Linial, and Shalev-Shwartz~\cite{NIPS2013_4905,daniely-hardness-of-learning, 2014arXiv1404.3378D}.  We discuss this learning connection and our results on it in more detail in Section~\ref{sec:learning}.

The special case of random $3$-XOR has proved particularly important: it is related to $3$-SAT refutation through Feige's ``3XOR Principle'' (see~\cite{Fei02,FO05,FKO06}); it's the basis for cryptographic schemes due to Alekhnovich~\cite{Ale03} (and is related to the ``Learning Parities with Noise'' problem); it's used in the best known lower bounds for the SOS SDP hierarchy~\cite{Gri01,Sch08}, which we discuss further in Section~\ref{sec:sos}; and, Barak and Moitra~\cite{BM15} have shown it to be equivalent to a certain ``tensor prediction problem'' in learning theory.

Prior to this work, very little was known about how the predicate~$P$ affects the complexity of refuting random CSP$(P)$ instances.  The main known result, following from the work Coja-Oghlan, Cooper, and Frieze~\cite{COCF10}, was the following: For any Boolean $k$-ary predicate~$P$, one can efficiently strongly refute random CSP$(P)$ instances~$\calI$ (i.e., certify $\Opt(\calI) \leq \E[P] + o(1)$)  provided the number of constraints~$m$ satisfies $m \geq \wt{O}(n^{\lceil k/2 \rceil})$.  In the case of $k$-XOR, the very recent work of Barak and Moitra~\cite{BM15} showed how to improve this bound to $m \geq \wt{O}(n^{k/2})$.\footnote{The present authors also obtained this result around the same time, but we credit the result to~\cite{BM15} as they published earlier.  With their permission we repeat the proof herein, partly because we need to prove a slightly more general variant.}

\newcolumntype{Z}{>{\centering\let\newline\\\arraybackslash\hspace{0pt}}X}
\nopagebreak
\begin{table}           \label{table:refutations}
\begin{tabularx}{\textwidth}{|Z|Z|Z|Z|Z|}
\hline
 CSP & Poly-size refutations \newline
       whp once $m \geq \cdots$      & Strength              & Efficient/
                                                      \newline Existential  & Reference              \\ \hline \hline
$k$-SAT  & $O(n^{k-1})$              &        Refutation      & Existential & \cite{Fu96}            \\ \hline
$k$-SAT  & $O(n^{k-1}/{\scriptscriptstyle \log^{k-2}(n)})$ &
                                              Refutation      & Efficient   & \cite{BKPS98,BKPS02}   \\ \hline
$k$-SAT  & $\wt{O}(n^{\lceil k/2 \rceil})$&   Refutation      & Efficient   & \cite{GK01,FGK05}      \\ \hline
$3$-SAT  & $O(n^{3/2 + \eps})$       &        Refutation      & Efficient   & \cite{FG01,FGK05}      \\ \hline
$k$-SAT  & $O(n^{k/2 + \eps})$       &        Refutation      & Efficient   & Claimed possible in \cite{GJ02,GJ03} \\ \hline
Exactly-$k_1$-out-of-$k$-SAT & $\wt{O}(n)$ &  Refutation      & Efficient   & \cite{BSB02}           \\ \hline
$2$-out-of-$5$-SAT & $O(n^{3/2 + \eps})$ &    Refutation      & Efficient   & \cite{GJ02,GJ03}       \\ \hline
NAE-$3$-SAT & $O(n)$                 & $\Omega(1)$-Refutation & Efficient   & \cite{KLP96,GJ02,GJ03} \\ \hline
$k$-SAT  & $O(n^{\lceil k/2 \rceil})$&        Refutation      & Efficient   & \cite{CGLS04,FO05}    \\ \hline
$3$-SAT  & $\wt{O}(n^{3/2})$         &        Refutation      & Efficient   & \cite{GL03}            \\ \hline
$3$-SAT  & $\wt{O}(n^{3/2})$         & Strong                 & Efficient   & \cite{COGL04,COGL07}   \\ \hline
$4$-SAT  & $O(n^2)$                  & Strong                 & Efficient   & \cite{COGL04,COGL07}   \\ \hline
$3$-SAT  & $O(n^{3/2})$              &        Refutation      & Efficient   & \cite{FO04}            \\ \hline
$3$-SAT  & $O(n^{1.4})$              &        Refutation      & Existential & \cite{FKO06}           \\ \hline
$3$-XOR  & $O(n^{3/2})$              & $1/n^{\Omega(1)}$-refutation & Efficient   & Implicit in \cite{FKO06}    \\ \hline
$3$-SAT  & $\Omega(n^{3/2})$         &        Refutation      & Efficient   & Claimed in \cite{FKO06}\\ \hline
Boolean $k$-CSP &$O(n^{\lceil k/2 \rceil})$& Strong           & Efficient   & \cite{COCF10}          \\ \hline
$k$-XOR  & $\wt{O}(n^{k/2})$         &       Strong           & Efficient   & \cite{BM15} (also
                                                                               herein)      \\ \hline
Any $k$-CSP     &  $\wt{O}(n^{k/2})$ & Quasirandom \newline
                                       ($\implies$ strong)    & Efficient   &          This paper    \\ \hline
Any $k$-CSP \newline
not supporting \newline
$t$-wise indep. &\newline $\wt{O}(n^{t/2})$&
                     \newline $\Omega_k(1)$-refutation & \newline Efficient   & \newline This paper    \\ \hline
\end{tabularx}
\caption{Up to logarithmic factors on $m$, our work subsumes all previously known results.}
\end{table}


\section{Our results and techniques}\label{sec:results}

Here we describe our main results and techniques at a high level.  Precise theorem statements appear later in the work and the definitions of the terminology we use is given in Section~\ref{sec:prelims}.  We also mention that in Section~\ref{sec:larger-alphabets} we will generalize all of our results to the case of larger alphabets; but we'll just discuss Boolean predicates $P : \{0,1\}^k \to \{0,1\}$  for simplicity.

Our main result gives a (possibly sharp) bound on the number of constraints needed to refute random $\CSP(P)$ instances.  Before getting to it, we first describe some more concrete results that go into the main proof.  All of our results rely on a strong refutation algorithm for $k$-XOR (actually, a slight generalization thereof).  For $m \geq \wt{O}(n^{\lceil k/2 \rceil})$, such a result follows from~\cite{COCF10}; however, the exponent $\lceil k/2 \rceil$ can be improved to $k/2$.  We give a demonstration of this fact herein; as mentioned earlier, it was published very recently by Barak and Moitra \cite[Corollary 5.5 and Extensions]{BM15}.
\begin{theorem}
\label{thm:xor-intro}
There is an efficient algorithm that (whp) strongly refutes random $k$-XOR instances with at least $\wt{O}(n^{k/2})$ constraints; i.e., it certifies $\Opt(\calI) \leq \frac12 + o(1)$.
\end{theorem}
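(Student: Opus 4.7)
The plan is to reduce strong refutation of random $k$-XOR to bounding the spectral norm of a matrix obtained by flattening the random $k$-tensor of the instance. Over the $\pm 1$ encoding each $k$-XOR constraint becomes $\epsilon_C \prod_{i \in C} x_i = +1$ for an independent random $\epsilon_C \in \{\pm 1\}$, so the fraction of constraints satisfied by $x$ equals $\tfrac{1}{2}\bigl(1 + Q(x)\bigr)$ with
\[
Q(x) \;:=\; \frac{1}{m}\sum_{C}\epsilon_C \chi_C(x),\qquad \chi_C(x) := \prod_{i \in C} x_i.
\]
Thus strong refutation reduces to producing, in polynomial time, an upper bound $U$ on $\max_{x\in\{\pm1\}^n}|Q(x)|$ that is $o(1)$ w.h.p.; in both cases below the certificate $U$ will be the operator norm of a matrix read off from~$\calI$.

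For even $k = 2\ell$, I would flatten the random $k$-tensor $T \in \{0,\pm 1\}^{[n]^k}$ (which has $m$ random nonzero entries) into an $n^\ell \times n^\ell$ matrix $M$ by $M[\vec i,\vec j] := T_{\vec i\,\vec j}$. Then $mQ(x) = (x^{\otimes \ell})^\top M\,(x^{\otimes \ell})$, so $|Q(x)| \le \|M\|_{\mathrm{op}} \cdot n^\ell / m$. A matrix-Bernstein calculation (or the Kahn--Szemer\'edi trace-moment method for sparse signed matrices) yields $\|M\|_{\mathrm{op}} = \wt{O}\bigl(\sqrt{m / n^\ell}\bigr)$ w.h.p.\ once $m \gtrsim n^\ell \log n$, from which $|Q(x)| \le \wt{O}\bigl(\sqrt{n^{k/2}/m}\bigr) = o(1)$ when $m \gg n^{k/2}$.

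For odd $k = 2\ell + 1$, the naive asymmetric flattening $F : n^\ell \times n^{\ell+1}$ combined with $mQ(x) = u^\top F v$ for $u = x^{\otimes \ell}$, $v = x^{\otimes(\ell+1)}$ would give only $m \gg n^{(k+1)/2}$, because $|u^\top F v| \le \|F\|_{\mathrm{op}} \|u\| \|v\|$ discards the shared-$x$ structure of $u$ and $v$. To save the missing $\sqrt{n}$ I would follow Barak--Moitra and analyze $Q(x)^2$ directly: expanding
\[
m^2 Q(x)^2 \;=\; m \;+\; \sum_{t=0}^{k-1} P_t(x),\qquad P_t(x) \;=\; \sum_{\substack{C \neq D \\ |C \cap D| = t}} \epsilon_C\epsilon_D\,\chi_{C \triangle D}(x),
\]
each $P_t$ has \emph{even} degree $2(k-t)$ and involves $\sim m^2/n^t$ constraint pairs. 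I would flatten $P_t$ into an $n^{k-t}\times n^{k-t}$ matrix $N_t$ and prove the uniform bound $\|N_t\|_{\mathrm{op}} = \wt{O}(m/n^{k/2})$; note that the apparent density $m^2/n^{2k-t}$ times the side length $n^{k-t}$ collapses to $m^2/n^k$ independent of $t$, explaining why every piece has the same spectral scale. This yields $|P_t(x)| \le \wt{O}(m \cdot n^{k/2 - t})$, dominated by the $t = 0$ term, whence $Q(x)^2 \le \wt{O}(n^{k/2}/m)$ and $|Q(x)| \le \wt{O}\bigl(\sqrt{n^{k/2}/m}\bigr) = o(1)$ for $m \gg n^{k/2}$.

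The non-routine step is the operator-norm bound $\|N_t\|_{\mathrm{op}} = \wt{O}(m/n^{k/2})$: the entries of $N_t$ are sums of products $\epsilon_C \epsilon_D$ and are correlated across the pairs $(C,D)$, so one cannot simply invoke matrix Bernstein for i.i.d.\ random entries. I would handle this either by decomposing $N_t$ into sub-matrices whose nonzero entries \emph{are} jointly independent signs (grouping pairs by the combinatorial type of their overlap pattern) and union-bounding, or by running the Kahn--Szemer\'edi trace-moment calculation directly, where the pairing of $\epsilon_C$'s across opposite sides of the trace exactly accounts for the correlations and yields the same $\wt{O}(\cdot)$ estimate. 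Once this spectral lemma is secured, the refutation algorithm in both parities simply computes the relevant operator norm (polynomial-time) and outputs ``unsatisfiable'' whenever it falls below the calibrated threshold.
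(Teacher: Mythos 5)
Your even-$k$ argument matches the paper's (the random-matrix half of its two ``folklore'' proofs), so I will focus on the odd case, where your route is genuinely different from the paper's and, as written, has a gap.

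The paper (following Coja-Oghlan--Goerdt--Lanka and Barak--Moitra) does \emph{not} square $Q$ and decompose by overlap size. It pulls out the last coordinate, writes $F(x) := \sum_T w(T)x^T = \sum_i x_i W_i$ with $W_i = \sum_{T'} w(T',i)x^{T'}$, applies Cauchy--Schwarz to get $F^2 \le n\sum_i W_i^2$, and then expresses $\sum_i W_i^2$ as a quadratic form whose defining $n^{k-1}\times n^{k-1}$ matrix $A$ has entries $A_{(i_1,i_2),(j_1,j_2)} = \sum_{\ell} w(i_1,j_1,\ell)\,w(i_2,j_2,\ell)$. The crucial feature is the inner sum over the shared coordinate $\ell$: it makes each entry a length-$n$ sum of centered products (variance $\approx np^2$), which is what drives $\|A\| = \wt{O}(pn^{k/2})$ via the trace method. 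After dividing by $m$ this gives $|Q| \le \wt{O}(\sqrt{n^{k/2}/m})$.

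Your decomposition $m^2 Q^2 = m + \sum_t P_t$ by $t = |C\cap D|$ is superficially similar, but the $t=0$ piece breaks the argument. The spectral target $\|N_0\| = \wt{O}(m/n^{k/2})$ is unachievable for any flattening of $P_0$ into an $n^k\times n^k$ form. With the aligned flattening $N_0[C,D] = \epsilon_C\epsilon_D$, the nonzero block is essentially $\epsilon\epsilon^{\top}$, so $\|N_0\| = \Theta(m)$. With any mixed flattening (splitting the $2k$ indices so each side gets pieces of both $C$ and $D$), $N_0$ becomes, after an index shuffle, a Kronecker product $B\otimes B'$ of two sparse $\pm1$ matrices with shapes $n^{\lceil k/2\rceil}\times n^{\lfloor k/2\rfloor}$ and its transpose; thus $\|N_0\| = \|B\|\,\|B'\| \approx p\,n^{\lceil k/2\rceil} = m/n^{\lfloor k/2\rfloor}$. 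For odd $k$ that is $m/n^{(k-1)/2}$, a factor $\sqrt{n}$ larger than your claim --- exactly the $\sqrt{n}$ you set out to recover. The ``density $\times$ side $\Rightarrow$ spectral norm'' heuristic you invoke is valid for matrices with (nearly) independent entries, but $N_0$'s entries are $\epsilon_C\epsilon_D$ with each $\epsilon_C$ repeated across an entire row-fiber; this Kronecker/low-rank correlation structure, not a technical nuisance, is what makes the norm large. Neither decomposing into submatrices nor a trace-moment calculation can repair this, since the correlations are what \emph{create} the large singular value. In effect, bounding $P_0$ spectrally gives you nothing beyond the naive asymmetric flattening of the original odd tensor, so your method lands at $m\gg n^{(k+1)/2}$ rather than $m\gg n^{k/2}$.

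The missing idea is to square only over a single shared coordinate (as the paper does), so that every pair of constraints that appears in the resulting quadratic form shares that coordinate, and the matrix entry is a sum over $n$ values of $\ell$. It is that sum --- not the mere parity of the degree --- that concentrates the matrix and makes the trace method yield $\wt{O}(pn^{k/2})$. Equivalently: the pairs $(C,D)$ you put into $P_0$ (disjoint supports, no shared index to sum over) are precisely the ones your spectral machinery cannot control, and they dominate the sum.
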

The proof of Theorem~\ref{thm:xor-intro} follows ideas from~\cite{COGL07} and earlier works on ``discrepancy'' of random $k$-SAT instances.  The case of even~$k$ is notably easier, and we present two ``folklore'' arguments for it.  The case of odd~$k$ is trickier. Roughly speaking we view the instance as a homogeneous degree-$k$ multilinear polynomial, which we want to certify takes on only small values on inputs in $\{-1,1\}^n$. Considering separately the contributions based on the ``last'' of the~$k$ variables in each constraint, and then using Cauchy--Schwarz, it suffices to bound the norm of a carefully designed quadratic form of dimension~$n^{k-1}$, indexed by tuples of $k-1$ variables.  This is done using the trace method~\cite{Wig55,FK81}.  Similar techniques, including the use of the trace method, date back to the 2001 Friedgman--Goerdt work~\cite{FG01} refuting random $3$-SAT given $m = n^{3/2 + \eps}$ constraints.

With Theorem~\ref{thm:xor-intro} in hand, the next step is certifying \emph{quasirandomness} of random $k$-ary CSP instances having $m \geq \wt{O}(n^{k/2})$ constraints.  Roughly speaking we say that a CSP instance is quasirandom if, for every assignment $x \in \{0,1\}^n$, the $m$ induced $k$-tuples of literal values are close to being uniformly distributed over $\{0,1\}^k$.  (Note that this is only a property of the instances' constraint scopes/negations, and has nothing to do with~$P$.) Since the ``Vazirani XOR Lemma'' implies that a distribution on $\{-1,1\}^k$ is uniform if and only if all its~$2^k$ XORs are have bias~$\frac12$, we are able to leverage Theorem~\ref{thm:xor-intro} to prove:
\begin{theorem}
\label{thm:quasi-intro}
There is an efficient algorithm that (whp) certifies that a random instance of $\CSP(P)$ is quasirandom, provided the number of constraints is at least $\wt{O}(n^{k/2})$.
\end{theorem}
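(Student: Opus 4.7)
The plan is to reduce the quasirandomness certification to the $k$-XOR strong refutation of Theorem~\ref{thm:xor-intro}, via a Fourier/Vazirani XOR Lemma argument.

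Fix an assignment $x \in \{0,1\}^n$ and let $p_x$ denote the empirical distribution on $\{-1,1\}^k$ whose $i$-th sample is the vector of literal values of constraint $i$ evaluated at $x$ (in $\pm 1$ encoding). Quasirandomness asserts that $p_x$ is $o(1)$-close in statistical distance to the uniform distribution $U$, uniformly over $x$. For $\emptyset \neq S \subseteq [k]$, write
\[
\widehat{p_x}(S) \;=\; \frac{1}{m}\sum_{i=1}^m \prod_{j \in S} \sigma_{i,j}(x),
\]
where $\sigma_{i,j}(x) \in \{\pm 1\}$ is the value of the $j$-th literal of constraint $i$ at $x$. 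The Vazirani XOR Lemma gives $\|p_x - U\|_1 \leq O_k\bigl(\max_{S \neq \emptyset} |\widehat{p_x}(S)|\bigr)$, so (since $k$ is constant) it suffices to certify $\max_x |\widehat{p_x}(S)| = o(1)$ for each of the $2^k-1$ non-empty subsets $S$.

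The key observation is that $\widehat{p_x}(S)$ is (up to an affine rescaling) the value of $x$ on the derived $|S|$-XOR instance $\calI_S$, whose $i$-th constraint XORs the literals in positions $S$ of the $i$-th constraint of $\calI$ with right-hand side $0$. Because the scopes and negations of $\calI$ are drawn independently and uniformly at random, each $\calI_S$ is distributed as a random $|S|$-XOR instance with $m$ constraints, up to an asymptotically negligible fraction of degenerate constraints where variables within $S$ collide. Since $m \geq \wt{O}(n^{k/2}) \geq \wt{O}(n^{|S|/2})$ for every $|S| \leq k$, Theorem~\ref{thm:xor-intro} applied to both $\calI_S$ and its negated version certifies $\max_x \widehat{p_x}(S) \leq o(1)$ and $-\min_x \widehat{p_x}(S) \leq o(1)$, and hence $\max_x |\widehat{p_x}(S)| = o(1)$. (For $|S|=1$, this reduces to an elementary Chernoff plus $2^n$-union bound, valid once $m \gg n$.)

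The resulting algorithm enumerates the $O_k(1)$ non-empty $S \subseteq [k]$, forms $\calI_S$ and its negation, and runs the $|S|$-XOR strong refutation from Theorem~\ref{thm:xor-intro} on each, outputting ``quasirandom'' iff every bound returns $\leq \tfrac12 + o(1)$. The principal things to verify beyond Theorem~\ref{thm:xor-intro} itself are the tightness of the Vazirani reduction (translating all subset-XOR biases being $o(1)$ into $o(1)$ total-variation distance) and the essentially-random structure of each derived $\calI_S$; both are routine, so the main obstacle and the bulk of the technical work are already packaged inside the $k$-XOR strong refutation.
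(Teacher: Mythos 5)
Your high-level plan matches the paper's: use the Vazirani XOR Lemma (the $t=k$ case of Lemma~\ref{lem:AGM}) to reduce certifying quasirandomness to certifying that each Fourier coefficient $|\widehat{D_{\calI,x}}(S)|$ is $o(1)$, and then bound each Fourier coefficient using the $k$-XOR machinery. This is exactly the structure of Theorem~\ref{thm:quasi-2} via Lemma~\ref{lem:fourier-refute}. So the decomposition is correct.

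The gap is in the claim that each derived instance $\calI_S$ ``is distributed as a random $|S|$-XOR instance with $m$ constraints'' to which Theorem~\ref{thm:xor-intro} directly applies. With $s = |S|$, the instance $\calI_S$ has $m = \wt{\Theta}(n^{k/2})$ constraints living on at most $2^s n^s$ distinct scopes. Whenever $s < k/2$, one has $m \gg n^s$: each scope $U \in [n]^s$ is hit by $\Theta(n^{k/2 - s})$ constraints in expectation, so the effective coefficient $v_U = \sum_{T:\,T_S=U,\,c} \indic{(T,c)\in\calI}\,c^S$ is a sum of $\tau = 2^{k-1}n^{k-s}$ bounded mean-zero random variables, and can be as large as $\Theta(\sqrt{\tau p}\,\log n)$ rather than $O(1)$. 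This is outside the hypotheses of Theorem~\ref{thm:main}/Theorem~\ref{thm:xor-intro} (which require $|w(T)| \leq 1$ and $\Pr[w(T)\ne 0]\leq p$), and in fact $\calI_S$ cannot even be drawn from $\calF_{|S|\text{-XOR}}(n,p')$ for any $p'$ once $m$ exceeds the number of possible constraints. You noticed this for $|S|=1$ (where you fall back to a norm bound), but the same phenomenon occurs for all $1 < |S| < k/2$ whenever $k > 4$, and these intermediate cases cannot be dispatched by an $\ell_1$-triangle-inequality bound alone: they still need the spectral argument, but applied to rescaled coefficients. The paper handles exactly this via Lemma~\ref{lem:smaller-monomials}: apply Bernstein's inequality to show all $|v_U|$ are simultaneously $O(\sqrt{\tau p}\,\log n)$ whp, divide through by this bound, and only then invoke Theorem~\ref{thm:main} with the effective density $\min\{\tau p, 1\}$. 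This is real content, not a routine bookkeeping step, and your proposal currently elides it.
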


If an instance is quasirandom, then no solution can be much better than a randomly chosen one.  Thus by certifying quasirandomness we are able to strongly refute random instances of any CSP$(P)$:
\begin{theorem}
\label{thm:strong-ref-intro}
For any $k$-ary predicate~$P$, there is an efficient algorithm that (whp) strongly refutes random $\CSP(P)$ instances when the number of constraints is at least $\wt{O}(n^{k/2})$.
\end{theorem}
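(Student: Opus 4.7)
The plan is to deduce Theorem~\ref{thm:strong-ref-intro} from Theorem~\ref{thm:quasi-intro} by a short Fourier-analytic reduction. The algorithm is simply the one supplied by Theorem~\ref{thm:quasi-intro}: run the quasirandomness certifier on $\calI$; if it succeeds, output ``$\Opt(\calI) \leq \E[P] + o(1)$,'' otherwise output ``fail.'' Since Theorem~\ref{thm:quasi-intro} provides a certificate whp whenever $m \geq \wt{O}(n^{k/2})$, what remains is to verify that any quasirandomness certificate in fact implies the claimed bound on $\Opt(\calI)$, independently of the structure of $P$.

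To do this, I would Fourier-expand $P \colon \{0,1\}^k \to \{0,1\}$ in the $\pm 1$ basis as $P(y) = \sum_{S \subseteq [k]} \widehat{P}(S)\chi_S(y)$, where $\chi_S(y) = \prod_{j \in S}(-1)^{y_j}$ and $\widehat{P}(\emptyset) = \E[P]$. Writing $y^{(i)}(x) \in \{0,1\}^k$ for the vector of literal values of the $i$-th constraint under assignment $x \in \{0,1\}^n$, the fraction of satisfied constraints admits the identity
\[
    \tfrac{1}{m}\sum_{i=1}^m P\bigl(y^{(i)}(x)\bigr) \;=\; \E[P] \;+\; \sum_{\emptyset \ne S \subseteq [k]} \widehat{P}(S)\,b_S(x), \qquad b_S(x) \;:=\; \tfrac{1}{m}\sum_{i=1}^m \chi_S\bigl(y^{(i)}(x)\bigr).
\]
Each $b_S(x)$ is precisely the bias, under assignment $x$, of the $|S|$-XOR instance obtained from $\calI$ by projecting every constraint onto its $S$-coordinates. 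Via the Vazirani XOR lemma, a quasirandomness certificate is equivalent to a uniform bound $|b_S(x)| \leq \eps$ with $\eps = o(1)$, holding simultaneously for every non-empty $S \subseteq [k]$ and every $x \in \{0,1\}^n$.

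Combining this with $|\widehat{P}(S)| \leq 1$ and summing over the at most $2^k$ non-empty subsets gives
\[
    \Opt(\calI) \;\leq\; \E[P] \;+\; 2^k \cdot \eps \;=\; \E[P] + o_k(1),
\]
which is the desired strong refutation bound. There is no real combinatorial obstacle here: the heavy lifting has been done in Theorem~\ref{thm:xor-intro} and Theorem~\ref{thm:quasi-intro}. The only point requiring care is ensuring that the quasirandomness certificate simultaneously controls \emph{every} non-empty XOR projection $S$ and \emph{every} assignment $x$, but this is immediate from the construction of the certifier via $2^k - 1$ separate invocations of Theorem~\ref{thm:xor-intro} on the individual XOR projections, each of which produces an upper bound valid for all $x$.
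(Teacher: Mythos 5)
Your proposal is correct, and it uses the same computational core as the paper: certify that all nonempty Fourier coefficients (biases) $b_S(x)$ of the induced distribution are uniformly small via $2^k-1$ invocations of the $k$-XOR refutation (Theorem~\ref{thm:xor-intro}), then deduce the bound on $\Opt(\calI)$. The one small difference is in how you pass from ``small Fourier coefficients'' to ``small value.'' The paper's proof of Theorem~\ref{thm:strong-ref-2} first invokes the Alon--Goldreich--Mansour lemma (Lemma~\ref{lem:AGM}) to convert the Fourier-coefficient bounds into a total-variation bound, certifying $\gamma$-quasirandomness, and then applies Fact~\ref{fact:quasi-to-strong}. You instead apply Plancherel directly: $\mathrm{Val}_\calI(x) = \E[P] + \sum_{\emptyset \neq S} \widehat{P}(S)\,b_S(x)$, and then bound the sum termwise using $\lvert \widehat{P}(S)\rvert \leq 1$. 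This short-circuits the detour through total variation distance and is, in fact, precisely the reasoning the paper uses for the SOS version of this theorem (Theorem~\ref{thm:strong-ref-2-SOS}). Both routes yield the same $O_k(\eps) = o(1)$ error, so you have a slightly more streamlined derivation of the same fact; you also gain a sharper constant if one wants it ($\sum_S \lvert \widehat{P}(S) \rvert \leq 2^{k/2}$ by Cauchy--Schwarz), but this is immaterial to the statement. One small wording issue: quasirandomness (small TV distance) and uniformly small biases are not literally equivalent but are related up to $2^{O(k)}$ factors via Lemma~\ref{lem:AGM}; as your last paragraph correctly notes, however, you do not actually need that equivalence, since the certifier produces the bias bounds directly.
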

In particular, this theorem improves upon~\cite{COCF10} by a factor of~$\sqrt{n}$ whenever $k$ is odd; this savings is new even in the well-studied case of $k$-SAT.

The above result does not make use of any properties of the predicate~$P$ other than its arity,~$k$. We now come to our main result, which shows that for many interesting~$P$, random $\CSP(P)$  instances can be refuted with many fewer constraints than $n^{k/2}$.  In the following theorem, the phrase ``$t$-wise uniform'' (often imprecisely called ``$t$-wise independent'') refers to a distribution on $\bits^k$ in which all marginal distributions on $t$ out of $k$ coordinates are uniform.
\begin{theorem} (Main.)
    \label{thm:non-t-wise-intro}
    Let $P$ be a $k$-ary predicate such that there is no $t$-wise uniform distribution supported on its satisfying assignments, $t \geq 2$.  Then there is an efficient algorithm that (whp) $\Omega_k(1)$-refutes random instances of $\CSP(P)$ with at least $\wt{O}(n^{t/2})$ constraints.
\end{theorem}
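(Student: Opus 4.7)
The plan is to combine a Fourier/LP-duality characterization of the hypothesis on $P$ with the strong XOR refutation of Theorem~\ref{thm:xor-intro}. First, I would apply Farkas' lemma to the LP that seeks a probability distribution $\mu$ supported on $P^{-1}(1)$ with $\E_\mu[\chi_S] = 0$ for every $1 \le |S| \le t$. The hypothesis that $P$ fails to support a $t$-wise uniform distribution says this LP is infeasible, which yields a ``dual witness'': a polynomial $Q : \{\pm 1\}^k \to \R$ of degree at most $t$ with $\hat Q(\emptyset) = 0$ and $Q(x) \ge 1$ on all of $P^{-1}(1)$. Set $M := \max_{x \in \{\pm 1\}^k} |Q(x)|$, a constant depending only on $P$. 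A one-line case check ($P(y) = 1 \Rightarrow Q(y) \ge 1$; $P(y) = 0 \Rightarrow Q(y) \ge -M$) then gives the pointwise bound
\[
    P(y) \;\le\; \frac{M + Q(y)}{M + 1} \qquad \forall y \in \{\pm 1\}^k,
\]
which, averaged over the $m$ constraints of the random instance $\calI$, yields
\[
    \Opt(\calI) \;\le\; \frac{M}{M+1} + \frac{1}{M+1}\cdot \max_x \frac{1}{m}\sum_{i=1}^m Q(y_i(x)).
\]
Hence it suffices to certify $\max_x |\tfrac{1}{m}\sum_i Q(y_i(x))| = o(1)$ whp, since this yields $\Opt(\calI) \le 1 - \frac{1-o(1)}{M+1} = 1 - \Omega_k(1)$.

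To obtain the uniform bound I would expand $Q$ in the Fourier basis, getting
\[
    \frac{1}{m}\sum_i Q(y_i(x)) \;=\; \sum_{1 \le |T| \le t} \hat Q(T)\cdot \frac{1}{m}\sum_{i=1}^m \chi_T(y_i(x)).
\]
The key observation is that for each fixed $T \subseteq [k]$ with $|T| \ge 1$, the sequence $(\chi_T(y_i))_{i=1}^m$ is exactly a uniformly random $|T|$-XOR instance with $m$ constraints --- obtained from $\calI$ by keeping only the $|T|$ literals in the positions $T$ of each original constraint. Since $|T| \le t$ and $m \ge \wt O(n^{t/2}) \ge \wt O(n^{|T|/2})$, Theorem~\ref{thm:xor-intro} strongly refutes this sub-instance (applied with both $\pm 1$ right-hand sides, using the sign-flip symmetry of random XOR), certifying $\max_x \bigl|\tfrac{1}{m}\sum_i \chi_T(y_i(x))\bigr| = o(1)$ whp. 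A union bound over the $O_k(1)$ subsets $T$, combined with $\sum_T |\hat Q(T)| = O_k(1)$, then delivers the uniform bound and hence the refutation.

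The main technical step is really the first one: extracting from ``$P$ fails to support $t$-wise uniform'' a mean-zero, degree-$\le t$ polynomial witness $Q$ whose Fourier expansion decomposes cleanly into $|T|$-XOR sub-instances, each individually refutable at density $n^{|T|/2} \le n^{t/2}$. Once this ``Fourier Farkas'' structural lemma is in hand, the rest is assembly: a pointwise inequality, linearity of expectation, and $O_k(1)$ invocations of Theorem~\ref{thm:xor-intro}. A minor caveat is the degenerate $|T|=1$ term, for which Theorem~\ref{thm:xor-intro} may not be stated; but $t \ge 2$ already forces $m \ge \wt O(n)$, which is more than enough to bound a single-literal discrepancy $\tfrac{1}{m}\bigl|\sum_i \xi_{i} x_{\sigma(i)}\bigr|$ uniformly in $x$ directly via a Chernoff + union-bound argument.
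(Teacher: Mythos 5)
Your proposal follows the same route as the paper's Proof~2 of Theorem~\ref{thm:ntwus-refute-2}: LP duality produces a mean-zero, degree-$t$ dual polynomial $Q$, and one certifies $\E_{\calD_{\calI,x}}[Q]$ is small by bounding each Fourier coefficient $\widehat{D_{\calI,x}}(T)$ via XOR-style spectral refutation. The pointwise inequality $P(y) \le (M + Q(y))/(M+1)$ is a reparametrization of the paper's $\delta$-separating polynomial (Definition~\ref{def:poly-sep}), and the Fourier/Plancherel expansion, the $O_k(1)$-term union bound, and the sign-flip trick are the same as the paper's.

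There is, however, a genuine technical gap in the middle step. You assert that for each fixed $T$ the projected sequence $(\chi_T(y_i))_{i=1}^m$ is ``exactly a uniformly random $|T|$-XOR instance'' to which Theorem~\ref{thm:xor-intro} applies. But when $|T| < t$ and $m \ge \wt{O}(n^{t/2}) \gg n^{|T|}$, the projected $|T|$-tuples repeat heavily: there are only $2^{|T|} n^{|T|}$ possible scope/negation patterns, so the effective coefficient attached to each pattern is a \emph{sum} of $\approx 2^{k-|T|}n^{k-|T|}$ independent $\{0,\pm 1\}$-valued random variables, not a single one. Theorem~\ref{thm:xor-intro} as stated does not apply to such a weighted multiset XOR instance, and this is precisely why the paper proves the more general Theorem~\ref{thm:main} (allowing arbitrary $[-1,1]$-valued coefficients) and then Lemma~\ref{lem:smaller-monomials}, which uses Bernstein concentration to rescale the summed coefficients before feeding them into Theorem~\ref{thm:main}. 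A second, smaller point: you take $M := \max |Q|$ to be an unexamined constant, but to deliver a clean $\Omega_k(1)$-refutation one must certify that some valid $Q$ exists with $M = O_k(1)$; the paper does this in Corollary~\ref{cor:lp-granularity} via Cramer's rule and Hadamard's inequality. Neither issue changes the architecture of your argument, but both must be filled in to close the proof.
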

We remark that property of a predicate $P$ supporting a \pw uniform distribution has  played an important role in approximability theory for CSPs, ever since Austrin and Mossel~\cite{AM09} showed that such predicates are hereditarily approximation-resistant under the UGC.  Also, note that the largest $t$ for which a predicate $P$ supports a $t$-wise uniform distribution determines the minimum number of constraints required by our algorithm.  This value is closely related to the notion of distribution complexity studied by Feldman, Perkins, and Vempala~\cite{FPV14,FPV15} in the context of planted random CSPs.  Informally, the distribution complexity of a planted CSP is the largest $t$ for which the distribution over constraint inputs $\{-1,1\}^k$ induced by the planted assignment is $t$-wise uniform.  Despite this similarity, the algorithmic techniques used by Feldman, Perkins, and Vempala in the planted case~\cite{FPV14} do not seem to directly apply to refutation.

The idea behind the proof of Theorem~\ref{thm:non-t-wise-intro} is that with $\wt{O}(n^{t/2})$ constraints we can use the algorithm of Theorem~\ref{thm:quasi-intro} to certify quasirandomness (closeness to uniformity) for all subsets of~$t$ out of~$k$ coordinates. Thus for every assignment $x \in \{0,1\}^n$, the induced distribution on constraint $k$-tuples is ($o(1)$-close to) $t$-wise uniform.  Since $P$ does not support a $t$-wise uniform distribution, this essentially shows that no~$x$ can induce a fully-satisfying distribution on constraint inputs.  To handle the $o(1)$-closeness caveat, we show that if $P$ does not support a $t$-wise uniform distribution, then it is $\delta$-far from supporting such a distribution, for $\delta = \Omega_k(1)$.  The algorithm can then in fact $(\delta-o(1))$-refute random CSP$(P)$ instances.

\begin{example}
    To briefly illustrate the result, consider the Exactly-$k$-out-of-$2k$-SAT CSP, studied previously in~\cite{BSB02,GJ03}.
    The associated predicate supports a $1$-wise uniform distribution, namely the uniform distribution over strings in $\{0,1\}^{2k}$ of Hamming weight~$k$.  However, it is not hard to show that it does not support any \pw uniform distribution. As a consequence, random instances of this CSP can be refuted with only $\wt{O}(n)$ constraints, independent of~$k$.
\end{example}

\subsection{An application from learning theory}\label{ssec:ltprelims}
Recently, an exciting approach to proving hardness-of-learning results has been developed by Daniely, Linial, and Shalev-Shwartz~\cite{NIPS2013_4905,daniely-hardness-of-learning, 2014arXiv1404.3378D, Dan15}.
The most general results appear in~\cite{daniely-hardness-of-learning}.  In this work, Daniely et al.\ prove computational hardness of several central learning theory problems, based on two assumptions concerning the hardness of random CSP refutation.  While the assumptions made in~\cite{2014arXiv1404.3378D,  Dan15} appear to be plausible, our work unfortunately shows that the more general assumptions made in~\cite{daniely-hardness-of-learning} are false.

Below we state the (admittedly strong) assumptions from~\cite{daniely-hardness-of-learning} (up to some very minor technical details which are discussed and treated in Section~\ref{sec:learning}).  We will need one piece of terminology: the \emph{$0$-variability} $\mathrm{VAR}_0(P)$ of a predicate $P : \{-1,1\}^k \to \{0,1\}$ is the least~$c$ such that there exists a restriction to some~$c$ input coordinates forcing~$P$ to be~$0$.  Essentially, the assumptions state that one can obtain hardness-of-refutation with an arbitrarily large polynomial number of constraints by using a family of predicates~$(P_k)$ that: a)~have unbounded $0$-variability; b)~support \pw uniformity. However, our work shows that supporting $t$-wise uniformity for unbounded~$t$ is also necessary.

\begin{named}{SRCSP Assumption 1} \label{assumption:intro-1}
(\!\!\cite{daniely-hardness-of-learning}.) For all $d \in \N$ there is a large enough $C$ such that the following holds:  If $P : \{-1,1\}^k \to \{0,1\}$ has  $\textrm{VAR}_0(P) \geq C$ and is hereditarily approximation resistant on satisfiable instances, then there is no polynomial-time algorithm refuting (whp) random instances of CSP$(P)$ with $m = n^d$ constraints.
\end{named}

\begin{named}{SRCSP Assumption 2}
\label{assumption:intro-2}
    (\!\!\cite{daniely-hardness-of-learning}, generalizing the ``RCSP Assumption'' of~\cite{BKS13} to superlinearly many constraints.)  For all $d \in \N$ there is a large enough $C$ such that the following holds:  If $P : \{-1,1\}^k \to \{0,1\}$ has  $\textrm{VAR}_0(P) \geq C$ and is $\delta$-close to supporting a \pw uniform distribution, then for all $\eps > 0$ there is no polynomial-time algorithm that $(\delta+\eps)$-refutes (whp) random instances of CSP$(P)$ with $m = n^d$ constraints.
\end{named}

In~\cite{daniely-hardness-of-learning} it is shown how to obtain three very notable hardness-of-learning results from these assumptions. However as stated, our work falsifies the SRCSP Assumptions.  Indeed, the assumptions are false even in the three specific cases used by~\cite{daniely-hardness-of-learning} to obtain hardness-of-learning results.  We now describe these cases.

\paragraph{Case 1.} The Huang predicates $(H_\kappa)$ are arity-$\Theta(\kappa^3)$ predicates introduced in~\cite{Huang:2013:ARS:2488608.2488666}; they are hereditarily approximation resistant on satisfiable instances and have $0$-variability $\Omega(\kappa)$.  In~\cite{daniely-hardness-of-learning} they are used in SRCSP Assumption~1 to deduce hardness of PAC-learning DNFs with $\omega(1)$ terms.  However:
\begin{theorem}
\label{thm:huang-intro}
For all $\kappa \geq 9$, the predicate $H_\kappa$ does not support a $4$-wise uniform distribution.
\end{theorem}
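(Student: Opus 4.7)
The plan is to invoke the linear-programming duality characterization of $t$-wise uniform support: by Farkas' lemma, a predicate $P:\{-1,1\}^k \to \{0,1\}$ fails to support a $t$-wise uniform distribution on its accepting set $A = P^{-1}(1)$ if and only if there exists a multilinear polynomial $f$ of degree at most $t$ with
\[
    \min_{x \in A} f(x) \;>\; \widehat{f}(\emptyset) \;=\; \E_{x \sim \{-1,1\}^k}[f(x)].
\]
Thus for each $\kappa \geq 9$ the task reduces to exhibiting an explicit degree-$\leq 4$ ``dual witness'' $f_\kappa$ for $H_\kappa$.

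To construct $f_\kappa$ I would exploit the structure of Huang's predicates. The accepting set $H_\kappa^{-1}(1)$ is invariant under a large symmetry group $G_\kappa$ acting on its $k = \Theta(\kappa^3)$ coordinates, and averaging any candidate $f$ over $G_\kappa$ preserves both $\min_A f$ and $\widehat{f}(\emptyset)$ while possibly decreasing variance. It therefore suffices to search among $G_\kappa$-invariant polynomials of degree at most $4$, a low-dimensional space indexed by $G_\kappa$-orbits on subsets of coordinates of size at most $4$. The search for a witness then reduces to a small, explicit linear program whose natural candidates are symmetric expressions such as $(\sum_i x_i)^2$ and $(\sum_i x_i)^4$, or combinatorial analogues respecting the way Huang's coordinates are indexed (e.g.\ sums restricted to coordinates sharing a common index element).

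The main obstacle will be the explicit computation of $\min_{x \in A} f(x)$ for the resulting candidates: this requires a detailed reading of Huang's accepting set, whose elements satisfy a rigid list of algebraic constraints inherited from the underlying BCH-like code used in Huang's construction. The bound $\kappa \geq 9$ should emerge from the quantitative thresholds at which those constraints accumulate to force the strict separation $\min_A f > \widehat{f}(\emptyset)$ --- for smaller $\kappa$ the arity $\Theta(\kappa^3)$ may simply be too small for any degree-$\leq 4$ witness to succeed, or alternatively those cases can be verified ad hoc. Since the theorem statement is uniform in $\kappa \geq 9$, one would parameterize both the witness $f_\kappa$ and the computation of $\min_A f_\kappa$ as explicit functions of $\kappa$, yielding a single family of dual witnesses rather than handling each $\kappa$ separately.
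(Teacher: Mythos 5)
Your plan correctly identifies the right tool — the LP/Farkas duality that reduces ``$H_\kappa$ does not support a $4$-wise uniform distribution'' to exhibiting a degree-$\leq 4$ dual polynomial, and this is exactly the route the paper takes (via Lemma~\ref{lem:poly-iff-no-dist}). However, the proposal stops at the point where the actual work begins: you never construct the witness, and the candidates you float (powers of $\sum_i x_i$, or generic ``combinatorial analogues'') are not the ones that succeed. The structural fact you need is not BCH-related; $H_\kappa$'s accepting set is built from a \emph{tensor-like consistency} condition on the $\binom{\kappa}{3}$ triple-indexed coordinates: $z$ strongly satisfies if $z_{\{i,j,\ell\}} = z_i z_j z_\ell$ for all distinct $i,j,\ell$, and $H_\kappa(z)=1$ iff $z$ is Hamming-$\kappa$-close to such a point.

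The missing idea is a specific degree-$4$ design on these triple-variables. The paper's witness is an average, over ordered $6$-tuples $I=(i_1,\dots,i_6)$ of distinct elements of $[\kappa]$, of a fixed combination $\zeta(I,z)$ of five monomials $z_{T_1}z_{T_2}z_{T_3}z_{T_4}$ chosen so that (a) each index $j\in[6]$ lies in exactly two of the four triples $T_1,\dots,T_4$, and (b) every $3$-subset of $[6]$ appears in exactly one monomial. Property~(a) forces every such monomial to equal $1$ on any strongly satisfying $z'$ (the singleton factors cancel in pairs), so $Q(z')=5$, while $\widehat Q(\emptyset)=0$ and $Q\geq -5$ everywhere. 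Property~(b) plus the averaging over all $\kappa(\kappa-1)\cdots(\kappa-5)$ ordered $6$-tuples then lets you bound the damage from flipping up to $\kappa$ triple-coordinates by $240/((\kappa-1)(\kappa-2))$, which is what produces the threshold $\kappa\geq 9$. None of this is derivable from the plan as written; without a concrete design satisfying (a) and (b), the ``small, explicit linear program'' you invoke has no provably positive optimum, and the ``ad hoc'' verification you propose for small $\kappa$ is not available because the statement is for \emph{all} $\kappa\geq 9$, where the arity grows unboundedly. So the proposal has the right skeleton but a genuine gap at the construction of the dual polynomial, which is the substance of the proof.
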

\noindent Thus by Theorem~\ref{thm:non-t-wise-intro} we can efficiently refute random instances of~CSP($H_\kappa$) with just $\wt{O}(n^2)$ constraints, independent of~$\kappa$.  This contradicts SRCSP Assumption~1.

\paragraph{Case 2.} The majority predicate $\Maj_k$ has $0$-variability  $\lceil k/2 \rceil$ and is shown in~\cite{daniely-hardness-of-learning} to be $\frac{1}{k+1}$-far from supporting a \pw uniform distribution.  In~\cite{daniely-hardness-of-learning} these predicates are used in SRCSP Assumption~2 to deduce hardness of agnsotically learning Boolean halfspaces to within any constant factor.  However:
\begin{theorem}
\label{thm:maj-intro}
For odd $k \geq 25$, the predicate $\Maj_k$ does not support a $4$-wise uniform distribution; in fact, it is $.1$-far from supporting a $4$-wise uniform distribution.
\end{theorem}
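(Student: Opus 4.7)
The plan splits the theorem into its two parts. The first claim, that $\Maj_k$ supports no $4$-wise uniform distribution, I would obtain almost for free by showing it supports no $1$-wise uniform distribution at all: any distribution concentrated on strings with $\sum_i x_i \geq 1$ satisfies $\E[\sum_i X_i] \geq 1$, which contradicts the $\E[X_i] = 0$ conditions demanded by $1$-wise uniformity. This implication automatically extends to $4$-wise (and indeed any $t \geq 1$).

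For the quantitative $0.1$-farness claim, let $X$ be an arbitrary $4$-wise uniform distribution on $\{-1,+1\}^k$ and set $S = \sum_i X_i$. A short direct calculation using $4$-wise uniformity (the only surviving index-contractions in the $S^4$ expansion being ``all four equal'' and ``two disjoint pairs'') yields $\E[S] = 0$, $\E[S^2] = k$, $\E[S^3] = 0$, and $\E[S^4] = 3k^2 - 2k$. The goal is then to show $\Pr[S < 0] \geq 0.1$, since that probability is exactly the mass $X$ places on violating assignments of $\Maj_k$.

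I would reduce this to a one-dimensional moment problem and solve it via LP duality. Specifically, it suffices to exhibit a degree-$4$ polynomial $q(s)$ with $q(s) \leq \mathbf{1}[s < 0]$ pointwise and $\E[q(S)] \geq 0.1$, since the latter expectation only sees the first four moments of $S$. To guess $q$, I would first identify the extremal primal distribution for the moment problem: the $3$-atom distribution supported on $\{-\sqrt{3k}, 0, \sqrt{3k}\}$ with weights $(1/6, 2/3, 1/6)$ matches all four moments (asymptotically) and realizes $\Pr[S < 0] = 1/6$. Reading off the dual polynomial from tangency/KKT conditions produces, after the rescaling $\hat s = s/\sqrt{k}$, the candidate
\[
  q(\hat s) \;=\; -\frac{\hat s}{18}\,(\hat s - \sqrt{3})^2\,\Bigl(\hat s + \tfrac{3\sqrt{3}}{2}\Bigr).
\]
Verification of the pointwise inequality $q(\hat s) \le \mathbf{1}[\hat s < 0]$ is a three-case analysis on signs of the factors: $q \le 0$ on $\hat s \geq 0$ (since the leading $-\hat s/18$ is nonpositive while the other factors are nonneg), $q \leq 0$ on $\hat s \leq -3\sqrt{3}/2$ (the last factor flips), and on the interval $[-3\sqrt{3}/2, 0)$ the polynomial vanishes at both endpoints and attains its unique interior maximum at $\hat s = -\sqrt{3}$ with $q(-\sqrt{3}) = 1$. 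Substituting the rescaled moments $\E[\hat S^2] = 1$ and $\E[\hat S^4] = 3 - 2/k$ into the expanded form of $q$ yields
\[
  \E[q(\hat S)] \;=\; \tfrac{1}{3} \;-\; \tfrac{1}{18}\bigl(3 - \tfrac{2}{k}\bigr) \;=\; \tfrac{1}{6} + \tfrac{1}{9k},
\]
which exceeds $0.1$ (with considerable slack) for every $k \geq 25$.

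I expect the main obstacle to be finding the polynomial $q$: the verification that the final polynomial globally dominates (not merely is locally tangent to) the step indicator takes a little care, and guessing the right extremal $3$-atom distribution requires some moment-problem insight. Everything else--the moment computation for $S$, the sign-based verification of the pointwise inequality, and the final expectation calculation--is mechanical.
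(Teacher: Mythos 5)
Your proof is correct, and while it rests on the same underlying duality as the paper's (a degree-$4$ polynomial certificate against any $4$-wise uniform distribution), the instantiation is genuinely different and in fact gives a sharper constant. The paper sets up the argument through the $\delta$-separating-polynomial formalism of Lemma~\ref{lem:poly-iff-no-dist}: it writes down a multilinear combination of elementary symmetric monomials of degrees $1,3,4$ with hand-tuned coefficients, converts to a univariate polynomial in $S_z$ via the Krawtchouk identities of Lemma~\ref{lem:make-univariate}, lower-bounds it by exhibiting sum-of-squares decompositions in the rescaled variable $\sigma = s/\sqrt{k}$, and finally invokes the normalization Lemma~\ref{lem:scaleQ} to land at $\delta = 9/73 \approx 0.123$. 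You bypass that scaffolding entirely: you compute the first four moments of $S = \sum_i X_i$ directly from $4$-wise uniformity, observe that proving $\Pr[S<0] \ge 0.1$ is a univariate moment problem, and produce the factored dual witness $q(\hat s) = -\tfrac{\hat s}{18}(\hat s - \sqrt 3)^2(\hat s + \tfrac{3\sqrt3}{2})$, whose structure is transparent from the $3$-atom extremal primal. The pointwise domination check is a clean sign/critical-point analysis (the critical points of $q$ are $\pm\sqrt3$ and $3\sqrt3/8$, only $-\sqrt3$ falls in the relevant interval, and $q(-\sqrt3)=1$), and the final expectation $\E[q(\hat S)] = \tfrac{1}{6} + \tfrac{1}{9k}$ follows from $\E[\hat S^2]=1$, $\E[\hat S^4] = 3 - 2/k$ and vanishing odd moments. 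The payoff is a better bound ($\ge 1/6$ rather than $9/73$), a polynomial whose correctness is easy to verify by inspection, and no dependence on the Krawtchouk or rescaling machinery. One small point worth stating explicitly if this were written up: a degree-$4$ univariate polynomial in $S$ corresponds, after multilinearizing via $X_i^2 = 1$, to a degree-$\le 4$ multilinear polynomial in the $X_i$'s, which is why the expectation is pinned down by $4$-wise uniformity alone; your moment computation of $\E[S^4] = 3k^2 - 2k$ already uses this, so there is no gap, but the step deserves a sentence.
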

\noindent Theorem~\ref{thm:non-t-wise-intro} then implies we can efficiently $\delta$-refute random instances of CSP$(\Maj_k)$ with $\wt{O}(n^2)$ constraints, where $\delta = .1 \gg \frac{1}{k+1}$.  This contradicts SRCSP Assumption~2.

\paragraph{Case 3.} Finally, we also prove that SRCSP Assumption~1 is false for another family of predicates~$(T_k)$ used by~\cite{daniely-hardness-of-learning} to show hardness of PAC-learning intersections of $4$~Boolean halfspaces.\\

Our results described in these three cases all use linear programming duality. Specifically, in Lemma~\ref{lem:poly-iff-no-dist} we show that $P$ is $\delta$-far from supporting a $t$-wise uniform distribution if and only if there exists a $k$-variable multilinear polynomial~$Q$ that satisfies certain properties involving~$P$ and~$\delta$.  We then explicitly construct these dual polynomials for the Huang, Majority, and $T_k$ predicates.\\

We conclude this section by emphasizing the importance of the Daniely--Linial--Shalev-Shwartz hardness-of-learning program, despite the above results.  Indeed, subsequently to~\cite{daniely-hardness-of-learning}, Daniely and Shalev-Shwartz~\cite{2014arXiv1404.3378D} showed hardness of improperly learning $\textrm{DNF}$ formulas with~$\omega(\log n)$ terms under a much weaker assumption than SRCSP Assumption~$1$.  Specifically, their work only assumes that for all~$d$ there is a large enough~$k$ such that refuting random $k$-SAT instances is hard when there are $m = n^d$ constraints.  This assumption looks quite plausible to us, and may even be true with~$k$ not much larger than~$2d$.
Most recently, Daniely showed hardness of approximately agnostically learning halfspaces using the XOR predicate rather than majority~\cite{Dan15}.
This result shows that there is no efficient algorithm that agnostically learns halfspaces to within a constant approximation ratio under the assumption that refuting random $k$-XOR instances is hard when $m = n^{c\sqrt{k}\log k}$ for some $c > 0$.  
He also shows hardness of learning halfspaces to within an approximation factor of $2^{\log^{1-\nu}n}$ for any $\nu>0$ assuming that there exists some constant $c >0$ such that for all $s$, refuting random $k$-XOR instances with $k = \log^sn$ is hard when $m = n^{ck}$.

\subsection{Evidence for the optimality of our results}
It's natural to ask whether the $n^{t/2}$ dependence in our main Theorem~\ref{thm:non-t-wise-intro} can be improved.  As we don't expect to prove unconditional hardness results, we instead merely seek good evidence that refuting $(t-1)$-wise supporting predicates~$P$ is hard when $m \ll n^{t/2}$.  A natural form of evidence would be showing that various strong classes of polynomial-time refutation algorithms fail when $m \ll n^{t/2}$.  To make sense of this we need to talk about the form of such algorithms; i.e., propositional proof systems.  

Recently, there has been significant study of the ``SOS'' (Sum-Of-Squares) proof system, introduced by Grigoriev and Vorobjov~\cite{GV01}; see, e.g.,~\cite{OZ13,BS14} for discussion.  It has the following virtues: a)~it is very powerful, being able to efficiently simulate other proof systems (e.g., Resolution, Lov\'{a}sz--Schrijver); b)~it is automatizable~\cite{Las00,Par00}, meaning that $n$-variable ``degree-$d$ SOS proofs'' can be found in $n^{O(d)}$ time whenever they exist; c)~we do know some examples of \emph{lower bounds} for degree-$d$ SOS proofs.  In Section~\ref{sec:sos} of this paper we show the following:
\begin{theorem}                                     \label{thm:sos-able}
    All of our refutation algorithms for $k$-ary predicates can be extended to produce degree-$2k$ SOS proofs.  
\end{theorem}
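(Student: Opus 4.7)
The plan is to exhibit, for each polynomial inequality certified by our algorithms, an explicit sum-of-squares derivation of degree at most $2k$ from the Boolean axioms $x_i^2 = 1$. Two low-degree SOS primitives do all the work: (i)~Cauchy--Schwarz $(a^{\top}b)^2 \leq (a^{\top}a)(b^{\top}b)$, which is degree-preserving via Lagrange's identity $\|a\|^2\|b\|^2 - (a^{\top}b)^2 = \sum_{i<j}(a_ib_j-a_jb_i)^2$; and (ii)~the spectral inequality $\lambda\|y\|^2 - y^{\top}My \in \mathrm{SOS}$ whenever $\|M\|\leq \lambda$, witnessed by $\lambda I - M = R^{\top}R$ so that $\lambda\|y\|^2 - y^{\top}My = \|Ry\|^2$. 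When $y = y(x)$ is a vector of degree-$d$ monomials in $x$, primitive (ii) is a degree-$2d$ SOS identity in $x$.

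\emph{Step 1 (XOR).} For even $k$, the algorithm of Theorem~\ref{thm:xor-intro} flattens $f(x) := \frac{1}{m}\sum_C \sigma_C \prod_{i\in C} x_i$ into $\frac{1}{m}\langle x^{\otimes k/2}, Mx^{\otimes k/2}\rangle$ for an $n^{k/2}\times n^{k/2}$ matrix $M$ whose norm is bounded by the trace method; applying (ii) with $y = x^{\otimes k/2}$ and using $\|x^{\otimes k/2}\|^2 = n^{k/2}$ under the axioms gives a degree-$k$ SOS derivation of $\pm f(x) \leq o(1)$. For odd $k$, one reorganizes $mf(x) = \sum_j x_j\, h_j(x_{-j})$ for suitable degree-$(k-1)$ polynomials $h_j$; primitive (i) combined with $\sum_j x_j^2 = n$ yields a degree-$2k$ SOS derivation of $m^2 f(x)^2 \leq n\sum_j h_j(x)^2$, and writing $\sum_j h_j(x)^2 = (x^{\otimes (k-1)})^{\top} M'\, x^{\otimes (k-1)}$ for an $n^{k-1}\times n^{k-1}$ matrix $M'$ (again bounded via the trace method), primitive (ii) delivers a degree-$2(k-1)$ SOS bound on $\sum_j h_j^2$; composing gives a degree-$2k$ SOS derivation of $f(x)^2 \leq o(1)$, from which $\pm f(x) \leq o(1)^{1/2}$ follows via $(f(x) \mp \sqrt{o(1)})^2 \geq 0$.

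\emph{Step 2 (Quasirandomness).} For each nonempty $S\subseteq [k]$, the $S$-parity average $A_S(x) := \frac{1}{m}\sum_j \prod_{i\in S}(\mathrm{literal}_{j,i})$ is itself a $|S|$-XOR instance, so Step 1 supplies a degree-$2|S|\leq 2k$ SOS proof that $\pm A_S(x) \leq o(1)$.

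\emph{Step 3 (Strong and non-$t$-wise refutation).} Fourier-expanding $P(z) = \sum_{S} \hat P(S)\chi_S(z)$ gives
\[
    \frac{1}{m}\sum_j P(\mathrm{literals}_j) \;=\; \hat P(\emptyset) + \sum_{S\ne\emptyset}\hat P(S)\,A_S(x).
\]
Multiplying each SOS fact from Step 2 by the nonnegative scalar $|\hat P(S)|$ and summing with the sign $\mathrm{sgn}(\hat P(S))$ gives a degree-$2k$ SOS certificate for Theorem~\ref{thm:strong-ref-intro}. For Theorem~\ref{thm:non-t-wise-intro}, Lemma~\ref{lem:poly-iff-no-dist} supplies a dual polynomial $Q$ of degree at most $t$ with $Q(z) \geq P(z)$ pointwise on $\bits^k$ and $\hat Q(\emptyset) \leq 1 - \delta + o(1)$; the pointwise dominance has a degree-$k$ Boolean-interpolation SOS proof (expand $Q - P$ in the orthogonal indicator basis $\prod_i \tfrac{1\pm z_i}{2}$, each basis element being a product of nonnegative linear forms in $z$), and the Fourier argument above then invokes Step 2 only at levels $|S|\leq t$, keeping the total SOS degree at most $2t \leq 2k$.

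\emph{The main obstacle} will be the odd-$k$ XOR case: checking that the Cauchy--Schwarz composition with the spectral bound, together with the square-root lifting from $f(x)^2$ to $\pm f(x)$, all fit within the degree-$2k$ budget and compose consistently through substitution of the Boolean axioms. Once this bookkeeping is in place, the uniform degree-$2k$ bound holds across all our algorithms because every analytical ingredient---Cauchy--Schwarz, spectral-norm bounds via $R^{\top}R$, the Fourier decomposition of $P$, and the LP-duality construction of the dual polynomial---is already captured by one of the two primitives (i) and (ii).
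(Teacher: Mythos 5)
Your proposal follows essentially the same architecture as the paper's Section~\ref{sec:sos}: an SOS version of the XOR certification theorem (even arity via a spectral factorization $\lambda I - B = R^\top R$; odd arity by squaring, applying a low-degree SOS Cauchy--Schwarz/AM--GM step, bounding a quadratic form, and lifting back with Fact~\ref{fact:sos-square-root}), then transfer to the Fourier coefficients $\widehat{D_{\calI,x}}(S)^{\mathrm{poly}}$ of the induced distribution, then a Plancherel-style expansion combined with the dual polynomial $Q$. The one genuinely different ingredient is your proof that $Q - P + (1-\delta)$ is SOS modulo the Boolean axioms: you expand in the indicator basis $\prod_i \tfrac{1\pm z_i}{2}$ (each idempotent being a square mod $z_i^2=1$), whereas the paper's Claim~\ref{cl:deg-k-ineq} takes $g = \sqrt{f}$ as a Boolean function and observes that the multilinearization of $(g^{\mathrm{poly}})^2$ equals $f^{\mathrm{poly}}$; both yield the same degree bound. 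One technicality your Step~2 elides, which the paper treats carefully in Lemmas~\ref{lem:smaller-monomials} and~\ref{lem:smaller-monomials-sos}, is that the restricted parity average $A_S$ has coefficients on each $x^U$ that are sums of $2^{k-1}n^{k-|S|}$ i.i.d.\ mean-zero terms rather than single $\pm 1$ values, so a Bernstein concentration step and rescaling are required before invoking the XOR SOS theorem.
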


We now return to the question of evidence for the optimality of constraint density used in our results.  Dating back to Franco--Paull~\cite{FP83} and Chv{\'a}tal--Szemer{\'e}di~\cite{CS88}, there has been a long line of work in proof complexity showing lower bounds for refuting random $3$-SAT instances, especially in the Resolution proof system.  This culminated in the work of Ben-Sasson and Wigderson~\cite{BSW99}, which showed that for random $3$-SAT (and $3$-XOR) with $m = O(n^{3/2 - \eps})$, Resolution refutations require size $2^{n^{\Omega(\eps)}}$ (whp). 
More recently, Schoenebeck~\cite{Sch08} showed (using the expansion analysis of~\cite{BSW99}) that random $k$-XOR and $k$-SAT instances with $m \leq n^{k/2 - \eps}$ require SOS proofs of degree $n^{\Omega(\eps)}$, and hence take $2^{n^{\Omega(\eps)}}$ time to refute by the ``SOS Method''. See~\cite{Tul09,Cha13} for related larger-alphabet followups.  These results show that the Barak--Moitra~$\wt{O}(n^{k/2})$ bound for refuting random $k$-XOR (which also works in $O(k)$-degree SOS) and our bound for random $k$-SAT are tight (up to subpolynomial factors) within the SOS framework.  Given the power of the SOS framework, this arguably constitutes some reasonable evidence that no polynomial-time algorithm can refute random $k$-SAT instances with $m \ll n^{k/2}$.

We now discuss our main theorem's $n^{t/2}$ bound for predicates $P$ not supporting $t$-wise uniform distributions.  Suppose $P$ is a predicate that does support a $(t-1)$-wise uniform distribution, where $t > 2$.   In the context of inapproximability and SDP-hierarchy integrality gaps, this condition on~$P$ has been significantly studied in the case of~$t = 3$.  For~$P$ supporting \pw uniformity, it is known~\cite{BGMT12,TW13} that the Sherali--Adams and Lov{\'a}sz--Schrijver$^+$  SDP hierarchies require degree~$\Omega(n)$ to refute random CSP$(P)$ instances (whp) when $m = O(n)$. This result was also recently proven for the stronger SOS proof system by Barak, Chan, and Kothari~\cite{BCK15}, except that the CSP$(P)$ instances are not quite uniformly random; they are ``slightly pruned'' random instances.   For any $t > 2$, the second and third authors recently essentially showed~\cite{OW14} that for the Sherali--Adams$^+$ SDP hierarchy, degree~$n^{\Omega(\eps)}$ is (whp) necessary to refute random CSP$(P)$ instances when $m \leq n^{t/2 - \eps}$.  As a caveat, again the instances are slightly pruned random instances, rather than being purely uniformly random.  (The instances in~\cite{OW14} are also in the ``Goldreich~\cite{Gol00} style''; i.e., there are no literals, but the ``right-hand sides'' are random.  However it is not hard to show the proofs in~\cite{OW14} go through in the standard random model of this paper.)  Future work~\cite{MWW15} is devoted to removing the pruning in these instances.  Although the Sherali--Adams$^+$ SDP hierarchy is certainly weaker than the SOS hierarchy, these works still constitute some evidence that our main theorem's requirement of $m \geq \wt{O}(n^{t/2})$ for non-\twus predicates may be essentially optimal.

Further evidence for the optimality of our results is provided by the work of Feldman, Perkins, and Vempala on statistical algorithms for random planted CSPs \cite{FPV15}.  They show that their lower bounds against statistical algorithms for solving random planted CSPs also imply lower bounds against statistical algorithms for refuting uniformly random CSPs.  Specifically, they prove that when $P$ supports a $(t-1)$-wise uniform distribution, any statistical algorithm using queries that take $\wt{O}(n^{t/2})$ possible values can only refute random instances of $\CSP(P)$ with at least $\wt{\Omega}(n^{t/2})$ constraints.  As an application of this result, they also show that any convex program refuting such an instance of $\CSP(P)$ must have dimension at least $\wt{\Omega}(n^{t/2})$.

\subsection{Certifying independence number and chromatic number of random hypergraphs}
Coja-Oghlan, Goerdt, and Lanka \cite{COGL07} also use their CSP refutation techniques to certify that random $3$- and $4$-uniform hypergraphs have small independence number and large chromatic number.  We extend these results to random $k$-uniform hypergraphs.
\begin{theorem} \label{thm:k-ind-set-intro}
For a random $k$-uniform hypergraph $H$, there is a polynomial time algorithm certifying that the independence number of $H$ is at most $\beta$ with high probability when $H$ has at least $\wt{O}\left(\frac{n^{5k/2}}{\beta^{2k}}\right)$ hyperedges.
\end{theorem}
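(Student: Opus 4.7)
My plan is to extend the Coja-Oghlan--Goerdt--Lanka~\cite{COGL07} spectral argument from $3$- and $4$-uniform hypergraphs to arbitrary arity $k$, using the strong $k$-XOR refutation of Theorem~\ref{thm:xor-intro}. A subset $S \subseteq [n]$ of size $\beta+1$ is independent in $H$ exactly when $f(\mathbf{1}_S) = 0$ for the multilinear polynomial
\[
f(x) \;:=\; \sum_{e \in E(H)} \prod_{i \in e} x_i,
\]
so certifying $\alpha(H) \leq \beta$ reduces to producing a polynomial-time certificate that $f(x) \geq 1$ uniformly for every weight-$(\beta+1)$ Boolean $x$.

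First I would centre by setting $p := (\beta+1)/n$ and writing $x_i = p + u_i$ with $u_i \in \{-p,\,1-p\}$. Expanding each $\prod_{i \in e}(p+u_i)$ gives
\[
f(x) \;=\; m\,p^{k} \;+\; \sum_{j=1}^{k} p^{\,k-j}\,R_j(u), \qquad R_j(u) \;:=\; \sum_{e \in E(H)} \sum_{\substack{S \subseteq e\\|S|=j}}\prod_{i \in S} u_i.
\]
Since $\E_H[f(x)] = m p^{k}(1+o(1))$ on the slice $|x| = \beta+1$, the task reduces to efficiently upper-bounding $\sum_{j=1}^{k} p^{\,k-j}\bigl|R_j(u) - \E_H[R_j(u)]\bigr|$ by less than $m p^{k} - 1$, uniformly over all $u$ corresponding to weight-$(\beta+1)$ $x$.

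Next I would certify the needed uniform spectral bounds on each $R_j$ by invoking the trace-method machinery from the proof of Theorem~\ref{thm:xor-intro}. The coefficient tensor of $R_j$ is precisely the $j$-codegree tensor of the random hypergraph $H$; for $j = k$ this is the adjacency tensor and $R_k$ is literally the homogeneous degree-$k$ polynomial analysed in Theorem~\ref{thm:xor-intro}. For even $j$ one writes the centred codegree tensor as a matrix on $\binom{n}{j/2}$ and bounds its operator norm by $\wt{O}_k(\sqrt{m/n^{j/2}})$ via the same trace/matrix-Bernstein computation; for odd $j$ the Cauchy--Schwarz peel-off described in Section~\ref{sec:results} reduces to a matrix of dimension $n^{j-1}$ in the same way. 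Using $|u_i| \leq 1$ together with Maclaurin's inequality to obtain $\|u^{(j/2)}\|_2^{2} \leq \binom{n}{j/2}$, this yields the certified inequality
\[
\bigl|R_j(u) - \E_H[R_j(u)]\bigr| \;\leq\; \wt{O}_k\!\left(\sqrt{m\, n^{j/2}}\right),
\]
valid uniformly on the slice.

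Finally I would combine and optimise. The $j$-th error contributes $p^{\,k-j}\cdot \wt{O}_k(\sqrt{m\,n^{j/2}})$; its ratio to the main term $m p^{k}$ is $\wt{O}_k(n^{j/4}/(p^{j}\sqrt{m})) = \wt{O}_k((n^{5/4}/\beta)^{j}/\sqrt{m})$, a geometric series in $j$ dominated (for $\beta \leq n$) by the $j=k$ term. Forcing this ratio to be $o(1)$ at $j=k$ yields exactly the sufficient density $m \gg n^{5k/2}/\beta^{2k}$ claimed in the theorem; under this hypothesis $f(x) \geq \tfrac{1}{2}\,m p^{k} \gg 1$ for every weight-$(\beta+1)$ $x$, giving the independence-number certificate. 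The main technical obstacle is step~3 for intermediate $1 \leq j < k$: there the coefficient tensor is not a sparse random $j$-tensor but rather the (denser, correlated) $j$-codegree tensor of a random $k$-hypergraph, so the trace-method moment computation has to be redone carefully with this nested structure, parallelling the odd-$k$ analysis of Theorem~\ref{thm:xor-intro} and~\cite{COGL07}.
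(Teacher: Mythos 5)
Your proposal takes a genuinely different route from the paper, and it inherits a real gap that the paper's argument sidesteps entirely.

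You center the \emph{assignment} around $p=(\beta+1)/n$, expanding $f(x)=\sum_{e\in E}x^e$ into the main term $mp^k$ plus $k$ error terms $p^{k-j}R_j(u)$. This follows the original Coja-Oghlan--Goerdt--Lanka template. The issue you yourself flag at the end is real and unresolved: for $1\le j<k$, the coefficient tensor of $R_j$ is the $j$-codegree tensor of $H$, whose entries $\deg_j(S)-\E[\deg_j(S)]$ are \emph{not} independent (two $j$-sets sharing a hyperedge are correlated), so Theorem~\ref{thm:xor-intro} / Theorem~\ref{thm:main} does not apply to it as stated, and a fresh trace-method computation with the nested dependency structure would be needed. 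That is not a routine adjustment — it is the whole technical content of the odd-$j$ and intermediate-$j$ cases, and your proposal leaves it open. The codegree tensor also has much larger per-entry variance ($\approx m/n^{j}$ rather than the $p\approx m/n^{k}$ of Theorem~\ref{thm:main}), so even the bookkeeping of parameters would have to be redone rather than imported. Additionally, your ``$\E_H[f(x)]=mp^k(1+o(1))$'' step conflates two different quantities called $p$ (your slice density $\beta/n$ and the hyperedge probability), and the non-random part $\sum_j p^{k-j}\E_H[R_j(u)]$ is not actually small term-by-term — it telescopes only after summation — so the reduction ``it suffices to bound $\sum_j p^{k-j}|R_j(u)-\E_H[R_j(u)]|$'' needs more care than you give it.

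The paper's proof avoids all of this with a much simpler centering. It keeps $x\in\{0,1\}^n$ as the raw indicator of the candidate independent set $I$ and instead centers the \emph{edge indicators}: with $w(T_e)=p_{\mathrm{edge}}-\indic\{e\in E\}$ (and $w\equiv 0$ on non-canonical tuples), one has
\[
\sum_{T\in[n]^k}w(T)\,x^T \;=\; p_{\mathrm{edge}}\binom{|I|}{k} \;-\; \#\{e\in E : e\subseteq I\} \;=\; p_{\mathrm{edge}}\binom{|I|}{k}
\]
when $I$ is independent. The $w(T)$ are genuinely independent, mean-zero, bounded by $1$, and nonzero with probability $\le p_{\mathrm{edge}}$, so Theorem~\ref{thm:main} applies \emph{in one shot} to give a certified upper bound $2^{O(k)}\sqrt{p_{\mathrm{edge}}}\,n^{3k/4}\log^{3/2}n$ on this quantity uniformly over $\|x\|_\infty\le 1$. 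Solving $p_{\mathrm{edge}}\binom{|I|}{k}\le 2^{O(k)}\sqrt{p_{\mathrm{edge}}}\,n^{3k/4}\log^{3/2}n$ for $|I|$ gives $|I|\le O_k(n^{5/4}\log^{3/(2k)}n / \expm^{1/(2k)})$, which is $\le\beta$ exactly under the stated density $\expm\ge O_k(n^{5k/2}\log^3 n/\beta^{2k})$. No decomposition by monomial degree, no codegree tensors, no correlation issue. I would recommend replacing your decomposition with this centering: you get the same density threshold, the ``dominant $j=k$ term'' heuristic in your step~4 is precisely what this captures, and the lower-order terms you were struggling with simply never appear.
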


\begin{theorem} \label{thm:k-chrom-num-intro}
For a random $k$-uniform hypergraph $H$, there is a polynomial time algorithm certifying that the chromatic number of $H$ is at least $\xi$ with high probability when $H$ has at least $\wt{O}\left(\xi^{2k} n^{k/2}\right)$ hyperedges.
\end{theorem}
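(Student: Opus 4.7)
The plan is to reduce certification of a chromatic number lower bound to certification of an independence number upper bound, and then directly invoke Theorem~\ref{thm:k-ind-set-intro}. The reduction uses the standard pigeonhole observation that in any proper coloring of a $k$-uniform hypergraph $H$ with fewer than $\xi$ colors, some color class contains at least $\lceil n/\xi \rceil$ vertices; because a proper coloring leaves no hyperedge monochromatic, this color class is an independent set. Thus $\chi(H) \geq n/\alpha(H)$, so any certificate that $\alpha(H) \leq n/\xi$ is automatically a certificate that $\chi(H) \geq \xi$.

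Given this, I would run the algorithm from Theorem~\ref{thm:k-ind-set-intro} with the parameter $\beta := n/\xi$. That algorithm succeeds (whp) whenever the number of hyperedges satisfies
\[
m \;\geq\; \wt{O}\!\left(\frac{n^{5k/2}}{\beta^{2k}}\right) \;=\; \wt{O}\!\left(\frac{n^{5k/2}}{(n/\xi)^{2k}}\right) \;=\; \wt{O}\!\left(\xi^{2k}\, n^{5k/2 - 2k}\right) \;=\; \wt{O}\!\left(\xi^{2k}\, n^{k/2}\right),
\]
which is exactly the density stated in Theorem~\ref{thm:k-chrom-num-intro}. So the algorithm is: compute the independence-number certificate from Theorem~\ref{thm:k-ind-set-intro} at threshold $\beta = n/\xi$, and output the resulting bound as a certificate that $\chi(H) \geq n/\beta = \xi$.

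There isn't really a hard step here beyond Theorem~\ref{thm:k-ind-set-intro} itself; the only thing to be mildly careful about is that the reduction is purely combinatorial (it is a deterministic implication $\alpha(H) \leq n/\xi \Rightarrow \chi(H) \geq \xi$), so the certification soundness and the ``with high probability'' success of the algorithm transfer directly. If $\xi$ is not an integer or $n/\xi$ is not integral, one simply replaces $\beta$ by $\lfloor n/\xi \rfloor$, which only changes the bound by constants absorbed into the $\wt{O}(\cdot)$. Thus the theorem follows immediately once Theorem~\ref{thm:k-ind-set-intro} is established.
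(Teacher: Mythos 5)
Your proof is correct and is essentially identical to the paper's own argument (Theorem~\ref{thm:k-chrom-num} in Appendix~\ref{sec:hypergraphs}): reduce via the pigeonhole fact that a proper $\xi$-coloring forces an independent color class of size at least $n/\xi$, then invoke Theorem~\ref{thm:k-ind-set-intro} with $\beta = n/\xi$ and verify the density algebra. The only detail you omit is the implicit range condition (the full statement of Theorem~\ref{thm:k-ind-set} requires $\beta \geq n^{3/4}\log n$, i.e.\ $\xi \leq n^{1/4}/\log n$), but the informal Theorem~\ref{thm:k-chrom-num-intro} suppresses this as well, so your proof matches the paper's.
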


The proofs of these theorems follow the outline of \cite{COGL07}.  We show Theorem~\ref{thm:k-ind-set-intro} using a slightly more general form of Theorem~\ref{thm:xor-intro}.  Theorem~\ref{thm:k-chrom-num-intro} follows almost directly from Theorem~\ref{thm:k-ind-set-intro} using the fact that every color class of a valid coloring is an independent set.  Details are given in Appendix~\ref{sec:hypergraphs}.


\section{Preliminaries and notation} \label{sec:prelims}
\subsection{Constraint satisfaction problems}           \label{sec:csp-prelims}
We review some basic definitions and facts related to constraint satisfaction problems (CSPs).  In this section we discuss only the Boolean domain, which we prefer to write as $\{-1,1\}$ rather than $\{0,1\}$.  The straightforward extensions to larger domains appear in Section~\ref{sec:larger-alphabets}.  We will need the following notation: For $x \in \R^n$ and $S \subseteq [n]$ we write $x_S \in \R^{|S|}$ for the restriction of $x$ to coordinates $S$; i.e., $(x_i)_{i \in S}$.  We also use $\circ$ to denote the entrywise product for vectors.
\begin{definition}
Given a predicate $P:\{-1,1\}^k \to \{0,1\}$, an instance $\calI$ of the $\maxkcsp(P)$ problem over variables $x_1, \dots, x_n$ is a multiset of \emph{$P$-constraints}.  Each $P$-constraint consists of a pair $(c, S)$, where $S \in [n]^k$ is the scope and $c \in \{-1,1\}^k$ is the negation pattern; this represents the constraint $P(c \circ x_S) = 1$.  We typically write $m = |\calI|$.  Let $\mathrm{Val}_{\calI}(x)$ be the fraction of constraints satisfied by assignment $x \in \{-1,1\}^n$, i.e., \ $\mathrm{Val}_{\calI}(x) = \frac{1}{m} \sum_{(c, S) \in \calI} P(c \circ x_S)$.  The objective is to find an assignment $x$ maximizing $\mathrm{Val}_{\calI}(x)$. The \emph{optimum} of $\calI$, denoted by $\val(\calI)$, is $\max_{x \in \kpm} \mathrm{Val}_{\calI}(x)$.  If $\val(\calI) = 1$, we say that $\calI$ is \emph{satisfiable}.  We also write $\expP$ for the quantity $\E_{z \sim \kpm}\left[P(z) \right]$; i.e., the fraction of assignments that satisfy~$P$.  For any instance $\calI$ in which each constraint involves $k$ different variables, we have $\Opt(\calI) \geq \expP$.\footnote{Technically, our definitions allow constraints with a variable appearing more than once, so $\Opt(\calI) \geq \expP$ doesn't always hold for us.  But since we only consider random~$\calI$, we'll in fact have $\Opt(\calI) \approx \expP$ whp over $\calI$ anyway.}
\end{definition}

We next define a standard random model for CSPs.  For $P:\{-1,1\}^k \to \{0,1\}$, let $\calF_P(n,p)$ be the distribution over CSP instances given by including each of the $2^k n^k$ possible constraints independently with probability $p$.  Note that we may include constraints on different permutations of the same set of variables, constraints on the same tuple of variables with different negations $c$, and constraints with the same variable occurring as more than one argument.  It is reasonable to include such constraints in the case that the predicate $P$ is not a symmetric function.  We use $\expm$ to denote the expected number of constraints, namely $2^k n^k p$.  As noted in Fact~\ref{fact:csp-facts} below, the number of constraints $m$ in a draw from $\calF_P(n,p)$ is very tightly concentrated around $\expm$, and we often blur the distinction between these parameters.  Appendix~\ref{sec:translation} explicitly describes a method for simulating an instance drawn from $\calF_P(n, p)$ when the number of constraints is fixed.

\paragraph{Quasirandomness.}  We now introduce an important notion for this paper: that of a CSP instance being \emph{quasirandom}.  Versions of this notion originate in the works of Goerdt and Lanka~\cite{GL03} (under the name ``discrepancy''), Khot~\cite{Kho06} (``quasi-randomness''), Austrin and H{\aa}stad~\cite{AH13} (``adaptive uselessness''), and Chan~\cite{Cha13} (``low correlation''), among other places.  To define it, we first need to define the induced distribution of an instance and an assignment.
\begin{definition}
\label{def:dist-2}
Given a CSP instance $\calI$ and and an assignment $x \in \{-1,1\}^n$, the \emph{induced distribution}, denoted $\calD_{\calI, x}$, is the probability distribution on $\{-1,1\}^k$ where the probability mass on $\alpha \in \{-1,1\}^k$ is given by
$
    \calD_{\calI, x}(\alpha) = \tfrac{1}{|\calI|} \cdot \#\{(c,S) \in \calI \mid c \circ x_S = \alpha\}.
$
In other words, it is the empirical distribution on inputs to~$P$ generated by the constraint scopes/negations on assignment~$x$.  Note that the predicate~$P$ itself is irrelevant to this notion. We will drop the subscript $\calI$ when it is clear from the context.  We define $D_{\calI, x} = 2^k \cdot \calD_{\calI, x}$ to be the density function associated with $\calD_{\calI, x}$.
\end{definition}
We can now define quasirandomness.
\begin{definition}
\label{def:quasi-2}
A CSP instance $\calI$ is \emph{$\eps$-quasirandom} if $\calD_{\calI,x}$ is $\eps$-close to the uniform distribution for all $x \in \{-1,1\}^n$; i.e., if
$
\dtv{\calD_{\calI, x}}{U^k} \leq \eps
$
for all $x \in \{-1,1\}^n$.
\end{definition}
Here we use the notation $U^k$~for the uniform distribution on $\{-1,1\}^k$ as well as the following:
\begin{definition}
    If $\calD$ and $\calD'$ are probability distributions on the same finite set~$A$ then $\dtv{\calD}{\calD'}$ denotes their \emph{total variation distance}, $\frac{1}{2} \sum_{\alpha \in A} \abs{\calD(\alpha) - \calD'(\alpha)}$.  If $\dtv{\calD}{\calD'} \leq \eps$ we say that $\calD$ and $\calD'$ are \emph{$\eps$-close}.  If $\dtv{\calD}{\calD'} \geq \eps$ we say they are \emph{$\eps$-far}.  (As neither inequality is strict, these notions are not quite opposites.)
\end{definition}

An immediate consequence of an instance being quasirandom is that its optimum is close to~$\expP$:
\begin{fact}
\label{fact:quasi-to-strong}
If $\calI$ is $\eps$-quasirandom, then $\val(\calI) \leq \expP + \eps$ (and in fact, $\abs{\val(\calI) - \expP} \leq \eps$).
\end{fact}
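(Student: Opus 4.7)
The plan is to express $\mathrm{Val}_{\calI}(x)$ as an expectation under the induced distribution $\calD_{\calI,x}$, and then invoke the standard fact that total variation distance upper-bounds the difference of expectations for any $[0,1]$-valued test function.

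First, I would rewrite the value of~$\calI$ on a fixed assignment $x \in \{-1,1\}^n$ by grouping the constraints according to their induced input $c \circ x_S$:
\begin{equation*}
    \mathrm{Val}_{\calI}(x) \;=\; \frac{1}{|\calI|} \sum_{(c,S) \in \calI} P(c \circ x_S) \;=\; \sum_{\alpha \in \{-1,1\}^k} \calD_{\calI,x}(\alpha) \, P(\alpha) \;=\; \E_{\alpha \sim \calD_{\calI,x}}[P(\alpha)].
\end{equation*}
Similarly, $\expP = \E_{\alpha \sim U^k}[P(\alpha)]$ by definition. Subtracting these two expressions gives a difference of expectations of the same $\{0,1\}$-valued function under two different distributions on $\{-1,1\}^k$.

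Next, I would apply the elementary lemma that for any function $f \co A \to [0,1]$ and any two distributions $\calD, \calD'$ on the finite set~$A$, one has $|\E_{\calD}[f] - \E_{\calD'}[f]| \leq \dtv{\calD}{\calD'}$. (This is immediate from writing the difference as $\sum_\alpha (\calD(\alpha) - \calD'(\alpha)) f(\alpha)$, splitting the sum according to the sign of $\calD(\alpha) - \calD'(\alpha)$, and noting that both the positive and negative parts sum to at most $\dtv{\calD}{\calD'}$ in absolute value.) Applying this with $f = P$, $\calD = \calD_{\calI,x}$, and $\calD' = U^k$ yields
\begin{equation*}
    \bigl|\mathrm{Val}_{\calI}(x) - \expP\bigr| \;\leq\; \dtv{\calD_{\calI,x}}{U^k} \;\leq\; \eps,
\end{equation*}
where the last step uses the hypothesis that $\calI$ is $\eps$-quasirandom.

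Finally, since this bound holds for every $x \in \{-1,1\}^n$, taking the maximum over $x$ gives $|\val(\calI) - \expP| \leq \eps$, which in particular implies $\val(\calI) \leq \expP + \eps$. There is no real obstacle here; the only content is unpacking the definitions and invoking the standard coupling-free inequality between TV distance and differences of expectations.
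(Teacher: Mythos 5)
Your proof is correct and is exactly the argument the paper implicitly has in mind: the paper states this as a Fact with no proof, introducing it as "an immediate consequence," and the content behind that is precisely what you wrote — rewrite $\mathrm{Val}_{\calI}(x)$ as $\E_{\alpha \sim \calD_{\calI,x}}[P(\alpha)]$, compare to $\expP = \E_{\alpha \sim U^k}[P(\alpha)]$, and use the standard bound $\bigl|\E_{\calD}[f] - \E_{\calD'}[f]\bigr| \leq \dtv{\calD}{\calD'}$ for $[0,1]$-valued $f$. Your positive/negative-part argument correctly gets the sharp constant (a naive triangle inequality would give $2\eps$), and taking the maximum over $x$ delivers both directions of $\abs{\val(\calI) - \expP} \leq \eps$ since the bound holds pointwise for every assignment.
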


We conclude the discussion of CSPs by recording some facts that are proven easily with the Chernoff bound:
\begin{fact}
\label{fact:csp-facts}
    Let $\calI \sim \calF_{P}(n,p)$.  Then the following statements hold with high probability.
    \begin{enumerate}
        \item $m = \abs{\calI} \in \expm \cdot \left(1 \pm O\left(\sqrt{\frac{\log n}{\expm}}\right)\right)$.
        \item $\val(\calI) \leq \expP \cdot \left(1 + O\left(\sqrt{\frac{1}{\expP} \cdot \frac{n}{\expm}}\right)\right)$.
        \item $\calI$ is $O\left(\sqrt{2^k \cdot \frac{n}{\expm}}\right)$-quasirandom.
    \end{enumerate}
\end{fact}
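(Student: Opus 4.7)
The plan is to prove each claim by applying the multiplicative Chernoff bound to an appropriate sum of independent Bernoulli$(p)$ indicators, combined in items~(2) and~(3) with a union bound over the $2^n$ possible assignments $x \in \{-1,1\}^n$. All three quantities of interest can be written as sums of independent indicators, one per potential constraint $(c,S) \in \{-1,1\}^k \times [n]^k$, so Chernoff applies directly.

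For item~(1), the quantity $m = |\calI|$ is literally the sum of $2^k n^k$ independent Bernoulli$(p)$ indicators, with mean $\expm$. A two-sided multiplicative Chernoff bound gives $\Pr[|m - \expm| > \eta \expm] \leq 2\exp(-\Omega(\eta^2 \expm))$, and choosing $\eta = C\sqrt{\log n / \expm}$ for $C$ sufficiently large drives the failure probability to $o(1)$.

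For item~(2), fix $x \in \{-1,1\}^n$ and let $M_x = \sum_{(c,S) \in \calI} P(c \circ x_S)$ denote the (unnormalized) number of constraints that $x$ satisfies. For each scope $S \in [n]^k$, the map $c \mapsto c \circ x_S$ is a bijection of $\{-1,1\}^k$ onto itself, so exactly $\expP \cdot 2^k$ of the $2^k$ possible negation patterns $c$ yield $P(c \circ x_S) = 1$. Hence $M_x$ is a sum of $\expP \cdot 2^k n^k$ independent Bernoulli$(p)$ indicators with mean $\expP \cdot \expm$, and Chernoff gives $\Pr[M_x > (1+\eta)\expP \cdot \expm] \leq \exp(-\Omega(\eta^2 \expP \expm))$. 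Union bounding over the $2^n$ assignments forces $\eta = \Omega(\sqrt{n/(\expP \expm)})$; dividing $M_x$ by $m \approx \expm$ (using item~(1) to control the denominator) then yields the stated bound on $\val(\calI)$.

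For item~(3), fix $x \in \{-1,1\}^n$ and $\alpha \in \{-1,1\}^k$, and let $N_{x,\alpha} = \#\{(c,S) \in \calI : c \circ x_S = \alpha\}$. By the same bijection argument, for each scope $S$ there is exactly one $c$ with $c \circ x_S = \alpha$, so $N_{x,\alpha}$ is a sum of $n^k$ independent Bernoulli$(p)$ indicators with mean $\expm/2^k$. Chernoff yields $\Pr[|N_{x,\alpha} - \expm/2^k| > \eta \cdot \expm/2^k] \leq 2\exp(-\Omega(\eta^2 \expm/2^k))$, and union bounding over $2^n$ assignments and $2^k$ outcomes of $\alpha$ calls for $\eta = \Omega(\sqrt{2^k n/\expm})$. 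The total variation distance $\frac{1}{2m}\sum_\alpha |N_{x,\alpha} - m/2^k|$ is then $O(\eta)$ after absorbing the fluctuation of $m$ from item~(1). The only real subtlety is accommodating the $2^n$-fold union bound: the linear-in-$n$ cost in the exponent is exactly what drives the $\sqrt{n/\expm}$ and $\sqrt{2^k n/\expm}$ scalings in items~(2) and~(3), in contrast to the milder $\sqrt{\log n/\expm}$ scaling that suffices in item~(1).
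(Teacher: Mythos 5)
Your proof is correct and matches the paper's intended approach: the paper gives no detailed proof, merely remarking that the facts "are proven easily with the Chernoff bound," which is exactly what you carry out via the standard recipe of decomposing each quantity into a sum of independent Bernoulli$(p)$ indicators, applying multiplicative Chernoff, and union-bounding over the $2^n$ assignments (and in item~3 the additional $2^k$ outcomes $\alpha$).
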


\subsection{Algorithms and refutations on random CSPs}
\begin{definition}
    Let $P$ be a Boolean predicate.  We say that $\calA$ is \emph{$\delta$-refutation algorithm for random $\mathrm{CSP}(P)$ with $\expm$ constraints} if $\calA$ has the following properties.  First, on all instances $\calI$ the output of $\calA$ is either the statement ``$\Opt(\calI) \leq 1-\delta$'' or is ``fail''.  Second, $\calA$ is \emph{never} allowed to \emph{err}, where erring means outputting ``$\Opt(\calI) \leq 1-\delta$'' on an instance which actually has $\Opt(\calI) > 1-\delta$.  Finally, $\calA$ must satisfy
    \[
        \phantom{\quad \text{(as $n \to \infty$)},} \Pr_{\calI \sim \calF_P(n,p)}[\calA(\calI) = \text{``fail''}] < o(1) \quad \text{(as $n \to \infty$)},
    \]
    where $p$ is defined by $\expm = 2^k n^k p$.  Although $\calA$ is often a deterministic algorithm, we do allow it to be randomized, in which case the above probability is also over the ``internal random coins'' of~$\calA$.

    We refer to this notion as \emph{weak refutation}, or simply \emph{refutation}, when the certification statement is of the form ``$\Opt(\calI) < 1$'' (equivalently, when $\delta = 1/|\calI|$).  We refer to the notion as \emph{strong refutation} when the statement is of the form ``$\Opt(\calI) \leq \expP + o(1)$'' (equivalently, when $\delta = 1 - \expP + o(1)$).
\end{definition}
\begin{remark}
    In Section~\ref{sec:learning} we will encounter a ``two-sided error'' variant of this definition.  This is the slightly easier algorithmic task in which we relax the condition on erring: it is only required that for each instance $\calI$ with $\Opt(\calI) > 1-\delta$, it holds that $\Pr[\calA(\calI) = \text{``}\Opt(\calI) \leq 1-\delta\text{''}] \leq 1/4$, where the probability is just over the random coins of~$\calA$.
\end{remark}
\begin{remark}
    We will also use the analogous definition for certification of related properties; e.g., we will discuss \emph{$\eps$-quasirandomness certification algorithms} in which the statement ``$\Opt(\calI) \leq 1 - \delta$'' is replaced by the statement ``$\calI$ is $\eps$-quasirandom''.
\end{remark}

\subsection{$t$-wise uniformity}
An important notion for this paper is that of $t$-wise uniformity. Recall:
\begin{definition}
    Probability distribution $\calD$ on $\{-1,1\}^k$ is said to be \emph{$t$-wise uniform}, $1 \leq t \leq k$, if for all $S \subseteq [k]$ with $|S| = t$ the random variable $x_S$ is uniform on $\{-1,1\}^t$ when $x \sim \calD$.    (We remark that this condition is sometimes inaccurately called ``$t$-wise independence'' in the literature.)
\end{definition}
We will also consider the more general notion of $(\eps,t)$-wise uniformity.  This is typically defined using \emph{Fourier coefficients}:
\begin{definition}
    Probability distribution $\calD$ on $\{-1,1\}^k$ is said to be \emph{$(\eps,t)$-wise uniform} if $|\wh{D}(S)| \leq \eps$ for all $S \subseteq [k]$ with $0 < |S| \leq t$, where $D = 2^k \cdot \calD$ is the probability density associated with distribution~$\calD$.
\end{definition}
Here we are using standard notation from Fourier analysis of Boolean functions~\cite{OD14}.  In particular, for any $f \co \{-1,1\}^k \to \R$ we write $f(x) = \sum_{S \substeq [k]} \wh{f}(S) x^S$ for its expansion as a multilinear polynomial over~$\R$, with $x^S$ denoting $\prod_{i \in S} x_i$ (not to be confused with the projection $x_S \in \R^{|S|}$).
\begin{remark}  \label{rem:t-wise-triviality}
    It is a simple fact (and it follows from Lemma~\ref{lem:AGM} below) that $(0,t)$-wise uniformity is equivalent to $t$-wise uniformity.
\end{remark}

Also important for us is a related but distinct notion, that of being $\eps$-close to a $t$-wise uniform distribution.   It's easy to show that if $\calD$ is $\eps$-close to a $t$-wise uniform distribution then $\calD$ is $(2\eps, t)$-wise \todo{did I get that $2\eps$ correct?}uniform.  In the other direction, we have the following (see also~\cite{AAK+07} for some quantitative improvement):
\begin{lemma}                                     \label{lem:AGM}
    \textup{(Alon--Goldreich-Mansour~\cite[Theorem~2.1]{AGM03}).}  If $\calD$ is an $(\eps,t)$-wise uniform distribution on $\{-1,1\}^k$, then there exists a $t$-wise uniform distribution $\calD'$ on $\{-1,1\}^k$ with
    \[
        \dtv{\calD}{\calD'} \leq \Bigl(\sum_{i=1}^t \tbinom{k}{t}\Bigr) \cdot \eps \leq k^t \cdot \eps.
    \]
    In particular if $t = k$ we have the  bound $2^k \cdot \eps$ (and this can also be improved~\cite{Gol11} to $2^{k/2-1} \cdot    \eps$).
\end{lemma}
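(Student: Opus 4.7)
The plan is to construct $\calD'$ by explicit Fourier modification of the density $D := 2^k \calD$. Introduce the ``low-frequency error''
\[
q(x) \;:=\; \sum_{0 < |S| \le t} \wh{D}(S)\, x^S,
\]
which by the hypothesis $|\wh{D}(S)| \le \eps$ and the triangle inequality satisfies $\|q\|_\infty \le \eps \sum_{i=1}^{t}\binom{k}{i} =: M$.

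The first step is to zero out the low-frequency Fourier coefficients by forming $\tilde{D}(x) := D(x) - q(x)$. By construction $\wh{\tilde{D}}(\emptyset) = 1$ and $\wh{\tilde{D}}(S) = 0$ for $0 < |S| \le t$, which is exactly the Fourier profile of a $t$-wise uniform density. If $\tilde{D}$ happened to be nonnegative pointwise, then $\calD' := \tilde{D}/2^k$ would be the desired distribution, with $\dtv{\calD}{\calD'} = \tfrac{1}{2}\E|q| \le M/2$, and we would be done.

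Since $\tilde{D}$ may fail to be nonnegative (we only know $\tilde{D} \ge -\|q\|_\infty \ge -M$), the second step is to restore nonnegativity by a convex combination with the uniform density. Setting $c := \max(0,\ -\min_x \tilde{D}(x)) \le M$ and
\[
D'(x) \;:=\; \frac{\tilde{D}(x) + c}{1 + c},
\]
we obtain a nonnegative density: the additive constant $c$ only shifts the $\emptyset$-Fourier coefficient, and the subsequent rescaling restores $\wh{D'}(\emptyset) = 1$ while leaving $\wh{D'}(S) = 0$ for $0 < |S| \le t$. Thus $\calD' := D'/2^k$ is a bona fide $t$-wise uniform probability distribution.

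For the distance bound, a direct computation gives $D - D' = \bigl(c(D - 1) + q\bigr)/(1 + c)$, so
\[
\dtv{\calD}{\calD'} \;\le\; \tfrac{1}{2(1+c)}\bigl(c\,\E|D-1| + \E|q|\bigr) \;\le\; \tfrac{c + M/2}{1 + c} \;=\; O(M),
\]
using $\E|D - 1| \le 2$ (since $\E D = 1$ and $D \ge 0$) and $\E|q| \le \|q\|_\infty \le M$. This matches the claimed bound $\eps\sum_{i=1}^{t}\binom{k}{i}$ up to a small constant factor, and immediately implies the cruder bound $k^t \cdot \eps$.

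The main obstacle is tightening this constant to the exact value stated in the lemma (and, when $t = k$, obtaining the sharp $2^{k/2-1}\eps$ bound attributed to \cite{Gol11}). The naive ``add back a constant'' correction loses a factor of roughly $3/2$ because any $L^1$ slack between $D$ and the uniform density is charged twice. Closing this gap requires either a sharper one-shot correction (analyzable via LP duality on the convex set of $t$-wise uniform densities, where the dual variables become low-degree polynomials) or the more delicate iterative refinement in the original AGM argument; I would import this refinement to obtain the tight constants.
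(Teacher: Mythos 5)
The paper does not include a proof of this lemma; it is cited directly from Alon--Goldreich--Mansour~\cite[Theorem~2.1]{AGM03}, so there is no in-paper argument to compare against. Your proof sketch is the standard AGM-style argument (zero out the low-degree Fourier mass, then mix with uniform to restore nonnegativity), and the mechanics are sound: $\tilde D = D - q$ has the right Fourier profile; adding a constant $c \le \|q\|_\infty$ and renormalizing preserves that profile while making the density nonnegative; and your estimate $\E|D-1|\le 2$ (from $\E D=1$, $D\ge 0$) combined with $\E|q|\le\|q\|_\infty\le M$ yields $\dtv{\calD}{\calD'}\le \tfrac{3}{2}M$, a factor $3/2$ off the stated constant, which you acknowledge up front. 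For the way the paper actually uses the lemma — through the crude $k^t\eps$ envelope and the $2^{-\wt O(k^t)}$ granularity bound in Corollary~\ref{cor:lp-granularity} — this constant loss is immaterial.

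One thing worth adding to your write-up: the $t=k$ special case (the Vazirani XOR Lemma) has a much shorter proof that avoids the mixing step entirely and directly gives the sharper $2^{k/2-1}\eps$ bound attributed to~\cite{Gol11}. Since the uniform density $U$ is itself $k$-wise uniform, one can simply take $\calD' = U$ and compute
\[
\dtv{\calD}{U} \;=\; \tfrac12\,\E_{\bz\sim U^k}\bigl|D(\bz)-1\bigr| \;\le\; \tfrac12\,\bigl\|D-1\bigr\|_2 \;=\; \tfrac12\Bigl(\sum_{S\ne\emptyset}\wh D(S)^2\Bigr)^{1/2} \;\le\; \tfrac12\sqrt{2^k-1}\,\eps \;<\; 2^{k/2-1}\eps,
\]
by Cauchy--Schwarz and Parseval. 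For $t<k$ one cannot substitute $U$ (its statistical distance to $\calD$ need not be controlled), which is precisely why the ``subtract-and-mix'' construction you describe is needed, and why the sharp constant there is a harder question than the lemma's intended use warrants.
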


Finally, we make a crucial definition:
\begin{definition}
    A predicate $P : \{-1,1\}^k \to \{0,1\}$ is said to be \emph{\twus} if there is a $t$-wise uniform distribution $\calD$ whose support is contained in $P^{-1}(1)$.  We say $P$ is \emph{$\delta$-far from \twus} if every $t$-wise uniform distribution $\calD$ is $\delta$-far from being supported on~$P$; i.e., has probability mass at least~$\delta$ on $P^{-1}(0)$.
\end{definition}

\subsection{A dual characterization of limited uniformity}
\label{sec:dual-polys}
It is known that the condition of $P$ supporting a $t$-wise uniform distribution is equivalent to the feasibility of a certain linear program; hence one can show that $P$ is \emph{not} \twus by exhibiting a certain dual object, namely a polynomial.  This appears, e.g., in work of Austrin and H{\aa}stad~\cite[Theorem~3.1]{AH09}. Herein we will extend this fact by giving a dual characterization of being far from \twus.

\begin{definition}
\label{def:poly-sep}
Let $0 < \delta < 1$.  For a multilinear polynomial $Q: \{-1, 1\}^k \rightarrow \R$, we say that $Q$ \emph{$\delta$-separates} $P:\{-1, 1\}^k \rightarrow \{0, 1\}$ if the following conditions hold:
\begin{itemize}
\item $Q(z) \geq \delta-1 \quad \forall z\in \{-1, 1\}^k$;
\item $Q(z) \geq \delta \quad \forall z \in P^{-1}(1)$;
\item $\wh{Q}(\emptyset) = 0$, i.e., $Q$ has no constant coefficient.
\end{itemize}
\end{definition}

We now provide the quantitative version of the aforementioned~\cite[Theorem~3.1]{AH09}:
\begin{lemma}
\label{lem:poly-iff-no-dist}
Let \Pdef\ and let $0 < \delta < 1$.  Then $P$ is $\delta$-far from \twus if and only if there is a $\delta$-separating polynomial for~$P$ of degree at most~$t$.
\end{lemma}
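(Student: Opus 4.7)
The plan is to establish the equivalence via linear programming duality. The easy direction---that the existence of a $\delta$-separating polynomial $Q$ implies $P$ is $\delta$-far from supporting a $t$-wise uniform distribution---comes from averaging: for any $t$-wise uniform distribution $\calD$ on $\{-1,1\}^k$, every monomial $z^S$ with $0 < |S| \leq t$ satisfies $\E_{z \sim \calD}[z^S] = 0$, so since $Q$ has degree at most $t$ and zero constant term, $\E_{z \sim \calD}[Q(z)] = 0$. Splitting this expectation by whether $z \in P^{-1}(1)$ and invoking the separation bounds $Q(z) \geq \delta$ on $P^{-1}(1)$ and $Q(z) \geq \delta - 1$ on $P^{-1}(0)$, one obtains
\[
0 \;=\; \E_{z \sim \calD}[Q(z)] \;\geq\; \delta \cdot \Pr_{z \sim \calD}[z \in P^{-1}(1)] + (\delta - 1) \cdot \Pr_{z \sim \calD}[z \in P^{-1}(0)] \;=\; \delta - \Pr_{z \sim \calD}[z \in P^{-1}(0)],
\]
so $\calD$ places mass at least $\delta$ on $P^{-1}(0)$, as required.

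For the reverse direction, I would set up the primal LP whose decision variables $\mu(z)$ are indexed by $z \in \{-1,1\}^k$:
\[
\delta^{\ast} \;=\; \min\Bigl\{\, \sum_{z\,:\,P(z) = 0} \mu(z) \ :\ \mu \geq 0,\ \sum_z \mu(z) = 1,\ \sum_z \mu(z)\, z^S = 0 \text{ for all } 0 < |S| \leq t \,\Bigr\}.
\]
By the definition of being $\delta$-far, $P$ is $\delta$-far from supporting a $t$-wise uniform distribution if and only if $\delta^{\ast} \geq \delta$. Taking the LP dual with scalar variable $\lambda$ for the normalization equality and $\gamma_S$ for the moment equalities, the dual program is to maximize $\lambda$ subject to $\lambda + \sum_{0 < |S| \leq t} \gamma_S\, z^S \leq 0$ for all $z \in P^{-1}(1)$ and $\lambda + \sum_{0 < |S| \leq t} \gamma_S\, z^S \leq 1$ for all $z \in P^{-1}(0)$. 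Setting $Q(z) := -\sum_S \gamma_S\, z^S$---a multilinear polynomial of degree at most $t$ with no constant term---the dual constraints read precisely as $Q(z) \geq \lambda$ on $P^{-1}(1)$ together with $Q(z) \geq \lambda - 1$ everywhere on $\{-1,1\}^k$, which is exactly the $\lambda$-separation property of Definition~\ref{def:poly-sep}. The primal is feasible (the uniform distribution on $\{-1,1\}^k$ is $t$-wise uniform) and its objective is bounded in $[0,1]$, so strong LP duality applies and yields $\delta^{\ast} = \lambda^{\ast}$. Consequently, whenever $P$ is $\delta$-far there is a dual-feasible $Q$ witnessing $\lambda = \delta^{\ast} \geq \delta$, which is then a $\delta$-separating polynomial for $P$.

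The only step requiring any care is the identification of the $\binom{k}{1} + \cdots + \binom{k}{t}$ dual variables $\{\gamma_S\}$ with the coefficients of a degree-$\leq t$ multilinear polynomial having zero constant term; this is just the monomial basis expansion $Q(z) = -\sum_S \gamma_S\, z^S$, after which the translation of dual feasibility into Definition~\ref{def:poly-sep} is immediate and strong LP duality closes the argument. No genuine obstacle arises because the primal LP is finite-dimensional with a compact feasible region, so the optimum is attained and strong duality applies without qualification.
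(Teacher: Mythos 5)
Your proof is correct and uses the same argument as the paper: set up the LP that minimizes the mass a $t$-wise uniform distribution places on $P^{-1}(0)$, take the LP dual, and identify dual-feasible points with $\delta$-separating polynomials via $Q(z) = -\sum_S \gamma_S z^S$. The only cosmetic difference is that you unfold the weak-duality direction into an explicit averaging argument and comment on attainment via compactness, whereas the paper invokes strong duality directly for both implications.
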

\begin{proof}
The proof is an application of linear programming duality.  Consider the following LP, which has variables $\calD(z)$ for each $z \in \{-1,1\}^k$.
\begin{center}
\fbox{
\begin{minipage}{0.8\textwidth}
\begin{alignat}{3}
    \textrm{minimize}            &   & \sum_{\mc{z \in \{-1, 1\}^k}}(1 - &P(z))\calD(z)    \label{eq:primal-lp}     &  &      \\
    \text{s.t.}\qquad & &\sum_{\mc{z\in\kpm}}\calD(z)z^S  = 2^k \cdot \wh{\calD}(S)           & =            0                  && \forall S\subseteq [k]  \quad0 < |S| \leq t  \label{eq:uniform}\\
                      &   &       \sum_{{z \in \kpm}}\calD(z)             & =            1                  &   &      \label{eq:sum1} \\
                      &   & \calD(z)         & \geq            0    &&\forall z \in \kpm \nonumber
 \end{alignat}
\end{minipage}
}
\end{center}
Constraint~\eqref{eq:sum1} and the nonnegativity constraint ensure that $\calD$ is a probability distribution on~$\kpm$.  Constraint~\eqref{eq:uniform} expresses that $\calD$ is $t$-wise uniform (see Remark~\ref{rem:t-wise-triviality}).  The objective~\eqref{eq:primal-lp} is minimizing the probability mass that $\calD$ puts on assignments in~$P^{-1}(0)$.  Thus the optimal value of the LP is equal to the smallest~$\gamma$ such that $P$ is $\gamma$-close to \twus; equivalently, the largest~$\gamma$ such that $P$ is $\gamma$-far from \twus.

The following is the dual of the above LP.  It has a variable $c(S)$ for each $0 < |S| \leq t$ as well as a variable $\xi$ corresponding to constraint~\eqref{eq:sum1}.
\begin{center}\fbox{
\begin{minipage}{0.7\textwidth}
\begin{alignat}{3}
    \textrm{maximize}            &  & \xi &  \label{eq:dual-lp}     &  &      \\
    \text{s.t.}\qquad                 &   &       \sum_{\substack{S \subseteq [k]\\0 < |S| \leq t}}c(S) z^S  &\leq 1-P(z) -\xi  &\qquad  &\forall z \in \kpm.      \label{eq:strict-ineq}
     \end{alignat}
\end{minipage}}
\end{center}
Observe that a feasible solution $(\{c(S)\}_S, \xi)$ is precisely equivalent to a multilinear polynomial $Q$ of degree at most~$t$, namely $Q(z) = -\sum_S c(S) z^S$, that $\xi$-separates~$P$.

Thus $P$ is $\delta$-far from \twus if and only if the LP's objective~\eqref{eq:primal-lp} is at least~$\delta$, if and only if the dual's objective~\eqref{eq:dual-lp} is at least~$\delta$, if and only if there is a $\delta$-separating polynomial for~$P$ of degree at most~$t$.
\end{proof}

From this proof we can also derive that if $P$ fails to be \twus then it must in fact be $\Omega_k(1)$-far from being \twus:
\begin{corollary}                                       \label{cor:lp-granularity}
    Suppose $P \co \{-1,1\}^k \to \{0,1\}$ is not \twus. Then it is in fact $\delta$-far from \twus for $\delta = 2^{-\wt{O}(k^t)}$ (or $\delta = 2^{-\wt{O}(2^k)}$ when $t = k$).
\end{corollary}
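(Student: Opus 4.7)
The plan is to revisit the primal LP~\eqref{eq:primal-lp}--\eqref{eq:sum1} from the proof of Lemma~\ref{lem:poly-iff-no-dist} and apply a standard bit-complexity argument for LP vertices. Recall that the optimum value of this LP is precisely the largest $\gamma$ for which $P$ is $\gamma$-far from \twus. Since $P$ is not \twus, this optimum is strictly positive, and our task is to lower-bound it.

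Write the LP in standard form: minimize $c^\top x$ subject to $Ax = b$, $x \geq 0$, where $x = (\calD(z))_{z \in \{-1,1\}^k}$ has dimension $2^k$; the rows of $A$ are indexed by $S \subseteq [k]$ with $|S| \leq t$, with $A_{S,z} = z^S \in \{-1,+1\}$; the right-hand side $b$ has a $1$ in the row $S = \emptyset$ and $0$ elsewhere; and $c_z = 1 - P(z) \in \{0,1\}$. Let $N$ denote the number of rows of $A$, so $N = 1 + \sum_{i=1}^{t} \binom{k}{i}$, which is $O(k^t)$ for $t < k$ and exactly $2^k$ for $t = k$.

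Since the LP is feasible (the uniform distribution works when constraints~\eqref{eq:uniform} are dropped, and in any case feasibility follows from the existence of at least one convex combination satisfying $\sum \calD(z) = 1$) and bounded (the objective lies in $[0,1]$), its optimum is attained at a basic feasible solution. At such a BFS, at most $N$ coordinates of $x$ are positive, determined by a nonsingular $N \times N$ submatrix $A_B$ of $A$. By Cramer's rule, each positive coordinate equals $\det(A_B^{(i)}) / \det(A_B)$ where $A_B^{(i)}$ is $A_B$ with one column replaced by $b$; all such determinants are integers. Hence the optimum objective value, being $\sum_i c_i \det(A_B^{(i)}) / \det(A_B)$ with $c_i \in \{0,1\}$, is an integer multiple of $1/|\det(A_B)|$. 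By Hadamard's inequality applied to the $\pm 1$-valued matrix $A_B$, we have $|\det(A_B)| \leq N^{N/2}$, so a nonzero optimum is at least $N^{-N/2} = 2^{-\Theta(N \log N)}$.

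Substituting $N = O(k^t)$ yields $\delta \geq 2^{-\wt{O}(k^t)}$ for $t < k$, and substituting $N = 2^k$ yields $\delta \geq 2^{-\wt{O}(2^k)}$ for $t = k$, as claimed. There is no real obstacle here beyond bookkeeping; the argument is a direct vertex-granularity bound, and it uses no structure of $P$ beyond the fact that the infeasibility of the $t$-wise uniform LP forces the optimum of the relaxed LP~\eqref{eq:primal-lp} to be bounded away from zero by an amount dictated purely by the bit complexity of $A$.
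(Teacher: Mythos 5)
Your proof is correct and takes essentially the same approach as the paper: a bit-complexity bound at an LP vertex via Cramer's rule and Hadamard's inequality, giving the same $2^{-\wt{O}(k^t)}$ granularity. The only difference is that you apply the argument to the primal LP (bounding the objective $\sum_z c_z \calD(z)$ via a basic feasible solution with a small-determinant basis matrix), while the paper applies it to the dual, bounding the vertex coordinate $\xi$ directly; since the relevant square matrix is $O(k^t)\times O(k^t)$ in both cases, the bounds coincide.
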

\begin{proof}
    Let $K = 1+\sum_{i = 1}^t \binom{k}{t}$ be the number of variables in the  dual LP from Lemma~\ref{lem:poly-iff-no-dist}, so $K \leq k^t+1$ in general, with $K \leq 2^k$ when $t = k$.  By assumption, the objective~\eqref{eq:dual-lp} of the dual LP's optimal solution is  strictly positive. This optimum occurs at a vertex, which is the solution of a linear system given by a $K \times K$ matrix of $\pm 1$ entries and a ``right-hand side'' vector with $0,1$ entries.  By Cramer's rule, the solution's entries are ratios of determinants of integer matrices with entries in $\{-1,0,1\}$.  Thus any strictly positive entry is at least $1/N$, where $N$ is the maximum possible such determinant.  By Hadamard's inequality, $N = K^{K/2}$ and the claimed result follows.
\end{proof}


\section{Quasirandomness and its implications for refutation}
\label{sec:quasi}

\subsection{Strong refutation of $k$-XOR}
\label{sec:refute-xor}
In this section, we state our result on strong refutation of random $k$-XOR instances with $m = \tilde{O}\left(n^{k/2}\right)$ constraints.  (Recall that essentially this result was very recently obtained by Barak and Moitra~\cite{BM15}.)  We actually have a slightly more general result, allowing variables and coefficients to take values in $[-1,1]$ and not just in $\{-1,1\}$.  We will use this additional freedom to prove refutation results for CSPs over larger alphabets in Appendix~\ref{sec:larger-alphabets} and refutation results for independence number and chromatic number of random hypergraphs in Appendix~\ref{sec:hypergraphs}.
\begin{theorem}
\label{thm:main}
For $k \geq 2$ and $p \geq n^{-k/2}$, let $\{w(T)\}_{T \in [n]^k}$ be independent random variables such that for each $T \in [n]^k$,
\begin{align}
\E[w(T)] &= 0 \label{eq:w-mean-0} \\
\Pr[w(T) \ne 0] &\leq p \label{eq:w-mostly-0} \\
\abs{w(T)} &\leq 1. \label{eq:w-bounded}
\end{align}
Then there is an efficient algorithm certifying that
\begin{equation*}
\sum_{T \in [n]^k} w(T) x^T \leq  2^{O(k)} \sqrt{p} n^{3k/4} \log^{3/2} n.
\end{equation*}
for all $x \in \R^n$ such that $\norm{x}_{\infty} \leq 1$ with high probability.
\end{theorem}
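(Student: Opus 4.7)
\emph{Plan.} The argument splits by the parity of $k$; in both cases the task reduces to bounding the operator norm of a polynomial-time computable random matrix, which serves as the efficient certificate.

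\emph{Even case ($k = 2\ell$).} Flatten the tensor $w$ to the $n^\ell \times n^\ell$ matrix $W$ with $W_{T_1, T_2} := w(T_1 T_2)$, so that
\[
\sum_T w(T)\, x^T \;=\; y^\top W y, \qquad y_{T_1} := x^{T_1},
\]
where $\|y\|_2 \leq n^{\ell/2}$. Standard concentration for sparse random matrices (via matrix Bernstein or the trace method) applied to the $n^\ell \times n^\ell$ matrix $W$ with independent mean-zero entries of magnitude $\leq 1$, each nonzero with probability at most $p$, gives $\|W\|_{\mathrm{op}} \leq 2^{O(k)} \sqrt{p}\, n^{\ell/2} \log^{O(1)} n$ with high probability. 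Multiplying by $\|y\|_2^2 \leq n^\ell$ delivers the claimed $2^{O(k)} \sqrt{p}\, n^{3k/4} \log^{3/2} n$ bound; since $\|W\|_{\mathrm{op}}$ is a top singular value it is polynomial-time computable.

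\emph{Odd case ($k = 2\ell + 1$).} Peel off the last index: write $\sum_T w(T)\, x^T = \sum_{i \in [n]} x_i\, g_i(x)$ with $g_i(x) := \sum_{T' \in [n]^{k-1}} w(T', i)\, x^{T'}$. Then Cauchy--Schwarz yields
\[
\Bigl(\sum_T w(T)\, x^T \Bigr)^2 \;\leq\; n \sum_i g_i(x)^2 \;=\; n \cdot y^\top M y,
\]
where $y_{T'} := x^{T'}$ (so $\|y\|_2^2 \leq n^{k-1}$) and $M = B^\top B$ is the Gram matrix of the rectangular $B \in \R^{n \times n^{k-1}}$ defined by $B_{i, T'} := w(T', i)$. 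Thus it suffices to certify $\|M\|_{\mathrm{op}} \leq 2^{O(k)}\, p\, n^{k/2} \log^{O(1)} n$.

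\emph{The main obstacle and its resolution.} The naive bound $\|M\|_{\mathrm{op}} = \|B\|_{\mathrm{op}}^2 \leq \wt{O}(p n^{k-1})$, obtained by treating $B$ as a generic sparse rectangular random matrix, is off by a factor of $n^{1/4}$. The fix is to subtract $\E M$, which is diagonal with entries $\leq pn$ and so safely below the target after Cauchy--Schwarz. For $T' \ne S'$, the entry $(M - \E M)_{T',S'} = \sum_i w(T',i)\, w(S',i)$ is a sum of $n$ independent mean-zero products of variance $\leq p^2$, so has variance at most $np^2$. A Wigner-style heuristic then predicts $\|M - \E M\|_{\mathrm{op}} \sim \sqrt{n^{k-1} \cdot np^2} = p\, n^{k/2}$, exactly the target. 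Turning this heuristic into a rigorous bound is the crux of the proof: the off-diagonal entries of $M - \E M$ are \emph{not} independent, since entries sharing a row (or column) share the variables $w(T', \cdot)$. One handles this with the trace method~\cite{Wig55,FK81} on the bipartite ``row--variable'' graph with parts $[n^{k-1}]$ and $[n]$: the vanishing odd moments of $w$ force each edge in a closed walk of length $2q$ to be traversed an even number of times, and a combinatorial enumeration shows $\E \operatorname{tr}\bigl((M - \E M)^{2q}\bigr)$ matches the independent-entry prediction $(p n^{k/2})^{2q}$ up to $2^{O(k)}\, q^{O(k)}$ factors. Taking $q = \Theta(\log n)$ and applying Markov's inequality yields the target spectral norm bound with high probability, and substituting back through the Cauchy--Schwarz step produces the final $\sqrt{p}\, n^{3k/4} \log^{3/2} n$ estimate.
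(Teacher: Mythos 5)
Your even-arity argument matches the paper's second (``random matrix'') approach and is fine. The central step of your odd-arity argument, however, is incorrect: the claimed bound $\|M - \E M\|_{\mathrm{op}} \lesssim p\,n^{k/2}$ is false, and the trace method applied to $M - \E M$ cannot rescue it because the true operator norm really is of order $p\,n^{k-1}$. To see this concretely (say with $w(T) \in \{-1,0,1\}$, $\Pr[w(T)=\pm 1]=p/2$), take the ``spike'' vector $v \in \R^{n^{k-1}}$ with $v_{T'} := w(T',1)$. Then $\|v\|^2 = \sum_{T'} w(T',1)^2 \approx p\,n^{k-1}$ whp, while
\[
v^\top M v = \|Bv\|_2^2 \;\geq\; (Bv)_1^2 \;=\; \Bigl(\sum_{T'} w(T',1)^2\Bigr)^2 \;=\; \|v\|^4 \;\approx\; p^2 n^{2(k-1)},
\]
and $v^\top (\E M) v \leq pn\cdot\|v\|^2 \approx p^2 n^k$, which is lower order once $k \geq 3$. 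The Rayleigh quotient of the symmetric matrix $M - \E M$ at $v$ is therefore $\gtrsim p\,n^{k-1}$, a factor $n^{k/2-1}$ above your target. Correspondingly, your trace computation does \emph{not} match the independent-entry prediction: in $\E\,\mathrm{tr}\bigl((M-\E M)^{2q}\bigr)$, closed walks that use a single fixed variable index $i$ at every step and pass through $2q$ distinct rows $T'_1,\dots,T'_{2q}$ have every bipartite edge $(T'_j,i)$ traversed exactly twice, contributing about $n\cdot n^{2q(k-1)}\, p^{2q}$, which exceeds $(p\,n^{k/2})^{2q}$ by a factor $n^{(k-2)q+1}$. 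The trace method, done correctly on your matrix, confirms the large norm rather than beating it. (Also, ``off by $n^{1/4}$'' is specific to $k=3$; the general loss of the naive bound is $n^{k/4-1/2}$.)

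The missing ingredient is a re-indexing trick, which is the actual crux of the odd case in the paper. Split each $T' \in [n]^{k-1}$ into halves $T'=(T'_1,T'_2)$, $T'_1,T'_2 \in [n]^{(k-1)/2}$, and define the $n^{k-1}\times n^{k-1}$ matrix $A$ with rows indexed by $(i_1,i_2)$ and columns by $(j_1,j_2)$ via $A_{(i_1,i_2),(j_1,j_2)} := \sum_\ell w(i_1,j_1,\ell)\,w(i_2,j_2,\ell)$ (zero when $(i_1,j_1)=(i_2,j_2)$). Because $x^{i_1}x^{j_1}x^{i_2}x^{j_2}$ regroups either as $x^{(i_1,j_1)}x^{(i_2,j_2)}$ or as $x^{(i_1,i_2)}x^{(j_1,j_2)}$, the quadratic form over the off-diagonal part of $M$ on the tensor vectors $y = x^{\otimes(k-1)}$ coincides with $y^\top A y$ --- even though $\|A\| \ll \|M-\E M\|$, since the spike vector above has no analogue under the shuffled indexing. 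The trace method applied to $AA^\top$ then does give $\|A\| \leq 2^{O(k)} p\,n^{k/2}\log^3 n$. Note that $\|x\|_\infty \leq 1$ is used essentially at two points your write-up glosses over: to bound the residual diagonal term $\sum_{T',U'}(x^{T'})^2(x^{U'})^2\sum_i w(T',U',i)^2 \leq \sum_T w(T)^2$, and to bound $y^\top A y \leq \|A\|\cdot n^{k-1}$ for the shuffled $A$, whose quadratic form does not coincide with that of $M$ on arbitrary $y$.
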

In this form, the theorem is not really about CSP refutation at all.  It says that the value of a polynomial with random coefficients is close to its expectation when its inputs are bounded.

We give the proof in Appendix~\ref{sec:xor}.  It follows techniques from~\cite{COGL07} fairly closely and is essentially the same as the proof of~\cite{BM15}.  We will use this theorem to prove our results in subsequent sections.

We obtain strong refutation of $k$-XOR as a simple corollary.
\begin{corollary}
\label{cor:strong-xor-ref}
For $k \geq 2$, let $\calI \sim \calF_{\textup{$k$-XOR}}(n,p)$.  Then, with high probability, there is a degree-$2k$ SOS proof that $\Opt(\calI) \leq \frac{1}{2} + \gamma$ when $\expm \geq \frac{2^{O(k)}n^{k/2} \log^3 n}{\gamma^2}$.
\end{corollary}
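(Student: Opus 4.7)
}
The plan is to reduce the corollary to a direct application of Theorem~\ref{thm:main}, after rewriting the $k$-XOR value function as a homogeneous multilinear polynomial with appropriately bounded random coefficients. First, recall that a $k$-XOR constraint $(c,T)\in\calI$ is satisfied by $x\in\{-1,1\}^n$ exactly when $(\prod_{i=1}^k c_i)\, x^T = 1$, so
\[
    \mathrm{Val}_{\calI}(x) \;=\; \frac{1}{2} \;+\; \frac{1}{2m}\sum_{(c,T)\in\calI}\Bigl(\prod_{i=1}^k c_i\Bigr)\, x^T.
\]
Thus it suffices to produce an SOS certificate that the polynomial $Q(x) := \sum_{(c,T)\in\calI}(\prod_i c_i)\,x^T$ is small, uniformly over $x\in[-1,1]^n$ (we work on this relaxed domain both because Theorem~\ref{thm:main} is stated this way and because it is what SOS naturally provides from the constraints $x_i^2\le 1$).

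Second, I would recast $Q$ in the form demanded by Theorem~\ref{thm:main} by aggregating over sign patterns: for each scope $T\in[n]^k$, set
\[
    w(T) \;:=\; \frac{1}{2^k}\sum_{c\in\{-1,1\}^k}\mathbf{1}[(c,T)\in\calI]\cdot\prod_{i=1}^k c_i,
\]
so that $Q(x) = 2^k\sum_T w(T)\, x^T$. These $w(T)$ are independent across $T$; each is bounded by $1$ in absolute value; is nonzero with probability at most $1-(1-p)^{2^k}\le 2^k p$; and has mean zero because $\sum_{c}\prod_i c_i = 0$. Hence Theorem~\ref{thm:main}, applied with the inflated ``inclusion probability'' $p' = 2^k p$, certifies (whp) that
\[
    \sum_{T\in[n]^k} w(T)\, x^T \;\le\; 2^{O(k)}\sqrt{p}\, n^{3k/4}\log^{3/2}n \qquad\text{for all }\|x\|_\infty\le 1.
\]

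Third, I would combine the pieces. Substituting $p = \expm/(2^k n^k)$ into the bound and multiplying by $2^k/(2m)$ gives
\[
    \mathrm{Val}_{\calI}(x) \;\le\; \frac{1}{2} \;+\; 2^{O(k)}\cdot\frac{\sqrt{\expm}\, n^{k/4}\log^{3/2}n}{m},
\]
and since $m = (1\pm o(1))\expm$ with high probability (Fact~\ref{fact:csp-facts}), the error term is at most $\gamma$ provided $\expm \ge 2^{O(k)} n^{k/2}\log^3 n/\gamma^2$, which is exactly the hypothesis. Finally, for the SOS claim, I would argue that the certificate produced by Theorem~\ref{thm:main} is itself a low-degree SOS proof: its proof (deferred to the appendix, and promised to be SOS-izable by Theorem~\ref{thm:sos-able}) proceeds by a single Cauchy--Schwarz step followed by a spectral/trace-method bound on a quadratic form indexed by $(k-1)$-tuples, plus the constraints $x_i^2\le 1$. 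Cauchy--Schwarz is a degree-doubling SOS operation, and an operator-norm bound on an $n^{k-1}\times n^{k-1}$ matrix is a degree-$2(k-1)$ SOS inequality in~$x$, yielding a total degree of at most~$2k$.

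The step I expect to require the most care is the last one: tracking that the odd-$k$ argument in Theorem~\ref{thm:main} really does fit in degree $2k$ and does not secretly invoke operations (e.g., arbitrary function truncations, union bounds inside the proof) that SOS cannot perform verbatim. In particular, the matrix-norm bound from the trace method is a high-probability statement about a single matrix, so it yields a scalar SOS certificate of the required degree without needing union bounds over~$x$; but one must check that any preliminary ``row/column pruning'' used to handle heavy constraints in the odd-$k$ case can be written as a fixed polynomial identity rather than an $x$-dependent case analysis. Once that is in hand, the degree accounting above goes through.
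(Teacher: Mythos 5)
Your proposal matches the paper's proof essentially step for step: the same rewriting of $\mathrm{Val}_\calI(x)$ as $\tfrac12$ plus a homogeneous degree-$k$ form, the same aggregation over negation patterns to define $w(T)$, the same verification of the three hypotheses (mean zero, $\Pr[w\ne 0]\le 2^k p$, $|w|\le 1$), and the same invocation of Theorem~\ref{thm:main} (in its SOS form, Theorem~\ref{thm:main-sos}) followed by $m=(1\pm o(1))\expm$. The only discrepancy is your sign convention for when a $k$-XOR constraint is satisfied (the paper uses $P(z)=\tfrac{1-\prod z_i}{2}$, so satisfaction is $(\prod c_i)x^T=-1$), but this only flips the sign of $w(T)$ and is immaterial since the hypotheses of Theorem~\ref{thm:main} are symmetric under $w\mapsto -w$.
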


\begin{proof}
We can write the $k$-XOR predicate as
\[
\text{$k$-XOR}(z) = \frac{1-\prod_{i=1}^k z_i}{2},
\]
so for a $k$-XOR instance $\calI \sim \calF_\textup{$k$-XOR}(n,p)$,
\[
\textrm{Val}_\calI(x) = \frac{1}{2} - \frac{1}{2m} \sum_{T \in [n]^k} \sum_{c \in \{\pm 1\}^k} \indic{\{(T,c) \in \calI\}} x^T \prod_{i \in [k]} c_i \;=\; \frac{1}{2} + \frac{2^{k-1}}m \sum_{T \in [n]^k} w(T) x^T,
\]
where $w(T) = -2^{-k} \sum_{c \in \{\pm 1\}^k} \indic{\{(T,c) \in \calI\}} \prod_{i \in [k]} c_i$.  The $w(T)$'s are random variables depending on the choice of $\calI$; observe that $\E[w(T)] = 0$, $\Pr[w(T) \ne 0] \leq 2^kp$, and $\abs{w(T)} \leq 1$ for all $T \in [n]^k$.   By Theorem~\ref{thm:main}, there is an algorithm certifying that
\[
\val(\calI) \leq \frac{1}{2} + \frac{2^{O(k)} \sqrt{p} n^{3k/4} \log^{3/2} n}{m}.
\]
with high probability when $p \geq n^{-k/2}$.  Since $m = (1+o(1))\expm$ with high probability, choosing $\expm \geq \frac{2^{O(k)}n^{k/2} \log^3 n}{\gamma^2}$ gives the desired result.
\end{proof}

As an example, we can choose $\gamma = \frac{1}{\log n}$ and certify that $\val(\calI) \leq \frac{1}{2} + o(1)$ when $\expm = \wt{O}_k(n^{k/2})$.

\subsection{Quasirandomness and strong refutation of any $k$-CSP}
Next, we will use the algorithm of Theorem~\ref{thm:main} to certify that an instance of CSP$(P)$ is quasirandom.  This will immediately give us a strong refutation algorithm.

In order to certify quasirandomness, Lemma~\ref{lem:AGM} implies that it suffices to certify each Fourier coefficient of $D_{\calI,x}$ has small magnitude.
\begin{lemma}
\label{lem:fourier-refute}
Let $\emptyset \ne S \subseteq [k]$ with $|S| = s$.  There is an algorithm that, with high probability, certifies that
\[
\abs{\widehat{D_{\calI,x}}(S)} \leq \frac{2^{O(s)} \max\{n^{s/4}, \sqrt{n}\} \log^{5/2} n}{\expm^{1/2}}
\]
for all $x \in \{-1,1\}^n$, assuming also that $\expm \geq \max\{n^{s/2}, n\}$.
\end{lemma}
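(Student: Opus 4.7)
My plan is to express the Fourier coefficient as an arity-$s$ random polynomial and apply Theorem~\ref{thm:main}, after a pruning step to ensure the coefficients are bounded. Starting from $\wh{D_{\calI,x}}(S) = \E_{z \sim \calD_{\calI,x}}[z^S]$, I rewrite
\[
\wh{D_{\calI,x}}(S) \;=\; \frac{1}{m}\sum_{(c,T) \in \calI} c^S\, x^{T_S} \;=\; \frac{1}{m} \sum_{U \in [n]^s} a_U\, x^U,
\]
where $U = T_S$ is the projection of the scope to the coordinates in $S$, and $a_U = \sum_{T \in [n]^k:\, T_S = U}\sum_{c \in \kpm} \indic{(c,T) \in \calI}\, c^S$. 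Disjoint $U$'s involve disjoint constraints, so the $a_U$'s are independent; and the involution flipping any coordinate $c_j$ with $j \in S$ preserves the distribution of $\calF_P(n,p)$ while negating $c^S$, so each $a_U$ is symmetric around $0$ (in particular $\E[a_U] = 0$).

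Bernstein's inequality applied to each $a_U$ (a sum of $2^k n^{k-s}$ independent $\pm\mathrm{Bern}(p)$ terms, of total variance $\Theta(\expm/n^s)$), combined with a union bound over the $n^s$ tuples $U$, gives with high probability a pruning threshold
\[
M^* \;:=\; \max_{U \in [n]^s} |a_U| \;\leq\; O_k\!\left(\sqrt{(\expm/n^s)\log n} \;+\; \log n\right).
\]
The algorithm computes the $a_U$'s, checks $\max_U|a_U| \leq M^*$ (failing otherwise, an $o(1)$ event), and then runs the algorithm of Theorem~\ref{thm:main} at arity $s$ on the polynomial $\sum_U (a_U/M^*)\, x^U$ with parameter $p' := \min\{1,\, \expm/n^s\}$. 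For the analysis I substitute the symmetric truncations $\tilde a_U := a_U \cdot \indic{|a_U|\leq M^*}$: the normalized $\tilde a_U/M^*$ deterministically lie in $[-1,1]$, are independent, are mean zero (by the symmetry just noted), and are nonzero with probability at most $p'$, so they satisfy the hypotheses of Theorem~\ref{thm:main}. The lemma's assumption $\expm \geq n^{s/2}$ is exactly the theorem's required $p' \geq n^{-s/2}$, and with high probability $\tilde a_U = a_U$ for every $U$, so the certified bound on the truncated polynomial transfers to the original.

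Multiplying through by $M^*$ and dividing by $m = (1\pm o(1))\expm$, a short calculation in the two natural regimes---$\expm \geq n^s \log n$, where $M^* \asymp \sqrt{(\expm/n^s)\log n}$ and $p'=1$; versus $\expm < n^s \log n$, where $M^* \asymp \log n$ and $p' = \expm/n^s$---yields in both cases the claimed $|\wh{D_{\calI,x}}(S)| \leq 2^{O(s)}\, n^{s/4}\log^{5/2} n / \sqrt{\expm}$. Reassuringly, the $2^k$ factors from $\sigma^2$ and from $m$ cancel. The edge case $s=1$ must be handled separately since Theorem~\ref{thm:main} requires arity at least $2$: here the polynomial is linear, so $\max_{\|x\|_\infty \leq 1} \sum_i a_i x_i = \sum_i |a_i|$ can be computed exactly, and standard concentration gives $\sum_i |a_i| \lesssim n\sqrt{\expm/n} = \sqrt{n \expm}$, yielding the $\sqrt n/\sqrt{\expm}$ form of the bound.

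The main obstacle is legitimizing the pruning: the raw $a_U/M^*$ need not lie in $[-1,1]$ almost surely, so Theorem~\ref{thm:main} cannot be invoked on them directly. Working with the symmetric truncation $\tilde a_U$ preserves independence trivially and, crucially, preserves mean zero via the $c$-symmetry of $\calF_P(n,p)$ identified above; this is exactly what makes the Theorem~\ref{thm:main} hypotheses hold deterministically for $\tilde a_U/M^*$. Coupling this with the high-probability coincidence $\tilde a_U = a_U$ for all $U$ then converts the certified bound for the truncated polynomial into one for the actual object of interest.
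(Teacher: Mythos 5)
Your proof is correct and follows essentially the same strategy as the paper's. The paper factors through an intermediate Lemma~\ref{lem:smaller-monomials}: it writes $\wh{D_{\calI,x}}(S) = \frac{1}{m}\sum_U x^U \sum_{T,c'} w_S(T,c')$ where $w_S(T,c')$ subtracts the indicator for negation pattern $(1,c')$ from the one for $(-1,c')$ (making each summand exactly mean-zero and symmetric by construction), then collapses the inner sum to a single coefficient $v_U$, uses Bernstein plus a union bound to prune $\max_U|v_U|$, rescales, and applies Theorem~\ref{thm:main} -- exactly the pipeline you describe, just packaged as two lemmas instead of one. Your version aggregates over all $c$ directly into $a_U$ and obtains mean-zero and symmetry from the sign-flip involution on the negation patterns rather than from the pairwise subtraction, but this is cosmetic: the resulting random variables are the same up to bookkeeping, and the two-regime calculation with $p' = \min\{1,\expm/n^s\}$ and the separate linear $s=1$ case match the paper's Cases 1 and 2 in Lemma~\ref{lem:smaller-monomials}.

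One genuine point in your favor: you are explicit about the truncation step. The paper's proof of Lemma~\ref{lem:smaller-monomials} says ``scale the $v_U$'s down by $2s\log n$ and apply Theorem~\ref{thm:main},'' which elides the fact that $|v_U|$ exceeds the threshold with small (but positive) probability, so the scaled variables do not \emph{deterministically} satisfy the $|\cdot|\le 1$ hypothesis of Theorem~\ref{thm:main}. Your device of working with the symmetric truncation $\tilde a_U$, noting that symmetry preserves $\E[\tilde a_U]=0$ and coupling $\tilde a_U = a_U$ on a high-probability event, is the clean way to close that gap. (In the paper's general Lemma~\ref{lem:smaller-monomials} the summands are only assumed mean-zero, not symmetric, so strictly speaking their truncation could shift the mean; the fix there would be to observe the shift is negligibly small, or to note that in the one place the lemma is used the summands happen to be symmetric, as you did.) No errors in your argument; it is correct and, if anything, slightly more careful than the published one.
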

To prove this lemma, we need another lemma certifying that polynomials whose coefficients are sums of $0$-mean random variables have small value.
\begin{lemma}
\label{lem:smaller-monomials}
Let $S \subseteq [k]$ with $|S| = s > 0$.  Let $\tau \in \N$ and let $\{w_U(i)\}_{U \in [n]^s, i \in [\tau]}$ be independent random variables satisfying conditions \eqref{eq:w-mean-0}, \eqref{eq:w-mostly-0}, and \eqref{eq:w-bounded} for some $p \geq \frac{1}{\tau n^{s/2}}$.  Then there is an algorithm that certifies with high probability that
\[
\sum_{U \in [n]^s} x^U \sum_{j = 1}^{\tau} w_U(j) \leq
\begin{cases}
2^{O(s)} \sqrt{\tau p} \cdot n^{3s/4} \log^{5/2} n \quad & \text{if $s \geq 2$} \\
4 \max\{\sqrt{\tau p}, 1\} \cdot n \log n \quad & \text{if $s = 1$.}
\end{cases}
\]
for all $x \in \R^n$ such that $\norm{x}_{\infty} \leq 1$.
\end{lemma}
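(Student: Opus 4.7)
The plan is to reduce to Theorem~\ref{thm:main} by aggregating the $\tau$ independent copies into a single set of random coefficients. Set $W(U) := \sum_{j=1}^\tau w_U(j)$ for each $U \in [n]^s$, so that the quantity to be bounded equals $\sum_U W(U) x^U$. Since the $w_U(j)$'s are independent, so are the $W(U)$'s, and each $W(U)$ is mean zero with $\Pr[W(U) \neq 0] \leq \tau p$ and $\E[W(U)^2] \leq \tau p$. The hypothesis $p \geq 1/(\tau n^{s/2})$ is exactly $\tau p \geq n^{-s/2}$, which is the condition Theorem~\ref{thm:main} requires of its probability parameter when applied with arity $s$.

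For $s = 1$ the argument is immediate: $|x_i| \leq 1$ gives $\sum_i W(i) x_i \leq \sum_i |W(i)|$, and a Bernstein inequality applied to each $W(i)$ together with a union bound over $i \in [n]$ shows that $|W(i)| \leq O(\sqrt{\tau p \log n} + \log n)$ simultaneously with high probability. Summing over $i$ yields the claimed bound.

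For $s \geq 2$, the plan is to apply Theorem~\ref{thm:main} to the rescaled coefficients $W(U)/B$, where $B := O(\sqrt{\tau p \log n} + \log n)$ is a high-probability uniform bound on $|W(U)|$ obtained by Bernstein together with a union bound over all $U \in [n]^s$. Conditioned on the event that this uniform bound holds (and after subtracting the small mean drift introduced by truncation), the variables $W(U)/B$ satisfy all three hypotheses of Theorem~\ref{thm:main} with probability parameter $\tau p$, producing the bound $\sum_U W(U) x^U \leq B \cdot 2^{O(s)} \sqrt{\tau p} \cdot n^{3s/4} \log^{3/2} n$; absorbing $B$ into the logarithmic factor gives the stated $2^{O(s)} \sqrt{\tau p} \cdot n^{3s/4} \log^{5/2} n$ bound in the regime $\tau p = O(\log n)$.

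The main obstacle is the regime $\tau p \gg \log n$, where the naive substitution $B \approx \sqrt{\tau p \log n}$ produces a suboptimal $\tau p \cdot n^{3s/4} \log^2 n$ leading term. To recover the advertised $\sqrt{\tau p}$ dependence, one should rerun the trace-method proof of Theorem~\ref{thm:main} directly with the aggregated coefficients $W(U)$ in place of the $w(T)$'s: since the trace method is driven by the second moment $\E[W(U)^2] \leq \tau p$ rather than by any almost-sure bound, it yields the $\sqrt{\tau p}$ factor automatically, and the extra $\sqrt{\log n}$ relative to Theorem~\ref{thm:main} can be attributed to Rosenthal-type control of higher moments of sums of bounded random variables. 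The odd-$s$ case is handled exactly as in Theorem~\ref{thm:main}, via Cauchy--Schwarz on the last coordinate to reduce to the even-arity case.
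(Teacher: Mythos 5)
Your overall strategy---aggregating into $W(U) = \sum_j w_U(j)$, bounding $\max_U |W(U)|$ via Bernstein plus a union bound, rescaling, and then invoking Theorem~\ref{thm:main} as a black box---is exactly the paper's proof, and your $s=1$ handling (bound $\sum_i W(i)x_i \le \sum_i |W(i)|$ term-by-term) is identical. Where you go astray is the final paragraph: the ``main obstacle'' you identify in the regime $\tau p \gg \log n$ is not real, and the proposed detour of re-deriving Theorem~\ref{thm:main} from scratch with second-moment control (and Rosenthal-type bounds) is unnecessary. The suboptimal $\tau p \cdot n^{3s/4}\log^2 n$ you compute arises from feeding $\tau p$ as the probability parameter into Theorem~\ref{thm:main}; but that parameter should of course be $\min(\tau p, 1)$, since $\Pr[W(U)\ne 0]$ is trivially at most $1$. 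Once $\tau p > 1$ you use parameter $1$, Theorem~\ref{thm:main} returns $2^{O(s)}\cdot n^{3s/4}\log^{3/2}n$, and multiplying by $B$ recovers the advertised $\sqrt{\tau p}$ dependence; the intermediate regime $\log n < \tau p \le 1$ is vacuous for $n > e$, so no gap remains.

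The paper formalizes precisely this by splitting into $p \le \frac{1}{4\tau}$ and $p > \frac{1}{4\tau}$, choosing $B = 2s\log n$ in the first case (probability parameter $\tau p$, which is then at most $1/4$) and $B = 4s\sqrt{\tau p}\log n$ in the second (probability parameter $\Theta(1)$); both cases land on the same $\log^{5/2}$ bound. Your alternative of rerunning the trace-method argument with the aggregated coefficients $W(U)$ would, in fact, go through---since $|w|\le 1$ and $m\ge 2$ give $\E[w^m] \le \E[w^2]$, the sparsity hypothesis $\Pr[w\ne 0]\le p$ in Lemma~\ref{lem:norm-bound} can indeed be replaced by a second-moment hypothesis---but that is a much heavier rewrite than the one-line black-box application the lemma actually needs, and Rosenthal is not used anywhere. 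I would also note that your proposal is more explicit than the paper about the truncation step (rescaling by $B$ only makes $W(U)/B$ bounded on a high-probability event, and conditioning perturbs the mean); the paper glosses over this, so you should not be penalized for flagging it, but you also should not let it motivate abandoning the black-box route.
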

The proof is straightforward and we defer it to Section~\ref{sec:smaller-monomials}.  

\begin{proof}[Proof of Lemma~\ref{lem:fourier-refute}]
Without loss of generality, assume $1 \in S$.  Applying definitions, we see that
\begin{equation}
\label{eq:induced-fourier}
\widehat{D_{\calI,x}}(S) = \E_{\mc{z \sim \calD_{\calI,x}}}\left[z^S\right] = \frac{1}{m} \sum_{U \in [n]^s} \sum_{\substack{T \in [n]^k \\ T_S = U}} \sum_{c \in \{\pm 1\}^{k}} \indic{\{(T,c) \in \calI\}} c^S x^U = \frac{1}{m} \sum_{U \in [n]^s} x^U \sum_{\substack{T \in [n]^k \\ T_S = U}} \sum_{c' \in \{\pm 1\}^{k-1}} w_S(T,c').
\end{equation}
where we define $w_S(T,c') = \indic{\{(T,(1,c')) \in \calI\}} (c')^{S \setminus \{1\}} - \indic{\{(T,(-1,c')) \in \calI\}} (c')^{S \setminus \{1\}}$ and recall that $T_S$ is the projection of $T$ onto the the coordinates in $S$.
It is clear that $\E[w_S(T,c')] = 0$, $\Pr[w_S(T,c') \ne 0] \leq p$, and $|w_S(T,c')| \leq 1$.  There are $\tau = 2^{k-1} n^{k-s}$ terms in each sum of $w_S(T,c')$'s and we can apply Lemma~\ref{lem:smaller-monomials}.  When $s = 2$, we plug in these values and see that we can certify that $\widehat{D_{\calI,x}}(S) \leq \frac{2^{O(s)} n^{s/4} \log^{5/2} n}{\expm^{1/2}}$.  When $s = 1$, $\expm \geq n$ implies that $\tau p \geq \frac{1}{2}$ and we can certify that $\widehat{D_{\calI,x}}(S) \leq \frac{2^{O(s)} \sqrt{n} \log n}{\expm^{1/2}}$.  The lower bound can be proved in exactly the same way by considering the random variables $-w_S(T,c')$.
\end{proof}

The existence of an algorithm for certifying quasirandomness follows from Lemmas~\ref{lem:AGM}~and~\ref{lem:fourier-refute}.
\begin{theorem}
\label{thm:quasi-2}
There is an efficient algorithm that certifies that an instance $\calI \sim \calF_P(n,p)$ of $\CSP(P)$ is $\gamma$-quasirandom with high probability when $\expm \geq \frac{2^{O(k)} n^{k/2} \log^{5} n}{\gamma^2}$.
\end{theorem}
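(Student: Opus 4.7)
The plan is to derive Theorem~\ref{thm:quasi-2} by combining the per-Fourier-coefficient certification from Lemma~\ref{lem:fourier-refute} with the Alon--Goldreich--Mansour bound in the special case $t = k$, where $(0,k)$-wise uniformity coincides with being exactly uniform (Remark~\ref{rem:t-wise-triviality}).

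First, for each nonempty $S \subseteq [k]$, I invoke Lemma~\ref{lem:fourier-refute} to obtain an efficient algorithm that certifies
\[
\bigl|\widehat{D_{\calI,x}}(S)\bigr| \;\leq\; \frac{2^{O(|S|)}\,\max\{n^{|S|/4},\sqrt{n}\}\,\log^{5/2} n}{\sqrt{\expm}}
\quad\text{for all } x \in \{-1,1\}^n,
\]
with high probability. There are at most $2^k-1$ such sets $S$, so a union bound (whose cost folds into the $2^{O(k)}$ factor) produces a single high-probability event on which all these bounds hold simultaneously. Because $k \geq 2$, the right-hand side is maximized, up to the $2^{O(k)}$ prefactor, at $|S| = k$, so on this event every nonempty Fourier coefficient of $D_{\calI,x}$ is at most
\[
\eps \;:=\; \frac{2^{O(k)}\,n^{k/4}\,\log^{5/2} n}{\sqrt{\expm}},
\]
uniformly in $x$; equivalently, $\calD_{\calI,x}$ is certified to be $(\eps, k)$-wise uniform for every $x \in \{-1,1\}^n$.

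Second, I invoke Lemma~\ref{lem:AGM} with $t = k$: since $k$-wise uniformity on $\{-1,1\}^k$ means exactly uniform, the lemma yields $\dtv{\calD_{\calI,x}}{U^k} \leq 2^{O(k)} \eps$, again simultaneously for all $x$. Requiring the right-hand side to be at most $\gamma$ amounts to
\[
\expm \;\geq\; \frac{2^{O(k)}\,n^{k/2}\,\log^{5} n}{\gamma^2},
\]
which is precisely the hypothesis of the theorem. This hypothesis also dominates the side condition $\expm \geq \max\{n^{k/2}, n\}$ needed to apply Lemma~\ref{lem:fourier-refute} at every $s \leq k$, so the reduction is self-consistent.

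There is essentially no obstacle here beyond careful bookkeeping: all the technical work has been done in Lemma~\ref{lem:fourier-refute} (which reduces to the $k$-XOR bound of Theorem~\ref{thm:main}) and in the Alon--Goldreich--Mansour lemma. The only points to watch are that the $2^{O(s)}$ factors, the $2^k$-term union bound, and the $2^{O(k)}$ AGM conversion all collapse into a single $2^{O(k)}$, and that among $1 \leq s \leq k$ the worst-case Fourier bound is indeed achieved at $s = k$ once $k \geq 2$.
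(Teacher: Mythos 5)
Your proposal is correct and follows the same route the paper uses: certify each nonempty Fourier coefficient of $D_{\calI,x}$ via Lemma~\ref{lem:fourier-refute}, observe that the worst bound over $1 \le s \le k$ (for $k \ge 2$) is the $s=k$ case giving $\eps = 2^{O(k)} n^{k/4}\log^{5/2} n / \sqrt{\expm}$, and convert $(\eps,k)$-wise uniformity to $\gamma$-closeness to $U^k$ via Lemma~\ref{lem:AGM} at $t=k$, which is exactly what the paper invokes. The bookkeeping (union bound over $2^k-1$ sets, absorbing the AGM factor $2^{O(k)}$, and checking that the stated $\expm$ hypothesis dominates $\max\{n^{s/2},n\}$ for each $s$) all checks out.
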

Since $\gamma$-quasirandomess implies that $\val(\calI) \leq \expP + \gamma$, this algorithm also strongly refutes $\CSP(P)$.
\begin{theorem}
\label{thm:strong-ref-2}
There is an efficient algorithm that, given an instance $\calI \sim \calF_P(n,p)$ of $\CSP(P)$, certifies that $\val(\calI) \leq \expP + \gamma$ with high probability when $\expm \geq \frac{2^{O(k)} n^{k/2} \log^{5} n}{\gamma^2}$.
\end{theorem}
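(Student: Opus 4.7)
The plan is to deduce Theorem~\ref{thm:strong-ref-2} almost immediately from the quasirandomness certification result of Theorem~\ref{thm:quasi-2} combined with Fact~\ref{fact:quasi-to-strong}. Specifically, I would first run the efficient algorithm of Theorem~\ref{thm:quasi-2} on the input instance $\calI$; under the hypothesis $\expm \geq 2^{O(k)} n^{k/2} \log^{5}(n) / \gamma^2$, that algorithm certifies with high probability that $\calI$ is $\gamma$-quasirandom, meaning $\dtv{\calD_{\calI,x}}{U^k} \leq \gamma$ for every $x \in \{-1,1\}^n$.

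Next, I would invoke Fact~\ref{fact:quasi-to-strong}, which converts any $\gamma$-quasirandomness certificate into a value certificate: once the induced distribution is $\gamma$-close in total variation to the uniform distribution on $\{-1,1\}^k$ for every assignment $x$, the fraction of constraints satisfied by $x$ differs from $\E_{z \sim U^k}[P(z)] = \expP$ by at most $\gamma$, and hence $\mathrm{Val}_\calI(x) \leq \expP + \gamma$ for all $x$. Taking the maximum over $x \in \{-1,1\}^n$ yields $\val(\calI) \leq \expP + \gamma$, and this bound is certified (not merely true) because it follows syntactically from the quasirandomness certificate produced in the previous step.

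Since Theorem~\ref{thm:quasi-2} is an efficient algorithm succeeding with high probability over $\calI \sim \calF_P(n,p)$, and Fact~\ref{fact:quasi-to-strong} is a deterministic implication, the composition gives the required efficient strong refutation algorithm with the stated probability guarantee. There is essentially no obstacle here; the real work has already been done in proving Theorem~\ref{thm:quasi-2} (which itself reduces to Lemma~\ref{lem:fourier-refute} and ultimately to the $k$-XOR refutation algorithm of Theorem~\ref{thm:main} via the Alon--Goldreich--Mansour lemma). The only mild thing to note is that Fact~\ref{fact:quasi-to-strong} is stated in terms of the true optimum $\val(\calI)$, but its proof clearly goes through at the level of a single assignment $x$, so the upper bound $\expP + \gamma$ is indeed certified uniformly over all $x$ from the quasirandomness certificate.
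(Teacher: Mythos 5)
Your proposal is correct and matches the paper's own proof exactly: the paper derives Theorem~\ref{thm:strong-ref-2} directly from the quasirandomness certification of Theorem~\ref{thm:quasi-2} together with Fact~\ref{fact:quasi-to-strong}, just as you do. There is nothing further to add.
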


\subsection{$(\eps,t)$-quasirandomness and $\Omega(1)$-refutation of non-$t$-wise-supporting CSPs}
In the case that a predicate is not \twus, a weaker notion of quasirandomness suffices to obtain $\Omega(1)$-refutation.
\begin{definition}
An instance $\calI$ of $\csp(P)$ is $(\eps,t)$-quasirandom if $\calD_{\calI,x}$ is $(\eps,t)$-wise uniform for every $x \in \{-1,1\}^n$.
\end{definition}
Fact~\ref{fact:csp-facts} shows that random instances with $\wt{O}(n)$ constraints are $(o(1),t)$-quasirandom for all $t \leq k$ with high probability.  Lemma~\ref{lem:fourier-refute} directly implies that we can certify $(\eps,t)$-quasirandomness when $m \geq \wt{O}(n^{t/2})$.
\begin{theorem}
\label{thm:t-quasi-2}
There is an efficient algorithm that certifies that an instance $\calI \sim \calF_P(n,p)$ of \textup{CSP}$(P)$ is $(\gamma,t)$-quasirandom with high probability when $\expm \geq \frac{2^{O(t)} n^{t/2} \log^{5} n}{\gamma^2}$ and $t \geq 2$.
\end{theorem}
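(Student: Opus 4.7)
The plan is to derive this directly from Lemma~\ref{lem:fourier-refute} via a union bound over nonempty $S \subseteq [k]$ with $|S| \leq t$. Recall that by definition, $\calI$ is $(\gamma,t)$-quasirandom iff $|\wh{D_{\calI,x}}(S)| \leq \gamma$ for every such $S$ and every $x \in \{-1,1\}^n$. Since there are at most $\sum_{s=1}^{t}\binom{k}{s} \leq 2^k$ such subsets, a fixed number independent of $n$, the union bound over ``bad events'' contributes only a constant factor to the failure probability, preserving the ``with high probability'' guarantee.

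For each $S$ of size $s$ with $1 \leq s \leq t$, Lemma~\ref{lem:fourier-refute} provides an efficient algorithm that whp certifies
\[
\bigl|\wh{D_{\calI,x}}(S)\bigr| \;\leq\; \frac{2^{O(s)} \max\{n^{s/4},\sqrt{n}\}\,\log^{5/2} n}{\sqrt{\expm}}
\]
simultaneously for all $x \in \{-1,1\}^n$, provided $\expm \geq \max\{n^{s/2},n\}$. To force this upper bound to be at most $\gamma$, it suffices to take $\expm \geq 2^{O(s)} \max\{n^{s/2},n\} \log^{5} n / \gamma^2$. Since $t \geq 2$, the most stringent requirement across $s \in \{1,\dots,t\}$ is the one at $s=t$, namely $\expm \geq 2^{O(t)}\, n^{t/2}\, \log^{5} n / \gamma^2$, which is exactly the hypothesis of the theorem.

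The algorithm therefore runs the certification procedure of Lemma~\ref{lem:fourier-refute} for each of the $O(2^k)$ subsets $S$ with $1 \leq |S| \leq t$, declares $\calI$ to be $(\gamma,t)$-quasirandom if every invocation succeeds, and declares ``fail'' otherwise. Correctness (no false positives) is immediate from Lemma~\ref{lem:fourier-refute}, and a union bound shows that the probability of outputting ``fail'' on a random $\calI$ remains $o(1)$. There is no serious obstacle here: the only mild points to check are that the $t \geq 2$ hypothesis ensures the $s=t$ term dominates both the $s=1$ regime (where the bound is driven by $\sqrt{n}$ rather than $n^{s/4}$) and the lower bound on $\expm$ required for Lemma~\ref{lem:fourier-refute} to apply in each case, and that the constants $2^{O(s)}$ for $s \leq t$ are all absorbed into a single $2^{O(t)}$.
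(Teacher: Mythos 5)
Your proof is correct and matches the paper's approach: the paper itself presents Theorem~\ref{thm:t-quasi-2} as a direct consequence of Lemma~\ref{lem:fourier-refute} (with a union bound over the $O(2^k)$ subsets $S$ of size between $1$ and $t$), and that is exactly what you carry out, including the observation that the hypothesis $t \geq 2$ makes the $s = t$ case the binding constraint for both the quantity $\max\{n^{s/4},\sqrt{n}\}$ and the precondition $\expm \geq \max\{n^{s/2},n\}$ in the lemma.
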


We now reach the main result of this section, which states that if a predicate is $\delta$-far from \twus, then we can almost $\delta$-refute instances of CSP$(P)$.
\begin{theorem}
\label{thm:ntwus-refute-2}
Let $P$ be $\delta$-far from \twus.  Then there is an efficient algorithm that, given an instance $\calI \sim \calF_P(n,p)$ of \textup{CSP}$(P)$,  certifies that $\val(\calI) \leq 1 - \delta + \gamma$ with high probability when $\expm \geq \frac{k^{O(t)} n^{t/2} \log^{5} n}{\gamma^2}$ and $t \geq 2$.
\end{theorem}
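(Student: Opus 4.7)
The plan is to combine the $(\eps,t)$-quasirandomness certification algorithm of Theorem~\ref{thm:t-quasi-2} with the Alon--Goldreich--Mansour Lemma~\ref{lem:AGM}, exploiting the hypothesis that $P$ is $\delta$-far from supporting $t$-wise uniformity. The intuition is that once we can certify that the induced distribution $\calD_{\calI,x}$ is very close to being $t$-wise uniform for every~$x$, then since no $t$-wise uniform distribution can be well-supported on $P^{-1}(1)$, the induced distribution cannot be either -- giving the desired bound on $\val_\calI(x)$.

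Concretely, first I would invoke Theorem~\ref{thm:t-quasi-2} with quasirandomness parameter $\eps := \gamma / k^t$. This requires $\expm \geq \frac{2^{O(t)} n^{t/2} \log^5 n}{\eps^2} = \frac{k^{O(t)} n^{t/2} \log^5 n}{\gamma^2}$, matching the statement. The algorithm then certifies (whp) that $\calI$ is $(\eps,t)$-quasirandom, i.e., that $|\widehat{D_{\calI,x}}(S)| \leq \eps$ for all $0 < |S| \leq t$ and all $x \in \{-1,1\}^n$ simultaneously.

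Next, for each fixed $x \in \{-1,1\}^n$, apply Lemma~\ref{lem:AGM} to the distribution $\calD_{\calI,x}$: since it is $(\eps,t)$-wise uniform, there exists an honest $t$-wise uniform distribution $\calD'_x$ on $\{-1,1\}^k$ with $\dtv{\calD_{\calI,x}}{\calD'_x} \leq k^t \cdot \eps = \gamma$. By the hypothesis that $P$ is $\delta$-far from \twus, $\calD'_x$ places probability at least $\delta$ on $P^{-1}(0)$. Hence by the total variation bound, $\calD_{\calI,x}$ places probability at least $\delta - \gamma$ on $P^{-1}(0)$, and therefore
\[
\mathrm{Val}_\calI(x) \;=\; \Pr_{z \sim \calD_{\calI,x}}[P(z) = 1] \;\leq\; 1 - (\delta - \gamma) \;=\; 1 - \delta + \gamma.
\]
Since this inequality was derived as a consequence of the Fourier bounds certified by Theorem~\ref{thm:t-quasi-2} and holds uniformly in $x$, the algorithm certifies $\val(\calI) \leq 1 - \delta + \gamma$.

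There is no real obstacle here; the proof is a short deduction from machinery already built. The only small accounting point to watch is the loss from Lemma~\ref{lem:AGM}, which is a factor of at most $k^t$ in the closeness parameter; setting $\eps = \gamma / k^t$ accommodates this and yields the stated $k^{O(t)}$ dependence in the constraint requirement. (As an alternative path, one could instead use Lemma~\ref{lem:poly-iff-no-dist} directly: take the $\delta$-separating polynomial $Q$ of degree $\leq t$, observe $\mathrm{Val}_\calI(x) \leq 1 - \delta + \tfrac{1}{m}\sum_{(c,S) \in \calI} Q(c \circ x_S) = 1 - \delta + \sum_{0 < |T| \leq t} \wh{Q}(T)\, \widehat{D_{\calI,x}}(T)$, and bound the latter using $(\eps,t)$-quasirandomness together with Cauchy--Schwarz on the Fourier coefficients of~$Q$. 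This would give essentially the same parameters.)
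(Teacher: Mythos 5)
Your proof is correct and is essentially the paper's Proof~1 of Theorem~\ref{thm:ntwus-refute-2}: certify $(\gamma/k^t,t)$-quasirandomness via Theorem~\ref{thm:t-quasi-2}, invoke Lemma~\ref{lem:AGM} to get a nearby honest $t$-wise uniform distribution, and conclude from the $\delta$-farness hypothesis; your direct argument about the probability mass on $P^{-1}(0)$ is a slightly cleaner phrasing of the same inference the paper makes via a triangle inequality with an auxiliary distribution $\calD_{\mathrm{sat}}$. Your parenthetical alternative (push the separating polynomial $Q$ through Plancherel and bound each $\widehat{D_{\calI,x}}(S)$) is exactly the paper's Proof~2, which the authors note has the advantage of translating into an SOS proof.
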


We give two proofs of this theorem.  In Proof~1, the theorem follows directly from certification of $(\gamma,t)$-quasirandomness and Lemma~\ref{lem:AGM}.
\begin{proof}[Proof 1]
Run the algorithm of Theorem~\ref{thm:t-quasi-2} to certify that $\calI$ is $(\gamma/k^t,t)$-quasirandom with high probability.  By definition, we have certified that $\calD_{\calI,x}$ is $(\gamma/k^t, t)$-wise uniform for all $x \in \{-1,1\}^n$,.  Lemma~\ref{lem:AGM} then implies that for all $x$ there exists a $t$-wise uniform distribution $\calD'_x$ such that $\dtv{\calD_{\calI ,x}}{\calD'_x} \leq \gamma$.  Now define $\calD_{\mathrm{sat}}$ to be an arbitrary distribution over satisfying assignments to $P$.  Since $P$ is $\delta$-far from being \twus, we know that $\dtv{\calD'}{\calD_{\mathrm{sat}}} \geq \delta$ for any $t$-wise uniform distribution $\calD'$.  The triangle inequality then implies that $\dtv{\calD_{\calI ,x}}{\calD_{\mathrm{sat}}} \geq \delta - \gamma$ for all $x \in \{-1,1\}^n$ and the theorem follows.
\end{proof}

Proof~2 gives a slightly weaker version of Theorem~\ref{thm:ntwus-refute-2}, requiring the stronger assumption that $\expm \geq \frac{2^{O(k)} n^{t/2} \log^{5} n}{\gamma^2}$.  It is based on the dual polynomial characterization of being $\delta$-far from $t$-wise supporting.  While perhaps less intuitive than Proof~1, Proof~2 is more direct.  It only uses the XOR refutation algorithm and bypasses \cite{AGM03}'s connection between $(\eps,t)$-wise uniformity and $\eps$-closeness to a $t$-wise uniform distribution.  We were able to convert Proof~2 into an SOS proof (see Section~\ref{sec:sos-ntwus}), but we did not see how to translate Proof~1 into an SOS version. Proof~2 requires Plancherel's Theorem, a fundamental result in Fourier analysis.
\begin{theorem}[Plancherel's Theorem] \label{thm:plancherel}
For any $f,g:\{-1,1\}^k \to \R$,
\[
\E_{\bz \in U^k}[f(\bz)g(\bz)] = \sum_{S \subseteq [k]} \wh{f}(S) \wh{g}(S).
\]
\end{theorem}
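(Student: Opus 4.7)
The plan is to prove Plancherel's Theorem by expanding both $f$ and $g$ in the multilinear (Fourier) basis and exploiting orthonormality of the characters $\{z^S\}_{S \subseteq [k]}$ under the inner product $\langle f,g\rangle = \E_{\bz \in U^k}[f(\bz)g(\bz)]$. This is a standard computation; the core fact driving everything is that for a uniform random $\bz \in \{-1,1\}^k$ and any $R \subseteq [k]$,
\[
\E_{\bz \in U^k}[\bz^R] = \prod_{i \in R} \E[\bz_i] = \indic{R = \emptyset},
\]
using independence of the coordinates together with $\E[\bz_i] = 0$ for each $i$.

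First, I would invoke the Fourier expansions $f(z) = \sum_{S \subseteq [k]} \wh{f}(S)\, z^S$ and $g(z) = \sum_{T \subseteq [k]} \wh{g}(T)\, z^T$ provided by the excerpt's discussion of multilinear polynomial representations of real-valued functions on the Boolean cube. Multiplying these expansions yields
\[
f(z) g(z) = \sum_{S \subseteq [k]} \sum_{T \subseteq [k]} \wh{f}(S)\, \wh{g}(T)\, z^S z^T.
\]
The next step is to simplify $z^S z^T$ pointwise on $\{-1,1\}^k$. Since $z_i^2 = 1$ for each $i$, any coordinate appearing in both $S$ and $T$ contributes a factor of $1$, so $z^S z^T = z^{S \triangle T}$, where $S \triangle T$ is the symmetric difference.

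Then I would take expectations of both sides with respect to $\bz \sim U^k$ and push the expectation through the finite double sum:
\[
\E_{\bz \in U^k}[f(\bz) g(\bz)] = \sum_{S,T \subseteq [k]} \wh{f}(S)\, \wh{g}(T)\, \E_{\bz \in U^k}[\bz^{S \triangle T}].
\]
By the displayed identity above, $\E[\bz^{S \triangle T}] = \indic{S \triangle T = \emptyset} = \indic{S = T}$. Substituting this collapses the double sum to the diagonal and yields $\sum_{S \subseteq [k]} \wh{f}(S)\, \wh{g}(S)$, which is exactly the claim.

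There is no real obstacle here; the only thing to be careful about is the identity $z^S z^T = z^{S \triangle T}$, which uses $z_i \in \{-1,1\}$ and would fail over a general domain. Everything else is a direct application of linearity of expectation and independence of the coordinates of $\bz$.
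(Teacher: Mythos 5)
Your proof is correct and is the standard orthonormality argument; the paper itself states Plancherel's Theorem without proof as a standard fact from Fourier analysis of Boolean functions, and your derivation (expand $f$ and $g$ in the $\{z^S\}$ basis, use $z^Sz^T = z^{S\triangle T}$ and $\E_{\bz \in U^k}[\bz^R] = \indic{R=\emptyset}$ to collapse the double sum to the diagonal) is exactly the expected justification.
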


\begin{proof}[Proof 2]
Since $P$ is $\delta$-far from $t$-wise supporting, there exists a degree-$t$ polynomial $Q$ that $\delta$-separates $P$.  The definition of $\delta$-separating implies that $P(z) - (1-\delta) \leq Q(z)$ for all $z \in \{-1,1\}^k$.  Summing over all constraints, we get that for all $x \in \{-1,1\}^n$,
\[
\sum_{T \in [n]^k} \sum_{c \in \{\pm 1\}^k} \indic{\{(T,c) \in \calI\}} P(x_T \circ c) - m(1-\delta) \leq \sum_{T \in [n]^k} \sum_{c \in \{\pm 1\}^k} \indic{\{(T,c) \in \calI\}} Q(x_T \circ c),
\]
or, equivalently, $\mathrm{Val}_{\calI}(x) - (1-\delta) \leq \E_{\bz \in \calD_{\calI,x}}[Q(\bz)]$.

It then remains to certify that $\E_{\bz \in \calD_{\calI,x}}[Q(\bz)] \leq \gamma$.  Observe that
\[
\E_{\bz \in \calD_{\calI,x}}[Q(\bz)] = \E_{\bz \in U^k}[D_{\calI,x}(\bz)Q(\bz)] = \sum_{\emptyset \ne S \subseteq [k]} \wh{D_{\calI,x}}(S) \wh{Q}(S),
\]
where the second equality follows from Plancherel's Theorem.  Since $Q \geq -1$ and $\E[Q] = 0$, $Q \leq 2^k$ and hence $|\wh{Q}(S)| \leq 2^k$ for all $S$.  To finish the proof, we apply Lemma~\ref{lem:fourier-refute} to certify that $\left|\widehat{D_{\calI,x}}(S)\right| \leq \frac{\gamma}{2^{2k}}$ for all $S$.
\end{proof}

With Corollary~\ref{cor:lp-granularity}, Theorem~\ref{thm:ntwus-refute-2} implies that we can $\Omega_k(1)$-refute instances of $\CSP(P)$ with $\wt{O}_k(n^{t/2})$ constraints when $P$ is not $t$-wise supporting.
\begin{corollary}
\label{cor:no-t-wise-ref-2}
Let $P$ be a predicate that does not support any $t$-wise uniform distribution.  Then there is an efficient algorithm that, given an instance $\calI \sim \calF_P(n,p)$ of $\CSP(P)$, certifies that $\val(\calI) \leq 1 - 2^{-\wt{O}(k^t)}$ with high probability when $\expm \geq 2^{\wt{O}(k^t)} n^{t/2} \log^{5} n$ and $t \geq 2$.
\end{corollary}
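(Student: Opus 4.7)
The plan is to combine Corollary~\ref{cor:lp-granularity} with Theorem~\ref{thm:ntwus-refute-2} in a direct way, since the former converts the qualitative non-supporting hypothesis into a quantitative distance $\delta$, and the latter then yields a $(\delta - \gamma)$-refutation. The only subtlety is balancing $\gamma$ against $\delta$ so that the certification bound $1 - \delta + \gamma$ remains of the form $1 - \Omega_k(1)$ while the constraint requirement $\expm \geq k^{O(t)} n^{t/2} \log^5 n / \gamma^2$ does not blow up by more than a factor absorbable in $2^{\wt{O}(k^t)}$.

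First, I would invoke Corollary~\ref{cor:lp-granularity} on the hypothesis that $P$ does not support a $t$-wise uniform distribution. This yields a quantitative $\delta_0 = 2^{-\wt{O}(k^t)}$ such that $P$ is $\delta_0$-far from \twus. Next, I would set $\gamma = \delta_0/2$ and appeal to Theorem~\ref{thm:ntwus-refute-2}, which produces an efficient algorithm that (whp) certifies
\[
\val(\calI) \;\leq\; 1 - \delta_0 + \gamma \;=\; 1 - \tfrac{\delta_0}{2} \;=\; 1 - 2^{-\wt{O}(k^t)},
\]
provided
\[
\expm \;\geq\; \frac{k^{O(t)} n^{t/2} \log^5 n}{\gamma^2} \;=\; \frac{k^{O(t)} n^{t/2} \log^5 n}{2^{-\wt{O}(k^t)}} \;=\; 2^{\wt{O}(k^t)} \, n^{t/2} \log^5 n,
\]
where I have folded the $k^{O(t)}$ factor into the $2^{\wt{O}(k^t)}$ term (since $k^{O(t)} = 2^{O(t \log k)}$ is dwarfed by $2^{\wt{O}(k^t)}$ for $t \geq 2$).

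There is no real obstacle here because both ingredients have already been established: Corollary~\ref{cor:lp-granularity} (proved via Cramer's rule / Hadamard's inequality on the LP vertex) gives the hard direction of converting qualitative non-support into a quantitative margin, while Theorem~\ref{thm:ntwus-refute-2} does all the algorithmic work via quasirandomness certification. The only thing to double-check is that the $\log^5 n$ and hidden polylogarithmic factors in the $\wt{O}$ notation of both the density bound and the refutation strength do not interact poorly; but since $\wt{O}(k^t)$ already absorbs polylog factors by convention, the composition is clean. Hence the corollary follows by chaining the two results with $\gamma = \delta_0/2$.
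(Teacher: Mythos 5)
Your proof is correct and matches the paper's approach exactly: the paper presents Corollary~\ref{cor:no-t-wise-ref-2} as an immediate consequence of chaining Corollary~\ref{cor:lp-granularity} (to obtain $\delta_0 = 2^{-\wt{O}(k^t)}$) with Theorem~\ref{thm:ntwus-refute-2}, which is precisely what you do. The choice $\gamma = \delta_0/2$ and the observation that $k^{O(t)}$ is absorbed into $2^{\wt{O}(k^t)}$ are the right bookkeeping steps.
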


\subsection{Proof of Lemma~\ref{lem:smaller-monomials}}
\label{sec:smaller-monomials}
Recall the statement of the lemma.
\begin{customlem}{\ref{lem:smaller-monomials}}
Let $S \subseteq [k]$ with $|S| = s > 0$.  Let $\tau \in \N$ and let $\{w_U(i)\}_{U \in [n]^s, i \in [\tau]}$ be independent random variables satisfying conditions \eqref{eq:w-mean-0}, \eqref{eq:w-mostly-0}, and \eqref{eq:w-bounded} for some $p \geq \frac{1}{\tau n^{s/2}}$.  Then there is an algorithm that certifies with high probability that
\[
\sum_{U \in [n]^s} x^U \sum_{j = 1}^{\tau} w_U(j) \leq
\begin{cases}
2^{O(s)} \sqrt{\tau p} \cdot n^{3s/4} \log^{5/2} n \quad & \text{if $s \geq 2$} \\
4 \max\{\sqrt{\tau p}, 1\} \cdot n \log n \quad & \text{if $s = 1$.}
\end{cases}
\]
for all $x \in \R^n$ such that $\norm{x}_{\infty} \leq 1$.
\end{customlem}
The proof uses Bernstein's Inequality.
\begin{theorem}[Bernstein's Inequality]
\label{thm:bernstein}
Let $X_1, \ldots, X_M$ be independent $0$-mean random variables such that $\abs{X_i} \leq B$.  Then, for $a > 0$,
\[
\Pr\left[\sum_{i = 1}^M X_i > a\right] \leq \exp\left(\frac{-\frac{1}{2}a^2}{\sum_{i = 1}^M \E[X_i^2] + \frac{1}{3}Ba}\right).
\]
\end{theorem}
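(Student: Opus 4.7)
The plan is to follow the standard Cram\'er--Chernoff exponential moment method. The first step is to apply Markov's inequality to the exponentiated sum: for any parameter $\lambda > 0$,
\[
    \Pr\!\left[\sum_{i=1}^M X_i > a\right] \;\leq\; e^{-\lambda a}\, \E\!\left[\exp\!\Bigl(\lambda \sum_{i=1}^M X_i\Bigr)\right] \;=\; e^{-\lambda a} \prod_{i=1}^M \E[e^{\lambda X_i}],
\]
where the final equality uses independence. I will defer the choice of $\lambda$ until the end, once I have a clean bound on each moment generating function.

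Next, I would control each factor $\E[e^{\lambda X_i}]$. Taylor-expanding and using $\E[X_i] = 0$ gives $\E[e^{\lambda X_i}] = 1 + \sum_{k \geq 2} \lambda^k \E[X_i^k]/k!$. The uniform bound $|X_i| \leq B$ lets me estimate $|\E[X_i^k]| \leq B^{k-2} \E[X_i^2]$ for each $k \geq 2$, so after summing,
\[
    \E[e^{\lambda X_i}] \;\leq\; 1 + \frac{\E[X_i^2]}{B^{2}}\bigl(e^{\lambda B} - 1 - \lambda B\bigr).
\]
Applying the elementary inequality $e^u - 1 - u \leq (u^2/2)/(1 - u/3)$ (valid for $0 \leq u < 3$, obtained from the crude bound $k! \geq 2 \cdot 3^{k-2}$) with $u = \lambda B$, and then the pointwise bound $1 + z \leq e^z$, yields the key estimate
\[
    \E[e^{\lambda X_i}] \;\leq\; \exp\!\left(\frac{\lambda^{2} \E[X_i^2]/2}{1 - \lambda B/3}\right),
\]
valid for $0 < \lambda < 3/B$. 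Multiplying across $i$ collapses the exponents additively, giving $\prod_i \E[e^{\lambda X_i}] \leq \exp\bigl(\lambda^{2}\sigma^{2}/(2(1 - \lambda B/3))\bigr)$ where $\sigma^{2} := \sum_i \E[X_i^2]$.

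The last step is to optimize the exponent in $\lambda$. Rather than setting a derivative equal to zero, I will simply plug in the specific choice $\lambda = a/(\sigma^{2} + Ba/3)$, which lies automatically in the admissible range $(0, 3/B)$. A short algebraic check shows $1 - \lambda B/3 = \sigma^{2}/(\sigma^{2} + Ba/3)$, which lets the full exponent $-\lambda a + \lambda^{2}\sigma^{2}/(2(1 - \lambda B/3))$ collapse to exactly $-a^{2}/(2(\sigma^{2} + Ba/3))$, yielding the claimed bound. The main ``obstacle'' is really just this apparently magical choice of $\lambda$; one can either motivate it by carrying out the calculus optimization explicitly, or simply verify it gives the clean form in the statement.
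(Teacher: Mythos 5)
Your proof is correct. The paper states Bernstein's Inequality as a standard, classical result and does not supply a proof of its own, so there is nothing to compare against; what you have written is the textbook Cram\'er--Chernoff argument, and every step checks out: the bound $\E[X_i^k] \leq B^{k-2}\E[X_i^2]$ for $k \geq 2$ (which is what you actually need for the upper bound on the moment generating function, and follows from $X_i^k \leq |X_i|^k \leq B^{k-2}X_i^2$), the elementary estimate $e^u - 1 - u \leq \frac{u^2/2}{1-u/3}$ via $k! \geq 2\cdot 3^{k-2}$, and the final substitution $\lambda = a/(\sigma^2 + Ba/3)$, which indeed satisfies $\lambda B/3 < 1$ and collapses the exponent to $\frac{-a^2/2}{\sigma^2 + Ba/3}$ exactly as claimed.
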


\begin{proof}[Proof of Lemma~\ref{lem:smaller-monomials}]
First, we define
\[
v_U = \sum_{j = 1}^{\tau} w_U(i).
\]
Observe that the $v_U$'s are independent and that each one is the sum of $\tau$ mean-$0$, i.i.d. random variables with magnitude at most~$1$.  Noting that $\sum_{i = 1}^{\tau} \E[w_U(i)^2] \leq \tau p$, we can use Bernstein's Inequality to show that the $|v_U|$'s are not too big with high probability.  If $s \geq 2$, Theorem~\ref{thm:main} then implies that the desired algorithm exists.  If $s = 1$, we are simply bounding a linear function over $\pm 1$ variables.  We consider two cases:  Small $p$ and large $p$.
\paragraph{Case 1: $p \leq \frac{1}{4\tau}$.}  Choosing $a = 2s\log n$ in Bernstein's Inequality, we see that $\Pr[|v_U| \geq 2s\log n] \leq n^{-2s}$. A union bound over all $U$ then implies that $\Pr[\text{any $|v_U| > 2s \log n$}] \leq n^{-s}$.  If $s \geq 2$, we observe that $\Pr[v_U \neq 0] \leq \tau p$, scale the $v_U$'s down by $2s \log n$, and apply Theorem~\ref{thm:main} to get the stated result.  If $s = 1$, we obtain the second bound by observing that
\begin{equation}
\label{eq:1-case}
\sum_{i \in [n]} v_i x_i \leq \sum_{i \in [n]} \abs{v_i} \leq 2n \log n.
\end{equation}
\paragraph{Case 2: $p > \frac{1}{4 \tau}$.} We set $a = 4 s \sqrt{\tau p} \log n$ and get that $\Pr[\text{any $\abs{v_U} > 4 s \sqrt{\tau p} \log n$}] \leq n^{-s}$ as above.  If $s \geq 2$, we can then divide the $v_U$'s by $4s \sqrt{\tau p} \log n$ and apply Theorem~\ref{thm:main}.  If $s = 1$, we get a bound of $4 \sqrt{\tau p} \cdot n \log n$ in the same way as \eqref{eq:1-case}.
\end{proof}


\section{Hardness of learning implications}\label{sec:learning}
Recent work by Daniely et al.~\cite{daniely-hardness-of-learning} reduces the problem of refuting specific instances of $\csp(P)$ to the problem of improperly learning certain hypothesis classes in the Probably Approximately Correct (PAC) model~\cite{Valiant:1984:TL:1968.1972}.
In this model, the learner is given $m$ labeled training examples $(x_1, \ell(x_1)), \ldots, (x_m, \ell(x_m))$, where each $x_i \in \{-1, 1\}^n$, each $\ell(x_i) \in \{0, 1\}$, and the examples are drawn from some unknown distribution $\calD$ on $\{-1, 1\}^n \times \{0, 1\}$.  For some hypothesis class $\calH \subseteq \{0, 1\}^{\{-1, 1 \}^n}$ we say that $\calD$ can be \emph{realized} by $\calH$ if there exists some $h \in \calH$ such that $\Pr_{(x, \ell(x))\sim\calD}[h(x) \neq \ell(x)] = 0$.
In improper PAC learning, on an input of $m$ training examples drawn from $\calD$ such that $\calD$ can be realized by some $h \in \calH$,
and an error parameter $\epsilon$,
the algorithm 
outputs some hypothesis function $f_h:\{-1, 1\}^n \rightarrow \{0, 1\}$ (not necessarily in $\calH$) such that $\Pr_{(x, \ell(x))\sim\calD}[f_h(x) \neq \ell(x)] \leq \eps$.
In improper \emph{agnostic} PAC learning, the assumption that $\calD$ can be realized by some $h \in \calH$ is removed and the algorithm must output a hypothesis that performs almost as well as the best hypothesis in $\calH$.
More formally, the hypothesis $f_h$ must satisfy the following: $\Pr_{(x, \ell(x))\sim\calD}[f_h(x) \neq \ell(x)] \leq \min_{h \in \calH}\Pr_{(x, \ell(x))\sim\calD}[h(x) \neq \ell(x)] + \eps$.
In improper approximate agnostic PAC learning, the learner is also given an approximation factor $a \geq 1$ and must output a hypothesis $f_h$ such that $\Pr_{(x, \ell(x))\sim\calD}[f_h(x) \neq \ell(x)] \leq a \cdot \min_{h \in \calH}\Pr_{(x, \ell(x))\sim\calD}[h(x) \neq \ell(x)] + \eps$.

Daniely et al.~reduce the problem of distinguishing between random instances of $\csp(P)$ and instances with value at least $\alpha$ as a PAC learning problem by transforming each constraint into a labeled example.
To show hardness of improperly learning a certain hypothesis class in the PAC model, they define a predicate $P$ that is specific to the hypothesis class and assume hardness of distinguishing between random instances of $\csp(P)$ and instances with $n^d$ constraints and value at least $\alpha$ for all $d > 0$.
They then demonstrate that the sample can be realized (or approximately realized) by some function in the hypothesis class if the CSP instance is satisfiable (or has value at least $\alpha$).
They also show that if the given CSP instance is random, the set of examples will have error at least $\tfrac14$ (in the agnostic case $\tfrac15)$ for all $h$ in the hypothesis class with high probability.
Using this approach, they obtain hardness results for the following problems:  improperly learning DNF formulas, improperly learning intersections of 4 halfspaces, and improperly approximately agnostically learning halfspaces for any approximation factor.

\subsection{Hardness assumptions}
The hardness assumptions made in~\cite{daniely-hardness-of-learning} are the same as those presented in Section~\ref{ssec:ltprelims}, except for a few minor differences.
First, their model fixes the number of constraints rather than the probability with which each constraint is included in the instance.  It is well-known that results in one model easily translate to the other.  We include a proof in Appendix~\ref{sec:translation} for completeness.  Additionally, SRCSP Assumptions 1 and 2 purport hardness of distinguishing random instances of $\csp(P)$ from satisfiable instances, even when the algorithm is allowed to err with probability $\tfrac14$ over its internal coins.  The algorithms presented in the preceding sections never err on satisfiable instances; further,  they only fail to certify random instances with probability $o(1)$.  As a result,  our refutation algorithms also falsify weaker versions of both SRCSP Assumptions, wherein the allowed probability of error is both lower and one-sided.
For each predicate presented in~\cite{daniely-hardness-of-learning}, we falsify the appropriate SRCSP assumption using the following approach.  For each predicate $P$ and corresponding $\delta > 0$ , we define a degree-$t$ polynomial that $\delta$-separates $P$.  Using the refutation techniques presented in the preceding sections, we deduce that $\wt O(n^{t/2})$ constraints are sufficient to distinguish random instances of $\csp(P)$ from those that are satisfiable (or have value at least $\alpha$).
In order to simplify the presentation, we begin with simpler versions of the polynomials and then scale them to attain the appropriate values of $\delta$.
The following lemma will be of use for this scaling.

\begin{lemma}\label{lem:scaleQ}
For predicate $P:\{-1, 1\}^k \rightarrow \{0, 1\}$, let $\Qdef$ be an unbiased multilinear polynomial of degree $t$ such that there exist $\theta_1 > 0, \theta_0<0$ not dependent on $z$ for which the following holds: $Q(z) \geq \theta_1$ for all $z \in P^{-1}(1)$ and $Q(z) \geq \theta_0$ for all $z \in \kpm$.
Then there exists a degree-$t$ polynomial $\calQ: \kpm \rightarrow \R$ that $\frac{\theta_1}{\theta_1 - \theta_0}$-separates $P$.
\end{lemma}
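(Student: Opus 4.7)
The plan is to prove the lemma by a simple positive scaling of $Q$. Since $\theta_1 > 0 > \theta_0$ we have $\theta_1 - \theta_0 > 0$, and the target separation parameter is $\delta := \frac{\theta_1}{\theta_1 - \theta_0} \in (0,1)$. I would define
\[
    \calQ(z) \;=\; \frac{Q(z)}{\theta_1 - \theta_0}.
\]
This is manifestly a multilinear polynomial of degree at most $t$, and scaling by a positive constant preserves the unbiasedness hypothesis $\wh{Q}(\emptyset) = 0$, giving $\wh{\calQ}(\emptyset) = 0$. Thus the third condition of Definition~\ref{def:poly-sep} is immediate.

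Next I would verify the two inequalities. For $z \in P^{-1}(1)$, the hypothesis $Q(z) \geq \theta_1$ and positivity of $\theta_1 - \theta_0$ give
\[
    \calQ(z) \;\geq\; \frac{\theta_1}{\theta_1 - \theta_0} \;=\; \delta.
\]
For arbitrary $z \in \{-1,1\}^k$, the hypothesis $Q(z) \geq \theta_0$ gives
\[
    \calQ(z) \;\geq\; \frac{\theta_0}{\theta_1 - \theta_0} \;=\; \frac{\theta_1 - (\theta_1 - \theta_0)}{\theta_1 - \theta_0} \;=\; \delta - 1,
\]
which is precisely the first condition of Definition~\ref{def:poly-sep}.

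There is really no obstacle here — the only thing to check is the algebraic identity $\frac{\theta_0}{\theta_1 - \theta_0} = \delta - 1$, which is what dictates the choice of scaling factor. In the applications that follow, one constructs an unbiased polynomial $Q$ tailored to the specific predicate (Huang, Majority, $T_k$) with convenient guarantees $\theta_1, \theta_0$, and this lemma then automatically converts it into a $\delta$-separating polynomial of the required form, feeding into Lemma~\ref{lem:poly-iff-no-dist} to quantify how far $P$ is from supporting a $t$-wise uniform distribution.
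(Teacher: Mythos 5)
Your proof is correct and matches the paper's own argument essentially verbatim: the same positive scaling $\calQ = Q/(\theta_1-\theta_0)$, the same observation that scaling preserves unbiasedness and degree, and the same algebraic verification of the two inequalities in Definition~\ref{def:poly-sep}.
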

\begin{proof}
Define $\calQ(z) = \frac{Q(z)}{\theta_1 - \theta_0}$.  Clearly $\calQ$ is also unbiased and has degree $t$.
Then for all $z \in P_1$,  $\frac{Q(z)}{\theta_1 - \theta_0} \geq \frac{\theta_1}{\theta_1  - \theta_0}$.
Similarly, for all $z$, $\frac{Q(z)}{\theta_1 - \theta_0} \geq \frac{\theta_0}{\theta_1 - \theta_0} = -\frac{\theta_1 - \theta_0}{\theta_1 - \theta_0} + \frac{\theta_1}{\theta_1 - \theta_0} = -1 + \frac{\theta_1}{\theta_1 - \theta_0}$.
\end{proof}

We now demonstrate that the above can be applied to the predicates suggested in~\cite{daniely-hardness-of-learning} by defining separating polynomials and applying Theorem~\ref{thm:ntwus-refute-2}

\subsection{Huang's predicate and hardness of learning DNF formulas}
In order to obtain hardness of improperly learning DNF formulas with $\omega(1)$ terms, Daniely et~al.\ use the following predicate, introduced by Huang~\cite{Huang:2013:ARS:2488608.2488666}.  Huang showed that it is hereditarily approximation resistant; Daniely et~al.\ also observed that its $0$-variability is $\Omega(k^{1/3})$~\cite{daniely-hardness-of-learning}.

\begin{definition}
Let $k = \kappa + {\kappa \choose 3}$ for some integer $\kappa \geq 3$.
For $z \in \kpm$, index $z$ as follows.  Label the first $\kappa$ bits of $z$ as $z_1, \ldots, z_{\kappa}$.
The remaining ${\kappa \choose 3}$ bits are indexed by unordered triples of integers between $1$ and $\kappa$. Each $T \subseteq [\kappa]$ with $|T| =3$ is associated with a distinct bit of the remaining ${\kappa \choose 3}$ bits, which is indexed by $z_T$.
We say that $z$ \emph{strongly satisfies} the Huang predicate iff for every $T = \{z_i, z_j, z_\ell\}$ such that  $z_i, z_j, z_\ell$ are distinct elements of $[\kappa]$, $z_iz_jz_\ell = z_{\{i,j,\ell\}}$.
Additionally, we say that $z$ \emph{satisfies} the Huang predicate iff there exists some $z' \in \kpm$ such that $z$ has Hamming distance at most $\kappa$ from $z'$ and $z'$ strongly satisfies the Huang predicate.
Define $H_{\kappa}: \kpm \rightarrow\{0, 1\}$ as follows:  $H_{\kappa}(z) = 1$ if $z$ satisfies the Huang predicate and $H_{\kappa}(z) = 0$ otherwise.
\end{definition}

Daniely et al.~reduce the problem of distinguishing between random instances of $\csp(H_{\kappa})$ with $2n^d$ constraints and satisfiable instances to the problem of improperly PAC learning the class of DNF formulas with $\omega(1)$ terms on a sample of $O(n^d)$ training examples with error $\eps = 1/5$ with probability at least $\tfrac34$. Here we show that there exists a polynomial time algorithm that refutes random instances of $\csp(H_\kappa)$ by demonstrating that $H_k$ does not support a 4-wise uniform distribution and applying Theorem~\ref{thm:ntwus-refute-2}.

\begin{theorem}
Assume $\kappa \geq 9$.
There exists a degree-$4$ polynomial $\calQ: \kpm \rightarrow \R$ that 
$\tfrac18$-separates $H_\kappa$.
Consequently, $H_{\kappa}$ is $\tfrac18$-far from supporting a $4$-wise uniform distribution.
\end{theorem}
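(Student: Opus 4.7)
The plan is to apply Lemma~\ref{lem:poly-iff-no-dist}, which reduces the claim to exhibiting a degree-$4$ multilinear polynomial that $\tfrac{1}{8}$-separates $H_\kappa$. Via Lemma~\ref{lem:scaleQ}, it suffices to produce an unbiased degree-$4$ polynomial $Q$ together with constants $\theta_1 > 0$ and $\theta_0 < 0$ obeying $Q(z) \geq \theta_1$ on $H_\kappa^{-1}(1)$, $Q(z) \geq \theta_0$ on all of $\kpm$, and $\theta_0 \geq -7\theta_1$; rescaling by $8\theta_1$ then produces the desired $\tfrac{1}{8}$-separator.

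Motivated by the fact that strongly satisfying assignments obey $z_T z_i z_j z_\ell = 1$ for every triple $T = \{i,j,\ell\}$, the natural starting point is the unbiased polynomial
\[
Q_0(z) \;=\; \frac{1}{\binom{\kappa}{3}} \sum_{T = \{i,j,\ell\}} z_T z_i z_j z_\ell,
\]
whose every monomial involves four distinct coordinates and which is identically $1$ on strongly satisfying assignments. Decomposing a Huang-satisfying $z = z' \circ c$ into $|A|$ flipped base coordinates and $|B|$ flipped triple coordinates with $|A| + |B| \leq \kappa$, each summand becomes $(-1)^{|A \cap T| + \mathbf{1}\{T \in B\}}$; Newton's identities give the closed form $\sum_T (-1)^{|A \cap T|} = \tfrac{s(s^2 - 3\kappa+2)}{6}$ where $s = \kappa - 2|A|$, and the correction from $B$ shifts the total by at most $2|B|$.

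The main obstacle is that $Q_0$ on its own is insufficient: the satisfying assignment with $|A| = \kappa$, $|B| = 0$ yields $Q_0(z) = -1$, so no positive $\theta_1$ is achievable. To remedy this I would augment $Q_0$ by a linear combination of the normalized elementary symmetric polynomials $e_2/\binom{\kappa}{2}$, $e_3/\binom{\kappa}{3}$, and $e_4/\binom{\kappa}{4}$ in the base variables, each of which is unbiased and of degree at most $4$. Since these polynomials are invariant under permutations of the base coordinates, the minimum of the augmented $Q$ over $H_\kappa^{-1}(1)$ reduces to a one-parameter discrete optimization in $|A| \in \{0,1,\ldots,\kappa\}$ (with the worst-case $|B| = \kappa - |A|$ determined from the formula above), and the coefficients can be tuned --- with $\kappa \geq 9$ used to anchor the small cases --- so that this minimum equals a positive constant $\theta_1$ while the global minimum over $\kpm$ is controlled by $\theta_0 \geq -7\theta_1$ via the $\ell_1$-norm of the coefficients. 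The main work is verifying feasibility of this small LP by direct arithmetic for $\kappa \geq 9$; once established, Lemma~\ref{lem:scaleQ} followed by Lemma~\ref{lem:poly-iff-no-dist} yields that $H_\kappa$ is $\tfrac{1}{8}$-far from supporting a $4$-wise uniform distribution.
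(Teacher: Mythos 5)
Your approach has a genuine gap: the augmentation step cannot succeed in the form you propose, and the issue is structural rather than a matter of tuning constants. Let $f$ be \emph{any} unbiased multilinear polynomial in the base variables $z_1,\ldots,z_\kappa$ alone (in particular any linear combination of normalized $e_2,e_3,e_4$), and suppose $Q = Q_0 + f$ satisfied $Q \geq \theta_1 > 0$ on $H_\kappa^{-1}(1)$. Every base pattern $v \in \{-1,1\}^\kappa$ arises as the base part of some strongly satisfying $z'$ (set $z'_{\{i,j,\ell\}} = v_iv_jv_\ell$), and the assignment obtained by flipping all $\kappa$ base coordinates while keeping the triple coordinates is at Hamming distance exactly $\kappa$, hence still in $H_\kappa^{-1}(1)$, and has $Q_0 = -1$. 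Applying your constraint there gives $-1 + f(-v) \geq \theta_1$ for every $v$, i.e.\ $f \geq 1+\theta_1 > 1$ identically on $\{-1,1\}^\kappa$. This contradicts $\E[f] = 0$. So no choice of coefficients for the $e_i$'s can rescue $Q_0$, and the ``small LP'' whose feasibility you defer to is actually infeasible. (A secondary issue: the minimum of your augmented $Q$ over $H_\kappa^{-1}(1)$ is not a one-parameter optimization in $|A|$ alone, since $e_2,e_3,e_4$ depend on the base Hamming weight of $z$, which depends on $z'$ as well as on $A$.)

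The paper avoids exactly this trap by constructing a separating polynomial that involves \emph{only} the triple-indexed coordinates and no base coordinates. It averages the degree-$4$ product $\zeta(I,z)$ over all ordered $6$-tuples $I$, where the five monomials of $\zeta$ are chosen so that in each monomial $z_{T_1}z_{T_2}z_{T_3}z_{T_4}$ every index of $I$ appears in exactly two of the $T_i$'s; on a strongly satisfying assignment this forces $z_{T_1}z_{T_2}z_{T_3}z_{T_4}=\prod_j (z'_j)^2 = 1$, giving $Q=5$. Since $Q$ is blind to the base coordinates, flips there cost nothing, and flips among the $\binom{\kappa}{3}$ triple coordinates perturb the normalized average by $O(1/\kappa^2)$: one gets $Q \geq 5 - 240/((\kappa-1)(\kappa-2)) \geq 5/7$ for $\kappa \geq 9$, with the trivial bound $Q \geq -5$ globally, and Lemma~\ref{lem:scaleQ} yields the $\tfrac18$-separation. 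To make your route work you would need a polynomial supported on the triple coordinates (or one whose dependence on the base coordinates is controlled under arbitrary base flips), not a symmetric correction in the base variables.
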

\begin{proof}
As a notational shorthand, write $z_{abc}$ for $z_{\{i_a, i_b, i_c\}}$.
Define $\zeta:[\kappa]^6\times \kpm \rightarrow [-5, 5]$ as follows:
\begin{equation}
\begin{multlined}
\zeta(i_1,i_2, i_3, i_4, i_5, i_6, z) = z_{126}z_{134}z_{235}z_{456}
+ z_{256}z_{146}z_{345}z_{123}
+ z_{136}z_{236}z_{145}z_{245}\\
+ z_{124}z_{234}z_{356}z_{156}
+ z_{125}z_{135}z_{346}z_{246}.\label{eq:zetadef}
\end{multlined}
\end{equation}

Observe that for each monomial $z_{T_1}z_{T_2}z_{T_3}z_{T_4}$ of  $\zeta$, for every $j \in [6]$, $\sum_{i=1}^41_{\{T_i \ni j\}} = 2$.
Further, for each $T \subseteq [6]$ with $|T| = 3$, $z_T$ appears exactly once in $\zeta$.
Let $\calZ_6$ be the set of all ordered 6-tuples of distinct elements of $[\kappa]$.  For an ordered tuple $I$, we use $\inT$ to denote membership in $I$.

Define $Q:\kpm \rightarrow \R$ as follows.  Our final polynomial $\calQ$ will be a scaled version of $Q$.
\[
Q(z) = \avg_{I \in \calZ_6}\zeta(I, z).
\]
Observe that $Q$ does not depend on any of $z_{\{1\}}, \ldots z_{\{\kappa\}}$.
By construction, $Q$ contains no constant term, so $\wh{Q}(\emptyset) = 0$.
Clearly $Q(z) \geq -5$ for all $z$ because~\eqref{eq:zetadef} is always at least $-5$.

Now we lower bound the value of $Q$ on all $z$ that satisfy $H_\kappa$. We first show that for any $z'$ that strongly satisfies the Huang predicate, $Q(z') = 5$, then bound $Q(z') - Q(z)$ for any $z$ with Hamming distance at most $\kappa$ from $z'$.
By definition, for each $z'_{T_i}$, we have that $z'_{T_i}\prod_{j \in T_i}z'_j = 1$.
So for each monomial of $Q$,
\begin{align*}
\frac{1}{|\calZ_6|}z'_{T_1}z'_{T_2}z'_{T_3}z'_{T_4} &=\frac{1}{|\calZ_6|}\prod_{i = 1}^4\prod_{j \in T_i}z'_j \\
&= \frac{1}{|\calZ_6|}\prod_{{j \in T_1 \cup T_2 \cup T_3 \cup T_4}}(z'_j)^2 =  \frac{1}{|\calZ_6|},
\end{align*}
where the last line follows from the fact that $\sum_{1=1}^41_{\{T_i \ni j\}} = 2.$
Because there are $5\cdot |\calZ_6|$ monomials in $Q$, their sum is $5$.

Now we consider the case where $z$ does not strongly satisfy the Huang Predicate, but $H_\kappa(z) = 1$.
Any singleton index on which $z$ and $z'$ differ will not change the value of $Q$.
Let $N = \{T : z_T \neq z'_T \}$.
We lower bound $Q$ by counting the number of monomials in which each $z_T$ appears and
\begin{equation*}
Q(z) \geq 5 -\frac2{|\calZ_6|}\sum_{T \in N}\sum_{I \in \calZ}1_{\{\bigwedge_{z_i \in T}i \inT I \}}.
\end{equation*}
For fixed $T$, the number of monomials containing the variables of $z_T$ is
\begin{equation*}
\sum_{I \in \calZ}1_{\{\bigwedge_{z_i \in T}i \inT I \}} = 120(\kappa - 3)(\kappa - 4)(\kappa - 5)\label{eq:permutations}
\end{equation*}
because there are exactly $120$ ways to permute the three indices of $T$ in I and the remaining $\kappa - 3$ indices are permuted in the remaining 3 positions of $I$.
So
\begin{align}
Q(z) &\geq 5 - \frac{240\kappa}{|\calZ_6|}(\kappa-3)(\kappa - 4)(\kappa -5) =5- \frac{240}{(\kappa-1)(\kappa -2)}\label{eq:huang-positive}.
\end{align}

For $\kappa \geq 9$,~\eqref{eq:huang-positive} is at least $ 5 -\tfrac{30}{7}$.  Applying Lemma~\ref{lem:scaleQ}, there exists $\calQ: \kpm \rightarrow \R$ that $\frac18$-separates $H_\kappa$.
\end{proof}
From this and the fact that $\overline{H_\kappa} = 2^{\tilde{O}(k^{1/3})-k}$ (see~\cite{Huang:2013:ARS:2488608.2488666}), we obtain the following corollary.
\begin{corollary}
For sufficiently large $n$ and $k \geq 93$, there exists an efficient algorithm that refutes random instances of $\CSP(H_\kappa)$ with $\wt O(n^2)$ constraints with high probability.
This falsifies Assumption~\ref{assumption:intro-1} in the case of the Huang predicate.
\end{corollary}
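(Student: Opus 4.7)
The plan is to combine the preceding theorem (which establishes that $H_\kappa$ is $\tfrac18$-far from supporting a $4$-wise uniform distribution for $\kappa \geq 9$) with the general refutation algorithm of Theorem~\ref{thm:ntwus-refute-2} to obtain the efficient refutation, and then to translate the resulting algorithmic fact into a falsification of SRCSP Assumption~\ref{assumption:intro-1}.

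First I would invoke Theorem~\ref{thm:ntwus-refute-2} with predicate $P = H_\kappa$, parameter $t = 4$, and $\delta = \tfrac18$. The theorem yields an efficient algorithm that, with high probability, certifies $\val(\calI) \leq 1 - \tfrac18 + \gamma$ on a random instance $\calI \sim \calF_{H_\kappa}(n,p)$ whenever $\expm \geq \frac{k^{O(4)} n^{2} \log^5 n}{\gamma^2}$. Choosing any constant $\gamma < \tfrac18$ (say $\gamma = \tfrac1{16}$) makes the certified upper bound strictly less than $1$, so this is in particular a (weak) refutation. Since $k = \kappa + \binom{\kappa}{3}$ is treated as a constant in this setup, the required density is $\expm \geq \wt{O}(n^2)$, matching the claim. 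The numerical threshold $k \geq 93$ comes directly from plugging $\kappa = 9$ into $k = \kappa + \binom{\kappa}{3}$.

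To falsify Assumption~\ref{assumption:intro-1}, I would verify the two hypotheses the assumption places on $P = H_\kappa$: Huang showed that $H_\kappa$ is hereditarily approximation resistant on satisfiable instances, and its $0$-variability grows with $\kappa$ (as noted in the Case~1 discussion, $\mathrm{VAR}_0(H_\kappa) = \Omega(\kappa)$). Hence for any fixed $C \in \N$, choosing $\kappa$ large enough makes $\mathrm{VAR}_0(H_\kappa) \geq C$ while $k$ remains a fixed constant relative to $n$. The algorithm above then refutes with $\wt O(n^2) \leq n^d$ constraints for any $d \geq 2$ (one may discard extra constraints without affecting correctness), contradicting the assumed non-existence of such an algorithm. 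Since our refutation algorithm has one-sided, vanishing error, it also falsifies the weaker two-sided-error version of the assumption discussed in Section~\ref{sec:learning}.

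There isn't really a hard step here: the theorem is essentially a bookkeeping composition of the separating-polynomial construction just proved with the black-box refutation algorithm of Theorem~\ref{thm:ntwus-refute-2}. The only thing to keep an eye on is translating between the constraint-count model $m = n^d$ used in Assumption~\ref{assumption:intro-1} and the $\calF_{H_\kappa}(n,p)$ model used in Theorem~\ref{thm:ntwus-refute-2}, which is handled by the model equivalence discussed in Appendix~\ref{sec:translation} and by the monotonicity observation that adding random constraints cannot harm a refutation algorithm.
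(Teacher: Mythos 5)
Your proposal is correct and follows essentially the same route as the paper: instantiate Theorem~\ref{thm:ntwus-refute-2} with $t=4$ and $\delta=\tfrac18$ using the just-proved separating polynomial for $H_\kappa$, observe that $k = \kappa + \binom{\kappa}{3} \geq 93$ when $\kappa \geq 9$, and then check that $H_\kappa$ meets the two hypotheses of Assumption~\ref{assumption:intro-1} while $\mathrm{VAR}_0(H_\kappa)$ grows without bound. The paper's own proof is a single sentence that additionally cites $\overline{H_\kappa} = 2^{\tilde{O}(k^{1/3})-k}$ from~\cite{Huang:2013:ARS:2488608.2488666}; you omit this, but it is only a side observation guaranteeing the predicate is nontrivially biased (so random instances really are unsatisfiable whp), and nothing in your argument depends on it. Two very small points worth tightening: $\wt{O}(n^2) \leq n^d$ strictly requires $d \geq 3$ rather than $d \geq 2$ once the hidden polylog factors are accounted for (this is harmless since the Assumption quantifies over all $d$), and the "discard extra constraints" step is unnecessary because Theorem~\ref{thm:ntwus-refute-2} only imposes a lower bound on $\expm$, so a larger constraint count works directly.
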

\begin{remark}
If we instead choose to scale $Q$ by a factor of $\tfrac15\cdot\frac{\kappa^2-3\kappa+2}{2\kappa^2-6\kappa-44}$ rather than substituting $\kappa = 9$ into~\eqref{eq:huang-positive}, we can achieve a better separation of $\delta = \tfrac{\kappa^2-3\kappa-46}{2\kappa^2-6\kappa-44}$.  For $\kappa \geq 9$, this expression is strictly increasing and it approaches $\tfrac12$ as $\kappa$ grows.
\end{remark}

\subsection{Hamming weight predicates}
The remaining predicates we would like to examine are symmetric, meaning they are functions only of their Hamming weights.  Again for each predicate $P$ we present a multivariate polynomial that $\delta$-separates $P$ for some $0 \leq \delta \leq 1$.  Each of these polynomials can also be written as a univariate polynomial on the Hamming weight of its input, which we will use to show that each of the following polynomials $\delta$-separates its predicate for the appropriate value of $\delta$.
We give the construction below.

\begin{definition}
For $z \in \kpm$ where $z = z_1, \ldots, z_k$,  define $S_z = \sum_{i =1}^kz_i$ and call $S_z$ the Hamming weight of $z$.
\end{definition}
Note that this is analogous to the notion of a Hamming weight of a vector in $\{0, 1\}^k$, but differs in that it is not simply the count of the number of $1$'s.
We define a general predicate that is satisfied when $S_z$ is at least some fixed threshold value $\theta$.
\begin{definition}
For all odd $k$ and any $\theta \in \{-k, -k + 2, \ldots, k-2, k\}$, define the predicate $\thr_k^{\theta}: \kpm \rightarrow \{0, 1\}$ as follows:
\[
\thr_k^\theta(z) = \begin{cases}1 &\textrm{if } S_z \geq \theta \\ 0 &\textrm{otherwise}  \end{cases}
\]
\end{definition}
For example, $\maj_k$ is the same as $\thr_k^{1}$ and $\thr_k^{-k}$ is the trivial predicate satisfied by all $z \in \kpm$.

Because the multilinear separating polynomials we will use are symmetric, we present a transformation to an equivalent univariate polynomial on the Hamming weight of the original input.
\begin{lemma} \label{lem:make-univariate}
Let $\Qdef$ be of the following form for some $a, b, c, d \in \R$:
\begin{equation} \label{eq:deg-4-sym}
Q(z) = a\sum_{\substack{T \subseteq [n]\\ |T| = 1}}z^T+ b\sum_{\substack{T \subseteq [n]\\ |T| = 2}}z^T + c\sum_{\substack{T \subseteq [n]\\ |T| = 3}}z^T + d\sum_{\substack{T \subseteq [n]\\ |T| = 4}}z^T.
\end{equation}
Define $Q_u : \R \rightarrow \R$ as follows:
\begin{equation*} \label{eqn:general-univar}
Q(z) = \frac d{24}S_z^4 + \frac c6S_z^3 + \left(\frac b2 + \frac d{3} - \frac{dk}4 \right)S_z^2 + \left(a+\frac c6\cdot(-3k+2)\right)S_z - \frac{bk}2+ \frac{dk}{24}(3k-6).
\end{equation*}
Then $Q(z) = Q_u(S_z)$ for all $z \in \kpm$.
\end{lemma}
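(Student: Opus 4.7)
The plan is to exploit the fact that on $\{-1,1\}^k$ the constraint $z_i^2 = 1$ trivializes all power sums: writing $p_r(z) = \sum_{i=1}^k z_i^r$, we immediately have $p_1 = p_3 = S_z$ and $p_2 = p_4 = k$. Since the four symmetric sums appearing in $Q$ are exactly the elementary symmetric polynomials $e_j(z_1,\ldots,z_k) = \sum_{|T|=j} z^T$ for $j = 1,2,3,4$, the task reduces to expressing each $e_j$ as a univariate polynomial in $S_z$ and then collecting coefficients.

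First, I would apply Newton's identities, which assert $r\, e_r = \sum_{i=1}^r (-1)^{i-1} e_{r-i}\, p_i$. Substituting the simplified power sums yields, in order,
\[
e_1 = S_z, \qquad e_2 = \tfrac{1}{2}\bigl(S_z^2 - k\bigr), \qquad e_3 = \tfrac{1}{6}\bigl(S_z^3 + (2-3k)\, S_z\bigr),
\]
\[
e_4 = \tfrac{1}{24}\bigl(S_z^4 + (8-6k)\, S_z^2 + 3k^2 - 6k\bigr).
\]
The first two are immediate; the third follows from $3e_3 = e_2 p_1 - e_1 p_2 + p_3$; the fourth is the only step requiring nontrivial bookkeeping and comes from expanding $4 e_4 = e_3 p_1 - e_2 p_2 + e_1 p_3 - p_4$ and combining the four terms over a common denominator of $6$. (An alternative verification is to note that $e_j(z_1,\ldots,z_k)$ is a symmetric function depending only on the number of $+1$'s among the $z_i$'s, equivalently on $S_z$, so it must coincide with some univariate polynomial in $S_z$ of degree at most $j$; one can then pin down its coefficients by evaluating at a few values of $S_z$.)

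Finally, I would substitute these four expressions into $Q(z) = a\, e_1 + b\, e_2 + c\, e_3 + d\, e_4$ and regroup by powers of $S_z$. The $S_z^4$ coefficient is $d/24$; the $S_z^3$ coefficient is $c/6$; the $S_z^2$ coefficient is $b/2 + d(8-6k)/24 = b/2 + d/3 - dk/4$; the $S_z$ coefficient is $a + (c/6)(2 - 3k)$; and the constant term is $-bk/2 + d(3k^2-6k)/24 = -bk/2 + (dk/24)(3k-6)$. Matching these against the claimed formula for $Q_u$ completes the verification. There is no substantive obstacle here — the entire lemma is a direct symmetric-function calculation — so the only thing to be careful about is the arithmetic in the $e_4$ expansion.
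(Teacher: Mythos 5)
Your proof is correct and arrives at exactly the same four univariate expressions for the elementary symmetric sums as the paper, but by a genuinely different route. The paper observes that $\sum_{|T|=j} z^T$ is a Krawtchouk polynomial $\mathscr{K}_j\bigl(\frac{k-S_z}{2};k\bigr)$ and then plugs in explicit formulas for $\mathscr{K}_1,\dots,\mathscr{K}_4$ quoted from the literature (with $\mathscr{K}_4$ obtained from a recursion). You instead derive those same formulas from scratch via Newton's identities, using the observation that on $\{-1,1\}^k$ the power sums collapse to $p_1 = p_3 = S_z$ and $p_2 = p_4 = k$. I verified your computation of $e_1, e_2, e_3, e_4$ and they match the paper's Krawtchouk expressions exactly. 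Your approach is more self-contained — it requires no external reference for the Krawtchouk formulas — at the cost of a bit more algebra (the $e_4$ expansion). The paper's approach is shorter once one is willing to cite the Krawtchouk identities, and it hints at the combinatorial interpretation (counting sign-weighted $j$-subsets), but mathematically the two are equivalent. Your parenthetical fallback (each $e_j$ is a symmetric function of $z$, hence depends only on $S_z$, hence is a degree-$\le j$ univariate polynomial that can be pinned down by interpolation) is also sound and would serve as a sanity check.
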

\begin{proof}
We can write~\eqref{eq:deg-4-sym} as follows:
\begin{equation}\label{eq:kraw-multilin}
Q(z) = a\mathscr{K}_1\left(\frac{k-S_z}2;k\right) + b\mathscr{K}_2\left(\frac{k-S_z}2;k\right) + c\mathscr{K}_3\left(\frac{k-S_z}2;k\right)  + d\mathscr{K}_4\left(\frac{k-S_z}2;k\right),
\end{equation}
where $\mathscr{K}_i(\nu; k) = \sum_{j=0}^i(-1)^j{\nu \choose i}{ k-\nu \choose i-j}$ denotes the Krawtchouk polynomial of degree $i$~\cite{krawtchouk, krasikov1996integral}.
Substituting $\nu = \frac{k-S_z}2$, yields the following expressions.  In~\cite{krasikov1996integral} the first three expressions are given explicitly and the fourth can be easily obtained by applying their recursive formula.
\begin{alignat*}{6}
\mathscr{K}_1\left(\frac{k-S_z}2;k\right) &= S_z, &
&\mathscr{K}_3\left(\frac{k-S_z}2;k\right) &&= \frac{S_z^3 - (3k-2)S_z}6,&\\
\mathscr{K}_2\left(\frac{k-S_z}2;k\right) &= \frac{S_z^2 - k}{2},&\qquad
&\mathscr{K}_4\left(\frac{k-S_z}2;k\right) &&= \frac{ S_z^4 + (8-6k)S_z^2  + 3k^2 - 6k}{24}&.\\
\end{alignat*}

Finally, substituting these expressions into~\eqref{eq:kraw-multilin} and by some algebra,
\begin{equation}
Q(z) = \frac d{24}S_z^4 + \frac c6S_z^3 + \left(\frac b2 + \frac d{3} - \frac{dk}4 \right)S_z^2 + \left(a+\frac c6\cdot(-3k+2)\right)S_z - \frac{bk}2+ \frac{dk}{24}(3k-6).\qedhere
\end{equation}
\end{proof}
As a consequence, by choosing values of $a, b, c,$ and $d$, we can work with a univariate polynomial while ensuring that its multivariate analogue is unbiased and has degree at most 4 (degree 3 when $d = 0$).

\subsubsection{Almost-Majority and hardness of learning intersections of halfspaces}
\begin{definition}
Daniely et al.~define the following predicate in order to show hardness of improperly learning intersections of four halfspaces.
\[
I_{8k} = \left( \bigwedge_{i = 0}^3\thr_k^{-1}(z_{ki+1}\ldots z_{ki + k}) \right)\wedge \neg\left(\bigwedge_{i = 4}^7\thr_k^{-1}(z_{ki+1}\ldots z_{ki+k}) 	\right).
\]
\end{definition}
The reduction relies on the assumption that for all $d > 0$, it is hard to distinguish random instances of $\csp(I_{8k})$ with $n^d$ constraints from satisfiable instances.
Because the input variables to each instance of $\thr_k^{-1}$ above are disjoint, it is sufficient to show that each of the first four groups of $k$ variables cannot support a $3$-wise uniform distribution and consequently neither can $I_{8k}$; therefore, from Theorem~\ref{thm:ntwus-refute-2} we deduce that there exists an efficient algorithm that refutes random instances of $\csp(I_{8k})$ with $\wt{O}(n^{3/2})$ constraints with high probability.
Daniely et al.\ define a \pw uniform distribution supported on $I_{8k}$ as well as a \pw uniform distribution supported on $\thr_k^{-1}$, so $t=3$ is optimal.
\begin{theorem}
Assume $k \geq 5$ and $k$ is odd. There exist $\delta = \delta(k) > 0$ where $\delta$ is $\Omega(k^{-4})$ and a degree-$3$ multilinear polynomial $\calQ:\{-1, 1\}^k \rightarrow \R$ that $\delta$-separates $\thr_k^{-1}$.
Consequently, $\thr_k^{-1}$ does not support a $3$-wise uniform distribution.
\end{theorem}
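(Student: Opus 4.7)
The plan is to construct, for each odd $k \geq 5$, an explicit degree-$3$ symmetric multilinear polynomial $\calQ:\{-1,1\}^k\to\R$ that $\Omega(k^{-4})$-separates $\thr_k^{-1}$; the non-$3$-wise-supporting conclusion then follows immediately from Lemma~\ref{lem:poly-iff-no-dist}. Since $\thr_k^{-1}$ is symmetric, by averaging over the action of $S_k$ on coordinates (which preserves both values on the satisfying set and the Fourier condition $\wh{\calQ}(\emptyset)=0$) I can assume $\calQ$ itself is symmetric. Lemma~\ref{lem:make-univariate} with $d=0$ then reduces the task to finding a univariate cubic $Q_u(s)=\rho s^3+\sigma s^2+\tau s+\mu$ subject to the single compatibility constraint $\mu=-k\sigma$; every such $Q_u$ arises from some symmetric multilinear $Q$, and unbiasedness is automatic because each basis element $\sum_{|T|=i}z^T$ has zero uniform mean for $i\geq 1$. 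The goal is then to make $Q_u$ positive on the satisfying set $\{-1,1,3,\ldots,k\}$ and not too negative on the unsatisfying set $\{-k,-k+2,\ldots,-3\}$.

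I would use the ansatz $Q_u(s)=(s+2)\,R(s)$ with $R$ an everywhere-positive quadratic. Since $s+2>0$ exactly on the satisfying odd values and $s+2<0$ exactly on the unsatisfying odd values, this factorization automatically yields the correct sign pattern. Writing $R(s)=r_2s^2+r_1s+r_0$ and expanding $(s+2)R(s)$, the constraint $\mu=-k\sigma$ becomes the single linear relation $r_0=-\tfrac{k}{2}(r_1+2r_2)$. Substituting into the strict-positivity condition $r_1^2<4r_0r_2$ (with $r_2>0$) produces an interval of feasible $r_1$ for every $k\geq 5$; a concrete clean choice is $r_2=1$ and $r_1=-k$, giving $r_0=\tfrac{k(k-2)}{2}$ and discriminant $-k(k-4)<0$. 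Hence I take $R(s)=s^2-ks+\tfrac{k(k-2)}{2}$, which is strictly positive for all real $s$ whenever $k\geq 5$.

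It then suffices to estimate $\theta_1:=\min_{s\in\{-1,1,\ldots,k\}}Q_u(s)$ and $\theta_0:=\min_{s\in\{-k,\ldots,-3\}}Q_u(s)$ and apply Lemma~\ref{lem:scaleQ}, which produces a $\tfrac{\theta_1}{\theta_1-\theta_0}$-separating polynomial. A short calculation shows $\theta_1=\Omega(1)$ (in fact $\Theta(k^2)$ for $k$ large, dominated by the value at $s=-1$) and $\theta_0=-O(k^3)$ (dominated by the value at $s=-k$), so $\delta=\theta_1/(\theta_1-\theta_0)=\Omega(1/k)$, comfortably meeting the claimed $\Omega(k^{-4})$ bound. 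The main conceptual obstacle is the compatibility constraint $\mu=-k\sigma$: it rules out naive choices, since any symmetric multilinear polynomial of degree at most $2$ gives $Q_u(s)=As+\tfrac{B}{2}(s^2-k)$, which a direct check shows cannot be simultaneously positive at $s=\pm 1$ and $s=k$; moreover, a degree-$3$ polynomial without the $\sum z_iz_j$ term yields an odd $Q_u$, which cannot be positive at both $s=-1$ and $s=1$. So the quadratic symmetric term is genuinely necessary, and the $(s+2)R(s)$ factorization is a clean way to satisfy the compatibility constraint while respecting the sign requirements.
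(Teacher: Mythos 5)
Your construction is correct and takes a genuinely different, arguably cleaner, route than the paper. The paper exhibits the specific polynomial
$Q(z) = (k^2-k-1)\sum_{|T|=1}z^T + (1-k)\sum_{|T|=2}z^T + (1+k)\sum_{|T|=3}z^T$,
passes to the univariate $Q_u$ via Lemma~\ref{lem:make-univariate}, and then shows $Q_u'$ is a sum of manifestly nonnegative terms, so that $Q_u$ is monotone increasing; this lets the argument read off $\theta_1 = Q_u(-1)=1$ and $\theta_0 = Q_u(-k) = -\Theta(k^4)$, giving $\delta = \frac{6}{k^4+7k^3-13k^2-k+6}=\Theta(k^{-4})$. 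You instead impose the factorization $Q_u(s)=(s+2)R(s)$ with $R$ an everywhere-positive quadratic, so the sign of $Q_u$ on odd integers automatically matches the indicator of $s \geq -1$ without any monotonicity analysis; the single compatibility constraint from Lemma~\ref{lem:make-univariate} (the constant term $\mu$ equals $-k$ times the $s^2$-coefficient $\sigma$) becomes the linear relation $r_0 = -\tfrac{k}{2}(r_1+2r_2)$, and your choice $r_2=1$, $r_1=-k$, $r_0=\tfrac{k(k-2)}{2}$ gives discriminant $-k(k-4)<0$ for $k\geq 5$, so positivity of $R$ is immediate. Since $\theta_1>0$ is automatic from the ansatz (you have $Q_u(s)=(s+2)R(s)>0$ for every $s\geq -1$), and $Q_u$ is increasing on $(-\infty,-1]$ for all $k\geq 5$ (one can check $Q_u'(-1)>0$ and that any real critical points of $Q_u'$ lie at $s>-1$), the bound $\theta_0=Q_u(-k)=\tfrac{(2-k)(5k^2-2k)}{2}=-\Theta(k^3)$ is valid. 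This buys you a better separation, $\delta=\Theta(1/k)$ asymptotically rather than $\Theta(1/k^4)$, and a more systematic design principle: factor out $(s+2)$ so positivity of $R$ encodes the entire sign condition. The one place you should be careful in a full write-up is the claim that $\theta_1$ is ``dominated by the value at $s=-1$'': for $k=5$ the minimum on the satisfying set occurs at $s=3$ (giving $\theta_1=7.5$, not $Q_u(-1)=13.5$), because $Q_u$ has a local dip near $s\approx 2.35$; this doesn't affect the conclusion $\theta_1=\Omega(1)$, but it does mean the minimum is not always at the boundary point, so the final statement should bound $\theta_1$ by taking the minimum over the finitely many odd integers near the local minimum of $Q_u$ rather than only evaluating at $s=-1$.
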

\begin{proof}
Let
\[
Q(z) = (k^2 - k- 1)\sum_{\substack{T \subseteq [n]\\ |T| = 1}}z^T+ (1-k)\sum_{\substack{T \subseteq [n]\\ |T| = 2}}z^T + (1+k)\sum_{\substack{T \subseteq [n]\\ |T| = 3}}z^T
\]
and define $Q_u:\R \rightarrow \R$ as follows:
\begin{align*}
Q_u(s) &=  \frac{1+k}6s^3 + \left(\frac{1-k}2 \right)s^2 + \left(k^2-k-1+\frac{1+k}6\cdot(-3k+2)\right)s - \frac{(1-k)k}2\\
&=  \frac{1+k}6s^3 + \left(\frac{1-k}2 \right)s^2 + \left(\frac{3k^2-7k-4}{6}\right)s - \frac{(1-k)k}2.
\end{align*}
Then by Lemma~\ref{lem:make-univariate}, for all $z \in \kpm$, $Q(z) = Q_u(S_z)$. It  therefore suffices to lower bound $Q_u(s)$ both when $s \geq -1$ and for all $s \in [-k, k]$.

First we show that $Q_u$ is monotonically increasing in $s$.
\begin{align*}
\frac{dQ_u}{ds} &= \tfrac{k+1}{2}s^2 + (1-k)s + \tfrac{3k^2-7k-4}{6}\\\
&= \tfrac 16\left[(k-4)\left(3(s-1)^2 +\tfrac23 + 3k\right) + 15\left(\left(s-\tfrac35\right)^2 + \tfrac{53}{75}\right) \right],
\end{align*}
which is evidently positive for $k \geq 5$.

Because $Q$ is monotonically increasing in $s$, $Q_u(s) \geq Q_u(-k)$ for all $s \in [-k, k]$.
\begin{align}
Q_u(-k) &= \frac{-k-1}6k^3 + \left(\frac{1-k}2 \right)k^2 - \left(\frac{3k^2-7k-4}{6}\right)k - \frac{(1-k)k}2\nonumber\\
&= -\tfrac 16\left[k^4 +7k^3 -13k^2 - k \right]\label{eq:M-neg-thres},\\
&= -\tfrac{1}{6}\left[ k(k-2)(k^2+9k +5) + 9k \right],
\end{align}
which is clearly negative for $k \geq 5$.
Now it just remains to lower-bound $Q_u(s)$ for $s \geq -1$.
Again, since $Q_u$ is monotonically increasing in $s$, we use the value $Q_u(-1)$:
\begin{align*}
Q_u(-1) &= \tfrac{-k-1}6 + \left(\tfrac{1-k}2 \right) - \left(\tfrac{3k^2-7k-4}{6}\right) - \tfrac{(1-k)k}2 = 1.\\
\end{align*}
By applying Lemma~\ref{lem:scaleQ}, there exists an unbiased multilinear polynomial $\calQ: \kpm \rightarrow \R$ of degree $3$ that $\frac{6}{k^4 +7k^3 -13k^2 - k + 6}$-separates $\thr_k^{-1}$.
\end{proof}
Because $\textrm{VAR}_0(I_{8k})$ is evidently $\Omega(k)$ and $\overline{I_{8k}} < \tfrac17$ for all $k \geq 5$, we have the following Corollary.
\begin{corollary}
For odd $k \geq 5$ and sufficiently large $n$, there exists an efficient algorithm that distinguishes between random instances of $\csp(I_{8k})$ with $\wt O(n^{3/2})$ constraints and satisfiable instances with high probability.
\end{corollary}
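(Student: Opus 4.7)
The plan is to lift the non-$3$-wise-supporting property of $\thr_k^{-1}$, just established in the preceding theorem, to the $8k$-ary predicate $I_{8k}$, and then invoke Theorem~\ref{thm:ntwus-refute-2} with $t=3$.

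The key structural observation I would use is that $I_{8k}$ contains as a top-level subformula the conjunction $\bigwedge_{i = 0}^3\thr_k^{-1}(z_{ki+1}\ldots z_{ki + k})$, where the four copies of $\thr_k^{-1}$ act on disjoint blocks $B_0, \ldots, B_3$ of $k$ variables each. If $\calD$ is any $3$-wise uniform distribution on $\{-1,1\}^{8k}$ supported on $I_{8k}^{-1}(1)$, then for each block $B_i$ the marginal $\calD|_{B_i}$ is itself $3$-wise uniform (marginalizing a $t$-wise uniform distribution onto a coordinate subset preserves $t$-wise uniformity within that subset) and must be supported on $(\thr_k^{-1})^{-1}(1)$. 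By the preceding theorem, no such block marginal exists. More quantitatively, any $3$-wise uniform $\calD$ on $\{-1,1\}^{8k}$ has block-$0$ marginal that places mass at least $\delta = \Omega(k^{-4})$ on $(\thr_k^{-1})^{-1}(0)$, so $\calD$ itself places mass at least $\delta$ on assignments that violate the $i=0$ conjunct and therefore violate $I_{8k}$. Consequently $I_{8k}$ is $\Omega(k^{-4})$-far from supporting a $3$-wise uniform distribution.

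With this in hand, I would apply Theorem~\ref{thm:ntwus-refute-2} with $t = 3$ and $\gamma = \delta/2$ (which is $\Omega(k^{-4})$, hence a positive constant once $k$ is fixed). This yields an efficient algorithm that, on input $\calI \sim \calF_{I_{8k}}(n, p)$ with $\expm \geq \wt O(n^{3/2})$ constraints, certifies $\val(\calI) \leq 1 - \delta/2$ with high probability; the hidden constants in $\wt O(\cdot)$ absorb the $k^{O(t)}/\gamma^2$ factor from Theorem~\ref{thm:ntwus-refute-2} since $k$ is fixed. Because this certification is sound, the algorithm can never produce the same certificate on a satisfiable instance (where $\val(\calI) = 1 > 1 - \delta/2$), giving exactly the desired one-sided distinguisher between satisfiable and truly random instances.

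There is essentially no hard step here; the bulk of the work was done in the preceding theorem on $\thr_k^{-1}$ and in Theorem~\ref{thm:ntwus-refute-2}. The one minor subtlety worth flagging is that the second, negated conjunct $\neg\bigl(\bigwedge_{i = 4}^7 \thr_k^{-1}(\cdots)\bigr)$ of $I_{8k}$ acts on the disjoint blocks $B_4,\ldots,B_7$, so it cannot ``rescue'' a violated $\thr_k^{-1}$ on $B_0,\ldots,B_3$ and plays no role in the marginal argument; this is what justifies treating $I_{8k}$ as effectively inheriting the non-supporting property from a single copy of $\thr_k^{-1}$.
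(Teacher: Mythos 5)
Your proposal is correct and follows essentially the same route the paper takes: exploit the disjointness of the four $\thr_k^{-1}$ blocks in $I_{8k}$ to push the $3$-wise non-supporting property (and its quantitative $\delta = \Omega(k^{-4})$ farness from the preceding theorem, via Lemma~\ref{lem:poly-iff-no-dist}) up to $I_{8k}$, then invoke Theorem~\ref{thm:ntwus-refute-2} with $t=3$. The paper states this reasoning only in passing before the theorem on $\thr_k^{-1}$ and leaves the corollary's proof implicit; you have simply made the block-marginal argument explicit, which is exactly the intended argument.
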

\begin{remark}
$\thr_3^{-1}$ is the same as $3$-OR and $\thr_5^{-1}$ is the same as is the same as $2$-out-of-$5$-SAT, so this approach can be used to $\Omega_k(1)$-refute $3$-SAT instances and $2$-out-of-$5$-SAT instances with $\wt O_k(n^{3/2})$ constraints, which improves upon
the $O(n^{3/2 + \eps})$ constraints required for refutation of $2$-out-of-$5$-SAT in~\cite{GJ02,GJ03}.
\end{remark}

\subsection{Majority and hardness of approximately agnostically learning halfspaces}
Daniely et al.~show that approximate agnostic improper learning of  halfspaces is hard for all approximation factors $\phi \geq 1$ based on the assumption that for all $d > 0 $ and for sufficiently large odd $k$, it is hard to distinguish between random instances of $\csp(\thr_k^{1})$ with $n^d$ constraints and instances with value at least $1- \frac{1}{10\phi}$.  This is based on the fact that $\max_\calD\E_{z \sim \calD}\left[ \thr_1(z)\right] = 1- \frac{1}{k+1}$, where $\calD$ is a pairwise independent distribution on $\kpm$, and applying SRCSP Assumption~2.  Here we show that for odd $k \geq 25$, $\thr_k^1$ is $0.1$-far from supporting a $4$-wise uniform distribution.  The value $0.1$ is not sharp, but is chosen as a compromise between a reasonably large value and a reasonably simple proof.
\begin{theorem}
There exists a degree-$4$ multilinear polynomial $\calQ: \kpm \rightarrow \R$ that $0.1$-separates $\thr_k^1$ for all odd $k \geq 25$.
\end{theorem}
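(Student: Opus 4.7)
My plan is to follow the same template just used for $\thr_k^{-1}$. I construct a symmetric degree-$4$ multilinear polynomial
\[
Q(z) = a\sum_{|T|=1} z^T + b\sum_{|T|=2} z^T + c\sum_{|T|=3} z^T + d\sum_{|T|=4} z^T,
\]
convert it to the univariate polynomial $Q_u(s) = Q(z)|_{S_z = s}$ via Lemma~\ref{lem:make-univariate}, and invoke Lemma~\ref{lem:scaleQ} to rescale into the desired $0.1$-separating polynomial $\calQ$. It therefore suffices to find $(a,b,c,d)$, as explicit functions of $k$, such that
\[
\theta_1 := \min_{s \in \{1, 3, \ldots, k\}} Q_u(s) > 0, \quad \theta_0 := \min_{s \in \{-k, -k+2, \ldots, -1\}} Q_u(s), \quad \theta_0 \geq -9\theta_1.
\]
Expanding $Q_u$ in the Krawtchouk basis $\mathscr{K}_1, \mathscr{K}_2, \mathscr{K}_3, \mathscr{K}_4$ (in the variable $s$, as in the proof of Lemma~\ref{lem:make-univariate}) automatically encodes the unbiasedness condition $\wh{Q}(\emptyset) = 0$ — each Krawtchouk polynomial has mean zero under the uniform distribution on $\{-1,1\}^k$ — and leaves four free scalar parameters to tune.

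To see why all four degrees of freedom are needed, note that a purely odd polynomial $a'\mathscr{K}_1 + c'\mathscr{K}_3$ satisfies $Q_u(-s) = -Q_u(s)$, forcing $\theta_0 = -\max_{s \in \{1, 3, \ldots, k\}} Q_u(s)$; since a cubic grows across $[1, k]$, its $\min/\max$ ratio on $\{1, 3, \ldots, k\}$ is only $O(1/k)$. Adding just $\mathscr{K}_4$ is also insufficient on its own: for instance, $Q_u = \mathscr{K}_1 + d'\mathscr{K}_4$ with $d' = 8/((k-1)(k-3))$ gives $Q_u(1) = 2$ and $Q_u(-1) = 0$, but the interior minimum of $Q_u$ on the negative half occurs near $s \approx -\sqrt{3k-4}$ with value roughly $-\sqrt{3k} - 2$, and so the ratio again vanishes as $k \to \infty$.

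The fix is to engage all four Krawtchouk coefficients simultaneously. The $\mathscr{K}_3$ component is tuned to cancel the linear growth of $\mathscr{K}_1$ on $\{1, 3, \ldots, k\}$, keeping the odd part of $Q_u$ uniformly bounded across the positive-odd range; the $\mathscr{K}_2$ and $\mathscr{K}_4$ components then lift the interior minimum of $Q_u$ on the negative half near $s \approx -\sqrt{3k}$. With the coefficients chosen at suitable scales in $1/k$, one can arrange both $\theta_1 = \Theta(1)$ and $|\theta_0| \leq 9\theta_1$ uniformly in odd $k \geq 25$; Lemma~\ref{lem:scaleQ} then converts the result into a $0.1$-separating polynomial of degree $4$.

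The main obstacle is the explicit parameter tuning and verification: one must exhibit a specific family $(a, b, c, d)$ depending on $k$ and check the pointwise inequalities $Q_u(s) \geq \theta_1$ and $Q_u(s) \geq \theta_0$ at each of the $k + 1$ odd integers in $[-k, k]$ for every odd $k \geq 25$. Since $Q_u$ only needs to be controlled at these discrete points rather than on a continuous interval, the analysis can exploit their $2$-spacing — the continuous minimum near $s = -\sqrt{3k-4}$ is never attained exactly — but the bookkeeping is delicate. The cutoff $k \geq 25$ in the statement presumably reflects exactly the threshold above which a single explicitly chosen family of coefficients meets the $0.1$ bound.
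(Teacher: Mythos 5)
Your plan correctly identifies the paper's template: choose a symmetric degree-$4$ multilinear polynomial, pass to the univariate form $Q_u$ via Lemma~\ref{lem:make-univariate}, establish a positive lower bound $\theta_1$ on the satisfying set $\{s \geq 1\}$ and a (negative) global lower bound $\theta_0$, and apply Lemma~\ref{lem:scaleQ}. The arithmetic observation that $0.1$-separation requires $\theta_0 \geq -9\theta_1$ is also correct, and your heuristic discussion of why a purely odd (degree-$1$ plus degree-$3$) combination or a degree-$1$ plus degree-$4$ combination fails to give $\theta_1/|\theta_0| = \Omega(1)$ is a reasonable sanity check.

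However, there is a genuine gap: the proposal stops short of the actual proof. You never exhibit the coefficients $(a,b,c,d)$, and you never verify $\theta_1 > 0$ and $\theta_0 \geq -9\theta_1$ — you explicitly flag this as ``the main obstacle.'' But that obstacle \emph{is} the theorem; without the construction and verification, nothing is established. The paper's proof chooses $a = \frac{8}{27\sqrt{k}}$, $b = 0$, $c = -\frac{5}{9k^{3/2}}$, $d = \frac{4}{3k^2}$, substitutes $\sigma = s k^{-1/2}$, and then writes $Q_u$ (up to additive constants) as explicit sums of manifestly nonnegative terms to get $Q_u(s) > -\tfrac{8}{9}$ for all real $\sigma$ (using $k \geq 24$) and $Q_u(s) > \tfrac{1}{8}$ for $\sigma \geq 0$; Lemma~\ref{lem:scaleQ} then gives a $\tfrac{9}{73}$-separator, which exceeds $0.1$. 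Note also that the paper's verification is over the \emph{continuum} of $\sigma$, not just the odd integers, so the ``$2$-spacing'' bookkeeping you anticipate having to manage is unnecessary once the coefficients are chosen well; your worry that the interior minimum near $s \approx -\sqrt{3k}$ needs to be dodged via parity is resolved simply by absorbing it into the global bound $\theta_0$.
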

\begin{proof}
Let
\[
Q(z) = \frac{8}{27\sqrt k}\sum_{\substack{T \subseteq [n]\\ |T| = 1}}z^T - \frac{5}{9k^{3/2}}  \sum_{\substack{T \subseteq [n]\\ |T| = 3}}z^T + \frac{4}{3k^2}\sum_{\substack{T \subseteq [n]\\ |T| = 4}}z^T
\]
and let
\begin{align}
Q_u(s) &= \frac{1}{18k^2}s^4 - \frac 5{54k^{3/2}}s^3 + \left(-\frac 1{3k} + \frac4{9k^2}\right)s^2 + \left(\frac{31}{54\sqrt{k}} - \frac{5}{27k^{3/2}}\right)s + \frac{1}{6}-\frac1{3k}~\label{eq:obnoxious-maj-Q}\\
&= \frac{1}{54}\left[\frac3{k^2}s^4 - \frac5{k^{3/2}}s^3 + \left(-\frac{18}{k} + \frac{24}{k^2} \right)s^2 + \left(\frac{31}{\sqrt{k}} - \frac{10}{k^{3/2}}\right)s + 9 - \frac{18}k \right].\nonumber
\end{align}
Then by Lemma~\ref{lem:make-univariate}, for all $z \in \kpm$, $Q(z) = Q_u(S_z)$.
To simplify $Q$, let $\sigma = sk^{-1/2}$.  Then we can rewrite~\eqref{eq:obnoxious-maj-Q} as follows:
\begin{equation}
Q_u(s) = \tfrac{1}{54}\left[3\sigma^4 - 5\sigma^3 + \left(-{18} + \tfrac{24}{k} \right)\sigma^2 + \left(31 - \tfrac{10}{k}\right)\sigma + 9 - \tfrac{18}k \right]. \label{eq:maj-q-simple}
\end{equation}
First we lower-bound $Q_u(s)$ for all $\sigma \in \R$ using the following expression, which is equivalent to~\eqref{eq:maj-q-simple} by some algebra.
\begin{align*}
Q_u(s)&= \tfrac{1}{54}\left[3(\sigma+\tfrac{29}{18})^2(\sigma - \tfrac{22}{9})^2 + \tfrac{383}{108}\left(\sigma + \tfrac{1832}{1149} \right)^2 - \tfrac{38987378}{837621} +\tfrac{24}{k}\left((\sigma - \tfrac{5}{24})^2 - \tfrac{457}{576}\right)\right]\\
&> \tfrac{1}{54}\left[3(\sigma+\tfrac{29}{18})^2(\sigma - \tfrac{22}{9})^2 + \tfrac{383}{108}\left(\sigma + \tfrac{1832}{1149} \right)^2 - 47 +\tfrac{24}{k}\left(-\tfrac{457}{576}\right)\right] > -\tfrac{48}{54} = -\tfrac{8}{9},
\end{align*}
where the last inequality follows from the fact that $k \geq 24$ and the first two terms are always nonnegative.

Next we lower-bound $Q_u(s)$ for $s > 0$.
\begin{align*}
Q_u(s) &= \tfrac{1}{54}\left[3\sigma^4 - 5\sigma^3 + \left(-{18} + \tfrac{24}{k} \right)\sigma^2 + \left(31 - \tfrac{10}{k}\right)\sigma + 9 - \tfrac{18}k \right]\\
&= \tfrac{1}{54}\left[3(\sigma-\tfrac14)^2(\sigma-\tfrac{25}{12})^2 + \tfrac{41}{120}((\sigma-\tfrac{839}{410})^2+\tfrac{21507}{1344800} + \tfrac{27}{4} + 9\sigma(\sigma-\tfrac{21}{10})^2\right] > \tfrac{1}{8}.
\end{align*}
Applying Lemma~\ref{lem:scaleQ}, there exists $\calQ: \kpm \rightarrow \R$ such that $\calQ$ has degree $4$ and $\calQ$ $\tfrac9{73}$-separates $\thr_k^1$. \qedhere

\end{proof}

\begin{corollary}
For sufficiently large $n$ and $k$, there exists an efficient algorithm that distinguishes between random instances of $\csp(\thr_k^1)$ with $\wt O(n^{2})$ constraints and instances with value at least $0.9$ with high probability.
\end{corollary}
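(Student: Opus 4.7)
The plan is to derive this corollary as an immediate consequence of the preceding theorem combined with Theorem~\ref{thm:ntwus-refute-2}. The preceding theorem constructs an explicit degree-$4$ multilinear polynomial $\calQ$ that $0.1$-separates $\thr_k^1$ for all odd $k \geq 25$; by Lemma~\ref{lem:poly-iff-no-dist} (applied in the ``only if'' direction), this is equivalent to the statement that $\thr_k^1$ is $0.1$-far from supporting a $4$-wise uniform distribution. The proof in the theorem actually yields the slightly stronger separation constant $\tfrac{9}{73}$, which gives a bit of slack we can exploit below.

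With this structural fact in hand, I would plug directly into Theorem~\ref{thm:ntwus-refute-2} with the parameters $t = 4$, $\delta = \tfrac{9}{73}$, and some small positive constant $\gamma$ chosen so that $1 - \delta + \gamma < 0.9$; for instance $\gamma = 0.01$ suffices, since $1 - \tfrac{9}{73} + 0.01 < 0.89$. For $k$ fixed (and sufficiently large), the required number of constraints from Theorem~\ref{thm:ntwus-refute-2} is
\[
\expm \;\geq\; \frac{k^{O(4)} \, n^{4/2} \log^5 n}{\gamma^2} \;=\; \wt{O}(n^{2}),
\]
where the hidden constants depend only on $k$ and on $\gamma$. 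The theorem then gives an efficient algorithm that, on a random instance $\calI \sim \calF_{\thr_k^1}(n,p)$ with this many constraints, produces with high probability a certificate of the form ``$\val(\calI) \leq 1 - \tfrac{9}{73} + \gamma$,'' which in particular implies $\val(\calI) < 0.9$.

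Finally, I would point out that this certificate is itself the desired distinguisher: by the soundness clause in the definition of a refutation algorithm, such a certificate is never produced on any instance with $\val(\calI) \geq 0.9$, so outputting ``random'' whenever the algorithm succeeds and ``value $\geq 0.9$'' otherwise gives the required distinguisher (with one-sided error on the random side, and zero error on the high-value side). There is essentially no obstacle here; the only nontrivial content has already been packaged into the preceding theorem's polynomial construction and into Theorem~\ref{thm:ntwus-refute-2}. The falsification of SRCSP Assumption~2 for $\Maj_k = \thr_k^1$ then follows because the assumption posited hardness of exactly this distinguishing task for $m = n^d$ with arbitrary $d \in \N$, whereas our algorithm succeeds already at $d = 2 + o(1)$.
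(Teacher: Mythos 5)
Your proposal is correct and matches the paper's intended argument: the corollary is a direct plug-in of the preceding theorem (which gives a degree-$4$, $\tfrac{9}{73}$-separating polynomial for $\thr_k^1$, hence $\tfrac{9}{73}$-far from $4$-wise supporting via Lemma~\ref{lem:poly-iff-no-dist}) into Theorem~\ref{thm:ntwus-refute-2} with $t=4$ and a small $\gamma$ so that $1-\tfrac{9}{73}+\gamma < 0.9$. You are also right that the slack above $0.1$ is genuinely needed here, and that soundness of the refutation algorithm is exactly what makes it a one-sided distinguisher.
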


\subsection{Predicates satisfied by strings with Hamming weight at least $-\Theta(\sqrt k)$.}
In light of the fact that the threshold based predicates above are not $4$-wise supporting, one may attempt to find another threshold-based predicate.  Here we show that a symmetric threshold predicate that is $4$-wise supporting must be satisfied by all strings with Hamming weight at least $-\tfrac{\sqrt k}2$.  Furthermore, there exists a symmetric threshold predicate that is $4$-wise supporting with a threshold of $-\Theta(\sqrt k)$ and we sketch its construction.

We also consider the predicate $\thr_k^{-\tfrac12\sqrt k}$.
While it is not used in~\cite{daniely-hardness-of-learning}, we show that it does not support a 4-wise uniform distribution in the interest of obtaining a tighter bound for the Hamming weight above which an unbiased, symmetric predicate is not $4$-wise supporting.
The threshold of $-\tfrac12\sqrt{k}$ is particularly interesting in that it asymptotically matches the threshold $\theta$ below which $\thr_k^{\theta}$ is $4$-wise supporting.
\begin{theorem}
Assume $k \geq 99$ and $k$ is odd.  Then there exists a degree-4 polynomial $\calQ: \kpm \rightarrow \R$ that $\tfrac1{225}$-separates $\thr_k^{-\tfrac12\sqrt k}$.  Consequently, $\thr_k^{-\tfrac12\sqrt{k}}$ is $\tfrac1{255}$-far from $4$-wise supporting.
\end{theorem}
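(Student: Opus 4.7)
The plan is to mirror the argument used for $\Maj_k = \thr_k^1$: I will exhibit a symmetric, unbiased multilinear polynomial $Q$ of degree $4$, use Lemma~\ref{lem:make-univariate} to convert it to a univariate polynomial $Q_u$ in the Hamming weight $s = S_z$, bound $Q_u(s)$ from below separately on the ``satisfying'' half-line $s \geq -\tfrac{1}{2}\sqrt{k}$ and on the full range $s \in [-k,k]$, and finally invoke Lemma~\ref{lem:scaleQ}. The separation constant $\tfrac{1}{225}$ will arise as the ratio $\theta_1/(\theta_1 - \theta_0)$ of the two bounds, and the ``$\Rightarrow$'' direction of Lemma~\ref{lem:poly-iff-no-dist} then yields the $\tfrac{1}{225}$-farness from $4$-wise supporting.

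Following the pattern established for $\thr_k^1$, the coefficients of $Q$ will be chosen to scale as $a \sim k^{-1/2}$, $c \sim k^{-3/2}$, $d \sim k^{-2}$ (with $b=0$), so that after applying Lemma~\ref{lem:make-univariate} and substituting $\sigma = s/\sqrt{k}$, the univariate polynomial takes the form
\begin{equation*}
Q_u(s) \;=\; g(\sigma) \;+\; \tfrac{1}{k}\,r_1(\sigma) \;+\; \tfrac{1}{k^{1/2}}\,r_2(\sigma),
\end{equation*}
where $g$ is a fixed degree-$4$ polynomial in $\sigma$ independent of $k$, and $r_1, r_2$ are fixed error polynomials. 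The coefficients of $Q$ will be tuned so that $g(\sigma)$ is strictly positive on $\sigma \geq -\tfrac{1}{2}$ (so $\theta_1 > 0$), has positive leading coefficient (so that the global minimum is attained at bounded $|\sigma|$ and $\theta_0$ is a finite negative number), and the ratio $\theta_1/(\theta_1 - \theta_0)$ equals $\tfrac{1}{225}$ up to corrections vanishing as $k \to \infty$. The threshold $k \geq 99$ should be precisely what is needed for the $r_1, r_2$ error terms to be dominated by the gap between $g$ and its minimum so that both bounds survive.

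The key steps, in order, are: (i) present explicit coefficients for $Q$ and write out $Q_u$ via Lemma~\ref{lem:make-univariate}, verifying $\widehat{Q}(\emptyset) = 0$; (ii) derive a sum-of-squares-style decomposition of $Q_u(s) - \theta_1$, multiplied by $(2\sigma + 1)$ or a nonnegative factor that vanishes exactly at $\sigma = -\tfrac12$, to conclude $Q_u(s) \geq \theta_1$ for all $s \geq -\tfrac12\sqrt{k}$; (iii) derive a companion SOS decomposition of $Q_u(s) - \theta_0$ valid on all of $[-k,k]$, using the positive leading $\sigma^4$ coefficient to dominate the error terms for $k \geq 99$; (iv) apply Lemma~\ref{lem:scaleQ} with the computed $\theta_0, \theta_1$ to obtain $\calQ$ which $\tfrac{1}{225}$-separates $\thr_k^{-\tfrac12 \sqrt{k}}$.

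The main obstacle is step (ii)--(iii): calibrating the coefficients so that both SOS decompositions go through with the specific target ratio $\theta_1/(\theta_1-\theta_0) = \tfrac{1}{225}$. This is delicate because the satisfying region now \emph{extends below zero} to $\sigma = -\tfrac12$, so the polynomial $g$ must be positive on an interval strictly larger than $[0,\infty)$, and one must control how much this forces $g$ to dip on the complementary region $\sigma < -\tfrac12$. The value $\sigma = -\tfrac12$ is essentially the boundary value at which a degree-$4$ unbiased polynomial can still enforce positivity on the majority of inputs while keeping the global minimum a bounded constant, which is why a constant-factor separation $\tfrac{1}{225}$ is attainable here, whereas pushing the threshold further below $-\Theta(\sqrt{k})$ would collapse into the $4$-wise-supporting regime.
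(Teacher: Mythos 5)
Your overall strategy matches the paper's exactly: choose a symmetric degree-$4$ polynomial with coefficients scaling as $a \sim k^{-1/2}$, $b \sim k^{-1}$, $c \sim k^{-3/2}$, $d \sim k^{-2}$, rescale $\sigma = s/\sqrt{k}$ to get a $k$-independent quartic $g(\sigma)$ plus an $O(1/k)$ error, factor out $(\sigma + \tfrac12)$ times a nonnegative quantity to prove $Q_u \geq \theta_1 > 0$ on the satisfying region, bound $Q_u \geq \theta_0$ globally via a sum-of-squares-plus-constant identity, and plug $\theta_0, \theta_1$ into Lemma~\ref{lem:scaleQ} to get $\theta_1/(\theta_1-\theta_0) = \tfrac{1}{225}$. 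This is precisely what the paper does (with $\theta_1 = \tfrac{1}{48}$ and $\theta_0 = -\tfrac{14}{3}$, so $\tfrac{1/48}{1/48+224/48} = \tfrac1{225}$), including the $k\ge 99$ threshold being driven by the need to dominate the $\tfrac{1}{k}$ error against a $\tfrac{1}{24}$ slack.

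Two concrete errors to flag. First, you assert $b = 0$, but the paper's construction uses $b = \tfrac12 k^{-1}$; this nonzero $b$ is what moves the limiting $\sigma^2$ coefficient from $-2$ to $-\tfrac74$ and the limiting constant term from $1$ to $\tfrac34$, and without it the target ratio $\tfrac1{225}$ will not come out as stated. Second, you posit an error decomposition $\tfrac1k r_1(\sigma) + \tfrac{1}{k^{1/2}} r_2(\sigma)$, but since $a,b,c,d$ are chosen as \emph{exact} multiples of $k^{-1/2}, k^{-1}, k^{-3/2}, k^{-2}$ with no lower-order corrections, the error is purely $\tfrac1k\bigl(\tfrac83\sigma^2 + \tfrac23\sigma - 2\bigr)$; there is no $k^{-1/2}$ piece, and its absence is what makes the $k \geq 99$ bound clean. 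Fix the $b$ coefficient and simplify your expected error form, and your outline carries through to the paper's proof.
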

\begin{proof}
Define $\Qdef$ and $\calQ_u:\R \rightarrow \R$ as follows:
\[
Q(z) = \tfrac32k^{-1/2}\sum_{\substack{T \subseteq [n]\\ |T| = 1}}z^T +\tfrac12k^{-1}\sum_{\substack{T \subseteq [n]\\ |T| = 2}}z^T + 2k^{-3/2}\sum_{\substack{T \subseteq [n]\\ |T| = 3}}z^T + 8k^{-2}\sum_{\substack{T \subseteq [n]\\ |T| = 4}}z^T
\]
\[
Q_u(s) = \tfrac{s^4}{3k^2} +\tfrac{s^3}{3k^{3/2}}+\left(-\tfrac{7}{4k} + \tfrac{8}{3k^2} \right)s^2 + \left(\tfrac{1}{2k^{1/2}} +\tfrac{2}{3k^{3/2}} \right)s + \tfrac{3}{4} - \tfrac{2}{k}
\]
Again, for simplicity we set $\sigma = sk^{-1/2}$ and obtain the following expression:
\begin{equation}
Q_u(s) = \tfrac{1}{3}\sigma^4 +\tfrac{1}{3}\sigma^3 -\tfrac74\sigma^2+\tfrac12\sigma+\tfrac34 + \tfrac1k\left(\tfrac83\sigma^2 + \tfrac23\sigma - 2\right).
\end{equation}
Observe that for $k \geq 99$, $\tfrac1k\left(\tfrac83\sigma^2 + \tfrac23\sigma - 2\right) = \tfrac2{3k}\left(\left(2\sigma -\tfrac14\right)^2 - \tfrac{49}{16}\right) >  -\tfrac1{48}$.
We now lower-bound the value of $Q_u$ for $s \geq -\tfrac12k^{1/2}$, or equivalently, $\sigma \geq -\tfrac12$:
\begin{align*}
Q_u(s) &= \tfrac13\sigma^4 + \tfrac13\sigma^3-\tfrac74\sigma^2+\tfrac12\sigma+\tfrac34 +  \tfrac1k\left(\tfrac83\sigma^2 + \tfrac23\sigma - 2\right)\\
&=\tfrac13\left(\sigma - \tfrac{35}{29} \right)^2\left(\sigma + \tfrac12\right)\left(\sigma + \tfrac{200}{69}\right) + \tfrac{61}{12006}\left(\sigma + \tfrac12\right)\left(\left(\sigma+ \tfrac{4631}{3538} \right)^2 +\tfrac{1526073}{12517444}\right)  + \tfrac1k\left(\tfrac83\sigma^2 + \tfrac23\sigma - 2\right) + \tfrac1{24}.\\
\intertext{The first two terms are clearly nonnegative when $\sigma \geq -\tfrac12$, so}
&> \tfrac1{24} - \tfrac1{48} = \tfrac1{48}.
\end{align*}
We also show that $Q_u(s) \geq -\tfrac{14}{3}$ for all $s \in \R$.
\begin{align*}
Q_u(s) &= \tfrac13\sigma^4 + \tfrac13\sigma^3-\tfrac74\sigma^2+\tfrac12\sigma+\tfrac34 +  \tfrac1k\left(\tfrac83\sigma^2 + \tfrac23\sigma - 2\right)\\
&= \left(\sigma+\tfrac{19}9\right)^2\left(\sigma -\tfrac{29}{18} \right)^2 + \tfrac{211}{486}\left(\sigma + \tfrac{397}{211}\right)^2 +\tfrac{195823}{8306226} + \tfrac1k\left(\tfrac83\sigma^2 + \tfrac23\sigma - 2\right) - \tfrac{14}3\\
&\geq \left(\sigma+\tfrac{19}9\right)^2\left(\sigma -\tfrac{29}{18} \right)^2 + \tfrac{211}{486}\left(\sigma + \tfrac{397}{211}\right)^2 + \tfrac{1079081}{365473944}  -\tfrac{14}3.
\end{align*}
The first three terms are always nonnegative, so $Q_u(s) \geq -\tfrac{14}3$.

Applying Lemma~\ref{lem:scaleQ}, $T_k^{-\tfrac12\sqrt{k}}$ is $\tfrac1{255}$-far from supporting a $4$-wise uniform distribution.
\end{proof}
Now we demonstrate that there exists a $4$-wise uniform distribution supported on $\thr_k^{1 - 2\sqrt{k+1}}$ when $k = 2^m - 1$ for some integer $m \geq 3$.
\begin{claim}
Assume $k = 2^m - 1$ for some integer $m \geq 3$.  Then there exists a $4$-wise uniform distribution supported only on $z \in \kpm$ such that $S_z \geq 1-2\sqrt{k+1}$.
\end{claim}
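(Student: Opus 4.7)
The plan is to realize the desired distribution as (a coordinate-wise negation of) the uniform distribution on codewords of the dual of the $2$-error-correcting binary BCH code of length $k = 2^m - 1$, and then to bound the $\pm 1$-weight $S_z$ of each codeword via the Carlitz--Uchiyama character-sum inequality.  Concretely, let $q = 2^m$, fix a primitive $\alpha \in \mathbb{F}_q^*$, and let $\mathrm{Tr}: \mathbb{F}_q \to \mathbb{F}_2$ denote the absolute trace.  I would sample $(a,b) \in \mathbb{F}_q^2$ uniformly and output $z = (z_1,\ldots,z_k) \in \{-1,1\}^k$ via
\[
 z_i \;=\; -(-1)^{\mathrm{Tr}(a\alpha^{i-1} + b\alpha^{3(i-1)})}, \qquad i = 1,\ldots,k.
\]
This distribution is $4$-wise uniform: the $2m$-dimensional code spanned by the generators $(\alpha^{i-1}, \alpha^{3(i-1)})_i$ is the dual of the binary BCH code of designed distance $5$, which has true minimum distance at least $5$; thus uniformly random codewords give a $4$-wise uniform distribution on $\mathbb{F}_2^k$, and coordinate-wise negation manifestly preserves $4$-wise uniformity on $\{-1,1\}^k$.

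The main computation is the observation that, with $f(x) := ax + bx^3$ and $S_f := \sum_{x \in \mathbb{F}_q} (-1)^{\mathrm{Tr}(f(x))}$, the identity $f(0) = 0$ gives $S_z = 1 - S_f$.  The Carlitz--Uchiyama bound says that whenever $f \in \mathbb{F}_q[x]$ has odd degree $d \geq 3$ and is not of the form $g^2 + g + c$, one has $|S_f| \leq (d-1)\sqrt{q}$.  In characteristic $2$ no polynomial of the form $g^2 + g + c$ contains odd-degree monomials of degree $\geq 3$, so whenever $b \neq 0$ the bound yields $|S_f| \leq 2\sqrt{k+1}$ and hence $S_z \geq 1 - 2\sqrt{k+1}$, exactly the inequality required.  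When $b = 0$ and $a \neq 0$, $S_f$ vanishes by balance of the nontrivial $\mathbb{F}_2$-linear trace, giving $S_z = 1$, which trivially satisfies the bound.

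The one exceptional case $(a,b) = (0,0)$ yields the all-$(-1)$ codeword with $S_z = -k$ and violates the desired inequality; this event carries probability only $(k+1)^{-2}$.  To finish the sketch I would re-distribute this tiny mass onto good codewords while preserving all moments of order $\leq 4$.  Existence of such a perturbation follows by an LP-feasibility argument of the same flavour as Lemma~\ref{lem:poly-iff-no-dist}; an explicit corrective mixture can be assembled, for instance, from a coset of the dual-BCH code that avoids $-\mathbf{1}$ together with a suitable compensating convex combination of dual-BCH codewords matching the $\leq 4$-th moments of $-\mathbf{1}$.  This last combinatorial clean-up of the single degenerate codeword is the main obstacle and is presumably why the paper states the result only as a sketch.
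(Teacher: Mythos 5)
Your core machinery (the dual of the designed-distance-$5$ BCH code, plus the Carlitz--Uchiyama/Weil bound for cubic exponential sums) is exactly the paper's. The trouble is the extra global negation: you set $z_i = -(-1)^{\mathrm{Tr}(\cdots)}$, whereas the paper (implicitly) uses $z_i = (-1)^{\mathrm{Tr}(\cdots)}$. Under the paper's convention the zero codeword $(a,b) = (0,0)$ maps to the all-$(+1)$ string with $S_z = k$, which trivially satisfies the bound; the Carlitz--Uchiyama \emph{upper} bound on Hamming weight then handles every nonzero codeword, yielding $S_z \geq -1 - 2\sqrt{k+1}$ for every support point with no exceptional case to repair. (The paper's stated $1 - 2\sqrt{k+1}$ versus the $-1 - 2\sqrt{k+1}$ that its own derivation actually delivers is an off-by-two slip in the paper; it is immaterial to the ensuing remark, which only needs a $-\Theta(\sqrt{k})$ threshold.) Your negation flips the sign of $S_z$ across the board, turning the paper's harmless outlier $S_z = k$ into the catastrophic $S_z = -k$.

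The proposed redistribution step then cannot save the argument. To remove probability mass from $-\mathbf{1}$ and re-add it elsewhere while preserving all moments of order $\leq 4$, you would need a probability distribution $\rho$ supported on other strings with $\E_{\rho}[z^S] = (-1)^{|S|}$ for every nonempty $S$ with $|S| \leq 4$. Already the $|S|=1$ constraints $\E_{\rho}[z_i] = -1$ force $z_i = -1$ $\rho$-almost surely for every $i$, so $\rho$ would have to be the point mass at $-\mathbf{1}$ itself --- exactly the point you are trying to avoid. No coset trick or compensating convex combination evades this: the first-moment profile of $-\mathbf{1}$ cannot be reproduced by any distribution that avoids $-\mathbf{1}$. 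So the ``main obstacle'' you identify does not exist in the paper's proof, and the workaround you sketch is infeasible; the correct fix is simply to drop the extra minus sign, after which the claim (up to the paper's benign $\pm 2$ slack) follows with no clean-up step at all.
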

\begin{proof}
Let $\calC$ be a binary BCH code of length $k$ with designed distance $2\iota + 1$ and let $\calC^{\bot}$ be its dual.
Then the uniform distribution on the codewords of $\calC$ is $2\iota$-wise uniform~\cite{alon1986fast, MS77}; see also~\cite[Ch 16.2]{alon2004probabilistic}.

Let $c = c_1\ldots c_k$  be a codeword of $\calC^{\bot},$ where each $c_i \in \{-1, 1\}$.  The Carlitz-Uchiyama bound~\cite[page 280]{MS77} states that for all $c \in \calC^{\bot}$,
\[
\sum_{i = 1}^{k}\tfrac12(1 - c_i) \leq \tfrac{k + 1}2 + (\iota - 1)\sqrt{k + 1}.
\]
Observe that the quantity $\tfrac12(1 - c_i)$ simply maps $c_i$ from $\{-1, 1\}$ to $\{0, 1\}$ so that we can write the bound to match the presentation in~\cite{MS77}.
Therefore,
\begin{align*}
S_c &= \sum_{i = 1}^{k}c_i\\
&= k - 2\sum_{i = 1}^{k}\tfrac12(1-c_i)\\
&\geq k - (k + 1) - (2\iota - 2)\sqrt{k + 1}\\
&= -1 -(2\iota - 2)\sqrt{k + 1}.
\end{align*}
Setting $\iota = 2$, we can obtain $4$-wise uniformity on this distribution and each string in the support of the distribution has Hamming weight at least $-1 -2\sqrt{k +1}.$
\end{proof}
\begin{remark}
In order to construct a $4$-wise uniform distribution for any value of $k$, one could simply express $k$ as a sum of powers of 2, construct separate distributions on disjoint variables as described above for each power of 2 (down to the minimum length for which we can achieve distance at least $5$, after which point we use the uniform distribution, and obtain a $4$-wise uniform distribution.  The total Hamming weight of a vector supported by this distribution would then be at least $-O(\sqrt k)$.
\end{remark}
%
%
%
%
%


\section{SOS refutation proofs}
\label{sec:sos}
\subsection{The SOS proof system}
We first define the SOS proof system introduced in \cite{GV01}.  Call a polynomial $q \in \R[X_1,\ldots,X_n]$ sum-of-squares (SOS) if there exist $q_1, \ldots, q_{\ell} \in \R[X_1,\ldots,X_n]$ such that $q = q_1^2 + \cdots + q_{\ell}^2$.
\begin{definition}
Let $X=(X_1,\ldots,X_n)$ be indeterminates.  Let $q_1,\ldots, q_m,r_1,\ldots,r_{m'} \in \R[X]$ and let $A = \{q_1 \geq 0, \ldots, q_m \geq 0\} \cup \{r_1 = 0, \ldots, r_{m'}\}$.  There is a degree-$d$ SOS proof that $A$ implies $s \geq 0$, written as
\[
A \vdash_d s \geq 0,
\]
if there exist SOS $u_0,u_1, \ldots, u_m \in \R[X]$ and $v_1,\ldots,v_{m'} \in \R[X]$ such that
\[
s = u_0 + \sum_{i = 1}^m u_i q_i + \sum_{i = 1}^{m'} v_i r_i
\]
with $\deg(u_0) \leq d$, $\deg(u_i q_i) \leq d$ for all $i \in [m]$, and $\deg(v_i r_i) \leq d$ for all $i \in [m']$.  If it also holds that $u_0,u_1,\ldots,u_{m} = 0$, we will write $A \vdash_d s = 0$.
\end{definition}
It is well-known that a degree-$d$ SOS proof can be found using an SDP of size $n^{O(d)}$ if it exists \cite{Sho87, Par00, Las00, Las01}.

In this section, we will take the set $A$ to be $\{x_i^2 = 1\}_{i \in [n]}$, enforcing that variables are $\pm 1$-valued.  We show that with high probability there exists a low-degree SOS proof that a polynomial representing the value of a CSP instance is close to its expectation.

For more information on the SOS proof system and its applications to approximation algorithms, see, e.g., \cite{OZ13, Lau09}.

\subsection{SOS certification of quasirandomness}
All of our SOS results rely on the following theorem, which is the SOS version of Theorem~\ref{thm:main}.
\begin{theorem} \label{thm:main-sos}
For $k \geq 2$ and $p \geq n^{-k/2}$, let $\{w(T)\}_{T \in [n]^k}$ be independent random variables such that for each $T \in [n]^k$,
\begin{align}
\E[w(T)] &= 0  \\
\Pr[w(T) \ne 0] &\leq p  \\
\abs{w(T)} &\leq 1.
\end{align}
Then, with high probability, 
\[
\{x_i^2 \leq 1\}_{i \in [n]} \vdash_{2k} \sum_{T \in [n]^k} w(T) x^T \leq 2^{O(k)} \sqrt{p} n^{3k/4} \log^{3/2} n.
\]
\end{theorem}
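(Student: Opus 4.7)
The plan is to open up the proof of Theorem~\ref{thm:main}, which after a Cauchy--Schwarz step reduces to a spectral-norm bound on a carefully designed symmetric random matrix $M$, and verify that every step has a low-degree SOS analogue from the axioms $\{x_i^2 \leq 1\}_{i \in [n]}$. The crucial observation is that an operator-norm bound lifts \emph{for free} to a degree-$2$ SOS certificate in a suitable vector variable: if $M \in \R^{N \times N}$ is symmetric with $\|M\|_{\mathrm{op}} \leq \lambda$, then $\lambda I - M \succeq 0$, so there exists $A$ with $\lambda I - M = A^\top A$, and $z^\top(\lambda I - M)z = \|Az\|_2^2$ is an unconditional sum of squares of linear forms in $z_1,\ldots,z_N$. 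Since the trace method controls $\E[\mathrm{tr}(M^q)]$ and hence $\|M\|_{\mathrm{op}}$ itself, every spectral estimate used in the proof of Theorem~\ref{thm:main} is automatically available as a degree-$2$ SOS inequality in the appropriate monomial vector.

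For even $k = 2\ell$, the vector variable is $z_S = x^S$ for $S \in [n]^\ell$, and the original proof directly writes $\sum_T w(T)\, x^T = z^\top M z$ for an appropriate symmetrization of $w$. The trace-method bound $\|M\|_{\mathrm{op}} \leq \lambda$ lifts to a degree-$2$ SOS certificate $z^\top M z \leq \lambda \sum_S z_S^2$, which upon substituting $z_S = x^S$ becomes a degree-$2\ell = k$ SOS inequality $\sum_T w(T)\, x^T \leq \lambda \sum_S (x^S)^2$ in $x$.  To further bound $\sum_S (x^S)^2$ by $n^\ell$ from the axioms, apply the telescoping identity $1 - \prod_{p} x_{i_p}^2 = \sum_{j} (1 - x_{i_j}^2)\prod_{p < j} x_{i_p}^2$: each summand is a product of the axiom $(1 - x_{i_j}^2) \geq 0$ with the SOS polynomial $(\prod_{p<j} x_{i_p})^2$, giving a valid degree-$2\ell$ Positivstellensatz combination. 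The resulting certificate has degree $k \leq 2k$.

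For odd $k$, the Cauchy--Schwarz step of Theorem~\ref{thm:main} is itself an SOS step: the identity $\|a\|^2 \|b\|^2 - (a \cdot b)^2 = \sum_{i < j}(a_i b_j - a_j b_i)^2$ is an unconditional sum of squares, and applying it with $a_i = x_i$ and $b_i(x) = \sum_{T' \in [n]^{k-1}} w(T', i)\, x^{T'}$ (after isolating the ``last'' coordinate of each constraint) produces a degree-$2k$ SOS inequality $(\sum_T w(T)\, x^T)^2 \leq (\sum_i x_i^2)(\sum_i b_i(x)^2)$. The first factor is bounded by $n$ via the axioms; the second factor is a quadratic form $z^\top M z$ in $z_{T'} = x^{T'}$, with $M$ the carefully designed symmetric matrix of dimension $n^{k-1}$ considered in Theorem~\ref{thm:main}. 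The trace-method bound $\|M\|_{\mathrm{op}} \leq \mu$ then lifts to a degree-$2$ SOS certificate $\sum_i b_i(x)^2 \leq \mu \sum_{T'}(x^{T'})^2$, and the telescoping bound $\sum_{T'}(x^{T'})^2 \leq n^{k-1}$ yields $(\sum_T w(T)\, x^T)^2 \leq \mu \cdot n^k$ as a degree-$2k$ SOS inequality. To convert this squared bound into a linear one, use the unconditional SOS identity $(t - \eta)^2 \geq 0$, i.e., $2\eta t \leq t^2 + \eta^2$, with $t = \sum_T w(T)\, x^T$ and $\eta = \sqrt{\mu n^k}$: adding the two nonnegative quantities gives $\sum_T w(T)\, x^T \leq \sqrt{\mu n^k}$ at degree $2k$, which by construction matches the bound $2^{O(k)}\sqrt{p}\, n^{3k/4} \log^{3/2} n$ of Theorem~\ref{thm:main}.

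The main point to check carefully is that the trace-method estimates in the proof of Theorem~\ref{thm:main} really are operator-norm bounds on the underlying random matrix, rather than bounds on $z^\top M z$ for some particular $z$; this is automatic because the trace method controls $\E[\mathrm{tr}(M^q)]$ for large even~$q$ and hence $\E[\|M\|_{\mathrm{op}}^q]$ (and therefore $\|M\|_{\mathrm{op}}$ whp via Markov). The degree budget of $2k$ is saturated only by the Cauchy--Schwarz step in the odd-$k$ case, and the $2^{O(k)}$ and polylogarithmic factors are inherited unchanged from the trace-method calculation in Theorem~\ref{thm:main}.
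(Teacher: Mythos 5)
Your approach is essentially identical to the paper's: you lift the operator-norm bound to degree-$2$ SOS via the PSD factorization $\lambda I - M = A^\top A$, supply that norm bound by the trace method, handle the odd-arity linearization by first proving the squared inequality and then extracting the linear bound from $(t - \eta)^2 \geq 0$, and bound $\sum_S (x^S)^2$ from the axioms by telescoping. Your Lagrange-identity route to Cauchy--Schwarz, $\|a\|^2\|b\|^2 - (a\cdot b)^2 = \sum_{i<j}(a_ib_j - a_jb_i)^2$, is algebraically equivalent to the paper's use of $\vdash_2 YZ \le \tfrac12 Y^2 + \tfrac12 Z^2$ applied to the cross-terms of $(\sum_i x_i b_i)^2$, and lands at the same degree~$2k$.

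One detail you elide: in the odd-$k$ case, $\sum_i b_i(x)^2$ is \emph{not} the quadratic form $z^\top A z$ for the zero-diagonal ``reshuffled'' matrix $A$ of Lemma~\ref{lem:norm-bound}; as in~\eqref{eq:break-up-sum}, it equals $z^\top A z$ plus the diagonal contribution $\sum_{T,U}(x^T)^2(x^U)^2\sum_i w(T,U,i)^2$. That extra term must be bounded separately by $\sum_{T\in[n]^k} w(T)^2$ using the axioms (at degree $2(k-1)$), precisely because the full matrix with $A'_{T,U}=\sum_i w(T,i)w(U,i)$ could have too large a norm (see the remark following~\eqref{eq:A-def}). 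With this correction inserted, your argument matches the paper's.
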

This theorem was essentially proven by Barak and Moitra \cite{BM15}.  We give a proof in Appendix~\ref{subsec:main-sos}.  We first use this theorem to show that an SOS version of Lemma~\ref{lem:smaller-monomials} holds.
\begin{lemma}
\label{lem:smaller-monomials-sos}
Let $S \subseteq [k]$ with $|S| = s > 0$.  Let $\tau \in \N$ and let $\{w_U(i)\}_{U \in [n]^s, i \in [\tau]}$ be independent random variables satisfying conditions \eqref{eq:w-mean-0}, \eqref{eq:w-mostly-0}, and \eqref{eq:w-bounded} for some $p \geq \frac{1}{\tau n^{s/2}}$.  Then, with high probability,
\[
\{x_i^2 \leq 1\}_{i \in [n]} \vdash_{2s} \sum_{U \in [n]^s} x^U \sum_{j = 1}^{\tau} w_U(j) \leq
\begin{cases}
2^{O(s)} \sqrt{\tau p} \cdot n^{3s/4} \log^{5/2} n \quad & \text{if $s \geq 2$} \\
4 \max\{\sqrt{\tau p}, 1\} \cdot n \log n \quad & \text{if $s = 1$.}
\end{cases}
\]
\end{lemma}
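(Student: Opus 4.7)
The plan is to follow the proof of Lemma~\ref{lem:smaller-monomials} essentially verbatim, substituting Theorem~\ref{thm:main-sos} for Theorem~\ref{thm:main} and handling the $s=1$ case by a short explicit SOS derivation. Following that proof, I will set $v_U = \sum_{j=1}^\tau w_U(j)$; these are mutually independent, mean-zero random variables supported in $[-\tau,\tau]$, with $\Pr[v_U \neq 0] \leq \tau p$. Bernstein's inequality (Theorem~\ref{thm:bernstein}) combined with a union bound over $U \in [n]^s$ will give, with high probability, a uniform bound $|v_U| \leq B$ where $B = 2s \log n$ when $p \leq \tfrac{1}{4\tau}$ and $B = 4s\sqrt{\tau p}\log n$ when $p > \tfrac{1}{4\tau}$. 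I will condition on this event throughout the remainder of the argument.

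For $s \geq 2$, I will rescale by dividing each $v_U$ by $B$. The rescaled variables are independent, mean-zero, bounded by $1$ in magnitude, and nonzero with probability at most $\tau p \geq n^{-s/2}$, where the lower bound uses the hypothesis $p \geq \tfrac{1}{\tau n^{s/2}}$. Applying Theorem~\ref{thm:main-sos} with arity $s$ and parameter $\tau p$ to the polynomial $\sum_{U \in [n]^s} (v_U/B)\, x^U$ yields a degree-$2s$ SOS proof; multiplying through by the positive constant $B$ produces the claimed inequality once $B$ is combined with the factor $\sqrt{\tau p}\, n^{3s/4} \log^{3/2} n$ coming from Theorem~\ref{thm:main-sos}.

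For $s = 1$, Theorem~\ref{thm:main-sos} does not directly apply (it requires $k \geq 2$), so I will instead give an explicit degree-$2$ SOS proof that $\sum_i v_i x_i \leq \sum_i |v_i|$. Writing $\sigma_i = \mathrm{sgn}(v_i) \in \{-1,1\}$ (taking $\sigma_i = 1$ when $v_i = 0$), the algebraic identity
\[
|v_i| - v_i x_i \;=\; |v_i|\,(1 - \sigma_i x_i) \;=\; \tfrac{|v_i|}{2}\bigl((1 - \sigma_i x_i)^2 + (1 - x_i^2)\bigr)
\]
presents each summand as a sum of squares plus a nonnegative multiple of $(1 - x_i^2)$, and hence as a degree-$2$ SOS derivation from the axioms $\{x_i^2 \leq 1\}$. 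Summing over $i$ and using $|v_i| \leq B$ to bound $\sum_i |v_i| \leq nB$ then reproduces the claimed $s=1$ estimate in both cases. The only substantive new ingredient beyond the original Lemma~\ref{lem:smaller-monomials} is this elementary SOS identity for $s=1$; the ``hard'' part for $s \geq 2$ is delegated entirely to Theorem~\ref{thm:main-sos}, so I do not anticipate any genuine obstacle.
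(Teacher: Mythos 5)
Your proposal is correct and matches the paper's proof essentially verbatim: for $s \geq 2$ one substitutes Theorem~\ref{thm:main-sos} for Theorem~\ref{thm:main} in the proof of Lemma~\ref{lem:smaller-monomials}, and for $s = 1$ one uses the identity $|v_i| - v_i x_i = \tfrac{|v_i|}{2}(1 \mp x_i)^2 + \tfrac{|v_i|}{2}(1-x_i^2)$ (your $\sigma_i$ notation is just a compact way of writing the paper's two sign cases). No gaps.
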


\begin{proof}
We sketch the differences from the proof of Lemma~\ref{lem:smaller-monomials} given in Section~\ref{sec:smaller-monomials}.  For $s \geq 2$, the lemma follows by using Theorem~\ref{thm:main-sos} instead of Theorem~\ref{thm:main}.  If $s = 1$, it suffices to show that
\[
\{x_i^2 \leq 1\} \vdash_2 v(i)x_i \leq \abs{v(i)}.
\]
for any $v$ since summing over all $i$ as in \eqref{eq:1-case} finishes the proof.  If $v_i \geq 0$, observe that
\[
\abs{v_i} - v(i) x_i = \frac{\abs{v(i)}}{2}(x_i-1)^2 + \frac{\abs{v(i)}}{2}(1-x_i^2).
\]
If $v(i) < 0$, we use $(x_i + 1)^2$ instead of $(x_i-1)^2$.
\end{proof}
The lemma implies an SOS version of Lemma~\ref{lem:fourier-refute}.  To make this precise, we define a specific polynomial representation of $\widehat{D_{\calI,x}}(S)$:
\[
\widehat{D_{\calI,x}}(S)^{\mathrm{poly}} = \frac{1}{m} \sum_{T \in [n]^k} \sum_{c \in \{\pm 1\}^k} \indic{\{(T,c) \in \calI\}} c^S x^S_T,
\]
where $x_T^S = \prod_{i \in S} x_{T_i}$.  Note that this is a polynomial in the $x_i$'s.

We can show these polynomials are not too large.
\begin{lemma}
\label{lem:fourier-refute-sos}
Let $\emptyset \ne S \subseteq [k]$ with $|S| = s$.  Then
\begin{align*}
\{x_i^2 \leq 1\}_{i \in [n]} &\vdash_{2s} \widehat{D_{\calI,x}}(S)^{\mathrm{poly}} \leq \frac{2^{O(k)} \max\{n^{s/4}, \sqrt{n}\} \log^{5/2} n}{\expm^{1/2}} \\
\{x_i^2 \leq 1\}_{i \in [n]} &\vdash_{2s} \widehat{D_{\calI,x}}(S)^{\mathrm{poly}} \geq -\frac{2^{O(k)} \max\{n^{s/4}, \sqrt{n}\} \log^{5/2} n}{\expm^{1/2}}.
\end{align*}
with high probability, assuming also that $\expm \geq \max\{n^{s/2}, n\}$.
\end{lemma}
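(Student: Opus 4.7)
The plan is to mimic the proof of Lemma~\ref{lem:fourier-refute} essentially verbatim, with Lemma~\ref{lem:smaller-monomials-sos} replacing Lemma~\ref{lem:smaller-monomials}. The key point is that each manipulation used in the analytic proof is actually a polynomial identity (not relying on $x_i^2 = 1$), so the resulting SOS proof inherits its degree directly from Lemma~\ref{lem:smaller-monomials-sos}.

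Concretely, I would first reduce to the case $1 \in S$ (without loss of generality) and observe that by regrouping the $T \in [n]^k$ according to $U := T_S \in [n]^s$, and splitting the sum over $c \in \{\pm 1\}^k$ by fixing the $c_1 = \pm 1$ coordinate, the polynomial $\widehat{D_{\calI,x}}(S)^{\mathrm{poly}}$ equals
\[
\frac{1}{m} \sum_{U \in [n]^s} x^U \sum_{\substack{T \in [n]^k \\ T_S = U}} \sum_{c' \in \{\pm 1\}^{k-1}} w_S(T,c'),
\]
exactly as in the proof of Lemma~\ref{lem:fourier-refute}. This is a literal polynomial identity in $x_1, \ldots, x_n$, since the factor $x_T^S$ in the definition of $\widehat{D_{\calI,x}}(S)^{\mathrm{poly}}$ depends on $T$ only through $T_S = U$.

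Next I would verify the hypotheses of Lemma~\ref{lem:smaller-monomials-sos}: the variables $w_S(T,c')$ are $0$-mean, magnitude at most $1$, and nonzero with probability at most $p$; moreover the coefficients for distinct $U$'s involve disjoint constraint indicators, so after relabeling we get $\tau = 2^{k-1} n^{k-s}$ independent $0$-mean random variables per $U$. Applying Lemma~\ref{lem:smaller-monomials-sos} then produces, with high probability, a degree-$2s$ SOS proof from $\{x_i^2 \leq 1\}$ of the bound $2^{O(s)} \sqrt{\tau p}\, n^{3s/4} \log^{5/2} n$ (or the analogous $s = 1$ bound). Dividing through by $m$, and using $\expm = 2^k n^k p$ together with $m = (1 \pm o(1))\expm$ w.h.p., yields the claimed upper bound of Lemma~\ref{lem:fourier-refute-sos} without changing the SOS degree. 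For the matching lower bound, I would run the same argument with the signs of the $w_S(T, c')$'s flipped, which satisfies the same hypotheses.

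I do not expect any genuine obstacle here: the whole content is the passage from an analytic to an SOS certificate, and that passage has been packaged into Lemma~\ref{lem:smaller-monomials-sos} (itself an immediate consequence of Theorem~\ref{thm:main-sos} from \cite{BM15}). The only item that deserves a line of care is confirming that the division by $m$ is benign in the SOS setting, but since $m$ is a deterministic scalar once $\calI$ is fixed, the inequality survives scaling by $1/m$ with no degree overhead.
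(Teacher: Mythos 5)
Your proposal matches the paper's proof: the paper also remarks that the argument is identical to that of Lemma~\ref{lem:fourier-refute} with Lemma~\ref{lem:smaller-monomials-sos} substituted for Lemma~\ref{lem:smaller-monomials}, noting that the quantity bounded there is exactly $\widehat{D_{\calI,x}}(S)^{\mathrm{poly}}$. Your observation that the regrouping by $U = T_S$ is a literal polynomial identity (so no SOS degree is incurred) and that dividing by the scalar $m$ is harmless is exactly the right justification, and your handling of the lower bound by sign-flipping is also what the original proof does.
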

\begin{proof}
The proof is essentially the same as that of Lemma~\ref{lem:fourier-refute}. The expression we bound in that proof is exactly $\widehat{D_{\calI,x}}(S)^{\mathrm{poly}}$. We use Lemma~\ref{lem:smaller-monomials-sos} instead of Lemma~\ref{lem:smaller-monomials} to show that this can be done in degree-$2s$ SOS.
\end{proof}
Based on Lemma~\ref{lem:AGM}, we will think of Lemma~\ref{lem:fourier-refute-sos} as giving an SOS proof of quasirandomness.  Below, we use it to prove SOS versions of Theorems~\ref{thm:strong-ref-2} and~\ref{thm:ntwus-refute-2}.

\subsection{Strong refutation of any $k$-CSP}
We now define the natural polynomial representation of $\mathrm{Val}_{\calI}(x)$:
\[
\mathrm{Val}_{\calI}^{\mathrm{poly}}(x) = \frac{1}{m} \sum_{T \in [n]^k} \sum_{c \in \{\pm 1\}^k} \indic{\{(T,c) \in \calI\}} \left(\sum_{S \subseteq [k]} \wh{P}(S) c^S x_T^S \right),
\]
where $x_T^S$ is as above.

We can then give an SOS proof strongly refuting CSP($P$).
\begin{theorem} \label{thm:strong-ref-2-SOS}
Given an instance $\calI \sim \calF_P(n,p)$ of $\CSP(P)$,
\[
\{x_i^2 \leq 1\}_{i \in [n]} \vdash_{2k} \mathrm{Val}^{\mathrm{poly}}_{\calI}(x) \leq \expP + \gamma
\]
with high probability when $\expm \geq \frac{2^{O(k)} n^{k/2} \log^{5} n}{\gamma^2}$.
\end{theorem}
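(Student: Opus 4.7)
The plan is to prove Theorem~\ref{thm:strong-ref-2-SOS} by lifting the proof of Theorem~\ref{thm:strong-ref-2} to the SOS setting, using Lemma~\ref{lem:fourier-refute-sos} as the key ingredient (in place of Lemma~\ref{lem:fourier-refute}). The underlying idea is already essentially ``Proof~2'' of Theorem~\ref{thm:ntwus-refute-2}: Plancherel lets us express $\mathrm{Val}_{\calI}(x)$ as a linear combination of Fourier coefficients of the induced density, and each such coefficient is exactly the object that Lemma~\ref{lem:fourier-refute-sos} bounds by an SOS proof. Since the only operations needed beyond those bounds are multiplication by scalars and summation, an SOS proof of the desired inequality falls out immediately.

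First I would rewrite $\mathrm{Val}^{\mathrm{poly}}_{\calI}(x)$ by interchanging the order of summation and using the definition of $\widehat{D_{\calI,x}}(S)^{\mathrm{poly}}$ to obtain, as a syntactic identity of polynomials,
\[
\mathrm{Val}^{\mathrm{poly}}_{\calI}(x) \;=\; \sum_{S \subseteq [k]} \wh{P}(S)\,\widehat{D_{\calI,x}}(S)^{\mathrm{poly}}.
\]
The $S = \emptyset$ term evaluates to $\wh{P}(\emptyset) = \expP$ since $\widehat{D_{\calI,x}}(\emptyset)^{\mathrm{poly}} \equiv 1$. Thus it suffices to SOS-certify
\[
\sum_{\emptyset \neq S \subseteq [k]} \wh{P}(S)\,\widehat{D_{\calI,x}}(S)^{\mathrm{poly}} \;\leq\; \gamma
\]
modulo the axioms $\{x_i^2 \leq 1\}_{i \in [n]}$, in degree $2k$.

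Next, for each nonempty $S$, apply Lemma~\ref{lem:fourier-refute-sos} to obtain both an upper and a lower bound on $\widehat{D_{\calI,x}}(S)^{\mathrm{poly}}$ by $\pm \frac{2^{O(k)} \max\{n^{|S|/4},\sqrt{n}\}\log^{5/2} n}{\expm^{1/2}}$, each via a degree-$2|S| \leq 2k$ SOS proof. Multiplying by the scalar $\wh{P}(S)$ (using the upper bound when $\wh{P}(S) \geq 0$ and the lower bound when $\wh{P}(S) < 0$) and summing the resulting SOS proofs over all $2^k - 1$ nonempty $S$ gives
\[
\{x_i^2 \leq 1\}_{i \in [n]} \;\vdash_{2k}\; \sum_{\emptyset \neq S \subseteq [k]} \wh{P}(S)\,\widehat{D_{\calI,x}}(S)^{\mathrm{poly}} \;\leq\; \frac{2^{O(k)}\,n^{k/4}\log^{5/2} n}{\expm^{1/2}},
\]
where we absorb the trivial bound $|\wh{P}(S)| \leq 1$ and the $2^k$ summands into the $2^{O(k)}$ factor, and use that $n^{k/4}$ dominates $\sqrt{n}$ (the dependence matches exactly what was obtained in the non-SOS proof of Theorem~\ref{thm:strong-ref-2}).

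Finally, one checks that the right-hand side is at most $\gamma$ precisely when $\expm \geq \frac{2^{O(k)}\,n^{k/2}\log^5 n}{\gamma^2}$, matching the hypothesis of the theorem. There is no real ``hard part'' here: everything interesting was absorbed into Lemma~\ref{lem:fourier-refute-sos}, and the derivation above is just linearity plus scalar multiplication, both of which are free operations in the SOS proof system. The only mild subtlety to keep in mind is sign-handling when multiplying the Fourier-coefficient bounds by $\wh{P}(S)$, which is why we needed Lemma~\ref{lem:fourier-refute-sos} to supply both an upper and a lower bound as formal SOS certificates.
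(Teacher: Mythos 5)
Your proof is correct and matches the paper's argument essentially verbatim: both decompose $\mathrm{Val}^{\mathrm{poly}}_{\calI}(x)$ via the Fourier/Plancherel identity into $\expP + \sum_{\emptyset \ne S} \wh{P}(S)\,\widehat{D_{\calI,x}}(S)^{\mathrm{poly}}$, invoke Lemma~\ref{lem:fourier-refute-sos} for each nonempty $S$, and absorb $\sum_S |\wh{P}(S)| \leq 2^{O(k)}$ into the constant. You make the sign-handling (using both the upper and lower bounds from the lemma depending on $\operatorname{sgn}\wh{P}(S)$) explicit where the paper leaves it implicit, but this is the same proof.
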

\begin{proof}
By rearranging terms, we see that
\[
\mathrm{Val}^{\mathrm{poly}}_{\calI}(x) = \expP + \sum_{\emptyset \ne S \subseteq [k]} \wh{P}(S) \widehat{D_{\calI,x}}(S)^{\mathrm{poly}}.
\]
Note that this is just Plancherel's Theorem in SOS.  The theorem then follows from Lemma~\ref{lem:fourier-refute-sos} and the observation that $\sum_{S \subseteq [k]} |\wh{P}(S)| \leq 2^{O(k)}$.
\end{proof}

\subsection{$\Omega(1)$-refutation of non-$t$-wise supporting CSPs}
\label{sec:sos-ntwus}
\begin{theorem}
\label{thm:ntwus-refute-2-sos}
Let $P$ be $\delta$-far from being \twus.  Given an instance $\calI \sim \calF_P(n,p)$ of $\CSP(P)$,  
\[
\{x_i^2 = 1\}_{i \in [n]} \vdash_{\max\{k,2t\}} \mathrm{Val}^{\mathrm{poly}}_{\calI}(x) \leq 1 - \delta + \gamma.
\]
with high probability when $\expm \geq \frac{2^{O(k)} n^{t/2} \log^{5} n}{\gamma^2}$ and $t \geq 2$.
\end{theorem}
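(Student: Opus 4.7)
My plan is to adapt Proof~2 of Theorem~\ref{thm:ntwus-refute-2} to the SOS setting; as the authors observe just before that proof, Proof~2 is explicitly the version that SOS-izes. Let $Q$ be the degree-$t$ unbiased multilinear polynomial that $\delta$-separates $P$, as guaranteed by Lemma~\ref{lem:poly-iff-no-dist}. The defining inequalities of ``$\delta$-separating'' ($Q(z)\geq \delta$ on $P^{-1}(1)$ and $Q(z)\geq \delta-1$ everywhere) combine into the single pointwise inequality
\[
R(z) \;:=\; Q(z) - P(z) + (1-\delta) \;\geq\; 0 \qquad \forall z\in\{-1,1\}^k.
\]

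The pivotal step is to lift this pointwise fact to an SOS proof. Since $R$ is a multilinear polynomial that is nonnegative on the Boolean cube $\{-1,1\}^k$, the standard ``SOS completeness on the Boolean cube'' theorem (essentially Lasserre) yields $\{z_i^2=1\}_{i\in[k]}\vdash_k R(z)\geq 0$. This is exactly the reason the theorem's hypothesis is the equality axiom $x_i^2=1$ rather than the inequality $x_i^2\leq 1$ used in Theorem~\ref{thm:strong-ref-2-SOS}. For any constraint $(T,c)\in\calI$ we substitute $z_i\leftarrow c_ix_{T_i}$; because $x_{T_i}^2=1$ implies $(c_ix_{T_i})^2 = c_i^2 x_{T_i}^2 = 1$, the substitution is sound and we obtain
\[
\{x_i^2=1\}_{i\in[n]} \;\vdash_k\; Q(c\circ x_T) - P(c\circ x_T) + (1-\delta) \;\geq\; 0.
\]

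Averaging this SOS inequality over the $m$ constraints of $\calI$ yields
\[
\{x_i^2=1\}_{i\in[n]}\;\vdash_k\; \mathrm{Val}^{\mathrm{poly}}_{\calI}(x) - (1-\delta)\;\leq\;\tfrac{1}{m}\sum_{(T,c)\in\calI} Q(c\circ x_T).
\]
Expanding $Q$ in the Fourier basis, the right-hand side equals $\sum_{0<|S|\leq t}\wh{Q}(S)\,\wh{D_{\calI,x}}(S)^{\mathrm{poly}}$ as a polynomial identity (the $S=\emptyset$ term vanishes because $\wh{Q}(\emptyset)=0$), exactly as in the SOS Plancherel step already used in the proof of Theorem~\ref{thm:strong-ref-2-SOS}. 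Since $Q\geq -1$ and $\E Q = 0$ force $|\wh{Q}(S)|\leq 2^k$, and since there are at most $O(k^t)\leq 2^{O(k)}$ nonempty $S$ with $|S|\leq t$, it suffices to bound each $|\wh{D_{\calI,x}}(S)^{\mathrm{poly}}|$ by $\gamma/2^{O(k)}$ in SOS of degree $2t$. This is precisely the guarantee of Lemma~\ref{lem:fourier-refute-sos}, which holds with high probability once $\expm\geq 2^{O(k)}n^{t/2}\log^{5}n/\gamma^2$. The SOS degrees of the two halves combine to $\max\{k,2t\}$, matching the statement.

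The main obstacle, and the reason Proof~2 rather than Proof~1 admits an SOS translation, is the lifting step in the second paragraph: we need a uniform, polynomial certificate of the pointwise inequality $P(z)-(1-\delta)\leq Q(z)$. Proof~1 routes through Lemma~\ref{lem:AGM}, a ``for each $x$ there exists a nearby $t$-wise uniform $\calD'_x$'' statement with no evident low-degree polynomial witness, whereas Proof~2's $Q$ is by construction a degree-$t$ algebraic object whose defining inequalities are Boolean polynomial inequalities ripe for SOS completeness. Everything else --- substitution into constraints, averaging, Fourier expansion, and the Fourier-coefficient bounds --- is routine bookkeeping on top of Lemma~\ref{lem:fourier-refute-sos}.
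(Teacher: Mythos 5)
Your proposal is correct and follows essentially the same route as the paper's proof: SOS-izing Proof~2 by (i) lifting the pointwise inequality $Q(z)-P(z)+(1-\delta)\geq 0$ to a degree-$k$ SOS proof over $\{z_i^2=1\}$ (the paper states this as Claim~\ref{cl:deg-k-ineq}, which is exactly the Boolean-cube SOS-completeness fact you invoke), (ii) substituting into constraints and averaging, (iii) applying the SOS Plancherel rearrangement from Theorem~\ref{thm:strong-ref-2-SOS}, and (iv) bounding each $|\wh{Q}(S)|\leq 2^k$ and appealing to Lemma~\ref{lem:fourier-refute-sos}. Your framing of why Proof~2 rather than Proof~1 admits an SOS translation also matches the paper's own discussion.
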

To prove this theorem, we will need to following claim, which says that any true inequality in $k$ variables over $\{-1,1\}^k$ can be proved in degree-$k$ SOS.  Recall that the multilinearization of a monomial $z_1^{s_1}z_2^{s_2} \cdots z_k^{s_k} \in \R[z_1,\ldots,z_k]$ is defined to be $z_1^{s_1 \bmod 2}z_2^{s_2 \bmod 2} \cdots z_k^{s_k \bmod 2}$, i.e., we replace all $z_i^2$ factors by $1$.  We extend this definition to all polynomials in $\R[z_1,\ldots,z_k]$ by linearity.
\begin{claim}
\label{cl:deg-k-ineq}
Let $f:\{-1,1\}^k \to \R$ such that $f(z) \geq 0$ for all $z \in \{-1,1\}^k$ and let $f^{\mathrm{poly}}$ be the unique multilinear polynomial such that $f(z) = f^{\mathrm{poly}}(z)$ for all $z \in \{-1,1\}^k$.  Then
\[
\{z_i^2 = 1\}_{i \in [k]} \vdash_k f^{\mathrm{poly}}(z) \geq 0.
\]
\end{claim}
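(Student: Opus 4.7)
The plan is a square-root construction combined with the standard fact that the vanishing ideal of $\{-1,1\}^k$ inside $\R[z_1, \ldots, z_k]$ is generated by $\{z_i^2 - 1\}_{i \in [k]}$. Since $f \geq 0$ pointwise on the Boolean cube, the function $\sqrt{f} : \{-1,1\}^k \to \R_{\geq 0}$ is well-defined, so I will take $g \in \R[z_1, \ldots, z_k]$ to be the unique multilinear polynomial agreeing with $\sqrt{f}$ on $\{-1,1\}^k$ (obtainable, for example, by Fourier expansion or Lagrange interpolation). Note that $\deg(g) \leq k$, since multilinear polynomials in $k$ variables have degree at most $k$.

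By construction $g(z)^2 = f(z) = f^{\mathrm{poly}}(z)$ for every $z \in \{-1,1\}^k$, so the polynomial $g^2 - f^{\mathrm{poly}}$ vanishes identically on the cube. Applying the ideal-generation fact then yields polynomials $v_1, \ldots, v_k$ with
\[
g(z)^2 - f^{\mathrm{poly}}(z) = \sum_{i=1}^k v_i(z)(z_i^2 - 1),
\]
and rearranging gives
\[
f^{\mathrm{poly}}(z) = g(z)^2 + \sum_{i=1}^k \bigl(-v_i(z)\bigr)(z_i^2 - 1).
\]
This is a valid SOS certificate: $u_0 = g^2$ is a single square (hence trivially SOS), and the remaining terms are multiples of the equality axioms $z_i^2 - 1 = 0$.

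The only thing left is the degree bookkeeping needed to match the $\vdash_k$ degree budget. I would produce the $v_i$'s constructively by iteratively replacing $z_i^2$ by $1$ inside $g^2$ and recording each replacement as a multiplier on $z_i^2 - 1$; this terminates after finitely many steps, and the resulting multipliers $v_i$ have degrees controlled by that of $g^2$. The main (and essentially only) obstacle is tracking these degrees carefully so that the certificate sits within the $\vdash_k$ convention of the paper; the algebra itself and the core observation---that any $f \geq 0$ on the Boolean cube admits a polynomial square root of degree at most $k$---are immediate.
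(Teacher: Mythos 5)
Your proof is correct and takes essentially the same route as the paper's: both set $g$ to be the multilinear polynomial agreeing with $\sqrt{f}$ on the cube and observe that $g^2$ and $f^{\mathrm{poly}}$ agree on $\{-1,1\}^k$, hence differ by an element of the ideal generated by $\{z_i^2-1\}_{i\in[k]}$. The paper phrases this as ``the multilinearization of $(g^{\mathrm{poly}})^2$ equals $f^{\mathrm{poly}}$''; you spell out the ideal-membership step explicitly, which is just a slightly more verbose rendering of the same observation.

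On the degree bookkeeping you flag at the end: you are right to be cautious, and in fact $\deg(g^2)$ can be as large as $2k$, with the multipliers $v_i$ satisfying $\deg(v_i(z_i^2-1)) \leq 2k$. Under a literal reading of the paper's definition (which bounds $\deg(u_0)$, not $\deg(g)$) the certificate is therefore degree-$2k$, not degree-$k$. This mild overstatement appears in the paper's own statement of the claim as well, and is harmless: the top-level Theorem~\ref{thm:sos-able} only asserts degree-$2k$ SOS refutations, which is exactly the bound both arguments deliver.
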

\begin{proof}
Since $f(x) \geq 0$, there exists a Boolean function $g:\{-1,1\}^k \to \R$ such that $g(z)^2 = f(z)$ for all $z \in \{-1,1\}^k$.  Let $g^{\mathrm{poly}}$ be the unique multilinear polynomial such that $g(z) = g^{\mathrm{poly}}(z)$ for all $z \in \{-1,1\}^k$.  Since $g^{\mathrm{poly}}(z)^2 = f^{\mathrm{poly}}(z)$ for all $z \in \{-1,1\}^k$, uniqueness of the multilinear polynomial representation of $f$ implies that the multilinearization of $(g^{\mathrm{poly}})^2$ is equal to $f^{\mathrm{poly}}$.  Written another way, we have that $\{z_i^2 = 1\}_{i \in [k]} \vdash_k f^{\mathrm{poly}}(z) = g^{\mathrm{poly}}(z)^2$.  This implies that $\{z_i^2 = 1\}_{i \in [k]} \vdash_k f^{\mathrm{poly}}(z) \geq 0$.
\end{proof}

\begin{proof}[Proof of Theorem~\ref{thm:ntwus-refute-2-sos}]
The proof is an SOS version of Proof~2 of Theorem~\ref{thm:ntwus-refute-2} above.  Claim~\ref{cl:deg-k-ineq} implies that for $Q$ of degree at most $t$ that $\delta$-separates $P$,
\[
\{z_i^2 = 1\}_{i \in [k]} \vdash_k Q(z)-P(z)+1-\delta \geq 0.
\]
Summing over all constraints, we get
\[
\{x_i^2 = 1\}_{i \in [n]} \vdash_{k} \mathrm{Val}^{\mathrm{poly}}_{\calI}(x) - (1-\delta) \leq \frac{1}{m} \sum_{T \in [n]^k} \sum_{c \in \{\pm 1\}^k} \indic{\{(T,c) \in \calI\}} \left(\sum_{S \subseteq [k]} \wh{Q}(S) c^S x_T^S \right),
\]
Rearranging terms as in the proof of Theorem~\ref{thm:strong-ref-2-SOS}, we see that the right hand side is equal to
\[
\sum_{S \subseteq [k]} \wh{Q}(S) \widehat{D_{\calI,x}}(S)^{\mathrm{poly}}.
\]
Since $Q$ has mean $0$, $|Q| \leq 2^k$ and $\sum_{S \subseteq [k]} |\wh{P}(S)| \leq 2^{O(k)}$.  The theorem then follows from Lemma~\ref{lem:fourier-refute-sos}.
\end{proof}
With Corollary~\ref{cor:lp-granularity}, the theorem implies that we can $\Omega_k(1)$-refute any $\CSP(P)$ in SOS when $P$ is not $t$-wise supporting.
\begin{corollary}
\label{cor:no-t-wise-ref-2-SOS}
Let $P$ be a predicate that does not support any $t$-wise uniform distribution.  Given an instance $\calI \sim \calF_P(n,p)$ of $\CSP(P)$,
\[
\{x_i^2 = 1\}_{i \in [n]} \vdash_{\max\{k,2t\}} \mathrm{Val}_{\calI}(x) \leq 1 - 2^{-\wt{O}(k^t)} + \gamma
\]
with high probability when $\expm \geq 2^{\wt{O}(k^t)} n^{t/2} \log^{5} n$ and $t \geq 2$.
\end{corollary}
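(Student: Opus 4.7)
The plan is to combine Theorem~\ref{thm:ntwus-refute-2-sos} with Corollary~\ref{cor:lp-granularity} in a completely mechanical way; the work for this corollary has essentially already been done. First I would invoke Corollary~\ref{cor:lp-granularity}: since $P$ does not support any $t$-wise uniform distribution by hypothesis, that corollary guarantees a quantitative strengthening, namely that $P$ is actually $\delta$-far from \twus for some $\delta = 2^{-\wt{O}(k^t)}$. This is the key step that converts a qualitative non-supporting assumption into the quantitative ``$\delta$-far'' hypothesis required by Theorem~\ref{thm:ntwus-refute-2-sos}.

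Second, I would apply Theorem~\ref{thm:ntwus-refute-2-sos} with this particular value of $\delta$. That theorem produces, with high probability, a degree-$\max\{k,2t\}$ SOS proof that $\mathrm{Val}^{\mathrm{poly}}_{\calI}(x) \leq 1 - \delta + \gamma$, under the hypothesis $\expm \geq \frac{2^{O(k)} n^{t/2} \log^5 n}{\gamma^2}$. Substituting $\delta = 2^{-\wt{O}(k^t)}$ yields the claimed bound $\mathrm{Val}_{\calI}(x) \leq 1 - 2^{-\wt{O}(k^t)} + \gamma$ in the conclusion. Since $t \leq k$, the factor $2^{O(k)}$ in the required density is swallowed by $2^{\wt{O}(k^t)}$, so the stated hypothesis $\expm \geq 2^{\wt{O}(k^t)} n^{t/2} \log^5 n$ is sufficient (one may take $\gamma$ to be a fixed fraction of $\delta$, or simply carry $\gamma$ through as an additive slack). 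The degree bound $\max\{k, 2t\}$ is inherited from Theorem~\ref{thm:ntwus-refute-2-sos} verbatim.

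The only thing to verify carefully is the bookkeeping between the ``$\wt{O}(k^t)$'' appearing in the granularity bound and the ``$\wt{O}(k^t)$'' appearing in the density/slack bound; since both arise from the same LP-dimension quantity $K \leq k^t + 1$ (with $K \leq 2^k$ in the borderline case $t=k$), these match up and no further estimate is needed. I don't foresee any real obstacle — this corollary is a one-line consequence of two results already established in the preceding sections, and there are no new SOS manipulations to perform beyond those carried out in the proofs of Theorem~\ref{thm:ntwus-refute-2-sos} and Corollary~\ref{cor:lp-granularity}.
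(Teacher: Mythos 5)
Your proof is correct and is exactly the argument the paper intends: apply Corollary~\ref{cor:lp-granularity} to upgrade ``not \twus'' to ``$\delta$-far from \twus'' with $\delta = 2^{-\wt{O}(k^t)}$, then plug this $\delta$ into Theorem~\ref{thm:ntwus-refute-2-sos}, choosing $\gamma$ on the order of $\delta$ so that the density requirement $\frac{2^{O(k)} n^{t/2}\log^5 n}{\gamma^2}$ collapses to $2^{\wt{O}(k^t)} n^{t/2}\log^5 n$. The paper states the corollary without a displayed proof precisely because it is this one-line combination, so your reasoning and bookkeeping match the intended derivation.
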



\section{Directions for future work} \label{ref:conclusion}

It would be interesting to show analogous efficient refutation results for models of random $\CSP(P)$ in which literals are not used. This would allow for results on, say, refuting $q$-colorability for random $k$-uniform hypergraphs.  We give a simple result on refuting $q$-colorability of random hypergraphs in Appendix~\ref{sec:hypergraphs}, but it follows from refutation of binary CSPs and perhaps a stronger result could be proven by studying CSPs with larger alphabets.  For some predicates (e.g., monotone Boolean predicates), random CSP instances are trivially satisfiable when there are no literals.  However for such predicates one could consider a ``Goldreich~\cite{Gol00}-style'' model in which each constraint is randomly either~$P$ or~$\neg P$ applied to~$k$ random variables.

Additionally, it would be good to investigate whether our refutation algorithms can be extended from the purely random CSP$(P)$ setting to the ``smoothed''/``semi-random'' setting of Feige~\cite{Fei07}, in which the $m$ constraints scopes are worst-case and only the negation pattern for literals is random.  Feige showed how to efficiently refute random $3$-SAT instances with $m \geq \wt{O}(n^{3/2})$ constraints even in this model.

Another valuable open direction would be to shore up the known proof-complexity evidence suggesting that $\wt{\Theta}(n^{t/2})$ constraints might be necessary to refute random $\CSP(P)$ when $P$ is not \twus.  The natural question here is whether the SOS lower bound of~\cite{BCK15} can be extended from non-pairwise uniform supporting and $m = O(n)$ constraints, to non-$t$-wise uniform supporting and $m = O(n^{t/2-\eps})$ constraints.  (Of course, it would also be good to eliminate the pruning step from their random instances.)  One might also investigate the more refined question of whether, for $P$ that are $\delta$-far from \twus, one can improve on $\delta$-refutation when there are $m \geq \wt{O}(n^{t/2})$ constraints.

Followup work on the very interesting paper~\cite{FKO06} of Feige, Kim, and Ofek also seems warranted.  Recall that it gives a \emph{nondeterministic} refutation algorithm for random $3$-SAT when $m \geq O(n^{1.4})$ (as well as a subexponential-time deterministic algorithm).  This raises the question of whether there exist polynomial-size refutations for random CSP$(P)$ instances that are nevertheless hard to find efficiently.

Finally, we suggest trying to rehabilitate the hardness-of-learning results in~\cite{daniely-hardness-of-learning}, given our new knowledge about what random $\CSP(P)$ instances seem hard to refute.  As mentioned, the followup work of Daniely and Shalev-Shwartz~\cite{2014arXiv1404.3378D} shows hardness of PAC-learning DNFs with $\omega(\log n)$ terms based on the very reasonable assumption that refuting random $k$-SAT requires $n^{f(k)}$ constraints for some $f(k) = \omega(1)$.  Subsequent work by Daniely~\cite{Dan15} shows hardness of approximately PAC-learning halfspaces assuming that refuting random $k$-XOR is hard both when $m = n^{c\sqrt{k}\log k}$ and when $k$ is polylogarithmic in $n$ and $m = n^{ck}$ for some $c >0$.  One future direction would be to obtain hardness results for agnostically learning intersections of halfspaces.

\subsection*{Acknowledgments}
The authors would like to thank Amin Coja--Oghlan for help with the literature, and Boaz Barak and Ankur Moitra for permission to reprint the proof of the strong $k$-XOR refutation result.  The last author would like to thank Anupam Gupta for several helpful discussions. 

\bibliographystyle{alpha}
\bibliography{witmer,sarahrefs,odonnell-bib}
\newpage

\appendix

\section{Proof of Theorem~\ref{thm:main}}
\label{sec:xor}
We restate Theorem~\ref{thm:main}:
\begin{customthm}{\ref{thm:main}}
For $k \geq 2$ and $p \geq n^{-k/2}$, let $\{w(T)\}_{T \in [n]^k}$ be independent random variables such that for each $T \in [n]^k$,
\begin{align}
\E[w(T)] &= 0  \\
\Pr[w(T) \ne 0] &\leq p  \\
\abs{w(T)} &\leq 1.
\end{align}
Then there is an efficient algorithm certifying that
\begin{equation}
\label{eq:main-thm}
\sum_{T \in [n]^k} w(T) x^T \leq  2^{O(k)} \sqrt{p} n^{3k/4} \log^{3/2} n.
\end{equation}
for all $x \in \R^n$ with $\norm{x}_{\infty} \leq 1$ with high probability.
\end{customthm}

The proof of this theorem constitutes the remainder of this section.  It will often be convenient to consider $T \in [n]^k$ to be $(T_1,T_2) \in [n]^{k_1} \times [n]^{k_2}$ with $k_1 + k_2 = k$. In such situations, we will write $w(T) = w(T_1,T_2)$. For intuition, the reader can think of the special case of $w(T) \in \{-1,0,1\}$ for all $T$ and $y \in \{-1,1\}^n$.  Under these additional constraints, $\sum_{T \in [n]^k} w(T) x^T$ is $\val(\calI) - \frac{1}{2}$ for a random $k$-XOR instance $\calI$ so we are certifying that a random $k$-XOR instance does not have value much bigger than $\frac{1}{2}$.

\subsection{The even arity case}
When $k$ is even, we can think of $\sum_{T \in [n]^k} w(T) x^T$ as a quadratic form:
\begin{equation}
\label{eq:even-k}
\sum_{T \in [n]^k} w(T) x^T = \sum_{T_1,T_2 \in [n]^{k/2}} w(T_1,T_2) y_{T_1} y_{T_2},
\end{equation}
where $y_U = x^U$.  We give two methods to certify that the value of this quadratic form is at most $O_k(\sqrt{p} n^{3k/4} \log{n})$.  The first method uses an SDP-based approximation algorithm and works only for $x \in \{-1,1\}^n$.  The second method uses ideas from random matrix theory and works for any $x$ with $\norm{x}_{\infty} \leq 1$.

\paragraph{Approximation algorithms approach} If $x \in \{-1,1\}^n$, we can apply an approximation algorithm of Charikar and Wirth \cite{CW04} for quadratic programming.  They prove the following theorem:
\begin{theorem}{\textup{\cite[Theorem 1]{CW04}}}
\label{thm:CW}
Let $M$ be any $n \times n$ matrix with all diagonal elements $0$.  There exists an efficient randomized algorithm that finds $y \in \{-1,1\}^n$ such that
\[
\E[y^{\top} M y] \geq \Omega\left(\frac{1}{\log n}\right) \max_{x \in \{-1,1\}^n} x^{\top} M  x.
\]
\end{theorem}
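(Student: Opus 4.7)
The plan is to solve the standard SDP relaxation of the Boolean quadratic program and then round it via a truncated-Gaussian rounding tuned at the $\sqrt{\log n}$ scale, so that the rounded value is within a factor of $\log n$ of the SDP value (which upper-bounds $\max_{x\in\{-1,1\}^n} x^{\top} M x$).

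First I would form the SDP relaxation
\[
\mathrm{SDP}(M) \;=\; \max\Bigl\{\sum_{i,j}M_{ij}\langle u_i,u_j\rangle \;:\; u_i\in S^{n-1}\Bigr\},
\]
which is efficiently solvable to any desired additive accuracy and which dominates $\max_{y\in\{-1,1\}^n} y^{\top}My$ via the trivial encoding $u_i=y_i\,e$ for a fixed unit vector $e$. It thus suffices to round the SDP solution $\{u_i\}$ to random $y\in\{-1,1\}^n$ with $\E[y^{\top}My]\geq \Omega(\mathrm{SDP}(M)/\log n)$. For the rounding, I would draw a standard Gaussian $g\in\R^n$ and set $Z_i=\langle g,u_i\rangle$, so that $(Z_i)$ is jointly Gaussian with $\E[Z_iZ_j]=\rho_{ij}:=\langle u_i,u_j\rangle$. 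Fix a threshold $T=\Theta(\sqrt{\log n})$ and define $y_i\in\{-1,1\}$ to be a biased coin with conditional mean $\psi_T(Z_i)$, where $\psi_T(z)=\mathrm{sgn}(z)\min\{|z|/T,\,1\}$ is the clipped odd function bounded by~$1$. The $y_i$ are conditionally independent given $g$.

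Next I would compute $\E[y^{\top}My]$. Since $M_{ii}=0$ and the $y_i$ are conditionally independent given $g$,
\[
\E[y^{\top}My] \;=\; \sum_{i\neq j} M_{ij}\,\E[\psi_T(Z_i)\psi_T(Z_j)] \;=\; \sum_{i\neq j} M_{ij}\sum_{k\geq 1}\widehat{\psi_T}(k)^{2}\rho_{ij}^{k},
\]
where the second equality is the Hermite expansion of $\psi_T$ combined with Mehler's identity for bivariate Gaussians with correlation $\rho_{ij}$. The $k=1$ contribution equals $\widehat{\psi_T}(1)^{2}\cdot\mathrm{SDP}(M)$, and a direct Gaussian integral gives $\widehat{\psi_T}(1)=\Theta(1/T)$, so the linear term contributes $\Theta(\mathrm{SDP}(M)/\log n)$, which is the claimed bound.

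\textbf{Main obstacle.} The hardest step is controlling the higher-order contributions $\sum_{k\geq 2}\widehat{\psi_T}(k)^{2}\sum_{i\neq j}M_{ij}\rho_{ij}^{k}$, which have arbitrary sign and could a priori cancel or even flip the leading linear term. The strategy is to apply Cauchy--Schwarz in the form $\bigl|\sum_{i\neq j}M_{ij}\rho_{ij}^{k}\bigr|\leq \|M\|_{\mathrm{op}}\cdot\bigl\|\rho^{\circ(k-1)}\bigr\|_{F}$, then use $|\rho_{ij}|\leq 1$ together with $\sum_{i,j}\rho_{ij}^{2}=\sum_i\|u_i\|^{4}=n$ to control the Frobenius-norm factors, and combine with the standard bound $\|M\|_{\mathrm{op}}\leq \mathrm{SDP}(M)$. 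A careful tracking of these bounds against the Hermite coefficients $\widehat{\psi_T}(k)$ shows that for $T=\Theta(\sqrt{\log n})$ the total higher-order contribution is at most half the linear one, yielding $\E[y^{\top}My]\geq \Omega(\mathrm{SDP}(M)/\log n)$. Derandomization via repeated sampling (or the method of conditional expectations applied to the Gaussian $g$ and the bits $y_i$) produces a deterministic polynomial-time algorithm attaining the same guarantee in expectation over its internal coins.
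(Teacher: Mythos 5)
First, a framing point: the paper does not prove this statement at all --- Theorem~\ref{thm:CW} is imported verbatim as \cite[Theorem 1]{CW04} and used as a black box, so the only meaningful comparison is with Charikar--Wirth's original argument. Your overall architecture is exactly theirs: solve the SDP relaxation, project onto a random Gaussian, clip at threshold $T = \Theta(\sqrt{\log n})$, round independently with conditional means $\psi_T(Z_i)$, and observe that the linear part of $\E[\psi_T(Z_i)\psi_T(Z_j)]$ recovers a $\Theta(1/T^2)$ fraction of $\mathrm{SDP}(M)$. Recasting the error analysis via the Hermite expansion and Mehler's identity is a legitimate (and arguably cleaner) variant of their direct truncation-error bound. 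The first half of your writeup, including $\wh{\psi_T}(1) = \Theta(1/T)$, is correct.

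The gap is in the step you yourself flag as the main obstacle, where several of the stated inequalities are wrong or unjustified. (i) The bound $\bigl|\sum_{i\ne j}M_{ij}\rho_{ij}^k\bigr|\le \|M\|_{\mathrm{op}}\cdot\|\rho^{\circ(k-1)}\|_F$ is not an instance of Cauchy--Schwarz; the usable inequality is $|\langle M,\rho^{\circ k}\rangle|\le\|M\|_{\mathrm{op}}\cdot\|\rho^{\circ k}\|_{S_1}=\|M\|_{\mathrm{op}}\cdot\tr(\rho^{\circ k})=n\,\|M\|_{\mathrm{op}}$, using that Hadamard powers of a Gram matrix are PSD with unit diagonal. (ii) The identity $\sum_{i,j}\rho_{ij}^2=\sum_i\|u_i\|^4=n$ is false: only the diagonal terms contribute $n$, and the full sum can be as large as $n^2$ (e.g.\ all $u_i$ equal). (iii) The assertion $\|M\|_{\mathrm{op}}\le\mathrm{SDP}(M)$ is not standard and not obvious; what is true (for the symmetrized, zero-diagonal $M$) is $\mathrm{SDP}(M)\ge\max_x x^{\top}Mx\ge c\,\|M\|_F\ge c\,\|M\|_{\mathrm{op}}$, where the middle inequality itself requires an argument (e.g.\ degree-$2$ hypercontractivity plus Paley--Zygmund applied to the mean-zero polynomial $x^{\top}Mx$). (iv) Most importantly, the claim that the higher-order contribution is ``at most half the linear one'' is the crux and is asserted rather than proved: since the matrix bounds above lose $\mathrm{poly}(n)$ factors, you must show that the Hermite mass of $\psi_T$ above level $1$, namely $\E[\psi_T^2]-\wh{\psi_T}(1)^2$, is $e^{-\Omega(T^2)}=n^{-\Omega(C^2)}$ for $T=C\sqrt{\log n}$, and choose $C$ large enough to absorb those polynomial losses against the $\Theta(1/\log n)$ linear term. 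With (i)--(iv) repaired the proof does close, so this is a fixable gap in the quantitative bookkeeping rather than a wrong approach.
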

By Markov's Inequality, this statement holds with probability at least $1/2$.  We can run the algorithm $O(\log n)$ times to get a high probability result.  To apply Theorem~\ref{thm:CW}, we separate out the diagonal terms of \eqref{eq:even-k}, rewriting it as
\begin{equation}
\label{eq:even-k-2-parts}
\sum_{T_1 \ne T_2 \in [n]^{k/2}} w(T_1,T_2) y_{T_1} y_{T_2} + \sum_{U \in [n]^{k/2}} w(U,U) y_U^2.
\end{equation}
We can certify that each of the two terms in this expression is at most $O(\sqrt{p} n^{3k/4} \log n)$.  For the first term, we will need the following claim.
\begin{claim} \label{cl:approx-alg-term1}
With high probability, it holds that
\[
\sum_{T_1, T_2 \in [n]^{k/2}} w(T_1,T_2) y_{T_1} y_{T_2} \leq O(\sqrt{p}n^{3k/4}).
\]
\end{claim}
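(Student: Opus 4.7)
My approach is to view the sum as a quadratic form and reduce Claim~\ref{cl:approx-alg-term1} to a random-matrix operator-norm bound. Set $N = n^{k/2}$ and form the $N \times N$ random matrix $M$ indexed by tuples in $[n]^{k/2}$ with entries $M[T_1, T_2] = w(T_1, T_2)$. Since $y$ is a $\pm 1$-valued vector of length $N$, we have $\|y\|_2^2 = N$, and Cauchy--Schwarz gives
\[
\sum_{T_1, T_2 \in [n]^{k/2}} w(T_1, T_2)\, y_{T_1} y_{T_2} \;=\; y^\top M y \;\leq\; \|y\|_2^2 \cdot \|M\|_{\mathrm{op}} \;=\; n^{k/2} \cdot \|M\|_{\mathrm{op}},
\]
where $\|M\|_{\mathrm{op}}$ denotes the top singular value of $M$. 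This quantity is efficiently computable, so any high-probability upper bound on it yields an algorithmic certificate that holds uniformly over all $y \in \{\pm 1\}^N$ (and hence in particular over all $x \in \{-1,1\}^n$ via $y_U = x^U$).

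It then suffices to show $\|M\|_{\mathrm{op}} \leq O(\sqrt{p}\, n^{k/4})$ with high probability. By hypothesis the entries $M_{T_1,T_2} = w(T_1,T_2)$ are independent, mean zero, bounded by $1$ in magnitude, and satisfy $\E[M_{T_1,T_2}^2] \leq \Pr[w(T_1,T_2) \neq 0] \leq p$. This is precisely the setting of standard random-matrix concentration for independent bounded entries—e.g., the matrix Bernstein inequality, or the sharper Bandeira--van Handel bound—which gives $\E[\|M\|_{\mathrm{op}}] \leq O(\sqrt{pN}) = O(\sqrt{p}\, n^{k/4})$, with concentration around this value. The assumption $p \geq n^{-k/2}$ ensures $pN \geq 1$, so the ``sub-exponential'' term (of order $\log N$) in such concentration bounds is dominated by the $\sqrt{pN}$ term up to constants absorbed by the big-$O$.

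Combining the two inequalities yields $y^\top M y \leq n^{k/2} \cdot O(\sqrt{p}\, n^{k/4}) = O(\sqrt{p}\, n^{3k/4})$ with high probability, which is exactly Claim~\ref{cl:approx-alg-term1}. Since the certificate is an upper bound on $\|M\|_{\mathrm{op}}$, which can be computed via SVD in polynomial time, the algorithmic certification is immediate.

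The main obstacle is choosing the right operator-norm estimate for this sparse random matrix: one must handle the regime where $p$ is small (so $M$ has many zero entries) without losing the $\sqrt{p}\,n^{k/4}$ rate to logarithmic overhead, which is why the hypothesis $p \geq n^{-k/2}$ is imposed. An alternative to citing matrix concentration is the direct trace-method computation of $\E\bigl[\mathrm{tr}((MM^\top)^{\ell})\bigr]$ for an even $\ell$ of order $\log n$, which counts closed walks in the bipartite graph of nonzero entries of $M$; this is the traditional route in this line of work (cf.~\cite{COGL07,BM15}) and recovers the same bound.
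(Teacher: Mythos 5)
Your route is genuinely different from the paper's, and it contains a gap. The paper proves this claim scalarly: for each fixed $y\in\{-1,1\}^{n^{k/2}}$, $\sum_{T_1,T_2}w(T_1,T_2)y_{T_1}y_{T_2}$ is a sum of $n^k$ independent random variables with variance at most $pn^k$, so Bernstein's inequality gives a deviation bound of $O(\sqrt{p}\,n^{3k/4})$ with failure probability $\exp(-\Omega(n^{k/2}))$; a union bound over the $2^{n^{k/2}}$ choices of $y$ then finishes. Crucially this directly controls $\max_{y\in\{\pm1\}^N}y^\top M y$ (with $N=n^{k/2}$), not $\|M\|_{\mathrm{op}}$, and delivers the claimed bound with no logarithmic loss.

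Your reduction to $\|M\|_{\mathrm{op}}$ is lossy in the sparse regime, and the step asserting ``$\E[\|M\|_{\mathrm{op}}]\leq O(\sqrt{pN})$'' is not correct as a consequence of the tools you cite. Matrix Bernstein gives $\E\|M\|\lesssim\sqrt{pN\log N}+\log N$, and the Bandeira--van Handel bound gives $\E\|M\|\lesssim\sqrt{pN}+\sqrt{\log N}$; in both cases the additive term survives. The hypothesis $p\geq n^{-k/2}$ only gives $pN\geq1$, not $pN\gtrsim\log^2 N$, so when $p$ is close to $n^{-k/2}$ the logarithmic term dominates and $\|M\|_{\mathrm{op}}$ is genuinely of order $\sqrt{\log N}$ (or $\sqrt{\log N/\log\log N}$), not $O(1)=O(\sqrt{pN})$; this is the familiar phenomenon that the spectral norm of a very sparse centered random matrix is controlled by its heaviest rows, which a few Bernoulli outliers can inflate. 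Since $\|y\|_2^2=N$, your bound then yields $y^\top M y\lesssim N\sqrt{\log N}=\sqrt{p}\,n^{3k/4}\cdot\sqrt{\log N}/\sqrt{pN}$, i.e.\ a polylog factor more than the claim. The paper avoids this precisely because the quadratic form over $\{\pm1\}^N$ is a strictly weaker maximization than the operator norm, and the Bernstein-plus-union-bound argument exploits that. The trace method suffers the same loss (the paper's own Lemma~\ref{lem:norm-bound} carries $\log^3 n$ factors from it). Your approach is in fact identical in spirit to the paper's alternative ``random matrix approach'' presented immediately after this claim, which explicitly retains the $\log n$ factor via Proposition~\ref{prop:random-matrix-norm}; it is a valid alternative for the weaker $O(\sqrt{p}\,n^{3k/4}\log n)$ certification, but it does not prove the claim as stated.
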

This follows from applying Bernstein's Inequality (Theorem~\ref{thm:bernstein}) for fixed $y$ and then taking a union bound over all $y \in \{-1,1\}^{n^{k/2}}$.  Using the claim, we see that Theorem~\ref{thm:CW} allows us to certify that the value of the first term in \eqref{eq:even-k-2-parts} is at most $O(\sqrt{p} n^{3k/4} \log n)$.

We will use the next claim to bound the second term of \eqref{eq:even-k-2-parts}.
\begin{claim} \label{cl:approx-alg-term2}
With high probability, it holds that
\[
\sum_{U \in [n]^{k/2}} |w_{U,U}| \leq O(\sqrt{p} n^{3k/4}).
\]
\end{claim}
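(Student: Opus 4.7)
The plan is to apply Bernstein's inequality (Theorem~\ref{thm:bernstein}) to the independent nonnegative random variables $X_U := |w(U,U)|$, indexed by $U \in [n]^{k/2}$. There are $n^{k/2}$ such terms, and each satisfies $|X_U| \leq 1$ together with $\E[X_U^2] \leq \Pr[w(U,U) \neq 0] \leq p$.

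The first step is to control the mean. Since $\E[X_U] \leq p$, we have $\E\bigl[\sum_U X_U\bigr] \leq p \cdot n^{k/2}$. The key observation is that $p \leq 1 \leq n^{k/2}$ (in fact $p \geq n^{-k/2}$ is used only to make the target bound meaningful), so $\sqrt{p} \leq n^{k/4}$, and therefore $p \cdot n^{k/2} \leq \sqrt{p} \cdot n^{3k/4}$. In other words, the expectation already lies within the desired bound, and only fluctuations above the mean need to be controlled.

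For concentration, I would shift to the centered variables $Y_U = X_U - \E[X_U]$, which have mean $0$, magnitude $|Y_U| \leq 1$, and $\E[Y_U^2] \leq \E[X_U^2] \leq p$. The total variance proxy is $\sum_U \E[Y_U^2] \leq p \cdot n^{k/2}$. Applying Theorem~\ref{thm:bernstein} with threshold $a = \sqrt{p} \cdot n^{3k/4}$, one obtains
\[
\Pr\left[\sum_U Y_U > \sqrt{p}\,n^{3k/4}\right] \leq \exp\!\left(\frac{-\tfrac{1}{2}\,p\,n^{3k/2}}{p\,n^{k/2} + \tfrac{1}{3}\sqrt{p}\,n^{3k/4}}\right).
\]
Because $p \cdot n^{k/2} \leq \sqrt{p} \cdot n^{3k/4}$, the denominator is $O(\sqrt{p}\,n^{3k/4})$, so the exponent is at most $-\Omega(\sqrt{p}\,n^{3k/4})$. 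Using $p \geq n^{-k/2}$ gives $\sqrt{p}\,n^{3k/4} \geq n^{k/2}$, so the failure probability is $\exp(-\Omega(n^{k/2})) = o(1)$ for $k \geq 2$. Combining with the mean bound yields $\sum_U |w(U,U)| \leq O(\sqrt{p}\,n^{3k/4})$ with high probability.

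This proof is essentially routine: the key insight is just the comparison $p \cdot n^{k/2} \leq \sqrt{p} \cdot n^{3k/4}$, after which standard Bernstein concentration suffices. I do not anticipate any real obstacle, since the quantity being bounded is simply a sum of $n^{k/2}$ independent, bounded, low-probability random variables with a target that already exceeds their expected sum.
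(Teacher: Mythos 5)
Your proof is correct and takes essentially the same approach as the paper, which simply invokes the Chernoff bound without details; you have spelled out the standard concentration argument (using the paper's Bernstein inequality, which serves the same purpose) together with the observation that the mean $p\,n^{k/2}$ is already dominated by $\sqrt{p}\,n^{3k/4}$.
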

Since the $|w_T| \leq 1$ and $\Pr[w_T \ne 0] \leq p$, the claim follows from the Chernoff Bound.  The second term of \eqref{eq:even-k-2-parts} is upper bounded by $\sum_{U \in [n]^{k/2}} |w_{U,U}|$ and we can compute this quantity in polynomial time to certify that its value is at most $O(\sqrt{p} n^{3k/4})$.

\paragraph{Random matrix approach} Observe that \eqref{eq:even-k} is $y^{\top} B y$ for a matrix $B$  indexed by $U \in [n]^{k/2}$ so that $B_{U_1, U_2} = w(U_1, U_2)$.  Then $y^{\top} B y \leq \norm{B} \norm{y}^2$.  To certify that $y^{\top} B y$ is small, we compute $\norm{B}$.  We need to show that $\norm{B}$ is small with high probability.  First, note that $\norm{B}$ is equal to the norm of the $2n^{k/2} \times 2n^{k/2}$ symmetric matrix
\[
\tilde{B} = \begin{pmatrix}
0 & B \\
B^{\top} & 0
\end{pmatrix}
\]
For example, this appears as (2.80) in \cite{Tao12}.  The upper triangular entries of $\tilde{B}$ are independent random variables with mean $0$ and variance at most $p$ by the properties of the $w_S$'s.  We can then apply a standard bound on the norm of random symmetric matrices \cite{Tao12}.
\begin{proposition}
\label{prop:random-matrix-norm}
\textup{\cite[Proposition 2.3.13]{Tao12}}
Let $M$ be a random symmetric matrix $n \times n$ whose upper triangular entries $M_{ij}$ with $i \geq j$ are independent random variables with mean $0$, variance at most $1$, and magnitude at most $K$.  Then, with high probability,
\[
\norm{M} = O(\sqrt{n} \log{n} \cdot \max\{1,K/\sqrt{n}\}).
\]
\end{proposition}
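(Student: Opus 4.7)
The plan is to prove the proposition by the moment method (trace method). First I would use the elementary bound $\norm{M}^{2q} \leq \mathrm{Tr}(M^{2q})$ for any positive integer $q$, and expand the right-hand side as a sum over closed walks of length $2q$ in the complete graph on $[n]$:
$$\E[\mathrm{Tr}(M^{2q})] \;=\; \sum_{i_0,i_1,\ldots,i_{2q-1} \in [n]} \E\bigl[M_{i_0 i_1}M_{i_1 i_2}\cdots M_{i_{2q-1}i_0}\bigr].$$
Because the entries are independent and mean-zero, only walks traversing each edge at least twice contribute nonzero expectation.

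Next I would classify the surviving walks by their combinatorial type: the number $r$ of distinct vertices, the underlying multigraph, and the multiplicity profile of each edge. The dominant term comes from \emph{double trees}, i.e.\ closed walks of length $2q$ that traverse each edge of some spanning tree on $q+1$ vertices exactly twice. There are at most $n(n-1)\cdots(n-q)\cdot C_q$ such walks (where $C_q$ is the $q$-th Catalan number), and each contributes at most the variance to the power $q$, hence at most $1$. This yields a main contribution of order $n \cdot (2\sqrt{n})^{2q}$.

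The delicate point is bounding the contribution of walks where some edge is traversed $s\geq 3$ times. Each such excess traversal gives a factor of at most $K^{s-2}$ (from $|M_{ij}|\leq K$) beyond the variance, but simultaneously forces the walk to use fewer distinct vertices, costing a power of $n$. A careful Wigner-style accounting, tracking the Euler-characteristic deficit of the underlying multigraph, shows that these atypical walks contribute at most $n \cdot \bigl(C\sqrt{n}\cdot \max\{1,K/\sqrt{n}\}\bigr)^{2q}$ for some absolute constant $C$; the $\max\{1,K/\sqrt{n}\}$ factor is exactly the break-even point between losing a vertex (factor $n^{-1/2}$) and gaining an extra traversal (factor $K$).

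Finally, applying Markov's inequality gives
$$\Pr\bigl[\norm{M} > t\bigr] \;\leq\; t^{-2q} \cdot n \cdot \bigl(C\sqrt{n}\cdot\max\{1,K/\sqrt{n}\}\bigr)^{2q};$$
choosing $q = \Theta(\log n)$ and $t$ a sufficiently large constant multiple of $\sqrt{n}\log n \cdot \max\{1,K/\sqrt{n}\}$ makes the right-hand side $o(1)$, yielding the claimed bound with high probability. I expect the main obstacle to be the combinatorial bookkeeping in the third paragraph: organizing the walk shapes so that each over-traversed edge is charged coherently against a lost vertex, and verifying that the resulting geometric series in $K/\sqrt{n}$ is dominated by the stated $\max\{1,K/\sqrt{n}\}$ factor. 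This is the technical heart of the Wigner moment method and is precisely where Tao's Proposition~2.3.13 does its work.
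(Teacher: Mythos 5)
The paper does not actually prove this proposition; it is invoked as a black box with a citation to Tao's textbook (Proposition~2.3.13 of~\cite{Tao12}), so there is no ``paper's own proof'' against which to compare your argument literally. What the paper \emph{does} carry out in Appendix~\ref{sec:norm-bound-pf} is a cognate trace-method computation for the specific matrix $A$ of Lemma~\ref{lem:norm-bound}, so your approach is entirely consistent with the techniques used elsewhere in the paper.

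As a reconstruction of Tao's proof, your sketch is correct in outline: the reduction $\norm{M}^{2q} \leq \mathrm{Tr}(M^{2q})$, the closed-walk expansion, the observation that mean-zero independence kills any walk with a singly-traversed edge, the Catalan-number count of double-tree walks giving the main term $n \cdot (2\sqrt{n})^{2q}$, and the Markov step with $q = \Theta(\log n)$ are all the right moves. The one place where you wave your hands is the same place you flag yourself: the classification of walks by Euler-characteristic deficit and the trade-off between a lost vertex (costing a factor $n$) and an extra traversal of an edge (costing a factor $K$), which is exactly what produces the $\max\{1, K/\sqrt{n}\}$. Your heuristic that the break-even point is $K \approx \sqrt{n}$ is correct, but to make it a proof you would need to carry out the multigraph accounting carefully (for instance, à la Füredi--Komlós or the version in Tao's book, bounding walk shapes by the number of ``fresh'' vs.\ ``repeat'' steps). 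As submitted, this is a sound and well-organized sketch of the standard argument, with the technical core correctly identified but not executed.
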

Let $\tilde{B}'  = \frac{1}{\sqrt{p}} \tilde{B}$.  The upper triangular entries of $\tilde{B}'$ are independent random variables with mean $0$, variance at most $1$, and magnitude at most $1/\sqrt{p}$.  Applying Proposition~\ref{prop:random-matrix-norm} to $\tilde{B}'$ shows that $\|B\| = O\left(kn^{k/4} \sqrt{p} \log{n} \cdot \max\left\{1,\frac{1}{\sqrt{p}n^{k/4}}\right\}\right)$ with high probability.  Since $\norm{y}_{\infty} \leq 1$ by assumption, $\norm{y}^2 \leq n^{k/2}$ and \eqref{eq:even-k} is at most $O(k \sqrt{p} n^{3k/4} \log{n})$ with high probability when $p \geq n^{-k/2}$.

\subsection{The odd arity case}
Fix an assignment $x \in [-1,1]^n$.  For $i \in [n]$, the monomials containing $x_i$ can contribute at most $W_i := \abs{\sum_{T \in [n]^{k-1}} w(T,i) x^T}$ to the objective if $x_i$ is set optimally. By Cauchy-Schwarz,
\begin{equation}
\label{eq:normal-cs}
\sum_{T \in [n]^k} w(T) x^T \leq  \sum_{i \in [n]} W_i \leq \sqrt{n} \sqrt{\sum_{i \in [n]} W_i^2},
\end{equation}
so it suffices to bound $\sum_{i \in [n]} W_i^2$.  We will write this as a quadratic polynomial and then bound it using spectral methods:
\begin{align}
\sum_{i \in [n]} W_i^2 &= \sum_{T,U \in [n]^{k-1}} \sum_{i \in [n]} w(T, i) w(U,i) x^{T} x^{U} \nonumber \\
&=  \sum_{T_1',T_2',U_1',U_2' \in [n]^{\frac{k-1}{2}}} \sum_{i \in [n]} w(T_1', U_1', i) w(T_2', U_2', i) x^{(T_1',T_2')} x^{(U'_1,U'_2)}. \label{eq:shuffle}
\end{align}
Define the $n^{k-1} \times n^{k-1}$ matrix $A$ indexed by $[n]^{k-1}$:
\begin{equation}
\label{eq:A-def}
A_{(i_1, i_2), (j_1,j_2)} = \begin{cases}
\sum_{\ell \in [n]} w(i_1,j_1,\ell)  w(i_2, j_2, \ell) \quad & \text{if $(i_1,j_1) \ne (i_2, j_2)$} \\
0 \quad & \quad \text{otherwise},
\end{cases}
\end{equation}
where we have divided the indices of $A$ into 2 blocks of $\frac{k-1}{2}$ coordinates each.  Define $x^{\otimes k-1} \in \R^{[n]^{k-1}}$ so that $x^{\otimes k-1} (T) = x^T$.  Then \eqref{eq:shuffle} is equal to
\begin{equation}
\label{eq:break-up-sum}
(x^{\otimes k-1})^{\top} A x^{\otimes k-1} + \sum_{T,U \in [n]^{\frac{k-1}{2}}} (x^T)^2 (x^U)^2  \sum_{i \in [n]} w(T, U, i)^2.
\end{equation}
The first term is at most $\norm{A}n^{k-1}$ since the variables are bounded.  We can compute $\norm{A}$ to certify this.  With high probability, $\norm{A}$ is not too big.
\begin{lemma}
\label{lem:norm-bound}
Let $k \geq 3$ and $p \geq n^{-k/2}$.  Let $\{w(T)\}_{T \in [n]^k}$ be indepedent random variables satisfying conditions \eqref{eq:w-mean-0}, \eqref{eq:w-mostly-0}, and \eqref{eq:w-bounded} above.  Let $A$ be defined as in \eqref{eq:A-def}.  With high probability,
\[
\norm{A} \leq 2^{O(k)} p n^{k/2} \log^3 n.
\]
\end{lemma}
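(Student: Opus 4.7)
The plan is to bound $\|A\|$ by the trace method. Although $A$ is not obviously symmetric (since the $w$'s are indexed by ordered tuples), we can consider the symmetrization
\[
\widetilde{A} = \begin{pmatrix} 0 & A \\ A^\top & 0 \end{pmatrix}
\]
whose norm equals $\|A\|$, and apply the standard inequality $\|\widetilde{A}\|^{2q} \leq \mathrm{tr}(\widetilde{A}^{2q})$ for an even integer $q = \Theta(\log n)$; this choice will make an $n^{O(1)}$ slack in the moment bound translate to only an $O(1)$ factor in the norm bound after taking the $2q$-th root and invoking Markov's inequality.

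First I would expand the trace. A term in $\mathrm{tr}(\widetilde{A}^{2q})$ corresponds to a closed walk $v_0 \to v_1 \to \cdots \to v_{2q-1} \to v_0$ where each $v_t \in [n]^{k-1}$, together with auxiliary indices $\ell_0,\ldots,\ell_{2q-1} \in [n]$ coming from the inner sum in the definition of~$A$. Writing each $v_t = (v_t^{(1)},v_t^{(2)})$, the contribution is (up to signs coming from the block structure of $\widetilde{A}$)
\[
\E\!\left[\prod_{t=0}^{2q-1} w\bigl(v_t^{(1)},v_{t+1}^{(1)},\ell_t\bigr)\, w\bigl(v_t^{(2)},v_{t+1}^{(2)},\ell_t\bigr)\right],
\]
a product of $4q$ factors $w(\cdot)$ (indices taken mod $2q$). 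Since the $w$'s are independent, mean zero, and supported on magnitude~$1$, the expectation vanishes unless every distinct $k$-tuple appearing as an argument appears at least twice; when $s$ distinct tuples appear, the expectation is at most $p^s$ by the $\Pr[w\neq 0] \leq p$ hypothesis.

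Next I would perform the combinatorial bookkeeping, grouping walks by the ``shape'' (labeled multigraph) that records which of the $4q$ tuples coincide. For a shape with $s$ distinct tuples (so $s \leq 2q$), the number of ways to realize it by actual index choices $(v_t,\ell_t)$ is at most $n^{(k/2) \cdot 2s} \cdot q^{O(q)} \cdot 2^{O(kq)}$: the factor $n^{k \cdot s}$ bounds the number of distinct tuple assignments because each new tuple introduces at most $k$ free coordinates (about $k/2$ among the $v_t^{(j)}$'s and one from $\ell_t$, used twice per tuple in the pairing), and the $q^{O(q)}/2^{O(kq)}$ factors bound the number of pairing patterns on $4q$ slots. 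Combining with the $p^s$ expectation bound, each shape contributes at most $(p n^{k/2})^{2s}\cdot (\text{polylog})^{O(q)}$, and the dominant shapes are the tree-like ones with $s = 2q$; summing over all shapes therefore gives
\[
\E[\mathrm{tr}(\widetilde{A}^{2q})] \;\leq\; n^{O(1)}\cdot\bigl(2^{O(k)} p n^{k/2} \log^{O(1)} n\bigr)^{2q}.
\]
Finally, Markov's inequality with $q = \Theta(\log n)$ yields the stated bound on $\|A\|$ with high probability.

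The main obstacle is Step~3, the shape-counting: one must verify both that the dominant shapes are the ``even-pairing trees'' (so that the contribution grows as $(pn^{k/2})^{2q}$ rather than any larger power) and that the logarithmic overhead in the final bound is only $\log^3 n$, matching the claim. This requires carefully propagating the $q^{O(q)}$ and $2^{O(kq)}$ combinatorial factors and then taking the $2q$-th root with $q = c\log n$; the tree-like shape analysis is essentially a $(k$-colored$)$ variant of the Wigner/Füredi--Komlós trace argument, and the non-leading shapes (cycles in the pairing graph) are absorbed by a slightly larger $q$, exactly as in the precedent analyses of~\cite{COGL07,BM15}.
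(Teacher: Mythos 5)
Your outline — symmetrize to $\widetilde A$, apply the trace method with $q=\Theta(\log n)$, classify walk terms by which $w$-factors coincide, use that unrepeated factors kill the expectation — is exactly the strategy of the paper's proof (which works directly with $\tr\bigl((AA^\top)^r\bigr)$, equivalent to your $\tr(\widetilde A^{2q})$ up to a factor $2$). But the central combinatorial accounting in your Step~3 is internally inconsistent and, as written, fails to reach the claimed bound.

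Concretely: you claim (a) that a shape with $s$ distinct $k$-tuples has at most $n^{ks}\cdot q^{O(q)} 2^{O(kq)}$ realizations, (b) that the expectation is at most $p^s$, and (c) that therefore the shape contributes at most $(pn^{k/2})^{2s}\cdot(\mathrm{polylog})^{O(q)}$. But (a) and (b) multiply to $p^s n^{ks}=(pn^k)^s$, not $(pn^{k/2})^{2s}=p^{2s}n^{ks}$; the two agree only if $p^s=1$. Moreover, $n^{ks}$ itself is too generous: with $s$ up to $2q$ it would give $\E[\tr]\lesssim (pn^k)^{2q}$ and hence only $\|A\|\lesssim pn^k$, losing the entire $n^{k/2}$ factor. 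The correct per-shape count is roughly $n^{(k/2)s}$, and proving that requires the structural facts the paper isolates: writing the walk in terms of $(i_1,\ldots,i_{2r},j_1,\ldots,j_{2r})\in\bigl([n]^{(k-1)/2}\bigr)^{4r}$ and $(\ell_1,\ldots,\ell_{2r})\in[n]^{2r}$, one shows that the number of distinct $w$-factors is at least $2|L|$ and at least $|J|-2$. Since a nonvanishing term has at most $2r$ distinct factors, this forces $|L|\le r$ and $|J|\le 2r+2$, giving an index count of $n^{|J|(k-1)/2+|L|}\le n^{kr+O(k)}$ and thus $p^{2r}n^{kr}=(pn^{k/2})^{2r}$ per shape. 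Nothing in your sketch establishes $|L|\le q$ rather than the trivial $|L|\le 2q$.

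The second, related omission: the bound $|L|\le r$ (``number of distinct $w$-factors $\ge 2|L|$'') depends crucially on the zero diagonal of $A$, i.e.\ on conditions \eqref{eq:tr1}--\eqref{eq:tr3}, which guarantee that each $\ell_t$ contributes \emph{two distinct} $w$-factors, not a repeated factor $w(\cdot)^2$. Your proposal never invokes these conditions. As the paper remarks explicitly, without them one only gets $|L|\le 2r$ and the bound degrades by a factor of $\sqrt{n}$ — precisely the $n^{\lceil k/2\rceil}$ versus $n^{k/2}$ gap the lemma is designed to close. So while you correctly flag the shape-counting as the main obstacle, the accounting you propose does not carry it out, and the specific numeric claims you write down cannot both hold.
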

We can therefore certify that the first term is $2^{O(k)} p n^{3k/2 - 1} \log^3 n$.  We will prove the lemma in Appendix~\ref{sec:norm-bound-pf}.

The second term of \eqref{eq:break-up-sum} is at most $\sum_{T \in [n]^k} w(T)^2$.  We can easily compute this and the Chernoff Bound implies that its value is at most $p n^{3k/2 - 1}$ with high probability.

So far, with high probability we can certify that $\sum_{i \in [n]} W_i^2 = 2^{O(k)} p n^{3k/2 - 1} \log^3 n$.   Plugging this bound into \eqref{eq:normal-cs} concludes the proof.

\begin{remark}
It would have been more natural to have written $\sum_{i \in [n]} W_i^2 = (x^{\otimes k-1})^{\top} A' x^{\otimes k-1}$ for $A'$ such that $A'_{T,U} = \sum_{i \in [n]} w(T, i) w(U,i)$.  However, $\norm{A'}$ could be too large because of the contribution of the second term in \eqref{eq:break-up-sum}.  We use the additional assumption that $\norm{x}_{\infty} \leq 1$ to get around this issue.
\end{remark}

\begin{remark}
We have defined $A$ so that $A_{(i_1,i_2), (j_1,j_2)} = \sum_{\ell \in [n]} w(i_1,j_1,\ell)  w(i_2, j_2, \ell)$, not $A_{i, j} = \sum_{\ell \in [n]} w(i,\ell) w(j,\ell)$.  The reduces the correlation among entries $w(b,c)$ and $w(b,c')$ for $c \ne c'$.  Intuitively, $A$ looks more like a random matrix with independent entries, so we can bound its norm using the trace method.  See the proof of Lemma~\ref{lem:norm-bound} in Appendix~\ref{sec:norm-bound-pf}.
\end{remark}

\subsection{An SOS version}
\label{subsec:main-sos}
In this section, we will prove the SOS version of Theorem~\ref{thm:main}.
\begin{customthm}{\ref{thm:main-sos}}
For $k \geq 2$ and $p \geq n^{-k/2}$, let $\{w(T)\}_{T \in [n]^k}$ be independent random variables such that for each $T \in [n]^k$,
\begin{align*}
\E[w(T)] &= 0  \\
\Pr[w(T) \ne 0] &\leq p  \\
\abs{w(T)} &\leq 1.
\end{align*}
Then, with high probability, 
\[
\{x_i^2 \leq 1\}_{i \in [n]} \vdash_{2k} \sum_{T \in [n]^k} w(T) x^T \leq 2^{O(k)} \sqrt{p} n^{3k/4} \log^{3/2} n.
\]
\end{customthm}
Rather than writing out the full proof, we will indicate the small changes required to convert the above proof of Theorem~\ref{thm:main} into SOS form.
\paragraph{Even arity} The random matrix proof for the even case can easily be converted into an SOS proof with degree $k$.  When $O(k \sqrt{p} n^{k/4} \log{n}) I - B \succeq \mathbf{0}$, there exists a matrix $M$ such that $M^{\top}M = O(k \sqrt{p} n^{k/4} \log{n}) I - B$. Then
\[
O(k \sqrt{p} n^{k/4} \log{n}) \norm{y}^2 - y^{\top}By = (My)^{\top}(My) = \sum_{T \in [n]^{k/2}} \left(\sum_{U \in [n]^{k/2}} M_{T,U} y_{U}\right)^2
\]
so
\[
\{x_i^2 \leq 1\}_{i \in [n]} \vdash_k \sum_{T \in [n]^k} w(T) x^T \leq O(k \sqrt{p} n^{3k/4} \log{n}).
\]

\paragraph{Odd arity} A couple of additional issues arise in the odd case.  First of all, the square root in \eqref{eq:normal-cs} is not easily expressed in SOS, so we instead prove the squared version
\begin{equation}
\label{eq:xor-squared}
\left(\sum_{T \in [n]^k} w(T) x^T\right)^2 \leq 2^{O(k)} n^{3k/2} \log^{3} n.
\end{equation}
By a simple extension of \cite[Fact 3.3]{OZ13}, \eqref{eq:xor-squared} implies \eqref{eq:main-thm} in SOS :
\begin{fact}
\label{fact:sos-square-root}
\[
X^2 \leq b^2 \vdash_2 X \leq b.
\]
\end{fact}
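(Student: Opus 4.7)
The plan is to exhibit an explicit degree-$2$ SOS certificate via a single algebraic identity. Assuming $b > 0$ (the degenerate case $b = 0$ makes $X^2 \leq 0$, and since $-X^2$ is itself minus a square the conclusion $X \leq 0$ follows immediately), the key observation is the factorization
\[
2b(b - X) = (b - X)^2 + (b-X)(b+X) = (b-X)^2 + (b^2 - X^2).
\]
Dividing through by the positive constant $2b$ yields
\[
b - X = \tfrac{1}{2b}(b - X)^2 + \tfrac{1}{2b}\cdot (b^2 - X^2).
\]

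I would then read off the SOS proof directly from this identity, following the definition given at the start of Section~\ref{sec:sos}. The first summand $\tfrac{1}{2b}(b-X)^2$ is a nonnegative scalar times a square, hence SOS of degree~$2$; the second summand is a nonnegative scalar (i.e., SOS of degree $0$) times the hypothesis polynomial $b^2 - X^2$. Taking $u_0 = \tfrac{1}{2b}(b-X)^2$ and $u_1 = \tfrac{1}{2b}$, this exactly matches the prescribed form $b - X = u_0 + u_1 \cdot (b^2 - X^2)$ for a degree-$2$ SOS derivation.

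I do not anticipate any real obstacle; the entire argument reduces to verifying one polynomial identity. The only bookkeeping worth double-checking is the degree count: the highest-degree term on the right-hand side is $(b-X)^2$, which is quadratic in $X$, and the hypothesis is multiplied by a degree-$0$ coefficient, so the overall degree is exactly $2$, matching the claim $X^2 \leq b^2 \vdash_2 X \leq b$.
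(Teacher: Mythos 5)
Your proof is correct and arrives at exactly the same algebraic identity the paper uses, namely $b - X = \tfrac{1}{2b}(b-X)^2 + \tfrac{1}{2b}(b^2 - X^2)$; the only difference is that you derive it via the factorization $2b(b-X) = (b-X)^2 + (b-X)(b+X)$ while the paper just expands and verifies, which is cosmetic. (Your aside about $b = 0$ is not actually correct as stated --- there is no degree-$2$ SOS derivation of $X \leq 0$ from $X^2 \leq 0$, since any expression $u_0 + u_1\cdot(-X^2)$ with $u_0$ SOS and $u_1$ a nonnegative constant has nonnegative constant term and cannot equal $-X$ --- but this case is irrelevant to how the fact is used, and the paper likewise silently assumes $b > 0$.)
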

\begin{proof}
\[
\frac{1}{2b}(b^2 - X^2) + \frac{1}{2b}(b-X)^2 = \frac{b}{2}-\frac{1}{2b}X^2 + \frac{b}{2} - X + \frac{1}{2b}X^2 = b - X. \qedhere
\]
\end{proof}

Secondly, we do not know how to prove the Cauchy-Schwarz inequality \eqref{eq:normal-cs} in SOS.   However, O'Donnell and Zhou show that a very similar inequality can be proved in SOS \cite[Fact 3.8]{OZ13}:
\begin{fact}
\label{fact:sos-cs}
\[
\vdash_2 YZ \leq \frac{1}{2}Y^2 + \frac{1}{2}Z^2.
\]
\end{fact}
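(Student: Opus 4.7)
The plan is to exhibit an explicit sum-of-squares decomposition of the polynomial $\tfrac{1}{2}Y^2 + \tfrac{1}{2}Z^2 - YZ$. Since this inequality does not rely on any axioms (the proof system has an empty hypothesis set here), producing such a decomposition of total degree at most $2$ immediately yields the claim $\vdash_2 YZ \leq \tfrac{1}{2}Y^2 + \tfrac{1}{2}Z^2$.

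The key observation is that the polynomial identity
\[
\tfrac{1}{2}Y^2 + \tfrac{1}{2}Z^2 - YZ \;=\; \tfrac{1}{2}(Y-Z)^2
\]
holds identically in $\R[Y,Z]$; one checks this by expanding the right-hand side. Since the right-hand side is a single square of a polynomial of degree $1$, multiplied by a nonnegative scalar $\tfrac{1}{2}$, it is by definition sum-of-squares, and its degree is $2$. In the notation of the definition at the start of Section~\ref{sec:sos}, this corresponds to the trivial choice $u_0 = \tfrac{1}{2}(Y-Z)^2$ (with no $u_i q_i$ or $v_i r_i$ terms), so the derivation has degree $2$ as required.

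There is essentially no obstacle; the ``hard part'' is merely recognizing that this is the standard AM--GM/Cauchy--Schwarz identity and observing that its natural proof already sits inside the SOS system at degree~$2$. I would present the one-line algebraic identity as the entire proof, perhaps remarking that this is precisely the degree-$2$ analogue of the Cauchy--Schwarz substitute used elsewhere (e.g., Fact~\ref{fact:sos-square-root}), which allows one to trade a product of variables for a sum of squares inside SOS derivations.
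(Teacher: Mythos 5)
Your proof is correct: the identity $\tfrac12 Y^2 + \tfrac12 Z^2 - YZ = \tfrac12(Y-Z)^2$ is exactly the degree-$2$ SOS certificate, and taking $u_0 = \tfrac12(Y-Z)^2$ with no other terms matches the definition. The paper itself gives no proof of this fact --- it simply cites O'Donnell--Zhou \cite[Fact 3.8]{OZ13} --- so your argument supplies the (standard, one-line) decomposition that that reference uses.
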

Using this fact instead of Cauchy-Schwarz to prove the squared version of \eqref{eq:normal-cs}, we can follow the argument above to show that
\[
\{x_i^2 \leq 1\}_{i \in [n]} \vdash_{2k} \left(\sum_{T \in [n]^k} w(T) x^T\right)^2 \leq n\left(x^{\otimes k-1}\right)^{\top} A x^{\otimes k-1} + n\sum_{T \in [n]^k} w(T)^2.
\]
The norm bound can be proven in SOS exactly as in the even case.


\subsection{Proof of Lemma~\ref{lem:norm-bound}}
\label{sec:norm-bound-pf}
We restate the definition of the matrix $A$ and the statement of the lemma.

\begin{customlem}{\ref{lem:norm-bound}}
Let $k \geq 3$ and $p \geq n^{-k/2}$.  Let $\{w(T)\}_{T \in [n]^k}$ be indepedent random variables satisfying conditions \eqref{eq:w-mean-0}, \eqref{eq:w-mostly-0}, and \eqref{eq:w-bounded} above.  Let $A$ be the $[n]^{k-1} \times [n]^{k-1}$ indexed by $[n]^{k-1}$ that is defined as follows:
\[
A_{(i_1, i_2), (j_1,j_2)} = \begin{cases}
\sum_{\ell \in [n]} w(i_1,j_1,\ell)  w(i_2, j_2, \ell) \quad & \text{if $(i_1,j_1) \ne (i_2, j_2)$} \\
0 \quad & \quad \text{otherwise}.
\end{cases}
\]
Then with high probability,
\[
\norm{A} \leq 2^{O(k)} p n^{k/2} \log^3 n.
\]
\end{customlem}

The proof closely follows the arguments of \cite[Lemma 17]{COGL04} and \cite[Section 5]{BM15}.  Both proofs use the trace method: To bound the norm of a symmetric random matrix $M$, it suffices to bound $\E[\tr(M^r)]$ for large $r$.  For non-symmetric matrices, we can instead work with $MM^{\top}$.  In our particular case, we have the following.
\begin{claim}
If $\E[\tr((AA^{\top})^{r})] \leq n^{O(k)} 2^{O(r)} r^{6r} p^{2r} n^{kr}$, then $\norm{A} \leq 2^{O(k)} p n^{k/2} \log^3 n$ with high probability.
\end{claim}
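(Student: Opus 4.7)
The plan is to use the standard trace method together with Markov's inequality. Since $AA^{\top}$ is positive semidefinite, its largest eigenvalue equals $\|A\|^2$, and so for any integer $r \geq 1$ we have
\[
\|A\|^{2r} \;=\; \lambda_{\max}(AA^{\top})^r \;\leq\; \tr\bigl((AA^{\top})^r\bigr).
\]
By Markov's inequality applied to the nonnegative random variable $\tr((AA^{\top})^r)$, it follows that for any threshold $t > 0$,
\[
\Pr\bigl[\|A\| > t\bigr] \;=\; \Pr\bigl[\|A\|^{2r} > t^{2r}\bigr] \;\leq\; \frac{\E[\tr((AA^{\top})^r)]}{t^{2r}}.
\]
So I just need to substitute the hypothesized moment bound and choose $r$ and $t$ carefully.

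I would take $t = C^k \cdot p n^{k/2} \log^3 n$ for a constant $C$ (depending on the implicit constants in the $O(\cdot)$ of the hypothesis, to be fixed at the end). Then
\[
\frac{\E[\tr((AA^{\top})^r)]}{t^{2r}} \;\leq\; \frac{n^{O(k)} \cdot 2^{O(r)} \cdot r^{6r} \cdot p^{2r} \cdot n^{kr}}{C^{2kr} \cdot p^{2r} \cdot n^{kr} \cdot \log^{6r} n} \;=\; \frac{n^{O(k)} \cdot 2^{O(r)} \cdot r^{6r}}{C^{2kr} \cdot \log^{6r} n}.
\]
The key cancellation is to set $r = \lceil \log n \rceil$, which makes $r^{6r}/\log^{6r} n$ a constant (independent of $n$). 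With this choice, $2^{O(r)} = n^{O(1)}$, and the denominator contributes $C^{2kr} = n^{2k \log C}$. Thus the ratio becomes $n^{O(k) + O(1) - 2k \log C}$, which is $o(1)$ once the absolute constant $C$ is chosen large enough (depending only on the constants hidden in the $n^{O(k)}$ and $2^{O(r)}$ factors of the hypothesis). Absorbing $\log C$ into the constant in front of $k$, we conclude that $\|A\| \leq 2^{O(k)} p n^{k/2} \log^3 n$ with probability $1 - o(1)$.

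There is essentially no real obstacle here --- the argument is a routine ``optimize $r$ for the trace bound'' computation. The one thing to be careful about is the interaction between $r^{6r}$ and $\log^{6r} n$: choosing $r$ much smaller than $\log n$ would leave $\log^{6r} n$ underutilized, while choosing $r$ much larger would blow up $r^{6r}$; the sweet spot $r = \Theta(\log n)$ is exactly what the factor $\log^3 n$ in the target bound is calibrated for. All other factors (the $2^{O(k)}$, the $p n^{k/2}$) are absorbed cleanly, so the argument goes through with no further input beyond the hypothesized moment bound.
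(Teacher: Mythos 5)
Your argument is correct and matches the paper's proof exactly: both use the inequality $\|A\|^{2r} \leq \tr((AA^\top)^r)$, apply Markov's inequality, and choose $r = \Theta(\log n)$ so that the $r^{6r}$ factor is absorbed by the $\log^{6r} n$ in the denominator. The paper states this in three sentences without spelling out the arithmetic; your version is a more detailed rendering of the same routine calculation.
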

\begin{proof}
Observe that $\|A\|^{2r} \leq \tr((AA^{\top})^{r})$.  By Markov's Inequality, $\Pr[\norm{A} \geq B] \leqÊ\frac{\E[\tr((AA^{\top})^{r})]}{B^{2r}}$.  We get the claim by plugging in $r = \Theta(\log n)$ and setting constants appropriately.
\end{proof}
\begin{remark}
\label{rem:trace-higher-prob}
We can get arbitrarily small $1/\poly(n)$ probability of failure:  This proof shows that $\norm{A} \leq K 2^{O(k)} pn^{k/2}\log^3 n$ with probability at most $n^{-\log K}$.
\end{remark}

In the the remainder of this section, we will bound $\E[\tr((AA^{\top})^{r})]$.
\begin{lemma}
Under the conditions of Lemma~\ref{lem:norm-bound}, $\E[\tr((AA^{\top})^{r})] \leq n^{O(k)} 2^{O(r)} r^{6r} p^{2r} n^{kr}$ with high probability.
\end{lemma}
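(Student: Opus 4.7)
The plan is to apply the trace method. Since $AA^\top$ is symmetric and positive semidefinite, $\|A\|^{2r} \leq \tr((AA^\top)^r)$, so it suffices to upper bound $\E[\tr((AA^\top)^r)]$. I will expand this trace as a sum over closed walks of length $2r$ in a graph whose vertex set is $[n]^{k-1}$, where each step contributes an entry of $A$ (or $A^\top$). Since each $A$-entry is itself a sum over an inner index $\ell \in [n]$ of a product of two $w$-variables (coming from the two blocks $i_1, i_2$ and $j_1, j_2$ in the definition), the full expansion becomes a sum indexed by (a) a sequence of $2r$ vertices $v_0,\ldots,v_{2r-1} \in [n]^{k-1}$, and (b) a sequence of $2r$ inner indices $\ell_1,\ldots,\ell_{2r} \in [n]$, of a product of exactly $4r$ many $w(\cdot)$-variables.

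Since each $w(T)$ is independent and mean zero, a term contributes nonzero expectation only if every distinct $w$-variable appearing in the product does so with multiplicity at least $2$. In particular, at most $s \leq 2r$ distinct $w$-variables appear. Since $|w(T)| \leq 1$ and $\Pr[w(T) \neq 0] \leq p$, the expectation of each surviving term is at most $p^s$, and the dominant case is $s = 2r$ (each distinct variable appearing exactly twice), contributing $p^{2r}$.

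The combinatorial heart of the proof is then to count, for the dominant pairing case, the number of walks that yield a nonzero contribution. I will follow the strategy of \cite{COGL04,BM15}: first encode each walk by a \emph{shape}, i.e., the equivalence class under relabeling of the distinct coordinate values and of the distinct $\ell_i$'s; then count shapes, and separately count the number of ways to instantiate each shape by assigning values in $[n]$ to its labels. The number of shapes is bounded by $2^{O(r)} r^{O(r)}$, accounting for the pairing pattern among the $4r$ $w$-occurrences (at most $(4r)! / (2^{2r}(2r)!) \leq r^{O(r)}$) together with a factor describing how the $k-1$ coordinates of each $v_i$ match up with earlier vertices. For a given shape, the instantiation count is at most $n^{d}$, where $d$ is the number of free $[n]$-coordinates; the key accounting is that each distinct $w(a_1, a_2, \ell)$ constraint couples $k$ coordinate labels, and after the required pairings one is left with at most $kr + O(k)$ free coordinates, giving $n^{O(k)} n^{kr}$ instantiations.

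The main obstacle I anticipate is the ``free coordinates'' count: one must carefully track how the split structure of $A$'s index set (a pair of blocks of size $(k-1)/2$, sharing the inner $\ell$) forces enough coincidences among $\ell$-values and among block-coordinates to bring the exponent down from a naive $(k-1) \cdot 2r + 2r$ to $kr + O(k)$. This is essentially a graph-theoretic ``excess-edge'' argument: one associates a multigraph to each shape and argues that connectedness plus the multiplicity-two pairing forces the number of distinct vertices to be at most roughly $kr$. Subleading cases where $s < 2r$ are absorbed into the $n^{O(k)}$ factor by crude bounds using $p \geq n^{-k/2}$. Combining the shape count, the instantiation count, and the probabilistic factor $p^{2r}$ yields the claimed bound $n^{O(k)} 2^{O(r)} r^{6r} p^{2r} n^{kr}$.
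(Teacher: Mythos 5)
Your high-level strategy is the same as the paper's: trace method, expansion into products of $w$-factors indexed by block coordinates and inner $\ell$'s, a multiplicity-two argument to cap the number of distinct factors, and then a shape/instantiation count balanced using $p\geq n^{-k/2}$. However, you explicitly flag the combinatorial counting as the ``main obstacle'' and then do not actually carry it out, and there is a specific idea in that counting that your sketch does not surface and that the paper warns is essential.

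The paper bounds $\E[P_{J,L}]$ by $p^{\max\{2|L|,\,|J|-2\}}$ via two separate claims: (i) the number of distinct $w$-factors is at least $2|L|$, and (ii) it is at least $|J|-2$ by a Wigner-style encounter argument. Claim~(i) is \emph{not} automatic: it uses crucially that the diagonal of $A$ is zeroed out, i.e.\ the constraints $(i_s,j_s)\ne(i_{s+1},j_{s+1})$ (and the wrap-around versions). Without those constraints each $\ell$ could contribute only a single distinct $w$-factor (appearing squared), so you would only get $|L|\leq 2r$, and the paper explicitly remarks this degrades the final bound by a factor of $\sqrt{n}$. Your ``connectedness plus multiplicity-two pairing'' framing does not mention the zero-diagonal condition at all, so as written it would not recover the correct exponent. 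Also note that the dominant case is not really ``$s=2r$'': the final bound $n^{kr+k-1}p^{2r}$ arises from bounding \emph{every} $(a,b)$-term by $n^{k-1}(n^kp^2)^{\max\{b,a/2-1\}}$ and using $n^kp^2\geq1$ together with the caps $a\leq 2r+2$, $b\leq r$; the $s<2r$ ``subleading cases'' are handled inside this balancing, not by a separate crude $n^{O(k)}$ absorption. The shape-count you quote ($r^{O(r)}$) is consistent with the paper's $(4r)^{4r}\cdot(2r)^{2r}\leq 2^{O(r)}r^{6r}$, so that part is fine. To close the gap you would need to (a) establish the two distinct-factor lower bounds, explicitly invoking the zero-diagonal condition for the first one, and (b) carry out the case analysis on $\max\{2b,a-2\}$ rather than assuming the pairing case dominates.
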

\begin{proof}
Recall that we index $A$ by elements of $n^{k-1}$ divided into two blocks of $\frac{k-1}{2}$ coordinates each.  First, note that
\[
\tr((AA^{\top})^r) = \sum_{i_1,\ldots,i_{2r} \in [n]^{\frac{k-1}{2}}} (AA^{\top})_{(i_1,i_2),(i_3,i_4)}(AA^{\top})_{(i_3,i_4),(i_5,i_6)} \cdots (AA^{\top})_{(i_{2r-1},i_{2r}),(i_1,i_2)}.
\]
Expanding this out using the definition of $A$ and setting $w_T = w(T)$, we get that
\[
\tr((AA^{\top})^r) = \sum w_{i_1,j_1,\ell_1} w_{i_2,j_2,\ell_1}w_{i_3,j_1,\ell_2}w_{i_4,j_2,\ell_2} \cdots w_{i_{2r-1},j_{2r-1},\ell_{2r-1}} w_{i_{2r},j_{2r},\ell_{2r-1}} w_{i_1,j_{2r-1},\ell_{2r}} w_{i_2,j_{2r},\ell_{2r}},
\]
where the sum is over $\ell_1,\ldots,\ell_{2t} \in [n]$ and $i_1,\ldots,i_{2t},j_1,\ldots,j_{2t} \in [n]^{k-1}$ satisfying  
\begin{align}
(i_s,j_s) &\ne (i_{s+1},j_{s+1}) \quad &\text{for $1 \leq s \leq 2r-1$} \label{eq:tr1} \\
(i_{s+2}, j_s) &\ne (i_{s+3}, j_{s+1}) \quad  &\text{for $1 \leq s \leq 2r-3$} \label{eq:tr2} \\
(i_1,j_{2r-1}) &\ne (i_2,j_{2r}). \label{eq:tr3}
\end{align}
Let $\Omega$ be the set of all $(i_1,\ldots,i_{2r},j_1,\ldots,j_{2r}) \in ([n]^{\frac{k-1}{2}})^{4r}$ satisfying \eqref{eq:tr1}, \eqref{eq:tr2}, and \eqref{eq:tr3}.  Then for $J \in \Omega$ and $L = (\ell_1,\ldots,\ell_{2r}) \in [n]^{2r}$, define
\begin{equation}
\label{eq:pjl}
P_{J,L} = w_{i_1,j_1,\ell_1} w_{i_2,j_2,\ell_1}w_{i_3,j_1,\ell_2}w_{i_4,j_2,\ell_2} \cdots w_{i_{2r-1},j_{2r-1},\ell_{2r-1}} w_{i_{2r},j_{2r},\ell_{2r-1}} w_{i_1,j_{2r-1},\ell_{2r}} w_{i_2,j_{2r},\ell_{2r}}.
\end{equation}
Let $|J| = |\{i_1,\ldots,i_{2r},j_1,\ldots,j_{2r}\}|$ be the number of distinct elements of $[n]^{\frac{k-1}{2}}$ in $J$ and define $|L| = |\{\ell_1,\ldots,\ell_{2r}\}|$ similarly.
We then have
\[
\E[\tr((AA^{\top})^r)] = \sum_{J \in \Omega} \sum_{L \in [n]^{2r}} \E[P_{J,L}] = \sum_{a = 1}^{4r} \sum_{b = 1}^{2r} \sum_{\substack{J \in \Omega \\ |J| = a}} \sum_{\substack{L \in [n]^{2r} \\ |L| = b}} \E[P_{J,L}].
\]
To bound this sum, we will start by bounding $\E[P_{J,L}]$.  We will need two claims.
\begin{claim}
The number of distinct $w_{i,j,\ell}$ factors in $P_{J,L}$ is at least $2|L|$.
\end{claim}
\begin{proof}
For each $\ell \in L$, \eqref{eq:pjl} shows that $P_{J,L}$ contains a pair of the form $w_{i_s,j_s,\ell} w_{i_{s+1},j_{s+1},\ell}$ or $w_{i_{s+2} j_s,\ell}  w_{i_{s+3}, j_{s+1}, \ell}$.  Since $J \in \Omega$, we know that $(i_s,j_s) \ne (i_{s+1},j_{s+1})$ or $(i_{s+2}, j_s) \ne (i_{s+3}, j_{s+1})$, so each of these pairs must have two distinct $w_{i,j,\ell}$ factors.  We then have at least $2|L|$ distinct $w_{i,j,\ell}$ factors.
\end{proof}
\begin{claim}
The number of distinct $w_{i,j,\ell}$ factors in $P_{J,L}$ is at least $|J|-2$.
\end{claim}
\begin{proof}
Consider looking over the factors of $P_{J,L}$ from left to right in the order of \eqref{eq:pjl} until we have seen all elements of $J$.  The first pair $w_{i_1,j_1,\ell_1} w_{i_2,j_2,\ell_1}$ contains at most four previously-unseen elements of $J$.  Every subsequent pair of factors $w_{i_s,j_s,\ell_s} w_{i_{s+1},j_{s+1}}$ or $w_{i_{s+2} j_s,\ell_{s+1}}  w_{i_{s+3}, j_{s+1}, \ell_{s+1}}$ in $P_{J,L}$ shares two variables of $J$ with its preceding pair.  Each such pair can then contain at most two new elements of $J$.  After seeing $u$ $w_{i,j,\ell}$'s, we have therefore seen at most $4 + 2\left(\frac{u-2}{2}\right)$ distinct elements of $J$.  To get all $|J|$ elements of $J$, we must have seen at least $|J|-2$ $w_{i,j,\ell}$'s and these must be distinct.
\end{proof}
Since $\Pr[w_{i,j,\ell} \ne 0] \leq p$, $\E[P_{J,L}] \leq p^{\#\{\text{distinct $w_{i,j,\ell}$ factors in $P_{J,L}$}\}}$.  It then follows that
\[
\E[P_{J,L}] \leq p^{\max\{2|L|, |J|-2\}}.
\]
The two claims also imply two other facts we will need below.
\begin{claim}
If $|L| > r$, then $\E[P_{J,L}] = 0$.
\end{claim}
\begin{proof}
We will show that if $|L| > r$, there is an $w_{i,j,\ell}$ factor in $P_{J,L}$ that occurs exactly once.  Since $\E[w_{i,j,\ell}] = 0$, this proves the claim.

Assume for a contradiction that $|L| > r$ and every $w_{i,j,\ell}$ factor occurs at least twice.  Since there are at least $2|L|$ distinct $w_{i,j,\ell}$'s, there must be at least $4|L| > 4r$ total $w_{i,j,\ell}$'s.  However, looking at \eqref{eq:pjl}, $P_{J,L}$ has at most $4r$ $w_{i,j,\ell}$ factors.
\end{proof}
\begin{claim}
If $|J| > 2r+2$, then $\E[P_{J,L}] = 0$.
\end{claim}
This can be proved in exactly the same manner.

Next, observe that the number of choices of $J$ with $|J| = a$ is at most $n^{\frac{a(k-1)}{2}}a^{4r} \leq n^{\frac{a(k-1)}{2}}(4r)^{4r}$.  The number of choices of $L$ with $|L| = b$ is at most $n^b b^{2r} \leq n^b (2r)^{2r}$.  All together, we can write
\[
\E[\tr((AA^{\top})^r)] \leq \sum_{a = 1}^{2r+2} \sum_{b=1}^{r} 2^{10r} r^{6r} n^{\frac{a(k-1)}{2} + b} p^{\max\{2b, a-2\}}.
\]
We bound each term of the sum.
\begin{claim}
\[
n^{\frac{a(k-1)}{2} + b} p^{\max\{2b, a-2\}} \leq n^{kr + k -1} p^{2r}.
\]
\end{claim}
\begin{proof}
If $2b > a-2$,
\[
n^{\frac{a(k-1)}{2} + b} p^{\max\{2b, a-2\}} \leq n^{\frac{(2b+2)(k-1)}{2} + b} p^{\max\{2b, a-2\}}  = n^{k-1} (n^kp^2)^b.
\]
If $2b \leq a-2$,
\[
n^{\frac{a(k-1)}{2} + b} p^{\max\{2b, a-2\}} \leq n^{\frac{a(k-1)}{2} + \frac{a}{2} - 1} p^{a-2} = n^{k-1} (n^kp^2)^{a/2-1}.
\]
Recall that we assumed $n^kp^2 \geq 1$.  Since $a \leq 2r+2$ and $b \leq r$, the claim follows.
\end{proof}
To conclude, observe that
\[
\E[\tr[(AA^{\top})^{r}]]  \leq \sum_{a = 1}^{2r+2} \sum_{b=1}^{r} 2^{10r} r^{6r} n^{kr + k -1} p^{2r} \leq n^{O(k)} 2^{O(r)} r^{6r} p^{2r} n^{r}
\]
\end{proof}

\begin{remark}
If we did not have conditions \eqref{eq:tr1}, \eqref{eq:tr2}, and \eqref{eq:tr3}, we would only have been able to show that $|L| \leq 2r$.  This would have led to a weaker bound of $O(\sqrt{n})$.
\end{remark}


\section{Extension to larger alphabets}
\label{sec:larger-alphabets}

\subsection{Preliminaries}
\paragraph{CSPs over larger domains}
We begin by discussing CSPs over domains of size $q > 2$.  We prefer to identify such domains with~$\Z_q$, so our predicates are $P \co \Z_q^k \to \{0,1\}$.  The extensions of the definitions and facts from Section~\ref{sec:csp-prelims} are straightforward; the only slightly nonobvious notion is that of a literal.  We take the fairly standard~\cite{Aus08} definition that a literal for variable $x_i$ is any $x_i + c$ for $c \in \Z_q$. Thus there are now $q^k$ possible ``negation patterns''~$c$ for a $P$-constraint. We denote by $\calF_{q,P}(n,p)$ the distribution over instances of $\maxkcsp(P)$ in which each of the $q^kn^k$ constraints is included with probability $p$; the expected number of constraints is therefore $\expm = q^k n^k p$.
We have the following slight variant of Fact~\ref{fact:csp-facts}.
\begin{fact}
\label{fact:csp-facts-q}
Let $\calI \sim \calF_{q,P}(n,p)$.  Then the following statements hold with high probability.
    \begin{enumerate}
        \item $m = \abs{\calI} \in \expm \cdot \left(1 \pm O\left(\sqrt{\frac{\log n}{\expm}}\right)\right)$.
        \item $\val(\calI) \leq \expP \cdot \left(1 + O\left(\sqrt{\frac{\log q}{\expP} \cdot \frac{n}{\expm}}\right)\right)$.
        \item $\calI$ is $O\left(\sqrt{q^k \log q \cdot \frac{n}{\expm}}\right)$-quasirandom.
    \end{enumerate}
\end{fact}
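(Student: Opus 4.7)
\medskip

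\noindent\textbf{Proof proposal.} The plan is to follow the proof of Fact~\ref{fact:csp-facts} essentially verbatim, applying the Chernoff bound three times and tracking the changes that arise from working over $\Z_q$ instead of $\{-1,1\}$. In each case the instance $\calI$ is built from $N = q^k n^k$ independent Bernoulli$(p)$ indicators (one per potential constraint $(c,S) \in \Z_q^k \times [n]^k$), so we are always bounding the deviation of a sum of independent $\{0,1\}$-variables.

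For part (1), $m = |\calI|$ is a sum of $N$ Bernoulli$(p)$ variables with mean $\expm$; the multiplicative Chernoff bound immediately yields $|m - \expm| \leq O(\sqrt{\expm \log n})$ whp, which is the claim.

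For part (2), fix $x \in \Z_q^n$ and consider the number of constraints in $\calI$ satisfied by $x$. Each potential constraint $(c,S)$ is satisfied by $x$ iff $P(c + x_S) = 1$; since there are exactly $\expP \cdot q^k$ such negation patterns $c$ for every scope~$S$, the number of satisfied constraints is a sum of $\expP \cdot N$ independent Bernoulli$(p)$ variables, with mean $\expP \cdot \expm$. Multiplicative Chernoff gives deviation at most a $(1+\eps)$ factor with failure probability $\exp(-\Omega(\eps^2 \expP \expm))$. The key difference from the Boolean case is the union bound: there are now $q^n$ assignments rather than $2^n$, so we must absorb a factor of $n \log q$ in the exponent. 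Setting $\eps = C\sqrt{(\log q / \expP)(n/\expm)}$ suffices, and combining with the estimate $m \approx \expm$ from part~(1) yields the stated bound on $\val(\calI)$.

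For part (3), fix both $x \in \Z_q^n$ and $\alpha \in \Z_q^k$, and let $N_\alpha = |\{(c,S) \in \calI : c + x_S = \alpha\}|$. For each scope~$S$ there is exactly one $c$ with $c + x_S = \alpha$, so $N_\alpha$ is a sum of $n^k$ independent Bernoulli$(p)$'s with mean $\expm/q^k$. Chernoff gives $|N_\alpha - \expm/q^k| \leq \eta \cdot \expm/q^k$ with failure probability $\exp(-\Omega(\eta^2 \expm/q^k))$. The union bound now runs over $q^n \cdot q^k$ pairs $(x,\alpha)$, forcing $\eta^2 \expm/q^k \gtrsim (n+k)\log q$, i.e.\ $\eta = O(\sqrt{q^k n \log q/\expm})$. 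Dividing by $m \approx \expm$ shows $|\calD_{\calI,x}(\alpha) - q^{-k}| \leq \eta/q^k$ for every $\alpha$, and summing the $q^k$ terms of the total variation distance gives the claimed $\eps$-quasirandomness with $\eps = O(\sqrt{q^k \log q \cdot n/\expm})$.

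No step here is a real obstacle; the only mildly delicate point is the bookkeeping in (3), where one must notice that the factor $q^k$ in the bound on $\dtv{\calD_{\calI,x}}{U^k}$ comes from summing over the $q^k$ elements of the alphabet after the $q^n$-union bound and the $q^{-k}$ normalization have already been accounted for.
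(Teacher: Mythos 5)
Your proposal is correct and follows exactly the approach the paper intends (the paper states Fact~\ref{fact:csp-facts-q} as a ``slight variant'' of Fact~\ref{fact:csp-facts}, which is itself left to the reader as ``proven easily with the Chernoff bound''). The three Chernoff applications, the observation that for fixed $(x,S)$ exactly $\expP\cdot q^k$ (resp.\ exactly one) negation patterns $c$ satisfy $P(x_S+c)=1$ (resp.\ $x_S+c=\alpha$), and the $q^n$ and $q^n\cdot q^k$ union bounds are precisely what produce the stated $\log q$ and $q^k\log q$ factors; the bookkeeping in part~(3) --- normalize by $m\approx\expm$, then sum $q^k$ terms of size $O(\eta)/q^k$ to get $\dtv{\calD_{\calI,x}}{U^k}=O(\eta)$ --- is handled correctly.
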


\paragraph{Fourier analysis over larger domains} Let $\calU_q$ is the uniform distribution over $\Z_q$.  We consider the space $L^2(\Z_q,\calU_q)$ of functions $f:\Z_q \to \R$ equipped with the inner product $\la f,g \ra = \E_{\bz \sim \calU_q}[f(\bz)g(\bz)]$ and its induced norm $\norm{f}_2 = \E_{\bz \sim \calU_q}[f(\bz)^2]^{1/2}$.  Fix an orthonormal basis $\chi_0, \ldots, \chi_{q-1}$ such that $\chi_0 = 1$.

Now let $L^2(\Z_q^k, \calU_q^k)$ be the space of functions $f:\Z_q^k \to \R$, where $\calU_q^k$ is the uniform distribution over $\Z_q^k$ and we have the analogous inner product and norm.  Then, for $\sigma \in \Z_q^k$, define $\chi_{\sigma}:\Z_q^k \to \R$ such that
\[
\chi_{\sigma}(x) = \prod_{i \in [k]} \chi_{\sigma_i} (x_i).
\]
The set $\{\chi_{\sigma}\}_{\sigma \in \Z_q^k}$ forms an orthonormal basis for $L^2(\Z_q^k, \calU_q^k)$ \cite[Fact 2.3.1]{Aus08} and we can write any function $f:\Z_q^k \to \R$ in terms of this basis:
\[
f(x) = \sum_{\sigma \in \Z_q^k} \wh{f}(\sigma) \chi_{\sigma} (x).
\]
Orthonormality once again gives us Plancherel's Theorem in this setting:
\begin{theorem}
\[
\la f,g \ra = \sum_{\sigma \in \Z_q^k} \wh{f}(\sigma) \wh{g}(\sigma).
\]
\end{theorem}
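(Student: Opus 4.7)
The plan is to expand both $f$ and $g$ in the orthonormal product basis $\{\chi_\sigma\}_{\sigma \in \Z_q^k}$ and then use bilinearity of the inner product together with orthonormality to collapse the resulting double sum down to a single diagonal sum. Concretely, I would write $f = \sum_{\sigma} \wh{f}(\sigma)\chi_\sigma$ and $g = \sum_{\tau} \wh{g}(\tau)\chi_\tau$, and expand
\[
\la f,g\ra \;=\; \E_{\bz \sim \calU_q^k}\Bigl[\sum_{\sigma,\tau \in \Z_q^k}\wh{f}(\sigma)\wh{g}(\tau)\chi_\sigma(\bz)\chi_\tau(\bz)\Bigr] \;=\; \sum_{\sigma,\tau \in \Z_q^k}\wh{f}(\sigma)\wh{g}(\tau)\,\la \chi_\sigma,\chi_\tau\ra.
\]
The identity then follows from the Kronecker-delta relation $\la \chi_\sigma,\chi_\tau\ra = \indic{\{\sigma=\tau\}}$, which is exactly orthonormality of the product basis.

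It remains only to justify orthonormality of $\{\chi_\sigma\}$ in $L^2(\Z_q^k,\calU_q^k)$. The key observation is that $\calU_q^k$ is a product measure, so the $k$-dimensional expectation factors:
\[
\la \chi_\sigma,\chi_\tau\ra \;=\; \E_{\bz \sim \calU_q^k}\Bigl[\prod_{i=1}^k \chi_{\sigma_i}(\bz_i)\chi_{\tau_i}(\bz_i)\Bigr] \;=\; \prod_{i=1}^k \E_{\bz_i \sim \calU_q}[\chi_{\sigma_i}(\bz_i)\chi_{\tau_i}(\bz_i)] \;=\; \prod_{i=1}^k \indic{\{\sigma_i = \tau_i\}},
\]
where the last step uses the assumed orthonormality of the one-variable basis $\chi_0,\ldots,\chi_{q-1}$ in $L^2(\Z_q,\calU_q)$. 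This product is $1$ precisely when $\sigma = \tau$ and $0$ otherwise, which completes the collapse.

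There is essentially no obstacle here: the whole argument is just bookkeeping, and the only nontrivial fact invoked, orthonormality of the product basis, is itself immediate from the product structure of the uniform measure and is anyway stated (with reference to \cite{Aus08}) in the paragraph preceding the theorem. Thus I expect the proof to be a one- or two-line derivation whose main purpose is notational: to confirm that the familiar real-valued Plancherel identity transfers verbatim to the $\Z_q$ setting under any fixed real orthonormal basis with $\chi_0 \equiv 1$.
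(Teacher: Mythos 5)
Your proof is correct and is exactly the standard argument the paper has in mind: expand $f$ and $g$ in the orthonormal product basis, use bilinearity, and collapse via orthonormality, which itself follows from the product structure of $\calU_q^k$. The paper states this Plancherel identity as an immediate consequence of orthonormality (citing \cite{Aus08}) rather than giving a separate proof, so your derivation matches the intended reasoning.
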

For $\sigma \in \Z_q^k$, define $\supp(\sigma) = \{i \in [k]~|~\sigma_i \ne 0\}$ and $\abs{\sigma} = |\supp(\sigma)|$.  Then we define the degree of $f$ to be $\max\{\abs{\sigma}~|~\wh{f}(\sigma) \ne 0\}$.  Note that this is the degree of $f$ when it is written as a polynomial in the $\chi_a$'s for $a \in \Z_q$.  

Given a $k$-tuple $T$ and $\sigma \in \Z_q^k$, we use $T(\sigma)$ to denote the $|\sigma|$-tuple formed by taking the projection of $T$ onto the coordinates in $\supp(\sigma)$.  Similarly, use $T(\overline{\sigma})$ to denote the $(k-|\sigma|)$-tuple formed by taking the projection of $T$ onto coordinates in $[k] \setminus \supp(\sigma)$.

See \cite{OD14, Aus08} for more background on Fourier analysis over larger domains.

\subsection{Conversion to Boolean functions}
To more easily apply our above results, we would like to rewrite a function $f:\Z_q^k \to \R$ as a Boolean function $f^b:\{0,1\}^{k'} \to \R$ for some $k'$. It will actually be more convenient to define $f^b$ on a subset of $\{0,1\}^{k'}$.  In particular, consider the set $\Omega_k = \{v \in \{0,1\}^{[k] \times \Z_q}~|~\sum_{a \in \Z_q} v(i,a) = 1 ~\forall i \in [k]\}$.  Note there is a bijection $\phi$ between $\Z_q^k$ and $\Omega_k$: For $z \in \Z_q^k$, $(\phi(x))(i,a) = \indic{\{z_i = a\}}$.  In the other direction, given $v \in \Omega_k$ set $\phi^{-1}(v)_i = \sum_{a \in \Z_q} a \cdot v(i,a)$.

For a function $f:\Z_q^k \to \R$, we can then define its Boolean version $f^b: \Omega_k \to \R$ as
\[
f^b(v) = \sum_{\alpha \in \Z_q^k} f(\alpha) \prod_{i \in [k]} v(i,\alpha_i),
\]
Observe that $f(z) = f^b(\phi(z))$ for $z \in \Z_q^k$.  Also, note that if $f(z) = g(z)$ for all $z \in \Z_q^k$, $f^b = g^b$ over all $\R^k$ by construction.   $f^b$ is a multilinear polynomial and its degree is defined in the standard way.  The degree of $f$ is defined as in the previous section.
\begin{claim}
The degree of $f^b$ is equal to the degree of $f$.
\end{claim}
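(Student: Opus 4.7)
The plan is to establish the two inequalities separately, using the bijection $\phi \co \Z_q^k \to \Omega_k$ as a bridge between the Fourier expansion of $f$ and the multilinear-polynomial representation of $f^b$ on $\Omega_k$.

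First I would show $\deg(f^b) \leq \deg(f)$. The key observation is that for each $i \in [k]$ and each $\sigma_i \in \Z_q$, the single-coordinate character $\chi_{\sigma_i}(z_i)$ can be written as $\sum_{a \in \Z_q} \chi_{\sigma_i}(a)\, v(i,a)$ when $v = \phi(z)$, because $v(i,a) = \indic{\{z_i = a\}}$ picks out exactly the right value. When $\sigma_i = 0$ we just have $\chi_0 \equiv 1$, contributing no $v$-variables. Therefore, taking a product over $i$,
\[
\chi_\sigma(z) \;=\; \prod_{i \in \supp(\sigma)} \Bigl(\sum_{a \in \Z_q} \chi_{\sigma_i}(a)\, v(i,a)\Bigr)
\]
is a multilinear polynomial in the $v$-variables of degree exactly $|\sigma|$, valid on $\Omega_k$. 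Summing $\wh{f}(\sigma)$ times these representations over $\sigma$ expresses $f^b$ on $\Omega_k$ as a multilinear polynomial of degree at most $\deg(f)$.

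For the reverse inequality $\deg(f^b) \geq \deg(f)$, suppose $f^b$ is represented on $\Omega_k$ by a multilinear polynomial $P$ of degree $d$; I would show $\wh f(\sigma) = 0$ for every $\sigma$ with $|\sigma| > d$. Each monomial $\prod_{(i,a) \in S} v(i,a)$ appearing in $P$ (with $|S| \leq d$) pulls back under $\phi$ to $\prod_{(i,a)\in S} \indic{\{z_i = a\}}$, which is a function of the coordinates of $z$ indexed by the projection $\pi(S) = \{i : (i,a)\in S\text{ for some }a\}$. Since $|\pi(S)| \leq |S| \leq d$, each such monomial becomes a function depending on at most $d$ coordinates of $z$, and hence so does $P \circ \phi = f$ (more precisely, $f$ is a linear combination of such functions). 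A standard fact in Fourier analysis over product spaces is that any function depending only on coordinates in a set $J \subseteq [k]$ has Fourier support contained in $\{\sigma : \supp(\sigma) \subseteq J\}$; applying this to each monomial and taking the union gives $\wh f(\sigma) = 0$ whenever $|\sigma| > d$.

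The two inequalities combine to give equality. I do not anticipate any serious obstacle: the forward direction is a direct substitution using $\chi_0 \equiv 1$, and the backward direction reduces to the elementary statement that low-degree multilinear monomials in the $v$-variables correspond (via $\phi$) to functions of few $z$-coordinates. The only mild subtlety is that $f^b$ is a function on $\Omega_k$ rather than all of $\R^{[k]\times \Z_q}$, so one should be a little careful to note that ``degree of $f^b$'' is the minimum degree over polynomial representatives on $\Omega_k$ (the linear relations $\sum_a v(i,a) = 1$ allow reductions); both inequalities above respect this interpretation.
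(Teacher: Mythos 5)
Your proposal is correct and more thorough than the paper's own argument. The paper essentially only establishes the direction $\deg(f^b) \leq \deg(f)$: it expands $f$ in the Fourier basis inside the defining formula $f^b(v) = \sum_\alpha f(\alpha)\prod_i v(i,\alpha_i)$, then shows that for each $\sigma$ the sum over the coordinates outside $\supp(\sigma)$ telescopes to $1$ on $\Omega_k$ via $\sum_a v(i,a)=1$, leaving a degree-$|\sigma|$ multilinear piece. Your first paragraph proves the same inequality by substituting $\chi_{\sigma_i}(z_i) = \sum_a \chi_{\sigma_i}(a)v(i,a)$ directly; expanding the product shows it is the same calculation, just organized from the other end. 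What you add is the reverse inequality $\deg(f^b)\geq\deg(f)$, and the explicit observation that ``degree of $f^b$'' must be read as the minimum over polynomial representatives modulo the ideal of $\Omega_k$ — the paper leaves this implicit. Your argument for $\geq$ (a monomial in the $v$-variables of degree $d$ pulls back under $\phi$ to a function of at most $d$ coordinates of $z$, hence has Fourier support only on $\sigma$ with $|\sigma|\leq d$) is clean and correct. Since the paper only ever uses the upper bound on the degree downstream, your extra direction is not strictly necessary for the applications, but it is what makes the stated equality an honest claim.
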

\begin{proof}
Abbreviate $\supp(\sigma)$ as $s(\sigma)$ and denote $\supp(\sigma)$'s complement with respect to $[k]$ as $s(\overline{\sigma})$.  Applying the definition and writing $f$'s Fourier expansion, we see that $f^b(v)$ is equal to
\[
\sum_{\alpha \in \Z_q^k} \sum_{\sigma \in \Z_q^k} \wh{f}(\sigma) \chi_{\sigma}(\alpha) \prod_{i \in [k]} v(i,\alpha_i) =  \sum_{\sigma \in \Z_q^k} \wh{f}(\sigma) \sum_{\alpha' \in \Z_q^{|\sigma|}} \chi_{\sigma}(\alpha') \prod^{|\sigma|}_{i = 1} v(s(\sigma)_i,\alpha'_i) \sum_{\alpha'' \in \Z_q^{k-|\sigma|}} \prod_{i = 1}^{k - |\sigma|} v(s(\overline{\sigma})_i,\alpha''_i).
\]
Now observe that
\[
\sum_{\alpha'' \in \Z_q^{k-|\sigma|}} \prod_{i = 1}^{k - |\sigma|} v(s(\overline{\sigma})_i,\alpha''_i) = \prod_{i = 1}^{k-|\sigma|}  \sum_{a \in \Z_q} v(s(\overline{\sigma})_i,a) = 1
\]
by the assumption that $v \in \Omega_k$.  The degree of $f^b$ is therefore $|\sigma|$.
\end{proof}

\subsection{Quasirandomness and strong refutation}
To prove quasiandomness and strong refutation results for CSPs over larger alphabets, we proceed exactly as in the binary case.  We used the $t=k$ case of Lemma~\ref{lem:AGM} (the Vazirani XOR Lemma \cite{Vaz86, Gol11}) to certify quasirandomness for binary CSPs.  A generalization of this case holds for Abelian groups \cite[Lemma 4.2]{Rao07}.
\begin{lemma}
\label{lem:xor-abelian}
Let $G$ be an Abelian group and let $\calU_G$ be the uniform distribution over $G$.  Also, let $\{\chi_{\sigma}\}_{\sigma \in G}$ be an orthonormal basis for $L^2(G, \calU_G)$ and let $D:G \to \R$ be a distribution over $G$.  If $\widehat{D}(\sigma) \leq \eps$ for all $\sigma \in G$, then $\dtv{D}{\calU_G} \leq \frac{1}{2}|G|^{3/2} \eps$.
\end{lemma}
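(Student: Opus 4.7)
The plan is to execute a standard Fourier-analytic argument — Plancherel's theorem followed by Cauchy--Schwarz — taking care to track the two different $L^2$ normalizations that will appear. Interpreting the hypothesis as $|\widehat{D}(\sigma)| \leq \eps$ for every nontrivial $\sigma \in G$ (the case $\sigma = 0$ is pinned down automatically, as noted below), the $|G|^{3/2}$ factor in the conclusion will arise naturally: a factor of $|G|$ when converting the expectation-inner-product Plancherel identity to an unnormalized sum-of-squares, and a factor of $\sqrt{|G|}$ when passing from the $\ell^2$-sum norm to the $\ell^1$-sum norm via Cauchy--Schwarz.

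First I would identify the zero-frequency Fourier coefficient of $D - \calU_G$. Since $\chi_0 \equiv 1$, the constant pmf $\calU_G = 1/|G|$ has $\widehat{\calU_G}(0) = 1/|G|$ and $\widehat{\calU_G}(\sigma) = 0$ for all $\sigma \neq 0$ by orthonormality. Meanwhile, because $D$ is a probability distribution, $\widehat{D}(0) = \E_{\bz \sim \calU_G}[D(\bz)] = \tfrac{1}{|G|}\sum_{z \in G} D(z) = 1/|G|$ as well. So $D - \calU_G$ has the same Fourier coefficients as $D$ at all nontrivial $\sigma$, and vanishes at $\sigma = 0$ — this justifies restricting the hypothesis to nontrivial frequencies.

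Second I would apply Plancherel's theorem in the form
\[
\E_{\bz \sim \calU_G}\bigl[(D(\bz) - 1/|G|)^2\bigr] \;=\; \sum_{\sigma \neq 0} \widehat{D}(\sigma)^2 \;\leq\; |G|\,\eps^2,
\]
and then unfold the expectation to obtain the sum-norm bound $\sum_{z \in G} (D(z) - 1/|G|)^2 \leq |G|^2 \eps^2$. Third I would apply Cauchy--Schwarz on $G$ (the ``all-ones'' trick $|f(z)| = 1 \cdot |f(z)|$), getting
\[
\sum_{z \in G} \bigl|D(z) - 1/|G|\bigr| \;\leq\; \sqrt{|G|} \cdot \sqrt{\textstyle\sum_{z \in G}(D(z) - 1/|G|)^2} \;\leq\; |G|^{3/2}\,\eps,
\]
and the conclusion $\dtv{D}{\calU_G} \leq \tfrac{1}{2}|G|^{3/2}\eps$ follows from the definition of total variation distance.

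There is no substantive obstacle here; the only thing to be careful about is the bookkeeping of normalizations. The inner product $\langle f, g\rangle = \E_{\bz \sim \calU_G}[f(\bz)g(\bz)]$ has an implicit $1/|G|$ factor that must be carried correctly through the chain Plancherel $\to$ sum-of-squares $\to$ Cauchy--Schwarz; misplacing it gives either the sharper Vazirani-style $\sqrt{|G|}$ bound (appropriate to a density-function normalization) or an even weaker bound. In our setting, where $D$ is literally a pmf rather than a $|G|$-scaled density, the stated $\tfrac{1}{2}|G|^{3/2}\eps$ is exactly what drops out.
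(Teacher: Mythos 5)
Your proof is correct. Note that the paper does not actually give a proof of this lemma; it simply cites it (as Lemma~4.2 of Rao's survey \cite{Rao07}). Your Plancherel-then-Cauchy--Schwarz argument, with the careful accounting of the $1/|G|$ normalization in the inner product $\langle f, g\rangle = \E_{\bz \sim \calU_G}[f(\bz)g(\bz)]$ and the observation that the zero-frequency coefficient of $D - \calU_G$ vanishes, is the standard derivation and matches the cited source. You were also right to read the hypothesis as $|\widehat{D}(\sigma)| \le \eps$ for nontrivial $\sigma$ --- as literally written with a one-sided inequality over all $\sigma$ (including $\sigma = 0$, where $\widehat{D}(0) = 1/|G|$ is forced), the statement would be vacuous or unusable; the absolute-value, nontrivial-$\sigma$ reading is clearly intended.
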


Viewing the induced distribution density $D_{\calI,x}(\sigma)$ as a function of $x \in \Z_q^n$ for fixed $\sigma \in \Z_q^k$, we will consider $D^b_{\calI,y}(\sigma) : \Omega_n \to \R$.  As before, we can certify that $D^b_{\calI,y}$ has small Fourier coefficients.
\begin{lemma}
\label{lem:fourier-refute-q}
Let $\sigma \in \Z_q^k$ such that $\sigma \ne \mathbf{0}$ and $|\sigma| = s$.  There is an algorithm that with high probability certifies that
\[
\abs{\widehat{D^b_{\calI,y}}(\sigma)} \leq \frac{q^{O(k)} \max\{n^{s/4}, \sqrt{n}\} \log^{5/2} n}{\sqrt{\expm}}
\]
for all $y \in \{0,1\}^{[n] \times \Z_q}$ when $\expm \geq \max\{n^{s/2}, n\}$.
\end{lemma}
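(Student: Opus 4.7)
The plan is to adapt the binary-case argument from Lemma~\ref{lem:fourier-refute} to the $q$-ary setting, replacing $\pm 1$ signs by characters of $\Z_q$.

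First, I would expand $\widehat{D^b_{\calI,y}}(\sigma)$ as a multilinear polynomial of degree $s = |\sigma|$ in the $qn$ variables $\{y(i,a)\}_{i \in [n],\, a \in \Z_q}$. Starting from $D^b_{\calI,y}(\alpha) = (q^k/m) \sum_{(T,c)\in\calI} \prod_{i=1}^k y(T_i,\alpha_i - c_i)$ and taking the $\sigma$-th Fourier coefficient in $\alpha$, the $\chi_0 = 1$ factors for $i \notin \supp(\sigma)$ combine with the indicator constraint $\sum_\beta y(T_i, \beta) = 1$ (valid on $\Omega_n$) to produce factors of unity. The surviving expression is
\[
\widehat{D^b_{\calI,y}}(\sigma) \;=\; \frac{1}{m}\sum_{\substack{U \in [n]^s\\ \beta \in \Z_q^s}} y(U_1,\beta_1)\cdots y(U_s,\beta_s)\cdot w_{U,\beta},
\]
where $w_{U,\beta} = \sum_{T:\,T(\sigma)=U}\sum_{c \in \Z_q^k}\indic{\{(T,c)\in\calI\}}\prod_{i\in\supp(\sigma)}\chi_{\sigma_i}(\beta_i + c_i)$.

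Next, without loss of generality assuming $1 \in \supp(\sigma)$, I would reorganize each $w_{U,\beta}$ by fixing $c_2,\ldots,c_k$ and summing over $c_1 \in \Z_q$. The partial sums
\[
W(T, c_2, \ldots, c_k) \;=\; \sum_{c_1 \in \Z_q}\indic{\{(T,c)\in\calI\}}\,\chi_{\sigma_1}(\beta_1+c_1)\prod_{i \in \supp(\sigma)\setminus\{1\}}\chi_{\sigma_i}(\beta_i+c_i),
\]
taken over $T$ with $T(\sigma) = U$ and over choices of $(c_2, \ldots, c_k)$, are independent, mean zero (since $\sum_{c_1}\chi_{\sigma_1}(\beta_1+c_1) = 0$), nonzero with probability at most $qp$, and bounded by $q^{O(k)}$ in magnitude (using $|\chi_a| \leq \sqrt{q}$). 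After rescaling by $q^{O(k)}$, these play the role of the $w_U(i)$ in Lemma~\ref{lem:smaller-monomials}, with $\tau = n^{k-s}q^{k-1}$ independent summands per coefficient.

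Finally, for each fixed $\beta \in \Z_q^s$, I would view $\sum_U y(U_1,\beta_1)\cdots y(U_s,\beta_s)\,w_{U,\beta}$ as a polynomial of degree $s$ in $qn$ variables $y(i,a) \in [0,1]$ and invoke Lemma~\ref{lem:smaller-monomials} (with $n$ replaced by $qn$ and $p$ by $qp$), then union-bound over the $q^s$ values of $\beta$. This yields a bound of the form $q^{O(k)}\max\{(qn)^{s/4},\sqrt{qn}\}\log^{5/2}n/\sqrt{\expm}$, which gives the claimed inequality once all $q$-dependent factors are absorbed into $q^{O(k)}$. The principal source of tedium is bookkeeping the various $q$-scalings (from character magnitudes, the enlarged variable set, and the union bound over $\beta$); no probabilistic machinery beyond the Bernstein-based argument already used in the binary proof is needed.
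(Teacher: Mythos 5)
Your proposal is correct and follows the paper's own proof essentially step for step: expand $\widehat{D^b_{\calI,y}}(\sigma)$, use $\sum_{a\in\Z_q} y(i,a)=1$ to drop the $\chi_0$ factors, decompose the resulting degree-$s$ coefficients into independent mean-zero pieces of magnitude $q^{O(k)}$, and for each fixed $s$-tuple of character values apply Lemma~\ref{lem:smaller-monomials}, then union-bound. The only cosmetic difference is the granularity of the grouping: you form blocks by summing over $c_1$ alone (mean-zero via $\sum_{c_1}\chi_{\sigma_1}(\beta_1+c_1)=0$), whereas the paper sums over all of $c\in\Z_q^k$ in a single block $w_{\sigma,\alpha}(T)=\sum_{c\in\Z_q^k}\indic{\{(T,c)\in\calI\}}\chi_\sigma(\alpha+c(\sigma))$; both choices satisfy the hypotheses of Lemma~\ref{lem:smaller-monomials} and yield the same $q^{O(k)}\max\{n^{s/4},\sqrt{n}\}\log^{5/2}n/\sqrt{\expm}$ bound after absorbing constants.
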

\begin{proof}
The proof is essentially identical to the proof of Lemma~\ref{lem:fourier-refute}.  We highlight the differences.  First of all, we can write
\[
\widehat{D^b_{\calI,y}}(\sigma) = \sum_{x \in \Z_q^n} \widehat{D_{\calI,x}}(\sigma) \prod_{i \in [n]} y(i,x_i) = \frac{1}{m} \sum_{T \in [n]^k} \sum_{c \in \Z_q^k} \indic{\{(T,c) \in \calI\}} \sum_{x \in \Z_q^n} \chi_{\sigma}(x_T + c) \prod_{i \in [n]} y(i,x_i).
\]
Since $\chi_{\sigma}$ only depends on coordinates in $\supp(\sigma)$, we can rearrange and use the fact that $\sum_{a \in \Z_q} y_{i,a} = 1$ to get
\[
\widehat{D^b_{\calI,y}}(\sigma) = \frac{1}{m} \sum_{\alpha \in \Z_q^{|\sigma|}} \sum_{U \in [n]^{|\sigma|}} \prod_{i = 1}^{|\sigma|} y(U_i,\alpha_i) \sum_{\substack{T \in [n]^k \\ T(\sigma) = U}} w_{\sigma,\alpha}(T),
\]
where $w_{\sigma,\alpha}(T) = \sum_{c \in \Z_q^{k}} \indic{\{(T,c) \in \calI\}} \chi_{\sigma}(\alpha + c(\sigma))$.   Observe that $\E[w_{\sigma, \alpha}(T)] = 0$ and $\Pr[w_{\sigma, \alpha}(T) \ne 0] \leq q^k p$.  Since $\norm{\chi_{\sigma}} = 1$, observe that the Cauchy-Schwarz Inequality implies that $|\chi_{\sigma}| \leq q^{k/2}$ for all $\sigma$.  Then $|w_{\sigma,\alpha}(T)| \leq q^{3k/2}$ for all $\alpha$ and $\sigma$.  For every $\alpha$, we can then apply Lemma~\ref{lem:smaller-monomials} just as in the proof of Lemma~\ref{lem:fourier-refute}.
\end{proof}

These two lemmas then imply the larger alphabet versions of the quasirandomness certification and strong refutation results above.
\begin{theorem}
\label{thm:quasi-q}
There is an efficient algorithm that certifies that an instance $\calI \sim \calF_{q,P}(n,p)$ of \textup{CSP}$(P)$ is $\gamma$-quasirandom with high probability when $\expm \geq \frac{q^{O(k)} n^{k/2} \log^{5} n}{\gamma^2}$.
\end{theorem}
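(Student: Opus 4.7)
The plan is to mimic the binary-alphabet proof of Theorem~\ref{thm:quasi-2}, swapping in the two larger-alphabet analogues that are already set up in the appendix: Lemma~\ref{lem:fourier-refute-q} (for certifying that the Fourier coefficients of the induced distribution are small) and Lemma~\ref{lem:xor-abelian} (for deducing closeness to uniform from smallness of all non-trivial Fourier coefficients over the Abelian group $\Z_q^k$).

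More concretely, fix a target Fourier threshold $\eps$ to be chosen later. For each nonzero $\sigma \in \Z_q^k$, with $s = |\sigma|$, we run the algorithm of Lemma~\ref{lem:fourier-refute-q}, which (whp) certifies the bound
\[
    \bigl|\widehat{D^b_{\calI,y}}(\sigma)\bigr| \;\leq\; \frac{q^{O(k)}\max\{n^{s/4},\sqrt{n}\}\log^{5/2} n}{\sqrt{\expm}}
\]
simultaneously for every $y \in \{0,1\}^{[n]\times\Z_q}$, as long as $\expm \geq \max\{n^{s/2},n\}$. A union bound over the $q^k$ values of $\sigma$ preserves the high-probability guarantee, and absorbs a further $q^{O(k)}$ factor. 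Since for every assignment $x \in \Z_q^n$ the vector $y = \phi(x)$ lies in $\Omega_n \subseteq \{0,1\}^{[n]\times\Z_q}$ and satisfies $\widehat{D_{\calI,x}}(\sigma) = \widehat{D^b_{\calI,\phi(x)}}(\sigma)$ (by the construction of the Boolean lift in the preceding subsection), these certified bounds apply uniformly to every $x$.

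Once every nontrivial Fourier coefficient of $D_{\calI,x}$ is certified to have magnitude at most $\eps$, Lemma~\ref{lem:xor-abelian} (applied with $G = \Z_q^k$, so $|G|^{3/2} = q^{3k/2}$) yields
\[
    \dtv{\calD_{\calI,x}}{\calU_q^k} \;\leq\; \tfrac{1}{2} q^{3k/2}\,\eps
\]
for every $x \in \Z_q^n$, i.e., $\calI$ is $\tfrac{1}{2}q^{3k/2}\eps$-quasirandom. Choosing $\eps = 2 q^{-3k/2} \gamma$ makes this $\leq \gamma$, and the Fourier bound then forces
\[
    \expm \;\geq\; \frac{q^{O(k)}\max\{n^{s/2},n\}\log^5 n}{\eps^2} \;=\; \frac{q^{O(k)} n^{s/2} \log^5 n}{\gamma^2}
\]
for each $s \in [k]$; taking the worst case $s = k$ and absorbing all $q$- and $k$-dependent constants into $q^{O(k)}$ yields the desired density $\expm \geq q^{O(k)} n^{k/2} \log^5 n / \gamma^2$.

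The work is essentially a bookkeeping exercise: Lemma~\ref{lem:fourier-refute-q} does the genuine technical work (it is the ``$w_{\sigma,\alpha}(T)$'' reduction to the odd/even $k$-XOR refutation of Theorem~\ref{thm:main}), and Lemma~\ref{lem:xor-abelian} is a black-box generalization of the Vazirani XOR lemma. I do not anticipate any real obstacle; the only mild nuisance is keeping track of how the various $q^{O(k)}$ factors (from the union bound, from $\|\chi_\sigma\|_\infty \leq q^{k/2}$ in Lemma~\ref{lem:fourier-refute-q}, and from the $q^{3k/2}$ in Lemma~\ref{lem:xor-abelian}) combine, but all of these are absorbable into the final $q^{O(k)}$ in the density requirement.
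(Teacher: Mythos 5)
Your proposal is correct and follows exactly the route the paper intends: the paper's own ``proof'' of Theorem~\ref{thm:quasi-q} is just the one-line remark that Lemma~\ref{lem:fourier-refute-q} (certifying all nontrivial Fourier coefficients of $D^b_{\calI,y}$ are small) together with Lemma~\ref{lem:xor-abelian} (the Abelian-group Vazirani XOR lemma, applied with $G=\Z_q^k$) imply the result, mirroring how Lemma~\ref{lem:fourier-refute} and the $t=k$ case of Lemma~\ref{lem:AGM} gave Theorem~\ref{thm:quasi-2} in the binary case. Your bookkeeping of the $q^{O(k)}$ factors and the choice $\eps=2q^{-3k/2}\gamma$ correctly fills in the details the paper omits.
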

\begin{theorem}
\label{thm:strong-ref-q}
There is an efficient algorithm that, given an instance $\calI \sim \calF_{q,P}(n,p)$ of \textup{CSP}$(P)$,  certifies that $\val(\calI) \leq \expP + \gamma$ with high probability when $\expm \geq \frac{q^{O(k)} n^{k/2} \log^{5} n}{\gamma^2}$.
\end{theorem}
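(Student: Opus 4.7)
My plan is to obtain Theorem~\ref{thm:strong-ref-q} as an immediate corollary of Theorem~\ref{thm:quasi-q}, by carrying the implication ``quasirandom $\implies$ strong refutation'' (Fact~\ref{fact:quasi-to-strong}) verbatim into the $q$-ary setting. The overall structure exactly parallels the derivation of Theorem~\ref{thm:strong-ref-2} from Theorem~\ref{thm:quasi-2} in the Boolean case, so no new probabilistic or spectral machinery is needed beyond what has already been built up in Appendix~\ref{sec:larger-alphabets}.

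First, I will run the algorithm of Theorem~\ref{thm:quasi-q} with parameter $\gamma$. Under the hypothesis $\expm \geq q^{O(k)} n^{k/2} \log^{5} n / \gamma^2$, this certifies (whp) that the random instance $\calI \sim \calF_{q,P}(n,p)$ is $\gamma$-quasirandom, meaning $\dtv{\calD_{\calI,x}}{\calU_q^k} \leq \gamma$ for every assignment $x \in \Z_q^n$.

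Next, I will observe the $q$-ary analogue of Fact~\ref{fact:quasi-to-strong}: for any instance $\calI$ and any $x \in \Z_q^n$,
\[
    \mathrm{Val}_\calI(x) \;=\; \frac{1}{m} \sum_{(c,S) \in \calI} P(c + x_S) \;=\; \E_{\balpha \sim \calD_{\calI,x}}[P(\balpha)].
\]
Because $P$ takes values in $\{0,1\}$, the standard ``test function'' characterization of total variation distance gives
\[
    \bigl|\mathrm{Val}_\calI(x) - \expP\bigr| \;=\; \bigl|\E_{\balpha \sim \calD_{\calI,x}}[P(\balpha)] - \E_{\balpha \sim \calU_q^k}[P(\balpha)]\bigr| \;\leq\; \dtv{\calD_{\calI,x}}{\calU_q^k} \;\leq\; \gamma,
\]
uniformly over $x$. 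Maximizing the left-hand side over $x$ and combining with the certification above yields $\val(\calI) \leq \expP + \gamma$ as required.

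I do not anticipate any real obstacle here; the content of the theorem lives entirely in Theorem~\ref{thm:quasi-q}, which in turn rests on Lemma~\ref{lem:fourier-refute-q} (bounding the Fourier coefficients $\widehat{D^b_{\calI,y}}(\sigma)$ via Lemma~\ref{lem:smaller-monomials}) and on the Abelian-group XOR lemma (Lemma~\ref{lem:xor-abelian}) that converts small Fourier coefficients into total-variation closeness to $\calU_q^k$. The step from quasirandomness to strong refutation is purely soft and the constants in $q^{O(k)}$ are inherited unchanged from Theorem~\ref{thm:quasi-q}.
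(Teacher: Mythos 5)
Your proposal is correct and takes exactly the same route the paper (implicitly) takes: derive strong refutation from quasirandomness certification (Theorem~\ref{thm:quasi-q}) via the $q$-ary analogue of Fact~\ref{fact:quasi-to-strong}, with the total-variation inequality $\bigl|\E_{\calD_{\calI,x}}[P] - \E_{\calU_q^k}[P]\bigr| \leq \dtv{\calD_{\calI,x}}{\calU_q^k}$ closing the gap. The paper leaves this step unstated in Appendix~\ref{sec:larger-alphabets} (noting only that Lemmas~\ref{lem:xor-abelian} and~\ref{lem:fourier-refute-q} ``imply the larger alphabet versions of the quasirandomness certification and strong refutation results''), and your write-up supplies precisely the soft argument that was elided.
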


\subsection{Refutation of non-$t$-wise supporting CSPs}
We will show that the dual polynomial characterization of being far from $t$-wise supporting described in Section~\ref{sec:dual-polys} generalizes to larger alphabets.  We extend the definitions of $t$-wise supporting and $\delta$-separating polynomials to the $\Z_q$ case in the natural way.

\begin{lemma}
\label{lem:poly-iff-no-dist-q}
For $P:\Z_q^k \to \{0,1\}$ and $ 0\leq \delta < 1$, there exists a polynomial $Q: \Z_q^k \rightarrow \R$ of degree at most $t$ that $\delta$-separates $P$ if and only if $P$ is $\delta$-far from supporting a $t$-wise uniform distribution.
\end{lemma}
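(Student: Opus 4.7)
The plan is to mirror the LP-duality argument in the proof of Lemma~\ref{lem:poly-iff-no-dist} exactly, with the Fourier basis $\{\chi_\sigma\}_{\sigma \in \Z_q^k}$ from Section~\ref{sec:larger-alphabets} playing the role that $\{x^S\}_{S \subseteq [k]}$ played in the Boolean case. The only thing to verify up front is the natural Fourier characterization of $t$-wise uniformity over $\Z_q$: since $\chi_0 \equiv 1$ and the nonconstant basis functions $\chi_a$ have mean zero, a distribution $\calD$ on $\Z_q^k$ is $t$-wise uniform iff $\wh{\calD}(\sigma) = 0$ for all $\sigma \in \Z_q^k$ with $0 < |\sigma| \leq t$. (This is the direct analogue of Remark~\ref{rem:t-wise-triviality}.)

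With that in hand, I would set up the primal LP in variables $\{\calD(z)\}_{z \in \Z_q^k}$: minimize $\sum_z (1-P(z))\calD(z)$ subject to $\sum_z \calD(z) \chi_\sigma(z) = 0$ for each $\sigma$ with $0 < |\sigma| \leq t$, $\sum_z \calD(z) = 1$, and $\calD(z) \geq 0$. By the Fourier characterization above, feasible solutions are exactly the $t$-wise uniform distributions on $\Z_q^k$, and the optimum equals the largest $\gamma$ for which $P$ is $\gamma$-far from being \twus.

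Next I would take the LP dual, introducing a variable $c(\sigma)$ for each $0 < |\sigma| \leq t$ and a variable $\xi$ for the normalization constraint. The dual is $\max \xi$ subject to
\[
\sum_{0 < |\sigma| \leq t} c(\sigma) \chi_\sigma(z) \;\leq\; 1 - P(z) - \xi \qquad \forall z \in \Z_q^k.
\]
Setting $Q(z) = -\sum_\sigma c(\sigma)\chi_\sigma(z)$ yields a function $Q \co \Z_q^k \to \R$ of degree at most $t$ with $\wh{Q}(\mathbf 0) = 0$, and the dual constraint rewrites as $Q(z) \geq \xi - 1$ for all $z \in \Z_q^k$ and $Q(z) \geq \xi$ for $z \in P^{-1}(1)$. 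Hence feasible dual solutions of value $\xi$ are in bijection with $\xi$-separating polynomials for $P$ of degree at most $t$, and strong LP duality closes the loop.

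There is no real obstacle here beyond bookkeeping; the only mildly nontrivial step is the Fourier characterization of $t$-wise uniformity over $\Z_q$, which is standard but worth noting explicitly since the excerpt's prior lemma was stated only for the Boolean Fourier basis. (As in the Boolean case, the proof also gives, as a corollary, a ``granularity'' statement analogous to Corollary~\ref{cor:lp-granularity} with $2$ replaced by $q$ in the LP size bounds, though this is not needed for the statement as written.)
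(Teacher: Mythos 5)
Your proposal is correct and follows the same LP-duality argument the paper uses, including the same primal/dual pair in the $\chi_\sigma$ basis. The one step you flag as needing justification---that vanishing Fourier coefficients for $0 < |\sigma| \leq t$ characterize $t$-wise uniformity---is precisely what the paper verifies (via the Abelian-group XOR lemma applied to each marginal $\calD_S$), so the proofs match in substance.
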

\begin{proof}
The proof uses the following dual linear programs exactly as in the proof of Lemma~\ref{lem:poly-iff-no-dist}.
\begin{center}
\fbox{
\begin{minipage}{0.8\textwidth}
\begin{alignat}{3}
    \textrm{minimize}            &   & \sum_{\mc{z \in \Z_q^k}}(1 - &P(z))\calD(z)        &  &  \label{eq:primal-q}    \\
    \text{s.t.}\qquad & &\sum_{\mc{z\in \Z_q^k}}\calD(z) \chi_{\sigma}(z)  = q^k \wh{\calD}(\sigma)           & =            0                  && \forall \sigma \in \Z_q^k  \quad0 < |\sigma| \leq t  \label{eq:0-bias-q} \\
                      &   &       \sum_{{z  \in \Z_q^k}}\calD(z)             & =            1                  &   &      \nonumber \\
                      &   & \calD(z)         & \geq            0    &&\forall z \in \Z_q^k \nonumber
 \end{alignat}
\end{minipage}
}
\end{center}
\begin{center}\fbox{
\begin{minipage}{0.6\textwidth}
\begin{alignat}{3}
    \textrm{maximize}            &   & \xi &   &  &    \nonumber  \\
    \text{s.t.}\qquad                 &   &       \sum_{\substack{\sigma \in \Z_q^k\\0 < |\sigma| \leq t}}c(S) \chi_{\sigma}(z)   &\leq 1-P(z) -\zeta  &\qquad  &\forall z \in \Z_q^k.      \nonumber
     \end{alignat}
\end{minipage}}
\end{center}

To prove Lemma~\ref{lem:poly-iff-no-dist}, we needed to show in the binary case that feasible solutions to the primal LP \eqref{eq:primal-lp} were $t$-wise uniform.  We now argue that the constraint \eqref{eq:0-bias-q} is a sufficient condition for $t$-wise uniformity of $\calD$ in the $q$-ary case.  For a distribution $\calD$ over $\Z_q^k$ and $S \subseteq [k]$, define $\calD_S$ to be the marginal distribution of $\calD$ on $(\Z_q^k)_S$, i.e., $\calD_S(z) = \sum_{z' \in \Z_q^k, z'_S = z} \calD(z')$.  We need to show that \eqref{eq:0-bias-q} implies that $\calD_S = \calU^{|S|}_q$ for all $S \subseteq [k]$ with $1 \leq |S| \leq t$.

Fix such an $S$ and let $|S| = s$.  Consider the basis $\{\chi_{\alpha}\}_{\alpha \in \Z_q^s}$.  Lemma~\ref{lem:xor-abelian} implies that it suffices to show that $\E_{\bz \sim \calU_q^s}[D_S(\bz)\chi_{\alpha}(\bz)] = 0$ for all $\alpha \in \Z_q^s$.  Observe that $\E_{\bz \sim \calU_q^s}[D_S(\bz)\chi_{\alpha}(\bz)] = \E_{\bz' \sim \calU_q^k}[\calD(\bz') \chi_{\sigma}(\bz')]$ for $\sigma \in \Z_q^k$ such that $\sigma_i = \alpha_i$ for $i \in S$ and $\sigma_i = 0$ otherwise.  Since $|S| \leq t$, we know that $|\sigma| \leq t$ and \eqref{eq:0-bias-q} implies $ \E_{\bz' \sim \calU_q^k}[\calD(\bz') \chi_{\sigma}(\bz')] =0$.

The rest of the proof is exactly as in the binary case.
\end{proof}

We can again use these separating polynomials to obtain almost $\delta$-refutation for predicates that are $\delta$-far from $t$-wise supporting.
\begin{theorem}
\label{thm:ntwus-refute-q}
Let $P$ be $\delta$-far from being \twus.  There exists an efficient algorithm that, given an instance $\calI \sim \calF_{q,P}(n,p)$ of $\CSP(P)$, certifies that $\val(\calI) \leq 1 - \delta + \gamma$ with high probability when $\expm \geq \frac{q^{O(k)} n^{t/2} \log^{5} n}{\gamma^2}$ and $t \geq 2$.
\end{theorem}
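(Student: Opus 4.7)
The plan is to mimic Proof~2 of Theorem~\ref{thm:ntwus-refute-2} essentially verbatim, substituting each ingredient by its $q$-ary counterpart that has already been set up in the excerpt. First, by Lemma~\ref{lem:poly-iff-no-dist-q}, since $P$ is $\delta$-far from $t$-wise supporting, there is a polynomial $Q \co \Z_q^k \to \R$ of degree at most $t$ with $\widehat{Q}(\mathbf{0})=0$, $Q(z)\geq \delta$ on $P^{-1}(1)$, and $Q(z)\geq \delta-1$ everywhere. The third condition gives the pointwise inequality $P(z) - (1-\delta) \leq Q(z)$ on all of $\Z_q^k$, and averaging over the constraints of $\calI$ yields
\[
\mathrm{Val}_\calI(x) - (1-\delta)\ \leq\ \E_{\bz \sim \calD_{\calI,x}}[Q(\bz)]
\]
for every $x \in \Z_q^n$. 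Thus it suffices to design an efficient algorithm that, with high probability, certifies $\E_{\bz \sim \calD_{\calI,x}}[Q(\bz)] \leq \gamma$ for every $x$.

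To bound this expectation, I apply the $q$-ary Plancherel identity together with $D_{\calI,x} = q^k\calD_{\calI,x}$:
\[
\E_{\bz \sim \calD_{\calI,x}}[Q(\bz)]\ =\ \E_{\bz \sim \calU_q^k}\bigl[D_{\calI,x}(\bz)\,Q(\bz)\bigr]\ =\ \sum_{\sigma\in\Z_q^k}\widehat{D_{\calI,x}}(\sigma)\,\widehat{Q}(\sigma).
\]
Only the terms with $1 \leq |\sigma| \leq t$ contribute, since $\widehat{Q}(\mathbf{0})=0$ and $Q$ has degree at most $t$, and there are at most $q^k = q^{O(k)}$ such $\sigma$. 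Next I bound $|\widehat{Q}(\sigma)|$: the averaging argument used in the binary case carries over to show $\max|Q|\leq q^k$ (from $Q\geq -1$ and $\E[Q]=0$), and combining with the Cauchy--Schwarz bound $|\chi_\sigma|\leq q^{k/2}$ (already recalled in the proof of Lemma~\ref{lem:fourier-refute-q}) gives $|\widehat{Q}(\sigma)| \leq q^{3k/2} = q^{O(k)}$.

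The final step is to control each $|\widehat{D_{\calI,x}}(\sigma)|$. I invoke Lemma~\ref{lem:fourier-refute-q} on the Boolean lift: using that $\widehat{D^b_{\calI,y}}(\sigma) = \widehat{D_{\calI,x}}(\sigma)$ for $y=\phi(x)$, and that the lemma certifies a bound that holds for all $y \in \Omega_n$, I obtain a simultaneous certification of $|\widehat{D_{\calI,x}}(\sigma)| \leq \gamma/q^{O(k)}$ for every nonzero $\sigma$ with $|\sigma|\leq t$, provided $\expm \geq q^{O(k)}\,n^{|\sigma|/2}\log^5 n\,/\gamma^2$. Since $|\sigma|\leq t$ and $t\geq 2$, this is subsumed by $\expm \geq q^{O(k)}n^{t/2}\log^5 n/\gamma^2$; combining the $q^{O(k)}$ bound on $|\widehat{Q}(\sigma)|$ with the $q^{O(k)}$ count of nonzero $\sigma$ and absorbing everything into the $q^{O(k)}$ factor gives the claimed $\leq \gamma$ bound. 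There is no substantive obstacle; the only thing to keep track of is the extra $q^{O(k)}$ factors coming from $\max|Q|\leq q^k$ and from $|\chi_\sigma|\leq q^{k/2}$, which in the binary case were merely $2^{O(k)}$ and trivially $1$, respectively.
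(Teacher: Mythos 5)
Your proof is correct and matches the paper's approach exactly: the paper's own proof of Theorem~\ref{thm:ntwus-refute-q} is literally the one-line remark that it is ``essentially identical to Proof~2 of Theorem~\ref{thm:ntwus-refute-2},'' and you have carried out precisely that $q$-ary translation — invoking Lemma~\ref{lem:poly-iff-no-dist-q} for the separating polynomial, $q$-ary Plancherel, the $\max|Q| \leq q^k$ bound from $Q \geq -1$ and $\E[Q]=0$, and Lemma~\ref{lem:fourier-refute-q} to control the Fourier coefficients of the induced-distribution density. The bookkeeping of $q^{O(k)}$ factors and the verification that $\max\{n^{|\sigma|/2},n\}$ is dominated by $n^{t/2}$ for $t \geq 2$ are both handled correctly.
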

The proof is essentially identical to Proof~2 of Theorem~\ref{thm:ntwus-refute-2}.

Corollary~\ref{cor:no-t-wise-ref-2} also extends to larger alphabets.
\begin{corollary}
\label{cor:no-t-wise-ref-q}
Let $P$ be a predicate that does not support any $t$-wise uniform distribution.  Then there is an efficient algorithm that, given an instance $\calI \sim \calF_{q,P}(n,p)$ of $\CSP(P)$, certifies that $\val(\calI) \leq 1 - 2^{-\wt{O}(q^t k^t)}$ with high probability when $\expm \geq 2^{\wt{O}(q^t k^t)} n^{t/2} \log^{5} n$ and $t \geq 2$.
\end{corollary}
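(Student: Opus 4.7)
The plan is to reduce this to Theorem~\ref{thm:ntwus-refute-q} by first establishing a $q$-ary analogue of Corollary~\ref{cor:lp-granularity}: if $P$ does not support any $t$-wise uniform distribution, then it is in fact $\delta$-far from supporting one, for some $\delta \geq 2^{-\wt{O}(q^t k^t)}$. Given such a~$\delta$, invoking Theorem~\ref{thm:ntwus-refute-q} with error parameter $\gamma = \delta/2$ yields an efficient algorithm certifying $\val(\calI) \leq 1 - \delta/2 \leq 1 - 2^{-\wt{O}(q^t k^t)}$ with high probability whenever $\expm \geq q^{O(k)} n^{t/2} \log^5 n / \gamma^2 = 2^{\wt{O}(q^t k^t)} n^{t/2} \log^5 n$, which is the claimed bound.

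To prove the LP-granularity gap, I would avoid the Fourier formulation used in Lemma~\ref{lem:poly-iff-no-dist-q}, since its constraint matrix has real-algebraic entries $\chi_\sigma(z)$ that are inconvenient for Hadamard-style estimates, and instead use the equivalent marginal formulation: $\calD$ is $t$-wise uniform if and only if $\sum_{z : z_S = \alpha} \calD(z) = q^{-|S|}$ for every $S \subseteq [k]$ with $|S| \leq t$ and every $\alpha \in \Z_q^S$. The corresponding primal LP has $q^k$ variables, at most $N \leq q^t k^t + O(1)$ equality constraints, $\{0,1\}$-valued coefficients on the left-hand sides and in the objective, and rational right-hand sides $q^{-|S|}$ with denominators dividing $q^t$. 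Rescaling variables by $q^t$ yields an equivalent LP whose data are integers of magnitude at most $q^t$.

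The LP optimum is attained at a vertex, at which at most $N$ of the $\calD(z)$ variables are nonzero and the tight subsystem of equality constraints on those coordinates is invertible. Cramer's rule then expresses each $q^t \calD(z)$ at the vertex as a ratio of integer determinants of $\{0,1\}$-matrices of size at most $N$; by Hadamard's inequality, each such determinant has absolute value at most $N^{N/2}$. Consequently the optimum value of the objective $\sum_{z : P(z) = 0} \calD(z)$ is a rational number with denominator dividing $q^t \cdot N^{N/2} = 2^{\wt{O}(q^t k^t)}$, so being strictly positive (by the assumption that $P$ is not $t$-wise supporting) it must be at least $2^{-\wt{O}(q^t k^t)}$, which is the desired $\delta$. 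The only technical care beyond the binary case is this LP reformulation, forced on us because the most natural $q$-ary Fourier basis does not have $\pm 1$-valued characters; everything else is then an unadorned combination with Theorem~\ref{thm:ntwus-refute-q}.
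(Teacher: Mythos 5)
Your proof is correct, and it takes the same high-level route as the paper (combine Theorem~\ref{thm:ntwus-refute-q} with a $q$-ary granularity bound from LP vertices), but your treatment of the granularity step is genuinely different from what the paper sketches and, notably, more careful. The paper proves the granularity bound (Corollary~\ref{cor:lp-granularity-q}) by saying it is ``essentially identical'' to the binary case and to ``proceed exactly as before,'' but the binary proof relies on the fact that the dual LP's constraint matrix has $\pm 1$ entries ($z^S$ for $z \in \{-1,1\}^k$), so Cramer's rule produces ratios of integer determinants to which Hadamard's inequality applies. In the $q$-ary case the corresponding matrix entries are $\chi_\sigma(z)$ for an orthonormal (generally non-integer, possibly complex) character basis, so that argument does not literally carry over. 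You identify this issue and sidestep it by switching from the Fourier formulation to the equivalent marginal formulation of the primal, whose constraint data are $\{0,1\}$-valued on the left and have denominator dividing $q^t$ on the right; after clearing denominators you get an honest integer LP on which the Cramer/Hadamard argument goes through. The bound you obtain, $\delta \geq 1/(q^t N^{N/2})$ with $N = O(q^t k^t)$, matches the paper's $2^{-\wt{O}(q^t k^t)}$. One small wording nit: at a basic feasible solution the correct statement is that the columns of the (integer) constraint matrix indexed by the nonzero variables extend to an invertible basis submatrix of size $\mathrm{rank}(A) \leq N$, rather than that ``the tight subsystem of equality constraints on those coordinates is invertible''; the resulting denominator bound is the same, so this does not affect the conclusion.
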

This follows directly from Theorem~\ref{thm:ntwus-refute-q} and the following extension of Corollary~\ref{cor:lp-granularity} to larger alphabets.

\begin{corollary}                                       \label{cor:lp-granularity-q}
    Suppose $P \co \Z_q^k \to \{0,1\}$ is not \twus. Then it is in fact $\delta$-far from \twus for $\delta = 2^{-\wt{O}(q^t k^t)}$.
\end{corollary}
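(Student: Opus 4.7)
The plan is to mimic the proof of Corollary~\ref{cor:lp-granularity}, but applied to a version of the LP from Lemma~\ref{lem:poly-iff-no-dist-q} written in a basis with $\{0,1\}$ entries (since the Fourier basis $\{\chi_\sigma\}$ over $\Z_q$ need not take integer values). The core observation is that the space of polynomials $Q \co \Z_q^k \to \R$ of Fourier-degree at most $t$ has an alternative basis consisting of the constant $1$ together with all ``restricted indicators''
\[
e_{S,a}(z) = \prod_{i=1}^{|S|} \mathbf{1}[z_{S_i} = a_i], \qquad S \subseteq [k],\ 0 < |S| \leq t,\ a \in (\Z_q \setminus \{0\})^{|S|}.
\]
A dimension count shows these are $K = 1 + \sum_{s=1}^t \binom{k}{s}(q-1)^s \leq (qk)^t + 1$ linearly independent functions, matching the dimension of the space.

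Rewriting the dual LP of Lemma~\ref{lem:poly-iff-no-dist-q} in this basis gives a system of the form: maximize $\xi$ subject to
\[
\sum_{(S,a)} c_{S,a} \cdot e_{S,a}(z) + \xi \leq 1 - P(z) \qquad \forall z \in \Z_q^k,
\]
with free variables $\{c_{S,a}\}$ and $\xi$. The coefficient matrix (indexed by $z \in \Z_q^k$ on rows and by $(S,a)$-pairs plus $\xi$ on columns) has entries only in $\{0,1\}$, and the right-hand side $1 - P(z)$ is $\{0,1\}$-valued. The feasible region is the same polyhedron as in Lemma~\ref{lem:poly-iff-no-dist-q}, so $P$ is $\delta$-far from $t$-wise uniform supporting iff the optimum is at least $\delta$.

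By assumption the optimum $\xi^\star$ is strictly positive, and it is attained at a vertex. A vertex is the unique solution of some $K \times K$ subsystem, whose matrix has $\{0,1\}$ entries and whose right-hand side has $\{0,1\}$ entries. Cramer's rule then expresses $\xi^\star$ as the ratio of two integer determinants of $K \times K$ matrices with $\{-1,0,1\}$ entries, so $\xi^\star \geq 1/N$ where $N$ is the maximum absolute value of such a determinant. Hadamard's inequality gives $N \leq K^{K/2}$, hence
\[
\delta \;\geq\; \xi^\star \;\geq\; K^{-K/2} \;=\; 2^{-O(K \log K)} \;=\; 2^{-\wt{O}((qk)^t)} \;=\; 2^{-\wt{O}(q^t k^t)},
\]
as claimed.

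The only step that needs genuine care is the change-of-basis claim: verifying that $\{1\} \cup \{e_{S,a}\}$ really spans (equivalently, is a basis of) the Fourier-degree-$\leq t$ subspace. This reduces to a tensor-product argument, since for each individual coordinate $i$, the functions $\{1,\mathbf{1}[z_i = 1],\ldots,\mathbf{1}[z_i = q-1]\}$ form a basis of $\R^{\Z_q}$ in which the constant $1$ is one basis element; taking tensor products over $i \in S$ and summing over $|S| \leq t$ recovers exactly the indicator basis above, whose span equals that of $\{\chi_\sigma : |\sigma| \leq t\}$. I anticipate this is the only nontrivial verification; the rest is a direct translation of Corollary~\ref{cor:lp-granularity}'s argument.
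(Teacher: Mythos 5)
Your proof is correct, and in fact it is more careful than the paper's own argument. The paper's proof of Corollary~\ref{cor:lp-granularity-q} simply says the argument is ``essentially identical'' to the proof of Corollary~\ref{cor:lp-granularity}, noting only that the relevant LP has at most $q^t k^t$ variables. But the binary proof crucially uses that the dual constraint matrix has $\pm 1$ entries (since $z^S \in \{-1,1\}$), and this fails for $q > 2$: a real orthonormal basis $\{\chi_a\}$ of $L^2(\Z_q,\calU_q)$ with $\chi_0 = 1$ generically has irrational entries, so the matrix $(\chi_\sigma(z))$ is not integer-valued and Cramer's rule does not directly yield a determinant ratio with integer numerator and denominator. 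You correctly identify this gap and repair it by changing to the restricted-indicator basis $\{e_{S,a}\}$, which spans the same degree-$\leq t$ subspace (your tensor-product argument for this is right), has $\{0,1\}$ entries, and has the matching dimension count $K = 1+\sum_{s=1}^t \binom{k}{s}(q-1)^s \leq (qk)^t + 1$. With that change, the Cramer/Hadamard calculation goes through exactly as in the binary case and gives $\delta \geq K^{-K/2} = 2^{-\wt{O}(q^t k^t)}$ as claimed. So you have not only reproduced the result but also filled in a step the paper leaves implicit (or arguably incorrect as literally stated).
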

The proof is essentially identical to the proof of Corollary~\ref{cor:lp-granularity}: Observe that the LP \eqref{eq:primal-q} has at most $q^tk^t$ variables and proceed exactly as before.

\subsection{SOS proofs}

Here we give SOS versions of our refutation results for larger alphabets.

\paragraph{Certifying Fourier coefficients are small}
To give an SOS proof that Fourier coefficients of $D^b_{\calI,y}$ are small, we again need to define a specific polynomial representation of $\widehat{D^b_{\calI,y}}(\sigma)$.
\[
\widehat{D_{\calI,y}}(\sigma)^{\poly} = \frac{1}{m} \sum_{T \in [n]^k} \sum_{c \in \Z_q^k} \indic{\{(T,c) \in \calI\}}  \sum_{\alpha \in \Z_q^{|\sigma|}} \chi_{\sigma}(\alpha + c(\sigma)) \prod_{i = 1}^{|\sigma|} y(T(\sigma)_i,\alpha_i).
\]

\begin{lemma}
\label{lem:fourier-refute-sos-q}
Let $\mathbf{0} \ne \sigma \in \Z_q^k$ with $|\sigma| = s$.  Then
\begin{align*}
\{y(i,a)^2 \leq 1\}_{\substack{i \in [n]\\ a \in \Z_q}} &\vdash_{\max\{2s,k\}} \widehat{D_{\calI,y}}(\sigma)^{\poly} \leq \frac{q^{O(k)} \max\{n^{s/4}, \sqrt{n}\} \log^{5/2} n}{\expm^{1/2}} \\
\{y(i,a)^2 \leq 1\}_{\substack{i \in [n]\\ a \in \Z_q}} &\vdash_{\max\{2s,k\}} \widehat{D_{\calI,y}}(\sigma)^{\poly} \geq -\frac{q^{O(k)} \max\{n^{s/4}, \sqrt{n}\} \log^{5/2} n}{\expm^{1/2}}.
\end{align*}
with high probability, assuming also that $\expm \geq \max\{n^{s/2}, n\}$.
\end{lemma}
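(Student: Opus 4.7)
The plan is to adapt the proof of Lemma~\ref{lem:fourier-refute-sos} (the binary SOS Fourier-coefficient bound) to the $q$-ary setting, replacing the underlying Lemma~\ref{lem:smaller-monomials-sos} with a ``multi-family'' analogue. First I would observe that the polynomial representation in the statement is equal, as a polynomial identity in the $y$-variables (with no appeal to $\sum_a y(i,a)=1$), to
\[
\widehat{D_{\calI,y}}(\sigma)^{\poly} \;=\; \frac{1}{m} \sum_{\alpha \in \Z_q^s}\, \sum_{U \in [n]^s}\, \Bigl(\prod_{i=1}^s y(U_i,\alpha_i)\Bigr)\, W_\alpha(U),
\]
where $W_\alpha(U) := \sum_{T \in [n]^k:\,T(\sigma)=U}\,\sum_{c \in \Z_q^k} \indic{\{(T,c)\in\calI\}}\, \chi_\sigma(\alpha + c(\sigma))$. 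This splits the target bound into $q^s \leq q^k$ separate SOS bounds, one per $\alpha$, with the resulting $q^{O(k)}$ loss absorbed into the stated overall constant.

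For each fixed $\alpha$, the inner sum is a random $s$-multilinear form in $s$ families of variables, namely $z^{(i)}_j := y(j,\alpha_i)$ for $i \in [s]$, with coefficients $W_\alpha(U)$. These coefficients are $0$-mean (as in the proof of Lemma~\ref{lem:fourier-refute-q}), are nonzero with probability at most $q^{O(k)} p$, and are bounded in magnitude by $q^{O(k)}$ after an appropriate rescaling (using $|\chi_\sigma| \leq q^{k/2}$ and that $W_\alpha(U)$ is a sum of at most $q^k n^{k-s}$ individually bounded terms, controlled via Bernstein as in Lemma~\ref{lem:smaller-monomials}). The remaining task is to prove an SOS bound of degree $2s$ on $\sum_{U \in [n]^s} \prod_i z^{(i)}_{U_i}\, w(U)$ for any random $w$ of this type, where each family $z^{(i)}$ satisfies the boundedness axiom $(z^{(i)}_j)^2 \leq 1$. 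Summing over $\alpha$ (and invoking the same SOS bound for $-W_\alpha$ to obtain the lower estimate) then delivers the statement of the lemma.

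The main obstacle is establishing this multi-family SOS XOR refutation, i.e., the generalization of Theorem~\ref{thm:main-sos} from $\sum_U w(U) x^U$ to $\sum_U w(U) \prod_i z^{(i)}_{U_i}$. The even-$s$ case extends cleanly: splitting $[s]$ into two halves of size $s/2$, one views the expression as a bilinear form $(v^{(1)})^\top B\, v^{(2)}$, where $v^{(1)}$ and $v^{(2)}$ are tensor-product vectors built from the first and last $s/2$ families respectively. Forming the symmetric dilation $\tilde B$ of $B$, the spectral inequality $\|B\|\cdot I - \tilde B \succeq 0$ becomes the degree-$s$ SOS inequality $(v^{(1)})^\top B v^{(2)} \leq \tfrac{\|B\|}{2}\bigl(\|v^{(1)}\|^2 + \|v^{(2)}\|^2\bigr)$, with each $\|v^{(j)}\|^2 \leq n^{s/2}$ being an SOS consequence of the $(z^{(i)}_j)^2 \leq 1$ axioms; the trace-method norm bound on $\|B\|$ from Lemma~\ref{lem:norm-bound} depends only on the entries of $B$ (and their independence/tails), so it carries over verbatim. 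The odd-$s$ case reduces to the even case via the SOS Cauchy--Schwarz inequality (Fact~\ref{fact:sos-cs}), and one has to verify that the resulting pairing keeps the overall proof degree at most $\max\{2s, k\}$; the appearance of $k$ in the stated degree, rather than just $2s$, reflects the conservative accounting of this reduction together with the union bound over the $q^{O(k)}$ choices of $\alpha$.
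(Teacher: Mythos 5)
Your proposal follows essentially the same route as the paper's (very terse) proof: decompose $\widehat{D_{\calI,y}}(\sigma)^{\poly}$ over $\alpha \in \Z_q^s$, then for each fixed $\alpha$ invoke the SOS version of the smaller-monomials lemma (Lemma~\ref{lem:smaller-monomials-sos}) exactly as the non-SOS Lemma~\ref{lem:fourier-refute-q} invokes Lemma~\ref{lem:smaller-monomials}. You are right to flag that, because the entries $\alpha_i$ may differ across positions $i$, each summand $\sum_U W_\alpha(U)\prod_i y(U_i,\alpha_i)$ is a multilinear form whose $i$-th slot draws from the family $z^{(i)}_j = y(j,\alpha_i)$, which may change with $i$; this is a genuine subtlety that the paper leaves implicit both in Lemma~\ref{lem:fourier-refute-q} and in its SOS version, since Theorem~\ref{thm:main}/\ref{thm:main-sos} and Lemma~\ref{lem:smaller-monomials}/\ref{lem:smaller-monomials-sos} are stated for a single variable family $x$. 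Your bilinear-form/dilation argument for the even case and SOS Cauchy--Schwarz reduction for the odd case are the right way to dispose of it (one small misattribution: for the even case the relevant spectral bound in the paper is Proposition~\ref{prop:random-matrix-norm}, not Lemma~\ref{lem:norm-bound}, which serves the odd case). One actual error in your accounting: the $k$ in the degree bound $\max\{2s,k\}$ is not caused by the union bound over the $q^{O(k)}$ choices of $\alpha$ --- a nonnegative combination of $q^{O(k)}$ degree-$2s$ SOS certificates is still a degree-$2s$ SOS certificate --- so that part of your explanation does not hold; the stated $\max\{2s,k\}$ is simply conservative (and is what gets propagated into the $\vdash_{2k}$ and $\vdash_{\max\{k,2t\}}$ statements downstream).
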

\begin{proof}
In the proof of Lemma~\ref{lem:fourier-refute-q}, we certify that $|\widehat{D_{\calI,y}}(\sigma)|$ is small by certifying that $\abs{\widehat{D_{\calI,y}}(\sigma)^{\poly}}$ is small.  The proof of Lemma~\ref{lem:fourier-refute-q} relies only on Lemma~\ref{lem:smaller-monomials}; we can replace this with its SOS version Lemma~\ref{lem:smaller-monomials-sos}.
\end{proof}
\begin{remark}
We stated the lemma with the weaker set of axioms $\{y(i,a)^2 \leq 1\}_{i \in [n], a \in \Z_q}$.  Since $y(i,a)^2 = y(i,a)$ implies $y(i,a)^2 \leq 1$ in degree-2 SOS, the lemma holds with the axioms $\{y(i,a)^2 = y(i,a)\}_{i \in [n], a \in \Z_q}$ as well.
\end{remark}

\paragraph{Strong refutation of any $k$-CSP}
From our SOS proof that the Fourier coefficients $\widehat{D^b_{\calI,y}}(\sigma)$ are small, we can get SOS proofs of strong refutation for any $k$-CSP.  To do this, we need to define a specific polynomial representation of $\mathrm{Val}^b_{\calI}(y)$ for an instance $\calI$ of CSP$(P)$:
\[
\mathrm{Val}_{\calI}(y)^{\poly} = \frac{1}{m} \sum_{T \in [n]^k} \sum_{c \in \Z_q^k} \indic{\{(T,c) \in \calI\}} \sum_{\alpha \in \Z_q^k} P(\alpha + c) \prod_{i \in [k]} y(T_i,\alpha_i).
\]
\begin{theorem} \label{thm:strong-ref-q-SOS}
Given an instance $\calI \sim \calF_{q,P}(n,p)$ of \textup{CSP}$(P)$,
\[
\{y(i,a)^2 = y(i,a)\}_{\substack{i \in [n]\\ a \in \Z_q}} \cup \left\{\sum_{a \in \Z_q} y(i,a) = 1\right\}_{i \in [n]} \vdash_{2k} \mathrm{Val}_{\calI}(y)^{\poly} \leq \expP + \gamma
\]
with high probability when $\expm \geq \frac{q^{O(k)} n^{k/2} \log^{5} n}{\gamma^2}$.
\end{theorem}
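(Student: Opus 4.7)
The plan is to mirror the proof of Theorem~\ref{thm:strong-ref-2-SOS}, replacing the two-element Plancherel identity by its $\Z_q$-analogue and replacing Lemma~\ref{lem:fourier-refute-sos} by its $q$-ary counterpart, Lemma~\ref{lem:fourier-refute-sos-q}. Concretely, the key SOS identity I would first establish is
\[
\mathrm{Val}_{\calI}(y)^{\poly} \;=\; \expP \;+\; \sum_{\mathbf{0} \ne \sigma \in \Z_q^k} \wh{P}(\sigma)\,\widehat{D_{\calI,y}}(\sigma)^{\poly},
\]
which should hold as a polynomial equality modulo only the linear axioms $\{\sum_{a\in\Z_q} y(i,a) = 1\}_{i\in[n]}$ (no degree blow-up beyond $k$).

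To derive this, I would expand the predicate $P(\alpha+c)$ via its Fourier representation $\sum_\sigma \wh{P}(\sigma)\chi_\sigma(\alpha+c)$ inside the definition of $\mathrm{Val}_{\calI}(y)^{\poly}$. For each $\sigma$, the character $\chi_\sigma(\alpha+c)$ depends only on the coordinates in $\supp(\sigma)$, so the product $\prod_{i\in[k]} y(T_i,\alpha_i)$ factors into a piece indexed by $\supp(\sigma)$ and a piece indexed by its complement. Summing the complement piece over $\alpha_i\in\Z_q$ collapses, coordinate by coordinate, via the axioms $\sum_{a\in\Z_q} y(T_i,a)=1$; this leaves exactly $\widehat{D_{\calI,y}}(\sigma)^{\poly}$ after re-indexing, and the $\sigma=\mathbf{0}$ term contributes $\wh{P}(\mathbf{0})=\expP$ because $\frac{1}{m}\sum_{(T,c)\in\calI} 1 = 1$. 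This is a straightforward polynomial identity that uses at most $k$ of the linear axioms and stays within degree $k$.

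Second, I would apply Lemma~\ref{lem:fourier-refute-sos-q} to each $\sigma$ with $1\leq|\sigma|\leq k$ to obtain degree-$\max\{2k,k\}=2k$ SOS upper and lower bounds
\[
\pm\,\widehat{D_{\calI,y}}(\sigma)^{\poly} \;\leq\; \frac{q^{O(k)}\,n^{k/4}\log^{5/2}n}{\sqrt{\expm}},
\]
noting that the axiom $y(i,a)^2 = y(i,a)$ implies $y(i,a)^2\leq 1$ in degree $2$, so the axioms required by Lemma~\ref{lem:fourier-refute-sos-q} are available. Finally, since $\|P\|_2\leq 1$ in $L^2(\Z_q^k,\calU_q^k)$, Cauchy--Schwarz (or Parseval) gives $\sum_\sigma |\wh{P}(\sigma)|\leq q^{k/2}\cdot\sqrt{\sum_\sigma \wh{P}(\sigma)^2}\leq q^{k/2}$; combining with the per-coefficient bound and choosing $\expm\geq q^{O(k)}n^{k/2}\log^5 n/\gamma^2$ absorbs all $q^{O(k)}$ factors and yields $\mathrm{Val}_{\calI}(y)^{\poly}\leq \expP+\gamma$ in degree-$2k$ SOS with high probability.

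The main obstacle is the careful bookkeeping in the Plancherel step: one must check that the $\alpha_i$-sums over coordinates outside $\supp(\sigma)$ really do collapse via the linear axioms without inflating the SOS degree, and that the resulting expression is literally the polynomial $\widehat{D_{\calI,y}}(\sigma)^{\poly}$ as defined before Lemma~\ref{lem:fourier-refute-sos-q}. Once this identity is in place, the remainder is a routine $q$-ary analogue of the Boolean proof, with the slightly worse $q^{O(k)}$ constants being harmlessly folded into the density hypothesis.
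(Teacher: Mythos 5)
Your proposal follows essentially the same route as the paper: establish the SOS Plancherel identity $\mathrm{Val}_{\calI}(y)^{\poly} = \expP + \sum_{\mathbf{0}\ne\sigma}\wh{P}(\sigma)\widehat{D_{\calI,y}}(\sigma)^{\poly}$ by Fourier-expanding $P$ and collapsing the complement-coordinate products via the linear axioms, then apply Lemma~\ref{lem:fourier-refute-sos-q} to each nonzero $\sigma$ and bound $\sum_\sigma|\wh{P}(\sigma)|$ by $q^{O(k)}$. The only (harmless) cosmetic deviation is that you bound $\sum_\sigma|\wh{P}(\sigma)|$ via Cauchy--Schwarz and Parseval, whereas the paper bounds each $|\wh{P}(\sigma)|$ pointwise; both yield the $q^{O(k)}$ factor that the density hypothesis absorbs.
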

\begin{proof}
First, use the Fourier expansion of $P$ to write
\[
\mathrm{Val}_{\calI}(y)^{\poly} = \frac{1}{m} \sum_{T \in [n]^k} \sum_{c \in \Z_q^k} \indic{\{(T,c) \in \calI\}} \sum_{\alpha \in \Z_q^k} \sum_{\sigma \in \Z_q^k} \wh{P}(\sigma) \chi_{\sigma}(\alpha + c) \prod_{i \in [k]} y(T_i,\alpha_i).
\]
For each $T \in [n]^k$, $c \in \Z_q^k$, and $\sigma \in \Z_q^k$, we have a term of the form $\sum_{\alpha \in \Z_q^k}\chi_{\sigma}(\alpha + c) \prod_{i \in [k]} y(T_i,\alpha_i)$.

Note that $\chi_{\sigma}$ only depends on the coordinates in $\supp(\sigma)$.  We can then write this as
\[
\left(\sum_{\alpha \in \Z_q^{|\sigma|}} \chi_{\sigma}(\alpha+c(\sigma)) \prod_{i=1}^{|\sigma|}y(T(\sigma)_i, \alpha_i)\right)\left(\prod_{i = 1}^{k-|\sigma|} \sum_{a \in \Z_q} y(T(\overline{\sigma})_i, a)\right)
\]
Using the axioms $\sum_{a \in \Z_q} y_{i,a} = 1$, the second term is equal to $1$ and we have
\[
\left\{\sum_{a \in \Z_q} y(i,a) = 1\right\}_{i \in [n]} \vdash_k \sum_{\alpha \in \Z_q^k}\chi_{\sigma}(\alpha + c) \prod_{i \in [k]} y(T_i,\alpha_i) =  \sum_{\alpha \in \Z_q^{|\sigma|}} \chi_{\sigma}(\alpha+c(\sigma)) \prod_{i=1}^{|\sigma|}y(T(\sigma)_i, \alpha_i).
\]
Summing over all $T$, $c$, and $\sigma$, we obtain the following.
\[
\left\{\sum_{a \in \Z_q} y(i,a) = 1\right\}_{i \in [n]} \vdash_k \mathrm{Val}_{\calI}(y)^{\poly} = \frac{1}{m} \sum_{T \in [n]^k} \sum_{c \in \Z_q^k} \indic{\{(T,c) \in \calI\}} \sum_{\sigma \in \Z_q^k} \wh{P}(\sigma) \sum_{\alpha \in \Z_q^{|\sigma|}} \chi_{\sigma}(\alpha+c(\sigma)) \prod_{i=1}^{|\sigma|}y(T(\sigma)_i, \alpha_i).
\]
This is equal to
\[
\expP + \sum_{\mathbf{0} \ne \sigma \in \Z_q^k} \wh{P}(\sigma) \widehat{D_{\calI,y}}(\sigma)^{\poly}.
\]
Since $|P(z)| \leq 1$ and $|\chi_{\sigma}(z)| \leq q^{O(k)}$, $\sum_{\sigma \in \Z_q^k} |\wh{P}(\sigma)| \leq q^{O(k)}$.  We can then apply Lemma~\ref{lem:fourier-refute-sos-q} for each $\sigma$ to complete the proof.
\end{proof}

\paragraph{SOS refutation of non-$t$-wise supporting CSPs}
\begin{theorem} \label{thm:ntwus-refute-q-sos}
Let $P$ be $\delta$-far from being \twus.  Then, given an instance $\calI \sim \calF_{q,P}(n,p)$ of $\CSP(P)$,
\[
\left\{y(i,a)^2 = y(i,a)\right\}_{\substack{i \in [n]\\ a \in \Z_q}} \cup \left\{y(i,a) y(i,b) = 0\right\}_{\substack{i \in [n]\\ a \ne b \in \Z_q}}  \cup \left\{\sum_{a \in \Z_q} y(i,a) = 1\right\}_{i \in [n]} \vdash_{\max\{k,2t\}} \mathrm{Val}_{\calI}(y)^{\poly} \leq 1-\delta + \gamma.
\]
with high probability when $\expm \geq \frac{q^{O(k)} n^{t/2} \log^{5} n}{\gamma^2}$ and $t \geq 2$.
\end{theorem}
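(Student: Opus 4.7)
The plan is to run the argument of Theorem~\ref{thm:ntwus-refute-2-sos} in the $q$-ary setting with the lifted variables $y(i,a)$. First, invoke Lemma~\ref{lem:poly-iff-no-dist-q} to obtain a polynomial $Q\co \Z_q^k \to \R$ of degree at most $t$, with $\wh{Q}(\mathbf{0}) = 0$, satisfying $Q(z) \geq \delta$ for $z \in P^{-1}(1)$ and $Q(z) \geq \delta - 1$ for all $z \in \Z_q^k$; equivalently, $g(z) := Q(z) - P(z) + (1-\delta) \geq 0$ on $\Z_q^k$.

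The key technical step is a $q$-ary analogue of Claim~\ref{cl:deg-k-ineq}: for any $f \co \Z_q^k \to \R_{\geq 0}$,
\[
\{y(i,a)^2 = y(i,a)\}_{i,a} \cup \{y(i,a)y(i,b) = 0\}_{i,\,a\ne b} \vdash_k f^b(y) \geq 0.
\]
To see this, set $h(\alpha) = \sqrt{f(\alpha)}$ and observe that
\[
(h^b(y))^2 = \sum_{\alpha,\beta \in \Z_q^k} h(\alpha)\,h(\beta) \prod_{i=1}^k y(i,\alpha_i)\, y(i,\beta_i).
\]
The axioms $y(i,a)y(i,b) = 0$ ($a \ne b$) annihilate every term with $\alpha \ne \beta$, while $y(i,a)^2 = y(i,a)$ collapses the diagonal to $\sum_\alpha h(\alpha)^2 \prod_i y(i,\alpha_i) = f^b(y)$. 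Hence $f^b(y) = (h^b(y))^2$ modulo the axioms, exactly mirroring the binary case; the orthogonality axioms are doing the work that was automatic when cross terms were already squares in $\{-1,1\}$.

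Apply this to $g$ for each constraint $(T,c) \in \calI$, with the generic variables renamed as $y(i,a) \mapsto y(T_i,a)$ and shifted so that the constraint's evaluation polynomial appears. This gives degree-$k$ SOS proofs that
\[
\sum_{\alpha \in \Z_q^k} \bigl(Q(\alpha+c) - P(\alpha+c) + (1-\delta)\bigr) \prod_{i=1}^k y(T_i,\alpha_i) \geq 0.
\]
Summing over the $m$ constraints and using $\sum_{a} y(i,a) = 1$ at degree $k$ to collapse $\sum_\alpha \prod_i y(T_i,\alpha_i) = 1$ (exactly as in the proof of Theorem~\ref{thm:strong-ref-q-SOS}), we obtain
\[
\{\text{axioms}\} \vdash_k \mathrm{Val}_{\calI}(y)^{\poly} - (1-\delta) \;\leq\; \frac{1}{m} \sum_{(T,c) \in \calI} \sum_{\alpha \in \Z_q^k} Q(\alpha+c) \prod_i y(T_i,\alpha_i).
\]

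Finally, expand $Q$ in the Fourier basis $\{\chi_\sigma\}$ and collapse the factors outside $\supp(\sigma)$ using $\sum_a y(i,a) = 1$; as in the proof of Theorem~\ref{thm:strong-ref-q-SOS} the right-hand side rewrites as $\sum_{\mathbf{0}\ne\sigma,\,|\sigma|\leq t} \wh{Q}(\sigma)\, \widehat{D_{\calI,y}}(\sigma)^{\poly}$. Each of the $q^{O(k)}$ Fourier coefficients is bounded by Lemma~\ref{lem:fourier-refute-sos-q} at degree $\max\{2|\sigma|,k\} \leq \max\{2t,k\}$, and the bound $|\wh{Q}(\sigma)| \leq \|Q\|_\infty \leq q^{O(k)}$ yields $\sum_\sigma |\wh{Q}(\sigma)| \leq q^{O(k)}$, which delivers the required estimate as soon as $\expm \geq q^{O(k)} n^{t/2} \log^5 n / \gamma^2$. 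The main obstacle is the $q$-ary Claim~\ref{cl:deg-k-ineq}: zeroing out the cross terms in $(h^b)^2$ is precisely why the extra orthogonality axioms $\{y(i,a)y(i,b) = 0\}$ must be part of the hypothesis, and isolating exactly the three natural one-hot axioms is what keeps the final degree at $\max\{k,2t\}$.
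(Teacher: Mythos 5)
The proposal is correct and takes essentially the same approach as the paper: it re-proves the $q$-ary nonnegativity claim (Claim~\ref{cl:deg-k-ineq-q}) by taking a pointwise square root and annihilating cross terms with the one-hot axioms, applies it to the separating polynomial $Q - P + (1-\delta)$, sums over constraints, collapses the non-support factors via $\sum_a y(i,a)=1$, and finishes with the Fourier-coefficient bound from Lemma~\ref{lem:fourier-refute-sos-q}. This is precisely the paper's argument, so no further comment is needed.
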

To prove this theorem, we need a version of Claim~\ref{cl:deg-k-ineq} for larger alphabets.
\begin{claim}
\label{cl:deg-k-ineq-q}
Let $f:\Z_q^k \to \R$ such that $f(z) \geq 0$ for all $z \in \Z_q^k$ and let $f^b(v) = \sum_{\alpha \in \Z_q^k} f(\alpha) \prod_{i \in [k]} v(i,\alpha_i)$.  Then
\[
\left\{v(i,a)^2 = v(i,a)\right\}_{\substack{i \in [k]\\ a \in \Z_q}} \cup \left\{v_{i,a} v_{i,b} = 0\right\}_{\substack{i \in [k]\\ a \ne b \in \Z_q}} \vdash_k f^b(v) \geq 0.
\]
\end{claim}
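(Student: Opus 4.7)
The plan is to mirror the argument of Claim~\ref{cl:deg-k-ineq}, replacing the role of the single axiom $z_i^2 = 1$ with the combined role of the idempotency axioms $v(i,a)^2 = v(i,a)$ and the exclusivity axioms $v(i,a)\,v(i,b) = 0$ for $a \neq b$. Since $f(z) \geq 0$ for every $z \in \Z_q^k$, I can pointwise define $g \co \Z_q^k \to \R$ by $g(z) = \sqrt{f(z)}$ and form its Boolean encoding
\[
g^b(v) \;=\; \sum_{\alpha \in \Z_q^k} g(\alpha) \prod_{i \in [k]} v(i,\alpha_i),
\]
a polynomial of degree $k$ in the $v(i,a)$'s. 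The key step is then to show that $g^b(v)^2$ coincides with $f^b(v)$ modulo the axioms, at which point $f^b(v) \geq 0$ follows immediately from $g^b(v)^2$ being a single square.

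Expanding the square gives $g^b(v)^2 = \sum_{\alpha,\beta} g(\alpha)g(\beta) \prod_{i}v(i,\alpha_i)v(i,\beta_i)$. For any pair $\alpha \neq \beta$, there is some coordinate $i$ at which $\alpha_i \neq \beta_i$, so the factor $v(i,\alpha_i)\,v(i,\beta_i)$ is annihilated by an exclusivity axiom; only the diagonal $\alpha = \beta$ survives. On that diagonal each $v(i,\alpha_i)^2$ reduces to $v(i,\alpha_i)$ by idempotency. Collecting,
\[
g^b(v)^2 \;\equiv\; \sum_{\alpha \in \Z_q^k} g(\alpha)^2 \prod_{i}v(i,\alpha_i) \;=\; \sum_{\alpha \in \Z_q^k} f(\alpha) \prod_{i}v(i,\alpha_i) \;=\; f^b(v)
\]
modulo the axioms. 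Concretely, reading off the reductions gives explicit polynomial coefficients $u_{i,a}(v)$ and $w_{i,a,b}(v)$ such that $f^b(v) - g^b(v)^2 = \sum_{i,a} u_{i,a}(v)\bigl(v(i,a) - v(i,a)^2\bigr) - \sum_{i,a\neq b} w_{i,a,b}(v)\,v(i,a)v(i,b)$, which is exactly the desired degree-$k$ SOS certificate under the same degree convention already adopted in Claim~\ref{cl:deg-k-ineq}.

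There is no real obstacle; the only conceptual novelty compared with the binary case is that in the $q$-ary encoding it is the exclusivity axioms (rather than a Boolean axiom like $z_i^2 = 1$) that kill the off-diagonal cross-terms. I would also remark that the ``completeness'' axiom $\sum_a v(i,a) = 1$ is not needed here: on any Boolean $v$ satisfying the two listed axiom sets, either every row of $v$ has exactly one $1$ (in which case $v = \phi(z)$ and $f^b(v) = f(z) \geq 0$), or some row is entirely zero (in which case every monomial of $f^b$ contains a zero factor and $f^b(v) = 0$). Thus the identity $g^b(v)^2 \equiv f^b(v)$ modulo the two axiom families is indeed enough to conclude.
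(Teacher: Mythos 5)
Your proof is essentially identical to the paper's: both define $g = \sqrt{f}$ pointwise, form $g^b(v) = \sum_\alpha g(\alpha)\prod_i v(i,\alpha_i)$, expand $g^b(v)^2$, kill the off-diagonal $\alpha\ne\beta$ terms with the exclusivity axioms, and reduce the diagonal by idempotency to obtain $f^b \equiv (g^b)^2$ modulo the axioms. The closing semantic remark about not needing $\sum_a v(i,a)=1$ is extra but consistent with the paper, which also omits that axiom from the claim.
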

\begin{proof}
Since $f(z) \geq 0$ for all $z \in \Z_q^k$, there exists a function $g:\Z_q^k \to \R$ such that $g^2(z) = f(z)$ for all $z \in \Z_q^k$.  We then write $g^b(v) = \sum_{\alpha \in \Z_q^k} g(\alpha) \prod_{i \in [k]} v(i,\alpha_i)$.  Using $v(i,a)^2 = v(i,a)$, it follows that
\[
g^b(v)^2 = \sum_{\alpha \in \Z_q^k} g(\alpha)^2 \prod_{i \in [k]} v(i,\alpha_i) + \sum_{\alpha' \ne \alpha'' \in \Z_q^k} g(\alpha')g(\alpha'') \prod_{i \in [k]} v(i,\alpha'_i) v(i,\alpha''_i)
\]
The first term is equal to $f^b(\sigma)$.  For the second term, note that each of the products $\prod_{i \in [k]} v(i,\alpha'_i) v(i,\alpha''_i)$ must contain factors $v(i,a) v(i,b)$ with $a \ne b$ since $\alpha' \ne \alpha''$.  We have the axiom $v(i,a) v(i,b) = 0$, so the second term is $0$.  Then $f^b = (g^b)^2$ and the claim follows.
\end{proof}

With this claim, the proof of the theorem exactly follows that of Theorem~\ref{thm:ntwus-refute-2}.
\begin{proof}[Proof of Theorem~\ref{thm:ntwus-refute-q-sos}]
Claim~\ref{cl:deg-k-ineq-q} implies that
\[
\left\{v(i,a)^2 = v(i,a)\right\}_{\substack{i \in [k]\\ a \in \Z_q}} \cup \left\{v(i,a) v(i,b) = 0\right\}_{\substack{i \in [k]\\ a \ne b \in \Z_q}}  \cup \left\{\sum_{a \in \Z_q} v(i,a) = 1\right\}_{i \in [k]} \vdash_{k} P^b(v) - (1-\delta) \leq Q^b(v)
\]
Summing over all constraints, we get that
\[
A \vdash_{k} m\mathrm{Val}_{\calI}(y)^{\poly} - m(1-\delta) \leq \sum_{T \in [n]^k} \sum_{c \in \Z_q^k} \indic{\{(T,c) \in \calI\}} \sum_{\alpha \in \Z_q^k} Q(\alpha+c) \prod_{i = 1}^k y(i,\alpha_i)
\]
where $A = \left\{y(i,a)^2 = y(i,a)\right\}_{\substack{i \in [n]\\ a \in \Z_q}} \cup \left\{y(i,a) y(i,b) = 0\right\}_{\substack{i \in [n]\\ a \ne b \in \Z_q}}  \cup \left\{\sum_{a \in \Z_q} y(i,a) = 1\right\}_{i \in [n]}$.
Using the Fourier expansion of $Q$, we see that the right-hand side of the inequality is
\[
\sum_{T \in [n]^k} \sum_{c \in \Z_q^k} \indic{\{(T,c) \in \calI\}} \sum_{\alpha \in \Z_q^k} \sum_{\sigma \in \Z_q^k} \wh{Q}(\sigma)\chi_{\sigma}(\alpha+c) \prod_{i = 1}^k y(T_i,\alpha_i)
\]
Just as in the proof of Theorem~\ref{thm:strong-ref-q-SOS}, we can rewrite this in degree-$k$ SOS as
\[
\sum_{T \in [n]^k} \sum_{c \in \Z_q^k}  \indic{\{(T,c) \in \calI\}} \sum_{\sigma \in \Z_q^k} \wh{Q}(\sigma) \sum_{\alpha \in \Z_q^{|\sigma|}} \chi_{\sigma}(\alpha+c(\sigma)) \prod_{i = 1}^{|\sigma|} y(T(\sigma)_i,\alpha_i).
\]
We then rearrange to get
\[
\sum_{\mathbf{0} \ne \sigma \in \Z_q^k} \wh{Q}(\sigma) \widehat{D_{\calI,y}}(\sigma)^{\poly}.
\]
Since $\E[Q] = 0$ and $Q \geq -1$, we know that $|Q| \leq q^{O(k)}$ and therefore $|\wh{Q}(\sigma)| \leq q^{O(k)}$. We can then apply Lemma~\ref{lem:fourier-refute-sos-q} for each $\sigma$ to complete the proof.
\end{proof}

\section{Certifying that random hypergraphs have small independence number and large chromatic number} \label{sec:hypergraphs}
First, we recall some standard definitions.  Let $H = (V,E)$ be a hypergraph.  We say that $S$ is an independent set of $H$ if for all $e \in E$, it holds that $e \notin S$.  The independence number $\alpha(H)$ is then the size of the largest independent set of $H$.  A $q$-coloring of $H$ is a function $f:V \to [q]$ such that $f^{-1}(i)$ is an independent set for every $i \in [q]$.  The chromatic number $\chi(H)$ is the the smallest $q \in \N$ for which there exists a $q$-coloring of $H$.

We define $\calH(n,p,k)$ to be the distribution over $n$-vertex, $k$-uniform (unordered) hypergraphs in which each of the $\binom{n}{k}$ possible hyperedges is included independently with probability $p$.  Let $\expm$ be the expected number of hyperedges $p \binom{n}{k}$.

Coja-Oghlan, Goerdt, and Lanka used CSP refutation techniques to show the following results \cite{COGL07}:
\begin{theorem} \textup{(Coja-Oghlan--Goerdt--Lanka~\cite[Theorem~3]{COGL07}).}
For $H \sim \calH(n,p,3)$, there is a polynomial time algorithm certifying that $\alpha(H) < \eps n$ with high probability for any constant $\eps > 0$ when $\expm > n^{3/2}\ln^6 n$ and $\expm = o(n^2)$.
\end{theorem}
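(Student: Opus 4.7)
The plan is to reduce certifying $\alpha(H)\le\eps n$ to a polynomial-optimization problem to which Theorem~\ref{thm:main} can be applied at arity $k=3$. Suppose toward a contradiction that $H$ admits an independent set $S$ of size $\beta n$ with $\beta\ge\eps$, and let $y_i=2\cdot\indic{\{i\in S\}}-1\in\{-1,1\}^n$, so $\sum_i y_i=(2\beta-1)n$. Independence of $S$ gives the polynomial identity
\[
0 \;=\; \sum_{e=\{i,j,k\}\in E}\prod_{\ell\in e}\frac{1+y_\ell}{2} \;=\; \frac{1}{8}\bigl(m+\Sigma_1(y)+\Sigma_2(y)+\Sigma_3(y)\bigr),
\]
where $\Sigma_1(y)=\sum_i d_i y_i$, $\Sigma_2(y)=\sum_{\{i,j\}} d_{ij}\,y_iy_j$, $\Sigma_3(y)=\sum_{e=\{i,j,k\}\in E}y_iy_jy_k$, and $d_i,d_{ij}$ are the degrees and codegrees of $H$. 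The algorithm will efficiently certify that the right-hand side is bounded strictly away from zero for every $y\in\{-1,1\}^n$ with $\sum_i y_i\ge(2\eps-1)n$, ruling out the existence of such $S$.

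First I will handle the low-order pieces. By Chernoff, $m=(1\pm o(1))\expm$ and every $d_i=(1\pm o(1))p\binom{n-1}{2}$ with high probability, which gives $\Sigma_1(y)=p\binom{n-1}{2}\sum_i y_i+\wt O(\sqrt{n\expm})$ uniformly over $y\in\{-1,1\}^n$. For $\Sigma_2$, the centered codegree matrix $M=(d_{ij}-\E d_{ij})_{i,j}$ is a (nearly) Wigner-type random matrix with entry variance $\Theta(pn)=\Theta(\expm/n^2)$, and a standard trace-method estimate in the spirit of Proposition~\ref{prop:random-matrix-norm} certifies $\|M\|=\wt O(\sqrt{\expm/n})$; combined with $\|y\|_2^2\le n$ this gives $\Sigma_2(y)=p(n-2)\sum_{\{i,j\}}y_iy_j+\wt O(\sqrt{n\expm})$. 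For the cubic term, writing $\mathbf{1}_e=p+(\mathbf{1}_e-p)$ and applying a mild generalization of Theorem~\ref{thm:main} to the centered weights $w(T)=\tfrac{1}{6}(\indic{\{\{T_1,T_2,T_3\}\in E\}}-p)$ yields the efficient certification
\[
\Bigl|\Sigma_3(y)-p\sum_{\{i,j,k\}\subseteq[n]}y_iy_jy_k\Bigr| \;\le\; 2^{O(1)}\sqrt{p}\,n^{9/4}\log^{3/2}n \qquad \text{for all } y\in[-1,1]^n.
\]

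Summing the four estimates and using the binomial identity $1+3t+3t^2+t^3=(1+t)^3$ with $t=2\beta-1$, the ``expected'' part of $m+\Sigma_1+\Sigma_2+\Sigma_3$ evaluates to $8\expm\beta^3$ up to lower-order corrections, while the total fluctuation is $\wt O(\sqrt{n\expm}+\sqrt{p}\,n^{9/4}\log^{3/2}n)$. Substituting $p=\Theta(\expm/n^3)$ makes this fluctuation $\wt O(\sqrt{\expm}\,n^{3/4}\operatorname{polylog} n)$, so the requirement that it be dominated by $8\expm\beta^3$ reduces to $\expm\gg n^{3/2}\log^{O(1)}n/\eps^6$, which for any constant $\eps>0$ is implied by the hypothesis $\expm>n^{3/2}\ln^6 n$. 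Consequently the identity $m+\Sigma_1+\Sigma_2+\Sigma_3=0$ is infeasible, and the certification goes through with high probability.

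The principal obstacle is that the centered weights $\mathbf{1}_e-p$ are not ``sparse'' in the sense demanded by Theorem~\ref{thm:xor-intro}: $\Pr[w(T)\ne 0]=1$ rather than $\le p$. I will therefore need the ``slightly more general form'' of Theorem~\ref{thm:xor-intro} referenced in the introduction, in which the hypothesis $\Pr[w(T)\ne 0]\le p$ is replaced by the weaker moment bound $\E[w(T)^2]\le p$. Inspecting the proof of Lemma~\ref{lem:norm-bound} shows that the combinatorial trace-method bound on $\E[\tr((AA^\top)^r)]$ uses only $\E[w(T)^2]$ at each factor and never the stronger sparsity hypothesis, so the identical conclusion carries over; with this variance-based generalization in hand, all remaining steps are routine random-matrix and Chernoff estimates.
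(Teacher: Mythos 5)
Your proposal is correct in substance and reaches the same conclusion, but it takes a genuinely more laborious route than the paper. The paper works directly with the $\{0,1\}$-valued indicator vector $x$ of the independent set $I$: defining $w(T) = p - \indic{\{e\in E\}}$ on one canonical tuple $T_e$ per edge $e$ (and $0$ on all other tuples), it obtains the single clean identity
\[
\sum_{T\in[n]^k}w(T)\,x^T \;=\; p\sum_{S\in\binom{[n]}{k}}x^S \;-\; \sum_{e}\indic{\{e\in E\}}\,x^e \;=\; p\binom{|I|}{k},
\]
because the second sum vanishes exactly when $I$ is independent. Since Theorem~\ref{thm:main} applies to all $x$ with $\|x\|_\infty\leq 1$ (not just $\pm 1$), one application certifies $p\binom{|I|}{k}$ is small and yields the bound on $|I|$ in one stroke. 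By instead switching to $y\in\{-1,1\}^n$, you expand $(1+y_\ell)/2$ and must handle $m$, $\Sigma_1$, $\Sigma_2$, $\Sigma_3$ separately. Two of those steps are technically loose as you have written them. First, the codegree matrix $M_{ij}=d_{ij}-p(n-2)$ does not have independent upper-triangular entries ($M_{ij}$ and $M_{ik}$ share the indicator of the edge $\{i,j,k\}$), so Proposition~\ref{prop:random-matrix-norm} cannot be cited verbatim; a decoupling in the spirit of Lemma~\ref{lem:norm-bound} is needed to certify $\|M\|=\wt{O}(\sqrt{\expm/n})$. Second, assigning weight $\tfrac16(\indic{\{\cdot\in E\}}-p)$ to all six orderings of a triple makes the $\{w(T)\}$ dependent, which violates the independence hypothesis of Theorem~\ref{thm:main}; the fix is to place the full weight on a single canonical ordering per edge, exactly as the paper does.

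Your observation about the sparsity hypothesis is correct and sharp: the centered weight $p-\indic{\{e\in E\}}$ is never zero, so condition~\eqref{eq:w-mostly-0} fails as stated, and one must replace $\Pr[w(T)\neq 0]\leq p$ by the moment condition $\E[w(T)^2]\leq p$. The trace-method estimate in Lemma~\ref{lem:norm-bound} does go through under this weaker hypothesis, since $|\E[w(T)^{\ell}]|\leq\E[w(T)^2]\leq p$ for all $\ell\geq 2$ whenever $|w(T)|\leq 1$. Notably, the paper's own proof of Theorem~\ref{thm:k-ind-set} asserts that conditions~\eqref{eq:w-mean-0}--\eqref{eq:w-bounded} hold for its $w(T)$ when in fact~\eqref{eq:w-mostly-0} fails; on this point you have been more careful than the paper.
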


\begin{theorem} \textup{(Coja-Oghlan--Goerdt--Lanka~\cite[implicit in Section~4]{COGL07}).}
For $H \sim \calH(n,p,4)$, there is a polynomial time algorithm certifying that $\alpha(H) < \eps n$ with high probability for any constant $\eps > 0$ when $\expm \geq O\left(\frac{n^2}{\eps^4}\right)$.
\end{theorem}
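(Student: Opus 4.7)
The plan is to reduce certifying $\alpha(H)<\eps n$ to a uniform fluctuation bound on a random degree-$4$ polynomial, which is then established by a spectral/quadratic-form argument analogous to the even-arity case of Theorem~\ref{thm:main}. A candidate independent set $S\subseteq [n]$ of size $s:=|S|\geq \eps n$ corresponds to an indicator vector $x\in\{0,1\}^n$ for which $f(x):=\sum_{e\in E(H)}\prod_{v\in e}x_v$ vanishes. Since $\E_H[f(x)]=p\binom{s}{4}=\Omega(p\eps^4 n^4)$ for every such $x$, it suffices to certify efficiently, with high probability over $H$, that $|f(x)-p\binom{s}{4}|<p\binom{s}{4}$ holds uniformly over all $x\in\{0,1\}^n$ with $\sum_v x_v\geq \eps n$; this rules out any independent set of size at least $\eps n$.

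Because $k=4$ is even, I would express $g(x):=f(x)-p\binom{s}{4}$ as a quadratic form in the lift $X\in\R^{\binom{n}{2}}$ with $X_{\{u,v\}}=x_u x_v$. Writing $\prod_{v\in e} x_v$ by averaging over the three pair-partitions of each $4$-subset $e$ yields $g(x)=\tfrac12 X^\top M X$ for the symmetric pair-indexed matrix $M$ with $M_{\{u,v\},\{w,z\}}=\tfrac13\bigl(\indic{\{u,v,w,z\}\in E(H)}-p\bigr)$ on disjoint pairs and $0$ elsewhere. Since $\|X\|^2=\binom{s}{2}\leq s^2$ for $x\in\{0,1\}^n$, the bound $|g(x)|\leq \tfrac12\|M\|\,s^2$ would then be a uniform certificate, and the algorithm simply computes $\|M\|$ by an eigenvalue computation.

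To bound $\|M\|$, I would apply Proposition~\ref{prop:random-matrix-norm} to $M/\sqrt p$: its nonzero entries are centred, of variance $O(1)$, and of magnitude $O(1/\sqrt p)$, and the matrix dimension is $\binom{n}{2}\asymp n^2/2$. The proposition yields $\|M\|=O(\sqrt p\, n\log n)$ whp provided $p\geq 1/n^2$, which holds throughout our regime. Matching the resulting uniform bound $|g(x)|\lesssim \sqrt p\, s^2 n\log n$ against the signal $ps^4$ at $s=\eps n$ gives the density requirement $p\gtrsim \log^2 n/(\eps^4 n^2)$, i.e., $\expm=\wt O(n^2/\eps^4)$, matching the theorem up to a polylogarithmic factor; shaving the $\log n$ to obtain the sharp $O(n^2/\eps^4)$ would require the more refined trace-method bound in the spirit of \cite{FK81}.

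The main obstacle I foresee is that Proposition~\ref{prop:random-matrix-norm} requires the above-diagonal entries of $M$ to be mutually independent, whereas the three pair-partitions of a single $4$-edge $\{a,b,c,d\}$ contribute three perfectly correlated entries of $M$. The clean remedy is to decompose $M=M_1+M_2+M_3$, where $M_i$ records only the $i$-th canonical pair-partition for each $4$-subset; then the above-diagonal entries of each $M_i$ depend on disjoint sets of random edges, hence are independent, and Proposition~\ref{prop:random-matrix-norm} applies to each $M_i$ individually. Summing via the triangle inequality $\|M\|\leq \|M_1\|+\|M_2\|+\|M_3\|$ costs only a constant factor and completes the argument.
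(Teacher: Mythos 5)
Your proof is correct, and it takes a genuinely different --- and for $k=4$ sharper --- route than the paper's proof of its generalization, Theorem~\ref{thm:k-ind-set}. The paper invokes Theorem~\ref{thm:main} as a black box: that theorem certifies $\sum_T w(T)x^T \leq 2^{O(k)}\sqrt{p}\,n^{3k/4}\log^{3/2}n$ uniformly over all $x$ with $\norm{x}_\infty \leq 1$, which is tantamount to bounding the lifted vector's squared norm by the worst case $n^{k/2}$. Specializing to $\beta=\eps n$ then yields only $\expm=\wt{O}(n^2/\eps^8)$ at $k=4$. You instead open the box: because $x$ is a $\{0,1\}$-indicator of size $s$, the lifted vector $X$ has $\norm{X}^2=\binom{s}{2}\approx s^2$ rather than $n^2$, and it is exactly this sparsity that drops the constraint requirement to $\wt{O}(n^2/\eps^4)$, matching the cited Coja-Oghlan--Goerdt--Lanka bound up to the logarithmic factors you already flagged. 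Your decomposition $M=M_1+M_2+M_3$ to restore entrywise independence is sound --- it assigns to $M_i$ exactly one unordered pair-partition per $4$-set, so distinct above-diagonal positions of each $M_i$ depend on distinct edge indicators; the paper sidesteps the same correlation issue by designating one ordered tuple $T_e\in[n]^k$ per hyperedge before forming the matrix, and the two fixes are equivalent in spirit. The tradeoff is that the paper's black-box route lets Theorem~\ref{thm:k-ind-set} handle odd $k$ as well (via the Cauchy--Schwarz step in Theorem~\ref{thm:main}'s odd case), whereas your direct quadratic-form argument is specific to even arity.
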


\begin{theorem} \textup{(Coja-Oghlan--Goerdt--Lanka~\cite[Theorem~4]{COGL07}).}
For $H \sim \calH(n,p,4)$, there is a polynomial time algorithm certifying that $\chi(H) > \xi$ with high probability for constant $\xi$ when $\expm \geq O(\xi^4 n^2)$.
\end{theorem}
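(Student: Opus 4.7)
The plan is to reduce the chromatic number lower bound to the independence number upper bound via a pigeonhole argument, and then invoke the preceding theorem (Theorem~3 of \cite{COGL07}) verbatim. First I would observe that if $\chi(H) \leq \xi$, then some $\xi$-coloring of $H$ exists, and by pigeonhole at least one color class contains $\geq \lceil n/\xi \rceil$ vertices. Since each color class in a valid hypergraph coloring is an independent set, this yields $\alpha(H) \geq \lceil n/\xi \rceil$. Contrapositively, any certificate that $\alpha(H) < n/\xi$ is automatically a certificate that $\chi(H) > \xi$.

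Given this reduction, the algorithm is immediate: on input $H$, run the independence-number certification algorithm from the previous theorem with parameter $\eps = 1/\xi$ (or slightly smaller, to absorb the integrality). That algorithm certifies $\alpha(H) < \eps n = n/\xi$ with high probability whenever $\expm \geq O(n^2/\eps^4) = O(\xi^4 n^2)$, which matches the density requirement claimed. If the independence-number algorithm succeeds, output ``$\chi(H) > \xi$''; otherwise output ``fail.'' The algorithm never errs, since the implication ``$\alpha(H) < n/\xi \Rightarrow \chi(H) > \xi$'' is deterministic, and its failure probability inherits the $o(1)$ failure probability of the underlying independence-number certifier.

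There is essentially no obstacle here; the entire technical content has already been packaged into the preceding independence-number result. The only minor bookkeeping is verifying that $\xi$ being a constant (as assumed in the statement) makes the $\eps = 1/\xi$ instantiation legal as a constant, so that the $O(\xi^4 n^2)$ bound really is of the form $O(n^2)$ up to the constant $\xi^4$, matching Theorem~3 of \cite{COGL07} verbatim.
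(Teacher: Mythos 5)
Your proof matches the paper's approach exactly: reduce to the independence-number certification via pigeonhole (a $\xi$-coloring has a color class of size $\geq n/\xi$, which is an independent set), then invoke the preceding theorem with $\eps = 1/\xi$, giving the $O(n^2/\eps^4) = O(\xi^4 n^2)$ density requirement. This is the same argument the paper gives for its generalization (Theorem~\ref{thm:k-chrom-num}), of which this is the $k=4$ special case.
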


We generalize these results to $k$-uniform hypergraphs:
\begin{theorem} \label{thm:k-ind-set}
For $H \sim \calH(n,p,k)$, there is a polynomial time algorithm certifying that $\alpha(H) < \beta$ with high probability when $\expm \geq O_{k}\left(\frac{n^{5k/2} \log^3 n}{\beta^{2k}}\right)$, assuming that $\beta \geq n^{3/4} \log n$.
\end{theorem}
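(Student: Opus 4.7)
The plan is to encode a candidate independent set $S \subseteq [n]$ by its indicator $x \in \{0,1\}^n$ and observe that the number of hyperedges of $H$ contained in $S$ is
\[
N_H(x) \;:=\; \sum_{e \in \binom{[n]}{k}} \indic{\{e \in E(H)\}} \prod_{i \in e} x_i.
\]
Then $S$ is independent iff $N_H(x) = 0$, and by monotonicity of $N_H$ in $x$ it suffices to certify $N_H(x) \geq 1$ for every $x \in \{0,1\}^n$ with $|x|_1 = \beta$. For such $x$ we have $\sum_e \prod_{i \in e} x_i = \binom{\beta}{k}$, so after centering
\[
N_H(x) \;=\; p\binom{\beta}{k} \;+\; \sum_{e \in \binom{[n]}{k}} w(e)\, x^e, \qquad w(e) := \indic{\{e \in E(H)\}} - p,
\]
where the $w(e)$ are independent, mean-zero, of magnitude at most $1$, with $\E[w(e)^2] = p(1-p) \leq p$.

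The next step is to invoke Theorem~\ref{thm:main} on the random polynomial $\sum_e w(e)\, x^e$ --- after a routine symmetrization that converts the unordered-edge sum into the ordered-tuple form of that theorem, absorbing a $k!$ factor into $2^{O(k)}$ --- to efficiently certify, uniformly in $\norm{x}_\infty \leq 1$, that
\[
\Bigl|\sum_e w(e)\, x^e\Bigr| \;\leq\; 2^{O(k)} \sqrt{p}\, n^{3k/4} \log^{3/2} n.
\]
Combining with the centering identity yields the pointwise certification $N_H(x) \geq p\binom{\beta}{k} - 2^{O(k)} \sqrt{p}\, n^{3k/4} \log^{3/2} n$ for every $\{0,1\}$-indicator $x$ of weight $\beta$; since $N_H(x)$ is a nonnegative integer, strict positivity of the certified lower bound upgrades to $N_H(x) \geq 1$. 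Solving $p\binom{\beta}{k} \geq 2 \cdot 2^{O(k)}\sqrt{p}\, n^{3k/4}\log^{3/2} n$ for $\expm = p\binom{n}{k}$ yields precisely the quoted threshold $\expm \geq O_k\bigl(n^{5k/2}\log^3 n / \beta^{2k}\bigr)$; the assumption $\beta \geq n^{3/4}\log n$ is exactly what is needed to ensure that the resulting $p$ lies in the valid range for Theorem~\ref{thm:main} (in particular, that $p \leq 1$ and $p \geq n^{-k/2}$).

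The main obstacle is that Theorem~\ref{thm:main} as stated assumes the sparsity condition $\Pr[w(T) \neq 0] \leq p$, which the centered Bernoulli weights $w(e)$ violate, being almost always nonzero. I would resolve this by inspecting the proof of Theorem~\ref{thm:main}, in particular the trace-method step in Lemma~\ref{lem:norm-bound}, where the sparsity hypothesis is invoked only to deduce the moment bound $\E[w(T)^r] \leq p$ for all $r \geq 2$ (which then gives $\E[P_{J,L}] \leq p^{\#\text{distinct factors}}$). Since $|w(e)| \leq 1$ and $\E[w(e)] = 0$ give $\E[w(e)^r] \leq \E[w(e)^2] \leq p$ for all $r \geq 2$, the same argument applies verbatim, so what is really needed is a variance-based reading of Theorem~\ref{thm:main}: for any independent mean-zero $w(T)$ with $|w(T)| \leq 1$ and $\E[w(T)^2] \leq p$, the same certification bound holds. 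Verifying this reading carefully is the one subtlety required to complete the proof.
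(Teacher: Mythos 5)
Your proposal follows the same route as the paper: encode a candidate independent set by its Boolean indicator, center the edge-count polynomial into a constant plus a mean-zero random multilinear form, invoke Theorem~\ref{thm:main} to certify the centered form is uniformly small over $\|x\|_\infty \le 1$, and read off a bound on $\beta$. (The paper writes $w(T_e) = p - \indic\{e\in E\}$, the negative of your $w(e)$, and works with a distinguished tuple $T_e$ per unordered edge to convert to the $[n]^k$ format; this is the ``routine symmetrization'' you mention.)

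You have, however, put your finger on a genuine subtlety that the paper itself glosses over. The paper asserts without comment that its $w(T)$ satisfy all three of conditions \eqref{eq:w-mean-0}, \eqref{eq:w-mostly-0}, \eqref{eq:w-bounded}, but \eqref{eq:w-mostly-0} is false: $w(T_e) \in \{p,\,p-1\}$ and hence is nonzero almost surely when $0<p<1$. Your ``variance-based reading'' — replace $\Pr[w\neq 0]\le p$ with $\E[w^2]\le p$, keeping $\E[w]=0$ and $|w|\le 1$ — is exactly the right repair, and your justification is sound: in Lemma~\ref{lem:norm-bound} the sparsity hypothesis enters only through $\E[P_{J,L}] \le p^{\#\{\text{distinct factors}\}}$, and since the non-vanishing terms in the trace expansion are precisely those in which every distinct factor occurs at least twice (otherwise the mean-zero hypothesis kills the expectation), the bound $|\E[w^m]| \le \E[w^2] \le p$ for $m\ge 2$ suffices. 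The random-matrix argument in the even-arity case likewise needs only a variance bound on the rescaled entries (Proposition~\ref{prop:random-matrix-norm}), so the variance-based version of Theorem~\ref{thm:main} does hold, and is what both your proof and the paper's actually require. Your write-up is, if anything, more careful than the paper's.
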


\begin{theorem} \label{thm:k-chrom-num}
For $H \sim \calH(n,p,k)$, there is a polynomial time algorithm certifying that $\chi(H) > \xi$ with high probability when $\expm \geq O_{k}\left(\xi^{2k} n^{k/2} \log^3 n\right)$, assuming that $\xi \leq \frac{n^{1/4}}{\log n}$.
\end{theorem}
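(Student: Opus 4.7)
The plan is to derive Theorem~\ref{thm:k-chrom-num} as an immediate corollary of Theorem~\ref{thm:k-ind-set} by invoking the standard pigeonhole relationship $\chi(H) \geq n/\alpha(H)$.

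First I would observe the contrapositive strategy: to certify $\chi(H) > \xi$, it suffices to certify that $\alpha(H) < n/\xi$. This is because any proper $\xi$-coloring partitions the vertex set into $\xi$ independent classes, so by pigeonhole at least one class has size $\geq n/\xi$; hence $\alpha(H) \geq n/\xi$ whenever $\chi(H) \leq \xi$. Consequently, any certificate that $\alpha(H) < n/\xi$ is automatically a certificate that $\chi(H) > \xi$.

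Next I would apply Theorem~\ref{thm:k-ind-set} with the choice $\beta = n/\xi$. The theorem produces such a certificate with high probability provided
\[
\expm \;\geq\; O_k\!\left(\frac{n^{5k/2} \log^3 n}{\beta^{2k}}\right) \;=\; O_k\!\left(\frac{n^{5k/2} \log^3 n}{(n/\xi)^{2k}}\right) \;=\; O_k\!\left(\xi^{2k}\, n^{k/2} \log^3 n\right),
\]
which is exactly the density hypothesis of Theorem~\ref{thm:k-chrom-num}. The side condition required by Theorem~\ref{thm:k-ind-set}, namely $\beta \geq n^{3/4}\log n$, translates to $n/\xi \geq n^{3/4}\log n$, i.e., $\xi \leq n^{1/4}/\log n$; this matches the assumption of Theorem~\ref{thm:k-chrom-num} exactly.

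Since the reduction is a one-line contrapositive and the parameters convert cleanly, there is essentially no obstacle; the only thing to double-check is that the side condition $\beta \geq n^{3/4}\log n$ of Theorem~\ref{thm:k-ind-set} lines up with the hypothesis on $\xi$, which it does. The final algorithm is therefore: run the independence-number algorithm of Theorem~\ref{thm:k-ind-set} with $\beta = \lceil n/\xi \rceil$; if it certifies $\alpha(H) < n/\xi$, output ``$\chi(H) > \xi$,'' otherwise output ``fail.''
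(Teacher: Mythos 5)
Your proof is correct and follows exactly the same route as the paper: reduce to certifying $\alpha(H) < n/\xi$ via the pigeonhole bound $\chi(H) \geq n/\alpha(H)$, then invoke Theorem~\ref{thm:k-ind-set} with $\beta = n/\xi$. You additionally verify the parameter translation and the side condition, which the paper leaves implicit, but the argument is the same.
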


The proofs are simple extensions of the $k = 3$ and $k = 4$ cases from \cite{COGL07}.  We will first prove Theorem~\ref{thm:k-ind-set} using Theorem~\ref{thm:main} and this will almost immediately imply Theorem~\ref{thm:k-chrom-num}.

\begin{proof}[Proof of Theorem~\ref{thm:k-ind-set}]
Recall that Theorem~\ref{thm:main} deals with $k$-tuples, not sets of size $k$.  It is easy to express a hypergraph in terms of $k$-tuples rather than sets of size $k$.  For a set $S$ and $t \in \Z_{\geq 0}$, recall the notation $\binom{S}{t} = \{T \subseteq S~|~|T| = t\}$.  For each possible hyperedge $e \in \binom{[n]}{k}$, we associate an arbitrary tuple $T_e$ from among the $k!$ tuples in $[n]^k$ containing the same $k$ elements.  To draw from $\calH(n,p,k)$, we include each $T_e$ independently with probability $p$ and include all other $T \in [n]^k$ with probability $0$.

For $T \in [n]^k$, we define the random variable $w(T)$ as follows:
\[
w(T) = \begin{cases}
p - \indic{\{e \in E\}} & \text{if $T = T_e$ for some $e \in \binom{[n]}{k}$} \\
0 & \text{otherwise.}
\end{cases}
\]
Let $x \in \{0,1\}^n$ be the indicator vector of an independent set $I$ so that $x^T = 1$ if $T \subseteq I$ and $x^T = 0$ otherwise.  First, observe that
\[
\sum_{T \in [n]^k} w(T) x^T = p\sum_{S \in \binom{[n]}{k}} x^S - \sum_{e \in \binom{[n]}{k}} \indic{\{e \in E\}} x^{e} = p \binom{|I|}{k},
\]
where the second term is $0$ because $I$ is an independent set.  The $w(T)$'s satisfy conditions \eqref{eq:w-mean-0}, \eqref{eq:w-mostly-0}, and \eqref{eq:w-bounded} and $\|x\|_{\infty} \leq 1$, so Theorem~\ref{thm:main} implies we can certify that
\[
p \binom{|I|}{k} = \sum_{T \in [n]^k} w(T) x^T \leq 2^{O(k)} \sqrt{p} n^{3k/4} \log^{3/2} n
\]
with high probability.  Simplifying, we see that we can certify
\[
|I| \leq O_k\left(\frac{n^{5/4} \log^{\frac{3}{2k}} n}{\expm^{\frac{1}{2k}}}\right)
\]
and plugging in the value of $\expm$ from the statement of the theorem completes the proof.
\end{proof}

\begin{proof}[Proof of Theorem~\ref{thm:k-chrom-num}]
For a coloring of a hypergraph $H$, each color class is an independent set of $H$.  If $\chi(H) \leq \xi$, then there exists a color class of size at least $\frac{n}{\xi}$ and therefore $\alpha(H) \geq \frac{n}{\xi}$.  We can then certify that $\alpha(H) < \frac{n}{\xi}$ using Theorem~\ref{thm:k-ind-set}.
\end{proof}


\section{Simulating $\calF_P(n, p)$ with  a fixed number of constraints } \label{sec:translation}
The setting of~\cite{daniely-hardness-of-learning}  fixes the number of constraints in a CSP instance, whereas the model described in Section~\ref{sec:prelims} includes each possible constraint in the instance with some probability $p$.
Here we show that results from our setting easily extend to that of~\cite{daniely-hardness-of-learning} by giving an algorithm that simulates the behavior of our model when the number of constraints is fixed.

Recall that an instance $\calI \sim \calF_P(n, p)$ is generated as follows.  For each $S \in [n]^k$ and each $c \in \{-1, 1 \}^k$, constraint $(c, S)$ is included with probability $p$, so the expected number of constraints is $p \cdot (2n)^k$.

In the model where the number of constraints is fixed, the instance is guaranteed to have $m$ distinct constraints for some value of $m$.  The instance $\calJ$ is chosen uniformly from all subsets of $\{-1, 1\}^k \times [n]^k$ with size exactly $m$ 
\begin{theorem}\label{thm:translate-models.}
Suppose there exists an efficient algorithm $R$ that, on a given CSP instance $\calI \sim \calF_P(n, p)$, for all $p \geq p_{\min}$, certifies that $\val(\calI) \leq \eta$ for some $0\leq \eta < 1$ with high probability.
Then there exists an efficient algorithm $\calA$ that certifies that a random instance $\calJ$ of $\CSP(P)$ with $\mu$ constraints has $\val(\calJ) \leq \eta + 2\left(\mu^{-1}\ln\mu\right)^{1/2}$ with high probability when $\mu\left(1 - \left(\mu^{-1}\ln\mu\right)^{1/2}\right) \geq (2n)^kp_{\min}$.
 \end{theorem}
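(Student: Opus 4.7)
My plan is to use $\calJ$ to \emph{simulate} a draw from $\calF_P(n,p)$ for an appropriate $p$, invoke $R$, and translate its output back to a statement about $\calJ$. Set $\eps := (\mu^{-1}\ln\mu)^{1/2}$ and $p := \mu(1-\eps)/(2n)^k$, so that $\expm = \mu(1-\eps)$; the hypothesis of the theorem is precisely $p \geq p_{\min}$, so $R$'s guarantee is available at this density. On input $\calJ$, the algorithm $\calA$ first draws $M \sim \mathrm{Binomial}((2n)^k, p)$ and outputs ``fail'' if $M > \mu$; otherwise it selects $\calI$ uniformly at random among the size-$M$ subsets of $\calJ$, runs $R(\calI)$, and outputs ``$\val(\calJ) \leq \eta + 2\eps$'' exactly when $R$ outputs ``$\val(\calI) \leq \eta$''. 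Everything here is efficient because $R$ is.

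The first main step will be to show that the marginal law of $\calI$ produced by $\calA$ is within total-variation $o(1)$ of $\calF_P(n,p)$. Because $\calJ$ is a uniform size-$\mu$ subset of $\{-1,1\}^k\times [n]^k$, a uniform $M$-subset of $\calJ$ is itself a uniform size-$M$ subset of $\{-1,1\}^k\times [n]^k$, which matches $\calF_P(n,p)$ conditioned on $|\calI| = M$. Since $\mu$ exceeds the mean $\expm$ of $M$ by $\sqrt{\mu\ln\mu}$, a Chernoff bound yields $\Pr[M > \mu] = \mu^{-\Omega(1)} = o(1)$, so only a negligible fraction of the $\calF_P(n,p)$ mass is discarded by the rejection step. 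Invoking the hypothesis on $R$ then gives that $R(\calI)$ certifies $\val(\calI) \leq \eta$ with probability $1-o(1)$ over random $\calJ$ and $\calA$'s internal coins, so $\calA$ does not output ``fail''.

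The second main step will be to verify correctness: for any (worst-case) $\calJ$ with $\val(\calJ) > \eta + 2\eps$, I need $\calA$ to output its statement with only negligible probability. Fix such a $\calJ$ and let $x^\star$ be an optimizer, so $\val_\calJ(x^\star) > \eta + 2\eps$. Conditional on $M = m$, the number of $\calJ$-constraints satisfied by $x^\star$ that fall into $\calI$ is hypergeometric with mean $m \cdot \val_\calJ(x^\star)$, so Hoeffding's inequality for hypergeometric variables yields $\Pr[\val_\calI(x^\star) \leq \val_\calJ(x^\star) - \eps] \leq 2\exp(-2\eps^2 m) = \mu^{-\Omega(1)}$ once $m = \Omega(\mu)$ (which itself holds whp). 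Off this negligible event, $\val(\calI) \geq \val_\calI(x^\star) > \eta + \eps > \eta$, and $R$'s never-err property forbids it from outputting ``$\val(\calI) \leq \eta$''.

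The main conceptual point is that even though $R$ is one-sided, the subsampling step inherently makes $\calA$ two-sided-error: the deduction $\val(\calJ) \leq \eta+2\eps$ from $\val(\calI)\leq \eta$ goes through probabilistic concentration at the single maximizer $x^\star$, and one cannot afford to union-bound over all $2^n$ assignments without blowing up the required density far beyond $p_{\min}$. This matches the two-sided error allowed by the SRCSP Assumptions of Section~\ref{sec:learning}, so the looser notion of refutation suffices for the intended application. Beyond this observation the remaining calculations are routine Chernoff and hypergeometric tail estimates.
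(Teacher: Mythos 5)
Your algorithm is essentially the paper's, with one omission that turns out to be the crux of the whole argument: you only reject when $M > \mu$, whereas the paper's algorithm also rejects when $M < \mu(1-2d)$ with $d = (\mu^{-1}\ln\mu)^{1/2}$ (your $\eps$). That extra check (which fails only with probability $\mu^{-\Omega(1)}$, by the same Chernoff bound you already use for the upper tail) makes the subsequent deduction \emph{deterministic} and hence one-sided. Indeed, if $m \geq \mu(1-2d)$ and $R$'s output ``$\val(\calI)\leq\eta$'' is correct, then for \emph{every} assignment $x$ one has
\begin{align*}
\val_\calJ(x) \;=\; \tfrac{1}{\mu}\Bigl(m\,\val_\calI(x) + \textstyle\sum_{(c,S)\in\calJ\setminus\calI}P(c\circ x_S)\Bigr) \;\leq\; \tfrac{m}{\mu}\eta + \tfrac{\mu-m}{\mu} \;\leq\; \eta + 2d,
\end{align*}
so $\calA$ never errs. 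No union bound over $2^n$ assignments, and no concentration at the optimizer, is needed.

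This means your stated ``main conceptual point'' --- that the subsampling step \emph{inherently} forces two-sided error --- is wrong, and your argument consequently proves a strictly weaker statement than the theorem. Per the paper's Section~3.2 conventions, ``certifies'' means the algorithm is never allowed to err, and the paper explicitly emphasizes (at the start of Section~5.1) that its refuters falsify the SRCSP Assumptions even in the stronger one-sided sense. Your Step~2 (hypergeometric concentration at a single optimizer $x^\star$) is internally sound, but it only gives the two-sided variant, and it is rendered unnecessary once the missing lower-bound check is added. Please modify the algorithm to fail when $M<\mu(1-2d)$, replace the probabilistic correctness argument with the two-line deterministic bound above, and drop the closing paragraph's claim about inherent two-sidedness.
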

\begin{proof}
On a random instance with $\mu$ constraints, we can generate an instance $\calI$ that simulates this behavior by choosing an appropriate value for $p$, drawing $m\sim\textrm{Binomial}\left(p, (2n)^k\right)$ and then discarding $\mu - m$ of the constraints.  For brevity, let $d = \left(\mu^{-1}\ln\mu\right)^{1/2}.$
Algorithm~\ref{alg:refute} describes the behavior of $\calA$.
\begin{algorithm}{Algorithm $\calA$}\
\begin{algorithmic}[1]
\State $p \gets \mu(1-d)(2n)^{-k}$.
\State draw $m \sim \textrm{Binomial}\left(p, (2n)^k \right)$
\If {$m > \mu$ or   $m < \mu(1 - 2d)$}
\State \Return ``fail."
\EndIf
\State $\calI \gets \calJ$
\For{$i = m + 1 \ldots  \mu$}
\State Remove a random constraint from $\calI$ chosen uniformly
\EndFor
\State Run $R$ on $\calI$
\If{$R$ certifies that $\val(\calI) \leq \eta$}
\State \Return ``$\val(\calJ) \leq \eta + 2d$."
\Else
\State \Return ``fail."
\EndIf
\end{algorithmic}
\caption{}
 \label{alg:refute}
\end{algorithm}

The fraction of removed constraints is at most $2d$, so even if all of the removed constraints would have been satisfied, their contribution to $\val(\calJ)$ is at most $2d$.
Consequently, $\calA$ will never incorrectly output ``$\val(\calJ) \leq  \eta + 2d$."

Furthermore, the probability of failing to refute an instance with value at most $ 1 - \eta + 2d$ due to exiting at step 2 is $o_{k, t}(1)$.
We treat $m$ as a sum of $(2n)^k$ independent Bernoulli variables with probability $p$ and denote $\E[m]$ by $\expm$.  Applying a Chernoff bound yields the following.
\begin{align*}
\Pr[ m > \mu] &= \Pr[m > \expm/(1 - d)]\\
&=\Pr[m > \expm(1 + \tfrac{d}{1-d})]\\
&\leq \textrm{exp}\left(-\tfrac{\mu^{-1}\mu\ln\mu(1-(\mu\ln\mu)^{-1/2})}{3}\right)\\
&< \textrm{exp}\left(-\Theta(\ln\mu)\right) = 1/{\poly(\mu)}.
\end{align*}

Similarly,
\begin{align*}
\Pr[m < \mu(1-2d)] &= \Pr[m < \expm(1-\tfrac{d}{1-d})]\\
&\leq \textrm{exp}\left(-\Theta(\ln \mu)\right) = 1/{\poly(\mu)}.
\end{align*}
If $\mu(1 - \left(\mu^{-1}\ln \mu\right)^{1/2}) \geq (2n)^kp_{\min}$, then $p \geq p_{\min}$ and $R$ will be able to certify $\val(\calI) \leq \eta$ with high probability.
\end{proof}

\end{document}